\lstdefinelanguage{minizinc} 
  {morekeywords={set,of,array,let,in,constraint,var,int,solve,minimize,not,where,forall,if,exists,then,else,true,endif,cummulative}
  ,classoffset=1
  ,sensitive=false
  ,morecomment=[l]{\%}
  ,morestring=[b]"
  ,literate=
  }
\newsavebox{\boxRCPSP}
\newsavebox{\boxSportLeagueA}
\newsavebox{\boxSportLeagueB}
\newsavebox{\boxSportLeagueC}
\newsavebox{\boxSportLeagueD}
\newcommand{\ignore}[1]{}
\newcommand{\pjs}[1]{ {Peter: \color{green}{#1} }}
\newcommand{\vale}[1]{ {Valentin: \color{gray}{#1} }}
\newcommand{\st}{\ \vert \ } % such that definition
\newcommand{\aterm}[2]{A_{#2,#1} x_{#2,#1}}
\newcommand{\aline}[1]{b^{#1} \Big( \aterm{#1}{1} &+& \aterm{#1}{2} &+& \cdots &+& \aterm{#1}{n} \Big) }
\newcommand{\atermX}[2]{A_{#2,#1} x_{#2}}
\newcommand{\alineX}[1]{b^{#1} \Big( \atermX{#1}{1} &+& \atermX{#1}{2} &+& \cdots &+& \atermX{#1}{n} \Big) }
\newcommand{\alineP}[1]{\aterm{#1}{1} + \aterm{#1}{2} + \cdots + \aterm{#1}{n} }
\def\and{\wedge}
\def\amax{a_{\max}}
\newtheorem{theorem}{Theorem}{\bf}{\it}
\newtheorem{lemma}[theorem]{Lemma}{\bf}{\it}
\newtheorem{proposition}[theorem]{Proposition}{\bf}{\it}
\newtheorem{remark}[theorem]{Remark}
{\bf}{\it}
\newtheorem{exmp}[theorem]{Example}{\bf}{\it}
\DeclareMathOperator{\search}{\mathbf{search}}
\DeclareMathOperator{\ins}{\mathbf{insert}}
\DeclareMathOperator{\const}{\mathbf{MDDConstruction}}
\DeclareMathOperator{\sv}{SelVar}
\DeclareMathOperator{\child}{Child}
\DeclareMathOperator{\ordEnc}{Order-Encoding}
\DeclareMathOperator{\logEnc}{Logarithmic-Encoding}
\DeclareMathOperator{\mddenc}{MDD-Enc}
\DeclareMathOperator{\smerge}{SimplifiedMerge}
\DeclareMathOperator{\sn}{SortingNetwork}
\DeclareMathOperator{\cn}{CardinalityNetwork}
\DeclareMathOperator{\sm}{SM}
\newcommand{\solns}{\text{solns}}
\newcommand{\MDD}{\textsf{MDD}}
\newcommand{\SN}{\textsf{SN}}
\newcommand{\Sup}{\textsf{Support}}
\newcommand{\SNTARE}{\textsf{SN-Tare}}
\newcommand{\SNOPT}{\textsf{SN-Opt}}
\newcommand{\BDD}{\textsf{BDD}}
\newcommand{\BDDDec}{\textsf{BDD-Dec}}
\newcommand{\Adder}{\textsf{Adder}}
\newcommand{\Gurobi}{\textsf{Gurobi}}
\newcommand{\LCG}{\textsf{LCG}}
\newcommand{\LDMDD}{\textsf{LD-MDD}}
\newcommand{\LDSN}{\textsf{LD-SN}}
\newcommand{\LCGwC}{\textsf{LCG-Pre}}
\newcommand{\LCGnC}{\textsf{LCG}}
\newcommand{\MDDEncAwC}{\textsf{MDD-Pre}}
\newcommand{\MDDEncAnC}{\textsf{MDD}}
\newcommand{\CNwC}{\textsf{SN-Pre}}
\newcommand{\CNnC}{\textsf{SN}}
\newcommand{\SupwC}{\textsf{Support-Pre}}
\newcommand{\SupnC}{\textsf{Support}}
\newcommand{\tnode}{{\ensuremath{\cal T}}}
\newcommand{\fnode}{{\ensuremath{\cal F}}}
\begin{document}

\title{Encoding Linear Constraints into SAT}

\author{\name Ignasi Ab{\'\i}o \email ignasi@barcelogic.com \\
        \addr Barcelogic, K2M Building. Carrer de Jordi Girona, 1 Barcelona,  Spain
        \AND
        \name Valentin Mayer-Eichberger \email valentin@mayer-eichberger.de \\
        \addr Technische Universit\"at Berlin, Germany 
        \AND
        \name Peter Stuckey \email peter.stuckey@monash.edu \\
        \addr Faculty of Information Technology, Monash University, Australia 
       }

\maketitle

\begin{abstract}
Linear integer constraints are one of the most important constraints
in combinatorial problems since they are commonly found in many
practical applications. Typically, encodings to Boolean satisfiability (SAT)
format of conjunctive normal form
perform poorly in
problems with these constraints in comparison with SAT modulo theories
(SMT), lazy clause generation (LCG) or mixed integer programming (MIP)
solvers. 

In this paper we explore and
categorize SAT encodings for linear integer
constraints.
We define new SAT encodings based on multi-valued decision diagrams,
and sorting networks.
We compare different SAT encodings of linear constraints
and demonstrate where one may be preferable to another.
We also compare SAT encodings against other solving methods
and show they can be better than
linear integer (MIP) solvers and sometimes better than LCG/SMT solvers
on appropriate problems.
Combining the new encoding with lazy
decomposition, which during runtime only encodes constraints that are
important to the solving process that occurs, gives the best option
for many highly combinatorial problems involving linear constraints.
\end{abstract}
%%\begin{keyword}
%%Linear integer constraints \sep CNF encoding \sep Boolean satisfiability
%%\end{keyword}

\section{Introduction}
\label{section-introduction}

In this paper we study \emph{linear integer (LI) constraints},
that is, constraints of the form $a_1 x_1 + \cdots + a_n x_n \;\#\;
a_0$, where the $a_i$ are integer given values, the $x_i$ are
finite-domain integer variables, and the relation operator $\#$
belongs to $\{ <, >, \leqslant, \geqslant, = \}$.

Linear integer constraints appear in almost every combinatorial problem,
including scheduling, planning and software verification, and, therefore, many
different Boolean satisfiability (SAT)
encodings~\shortcite<see e.g.>{Sugar,conf/ismvl/AnsoteguiBMV11}, 
SAT Modulo Theory (SMT) theory solvers~\shortcite{Yices,Z3}, and
propagators~\shortcite{harvey} 
for Constraint Programming (CP) solvers~\shortcite{toplas09}
have been suggested
for handling them.

In this paper we survey existing methods for encoding special cases of
linear constraints, in particular \emph{pseudo-Boolean (PB) constraints}
(where $x_1, \ldots, x_n$ are Boolean (or 0-1) variables), and
\emph{cardinality constraints (CC)} (where, moreover, $a_i = 1, 1 \leq i \leq n$).
We then show how these can be extended to encode general linear
integer constraints.

The first method proposed here roughly consists
in encoding a linear integer constraint into a Reduced Ordered
Multi-valued Decision Diagram (MDD for short), and then decomposing the MDD
to SAT. There are different reasons for choosing this approach:
firstly, most state-of-the-art encoding methods define one auxiliary
variable for every different possible value of the partial sum $s_i =
a_i x_i + a_{i+1} x_{i+1} + \cdots + a_n x_n$. However, some values of the
partial sums may be equivalent in the constraint. 
For example, the expression 
$s_2 = 2 x_2 + 5x_3$ with $x_2 \in [0,2]$ and $x_3 \in [0,3]$
cannot take the value 13 and hence the constraints 
$s_2 \leq 12$ and $s_2 \leq 13$ are equivalent, and hence
we don't need to encode both possible partial sum results.
With MDDs, due to the reduction process, we
can identify these situations, and encode all these indistinguishable
values with a single variable, producing a more compact encoding.

Secondly, BDDs are one of the best methods for encoding pseudo-Boolean
constraints into SAT by \shortciteA{Abio12}, and MDDs
seems the natural tool to generalize the pseudo-Boolean encoding.

The second method uses sorting networks to encode the LI
constraints. The encoding is a generalization of
the methods by \shortciteA{Bailleux09} and by \shortciteA{Een06} 
that have good propagation properties and
better asymptotic size than the BDD/MDD encodings.

The goal of these encodings is not for use in arbitrary problems
involving linear integer constraints. In fact, a specific linear
integer (MIP) solver or CP or SMT solver will usually outperform
any SAT encoding in problems with many
more linear integer constraints than Boolean clauses.

Nevertheless, a fairly common kind of combinatorial problem consists mainly
of Boolean variables and clauses, but also a few integer
variables and LI constraints. Among these problems, an important class
correspond to SAT problems with a linear integer optimization
function. In these cases, SAT solvers are the optimal tool for solving
the problem, but a good encoding for the linear integer constraints is
needed to make the optimization effective.  Therefore, in these
problems the decompositions presented here can make a significant
difference.

Note, however, that decomposing the constraint may not always be the
best option. In some cases the encoding might produce a large number
of variables and clauses, transforming an easy problem for a CP
solver into a huge SAT problem. In some other cases, nevertheless, the
auxiliary variables may give an exponential reduction of the search
space. Lazy decomposition~\shortcite{encodeOrPropagate,lazyDecomposition}
is a hybrid approach that has been successfully used to handle this
issue for cardinality and pseudo-Boolean constraints. Here, we show
that it also can be applied successfully on linear integer
constraints.

The methods proposed here use the order encoding~\shortcite{gent2004new,Ansotegui2004} for representing the
integer variables.  For some LI constraints, however, the domains of the integer
variables are too large to effectively use the order encoding. 
We also propose a new alternative method for encoding linear integer
constraints using a logarithmic encoding of the integer variables. 

In summary, the contributions of this paper are:
\begin{itemize}
  \item A precise definition of correct SAT encodings of constraints over
    non-Boolean variables and the consistency maintained by such a SAT
    encoding.  
  \item A new encoding (\MDD) for LI constraints using MDDs that can
    outperform other state-of-the-art encodings.
  \item A new encoding for Monotonic MDDs into CNF.
  \item A new encoding (\SN) for LI constraints using sorting networks
    that can outperform other state-of-the-art encodings.
  \item A new proof of consistency of direct sorting network encodings of LI
    (that is, without using ``tare'' trick to adjust the right hand side
    $a_0$ to be a power of 2). This is an open question in
    the MiniSAT+ paper~\cite{Een06}.
  \item An alternative encoding (\BDDDec) for LI constraints for large
    constraints or variables with huge domains.
  \item An extensive experimental comparison of our
    methods with respect to other decompositions to SAT and other
    solvers. A total of 14 methods are compared, on more than 5500 benchmarks,
    both industrial and crafted.
\end{itemize}

\ignore{
This paper extends work originally published in a conference~\cite{cp2014a},
by including  far more discussion of related work,
including a survey of existing methods for cardinality and PB constraints,
more rigorous formulation of key concepts, 
theorems with detailed proofs, a new translation of LI constraints based on
sorting networks, and a substantially 
extended experimental section.
}

The paper is organized as follows. First in Section \ref{section-preliminaries}
we introduce SAT
solvers, encodings of integer variables, and how to transform linear
constraints to a standard form. Next in Section~\ref{section-card}, because they are useful also for
encoding LI constraints,  we survey methods for encoding cardinality
constraints into CNF. 
Then in Section~\ref{section-PB}, because methods for encoding
linear constraints are typically extensions of methods for encoding PB
constraints, 
we review various methods for encoding
pseudo-Boolean constraints into CNF.

In Section~\ref{section-LI} we come to the core of the paper, which examines 
various methods for encoding general linear integer constraints.
%First we consider encodings based on the logarithmic encoding of integers
%using adder circuits.
We first concentrate on encodings using the order encoding of integers.
In Section~\ref{section-groupingCoeffs} we introduce
a simplification of LI constraints (and PB constraints)
that improves on their encoding (and also requires the use LI encodings for
what were originally PB constraints).
In Section~\ref{section-MDD} we define how to encode LI constraints using
multi-valued decision diagrams (MDDs).
In Section~\ref{section-sortingEncoding} we define how to encode LI
constraints using sorting networks.
In Section~\ref{section-sugar} we review the only existing encoding of
general linear constraints to SAT that maintains domain consistency, based
on using partial sums.
In Section~\ref{section-logarithmic} we review the existing encodings of
general linear constraints to SAT based on logarithmic encodings of
integers, and define a new approach \BDDDec.
%% In Section~\ref{section-summary} we summarize the various encodings
%% and their relationships.
In Section~\ref{section-experiments} we give detailed experiments
investigating the different encodings, and also compare them against other
solving techniques.
Finally in Section~\ref{section-conclusion} we conclude.

Figure \ref{graph1} gives an overview the different translations and
contributions in this paper. The graph shows a selection of different
translations from LI and PB constraints to intermediate data structures in
focus. The arrows are annotated by the respective publications and Sections. At
the root we have the Linear constraint and its decomposition to the PB
constraints by various methods. A second set of arrows connects the LI
constraints with the data structures directly. 

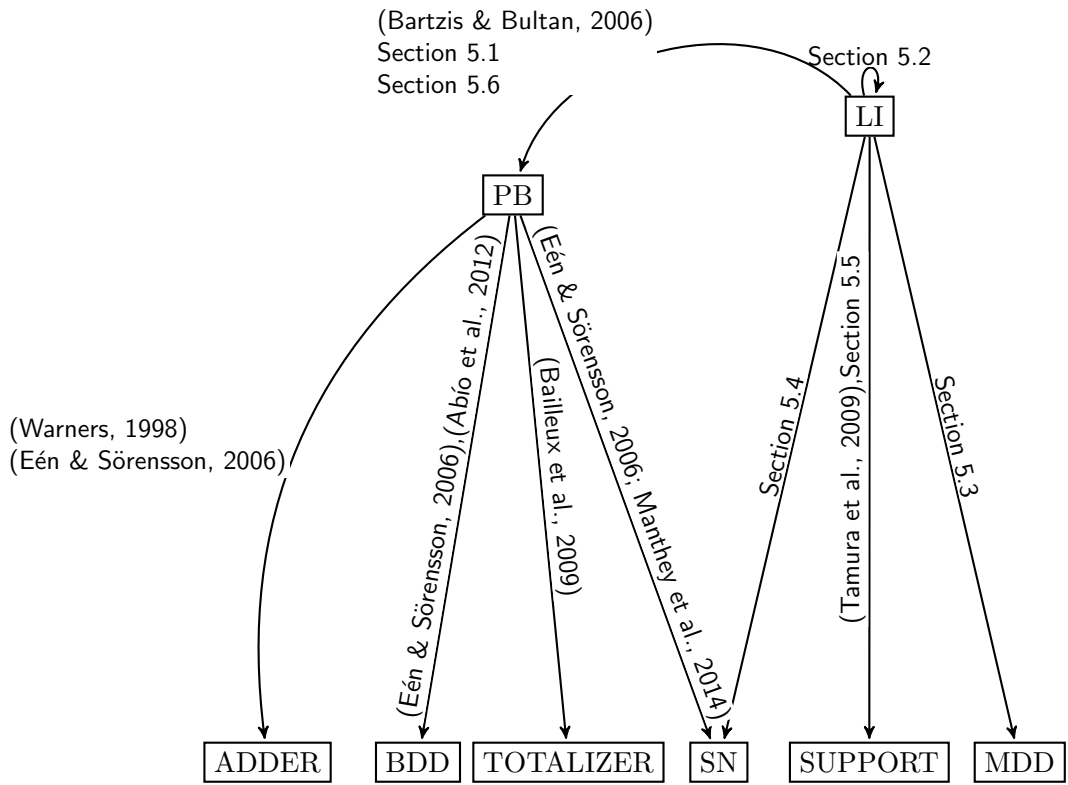
\begin{figure}
\begin{center}
\begin{tikzpicture}[->,>=stealth',shorten >=1pt,auto,node distance=2cm,
  thick,main node/.style={rectangle,draw}]
  \node[main node] (LI)  {LI};
  \node[main node] (PB) [below left=0.5cm and 4cm of LI] {PB};
  \node[main node] (ADD) [below left=7cm and 2cm of PB] {ADDER};
  \node[main node] (BDD) [right of = ADD] {BDD};
  \node[main node] (TOT) [right of = BDD] {TOTALIZER};
  \node[main node] (SN) [right of = TOT] {SN};
  \node[main node] (SUP) [right of = SN] {SUPPORT};
  \node[main node] (MDD) [right of = SUP] {MDD};
%  \node[main node] (CNF1) [below of = ADD] 
%  \node[main node] (CNF2) [right of = CNF1] {CNF(CONS)};
%  \node[main node] (CNF3) [below of = BDD] {CNF(GAC)}; 

  \path[every node/.style={font=\sffamily\small,
  		fill=white,inner sep=1pt}]
  	% Right-hand-side arrows rendered from top to bottom to
  	% achieve proper rendering of labels over arrows.
        (LI) edge [loop above] node[] {Section \ref{section-groupingCoeffs}} (LI)
    (LI)    edge [bend right=60] node[align=left,left=1mm] {\shortcite{bartzis2006efficient} \\Section \ref{section-order}\\ Section \ref{section-logarithmic}} (PB)
            edge node [above, sloped]{Section \ref{section-MDD}} (MDD)
            edge node [above,sloped] {Section \ref{section-sortingEncoding}} (SN)
            edge node [above,sloped] {\shortcite{Sugar},Section~\ref{section-sugar}} (SUP)
    (PB)    edge [bend right] node[left=1mm,align=left] {\shortcite{Warners98} \\ \shortcite{Een06}} (ADD)
            edge node [above,sloped] {\shortcite{Een06},\shortcite{Abio12}} (BDD)
            edge node [above,sloped] {\shortcite{Bailleux09}} (TOT)
            edge node [above, sloped] {\shortcite{Een06,Manthey14}} (SN);
%    (TOT)   edge node [sloped, above,align=left]  {Bailleux09} (CNF3)
%    (SN)    edge node [above,align=left,near end]  {Een06 \\ Codish11} (CNF2)
%            edge node [above,sloped] {Manthey14} (CNF3)
%    (BDD)   edge  node [above,align=left] {Een06 \\ Abio12} (CNF3)
%    (MDD)   edge [bend left=30] node [above] {Abio14} (CNF3)
%    (CNF2)  edge node [above,sloped] {Bessiere09} (CNF3)
%    (ADD)   edge node [above,align=left,near end] {Warner98 \\ Een06} (CNF1);

\end{tikzpicture}
\end{center}
\caption{An overview of various methods to encode LI and PB constraints. The
    majority of the encodings of LI translate first to PB and use existing
    encodings thereof, whereas others translate directly. The terminals symbolize
    the basic building structure of the translation. \label{graph1}}
\end{figure}

We summarize the state of the art in CC, PB and LI encodings 
of the constraint $a_1 x_1 + a_2 x_2 + \cdots + a_n x_n \leq a_0$ where
the number of variables in the LI is $n$,
$a_0$ is the right hand side coefficient, $\amax$ is the largest left hand side
coefficient, and $d$ is the
the size of the largest integer variable domain,
in Table~\ref{tab:all}.
The table shows the basis of  construction method 
using \textsl{Adder}s%(see e.g. Section~\ref{sec:card-adders})
, using
\textsl{Totalizer}s% (see e.g. Section~\ref{sec:totalizers}), 
, using sorting networks (\textsl{SN}),
using cardinality networks (\textsl{CN}), using \textsl{BDD}s, using
watchdogs (\textsl{GPW} and \textsl{LPW}), % see page \pageref{gpw}), 
and for the general linear (LI) encodings the named encoding.
It gives a reference for the method, and a page where it is discussed.
It then shows the asymptotic size of the encoding and
the consistency maintained by unit propagation on the encoding (see Section~\ref{consistency}),
where --- indicates no consistency. 
Note that any method with a coefficient $a_0$ outside the $\log$ is
exponential in size, the remainder are polynomial.

\begin{sidewaystable}
\caption{Summary of different encodings for CC, PB and LI constraints.\label{tab:all}}
\begin{tabular}{|l|lll|ll|l|}
    \hline
    Cons & Construction & Reference & Page  & \# Clauses                   & Consistency \\
    \hline
    CC & \sl Adder     & \shortcite{Warners98} &\pageref{warners} & $O(n)$              & --- \\
    CC & \sl Totalizer & \shortcite{BailleuxBoufkhad2003CP}& \pageref{boufkhad} & $O(n^2)$ & Domain \\
    CC & \sl SN        & \shortcite{Een06} & \pageref{eensort} & $O(n \log^2 (n))$ & Domain \\
    CC & \sl CN        & \shortcite{AsinNOR11} &\pageref{asinnor}  & $O(n \log^2 (a_0))$ & Domain \\
    CC & \sl CN        & \shortcite{parametricCardinalityConstraint} &\pageref{abiosn}  & $O(n \log^2 (a_0))$ & Domain \\
%    CC & \sl SN & Metodi13 \cite{DBLP:journals/jair/AMCS13}  & $O(n \log^2 (a_0))$ & Domain \\
    
    \hline \hline
    PB & \sl Adder & \shortcite{Warners98} &\pageref{warnerspb} & $O(n\log(a_0))$              & --- \\
    PB & \sl Adder & \shortcite{Een06}   & \pageref{eenadder}  & $O(n\log(a_0))$              & ---  \\
    PB & \sl BDD   & \shortcite{Een06}    & \pageref{eenbdd} & $O(na_0)$ 6/Node             & Domain\\
    PB & \sl BDD   & \shortcite{Bailleux06}& \pageref{b06} & $O(na_0)$ 4/Node             & Domain \\
    PB & \sl BDD   & \shortcite{Abio12}  & \pageref{abiobdd}  & $O(na_0)$ 2/Node             & Domain \\
    PB & \sl SN    & \shortcite{Een06}    & \pageref{eensn} & $O(n\log(\amax) \log^2 (n\log(\amax)))$ & Consistent \\
    PB & \sl GPW   & \shortcite{Bailleux09}& \pageref{gpw} & $O(n^2 \log^2(n) \log(\amax))$ & Consistent \\
    PB & \sl LPW   & \shortcite{Bailleux09}& \pageref{gpw} & $O(n^3 \log^2(n) \log(\amax))$ & Domain \\
    PB & \sl SN    & \shortcite{Manthey14} & \pageref{manthey} & $O(n^2 \log^2(n) \log(\amax))$ & Domain\\
    \hline 
    \hline
    LI & \sl Adder & \shortcite{DBLP:conf/cp/Huang08}& \pageref{huang} & $O(n \log(a_0))$ & --- \\
    LI & \sl SN    & \shortcite{DBLP:journals/jair/AMCS13}&  & $O(n d a_0)$ & --- \\
    LI & \SN       & \ref{section-sortingEncoding} & \pageref{section-sortingEncoding} &$O(n \log (d) \log (n) \log (\amax))$&  Consistent \\
    LI & \BDD      & \shortcite{bartzis2006efficient} & \pageref{bartzis} &$O(n \log (d) \log (a_0))$ & --- \\
    LI & \BDDDec   & \ref{section-logarithmic} & \pageref{section-logarithmic} &$O(n^2 \log(d) \log (\amax))$ & --- \\
    LI & \Sup      & \shortcite{Sugar} &\pageref{section-sugar} &$O(n d a_0)$ & Domain \\
    LI & \MDD      & \ref{section-MDD} & \pageref{section-MDD} &$O(n d a_0)$ & Domain
    \\
    \hline 
\end{tabular}
\end{sidewaystable}
%% \ignasi{I replaced the bound of SN from Minisat+ by the bound given in
%%   the paper, as discussed before.}
%%   \vale{Should we distinguish between $a_0$ (RHS) and $\amax$ (largest coefficient) ? Have we thought
%% about the other methods that are affected by this distinction?}
%% 
%% \ignasi{Most methods use the inequality $\sum a_i > a_0$ (in case of
%%   PB) or $\sum a_i d_i > a_0$ (for LI). In fact, I guess that
%%   sometimes people assume $O(n a_m) = O(a_0)$ in PB. I'd have
%%   preferred to have some uniform bounds (using always $a_0$ or always
%%   $a_m$), but I think it's better to use the bounds from the papers.}
%% \pjs{I think its better to change them all to use $a_0$ if we can. Otherwise
%%   we have to explain what $a_m$ is!}
%% \pjs{Can we fix this: either always use $a_0$ (my preferred option) or
%%   explain $a_m$!}

\ignore{
\pjs{Use the above to give a map of the paper!}
The remainder of the paper is organized as follows.
In Section \ref{section-preliminaries} we give preliminary definitions of
SAT solvers\ignore{, LCG and LD solvers}, encodings of integer variables,
LI constraints (and transformation to a uniform form).
In Section~\ref{section-card} we review various methods for encoding
cardinality constraints into CNF.
In Section~\ref{section-PB} we review various methods for encoding
pseudo-Boolean constraints into CNF.
}

\ignore{
Section \ref{section-preliminaries}: preliminaries: SAT solvers, LCG
and LD solvers, encodings of integer variables, LI constraints (and
transformation to $\leqslant$).
Our contributions:
\begin{itemize}
  \item A precise definition of SAT encoding of constraints over
    non-Boolean variables and (bound/domain) consistency on SAT
    encodings.
\end{itemize}

Section \ref{section-logarithmic}: using the logarithmic encoding to
transform the LI constraint into a PB. \Adder, \BDD and our \BDDDec.
Our contributions:
\begin{itemize}
  \item \BDDDec.
\end{itemize}

Section \ref{section-MDD}: from BDDs for PB to MDDs for LI. Intervals,
MDD algorithm, 2 encodings of the MDD, optimization problems.

Our contributions:
\begin{itemize}
  \item New encoding for LI, \MDD, for normal and optimization constraints.
  \item Two encodings of MDDs, one for monotonic functions the only for LI.
\end{itemize}

New in this paper:
\begin{itemize}
  \item Detailed proofs of the results.
  \item New encoding for the MDD of a LI constraint.
\end{itemize}

Section \ref{section-sortingEncoding}: from ``specific'' SN for PB to
``general'' SN for LI. Tare case (for fix cts) and non-tare case for
optimization ones.

Our contributions:
\begin{itemize}
  \item New encoding for LI, \SN, for normal and optimization constraints.
  \item New general and simpler proof of consistency in the tare case, linking it with Toby's paper.
  \item New proof of consistency in the non-tare case (open question in MiniSAT+ paper).
\end{itemize}
New in this paper:
\begin{itemize}
  \item All
\end{itemize}

Section \ref{section-groupingCoeffs}: grouping the coefficients: transforming PB/LI into ``better'' LI.

Contributions:
\begin{itemize}
  \item All
\end{itemize}

Section \ref{section-stateoftheart}: State of the art.

Section \ref{section-experiments}: Experiments.
}

\section{Propagation and Encodings}
\label{section-preliminaries}

In this section we introduce the concepts of variables, domains,
constraints, propagators and encodings into SAT. Mostly, the
terminology we use is standard. The exception is the encodings
into SAT: unfortunately, there is no standard definition for this.

In fact, most papers do not define what is an encoding or what it
means for an encoding to be consistent. In the case of encodings of
constraints of Boolean variables this is not a problem, but, in
general, when dealing with integer variables some encodings cannot
represent all possible domains. In this case, the meaning of
consistency is not clear. In this section we provide a precise
definition of encodings into SAT for both Boolean and integer
variables. With this definition, the concept of consistency can
naturally be extended to encodings.

\subsection{Domains, Constraints and Propagators}

We use $[l,u]$ to denote
the \emph{interval} of integers $\{ d ~|~ l \leq d \leq u \}$. 
Let ${\cal X}$ be a fixed set of variables. A \emph{domain} $D$ is a
complete mapping from ${\cal X}$ to a set of subsets of 
finite set of integers. Given a
domain $D$, and a variable $x \in {\cal X}$, the \emph{domain of the
  variable} $x$ is the set $D(x) \subset \mathbb{Z}$.
In the following, let us fix an initial domain $D$.

A \emph{false domain} $D$ is one where $D(x) = \emptyset$ for some $x \in
{\cal X}$. Let $\oslash$ be the false domain where
$\oslash(x) = \emptyset, \forall x \in {\cal X}$.
A domain $D_1$ is \emph{stronger} than a domain $D_2$, written $D_1
\sqsubseteq D_2$, if $D_1(x) \subseteq D_2(x)$ for all $x \in {\cal
  X}$. Given the domains $D_1$ and $D_2$, the domain $D_1 \sqcap D_2$
is the domain such that $D_1 \sqcap D_2(x) = D_1(x) \cap D_2(x)$ for
all $x \in {\cal X}$. In this paper we assume that the initial domain
is \emph{convex}, i.e., that the domain of every variable $x \in {\cal
  X}$ is an interval. A \emph{complete assignment} is a domain $D'$
such that $\vert D'(x)\vert = 1$ for all $x \in {\cal X}$.

A \emph{constraint} $c$ over the variables $x_1, x_2, \ldots, x_n \in
{\cal X}$ is a subset of the Cartesian product $D(x_1) \times D(x_2)
\times \cdots \times D(x_n)$. A complete assignment $D'$
\emph{satisfies the constraint} $c$ if 
$D'(x_i) = \{d_i\}, 1 \leq i \leq n$ and 
$(d_1, d_2, \ldots, d_n) \in c$. The \emph{solutions of a constraint} $c$, denoted as
$\solns(c)$, are the set of complete assignments that satisfy $c$. A
constraint $c$ is \emph{satisfiable} on the domain $D_1$ if there is a
complete assignment $D_2 \sqsubseteq D_1$ that satisfies
$c$. Otherwise, it is \emph{unsatisfiable} on $D_1$.

Given a constraint $c$, a \emph{propagator} $f$ is a monotonically
decreasing function from domains to domains such that $f(D')
\sqsubseteq D'$ for all domain $D'$; a monotonically decreasing 
function is such that if $D_1 \sqsubseteq D_2$ then $f(D_1) \sqsubseteq
f(D_2)$. A propagator $f$ is \emph{correct} if for all domains $D'$,
$\{ \solns(c) \st \solns(c) \sqsubseteq D' \} = \{ \solns(c) \st
\solns(c) \sqsubseteq f(D') \}$.

%% A domain $D_1$ is \emph{consistent} with respect to a constraint $c$ if
%% for all $y \in {\cal Y}$ and $d \in D_1(y)$, $c$ is satisfiable on
%% $$D_2 =
%% \begin{cases}
%%   y' \mapsto D_1(y') & \mbox{if } y' \neq y\\
%%   y' \mapsto \{d\} & \mbox{if } y' = y
%% \end{cases}$$

Constraint Programming (CP) solvers solve problems by maintaining a domain
$D$, and reducing the domain using a propagator for each constraint $c$ in 
the problem. When propagation can make no further reduction, the solver splits
the problem into two, typically by splitting the domain of a variable in two
disjoint parts, and examines each subproblem in turn.

\subsection{Consistency}\label{consistency}

The identity propagator, $id(D) = D$, is correct for any constraint. In
practice we want propagators to enforce some stronger condition than
correctness.  The usual conditions of interest are:
\begin{description}
\item[consistent]
A propagator $f$ is \emph{consistent} for $c$ if, given any domain $D'$ where
$c$ is unsatisfiable on it, $f(D')$ is a false domain. That is it detects
when the constraint can no longer be satisfied by the domain.
\item[domain consistent]
A propagator $f$ is
\emph{domain consistent} for $c$ if, given any domain $D_1$, for all $x \in
     {\cal X}$ and $d \in f(D_1)(x)$, $c$ is satisfiable on
$$D_2 =
\begin{cases}
  x' \mapsto f(D_1)(x') & \mbox{if } x' \neq x\\
  x' \mapsto \{d\} & \mbox{if } x' = x
\end{cases}$$
A domain consistent propagator infers the maximum possible information,
representable in the domain, from the constraint.
\item[bounds consistent]
A propagator $f$ is \emph{bounds consistent} for $c$ if, given any domain $D_1$,
for all $x \in {\cal X}$ with $f(D_1)$ not a false domain, $c$ is
satisfiable on
$$D_2 =
\begin{cases}
  x' \mapsto [ \min f(D_1)(x'), \max f(D_1)(x')] & \mbox{if } x' \neq x\\
  x' \mapsto \{l\} & \mbox{if } x' = x
\end{cases}$$ and
$$D_3 =
\begin{cases}
  x' \mapsto  [ \min f(D_1)(x'), \max f(D_1)(x')] & \mbox{if } x' \neq x\\
  x' \mapsto \{u\} & \mbox{if } x' = x
\end{cases}$$
where $l = \min f(D_1)(x)$ and $u = \max f(D_1)(x)$.
A bounds consistent propagator enforces that the upper and lower bounds of
each variable appear in some solution to the constraint.
\end{description}

\begin{exmp}
  Given $x_1, x_2, x_3 \in {\cal X}$ with initial domain $x_1 \in
  [0,4]$, $x_2 \in [0,2]$ and $x_3 \in [0,3]$, let us consider the
  constraint $c: 3 x_1 + 2 x_2 + 5 x_3 \leqslant 15$. The propagator $f$ defined by
  $$f(D_1) = \begin{cases}
    D_1 & \mbox{if } 3 \min D_1(x_1) + 2 \min D_1(x_2) + 5 \min D_1(x_3) \leqslant 15\\
    \oslash & \mbox{otherwise}
  \end{cases}$$
  is correct, since if $\{\solns(c) \sqsubset D_1\} \neq \emptyset$,
  then $f(D_1) = D_1$.  It is consistent since if $f(D_1) \neq
  \oslash$, then $\{x_i = \min D_1(x_i) \st i = 1,2,3\}$ is a
  solution of $c$. However, $f$ is not bounds consistent, since given
  $D_1 = \{ x_1 \in [0,4], \ x_2 \in [0,2], \ x_3 \in [3,3] \}$,
  $f(D_1) = D_1$, but $c$ is unsatisfiable in $D_3 = \{ x_1 \in [4,4],
  \ x_2 \in [0,2], \ x_3 \in [3,3] \}$. In the same way, the
  propagator is not domain consistent. 
\end{exmp}

\subsection{SAT Solving}
Let ${\cal Y}=\{y_1,y_2,\ldots \}$ be a fixed set of propositional
\emph{variables}. If $y\in {\cal Y}$ then $y$ and $\neg y$ are
\emph{positive} and \emph{negative literals}, respectively.  The
\emph{negation} of a literal $l$, written $\neg l$, denotes $\neg y$
if $l$ is $y$, and $y$ if $l$ is $\neg y$.  A \emph{clause} is a
disjunction of literals $\neg y_1 \lor \cdots \lor \neg y_p \lor
y_{p+1} \lor \cdots \lor y_n$, sometimes written as $y_1 \land \cdots
\land y_p \rightarrow y_{p+1} \lor \cdots \lor y_n$. A \emph{CNF
  formula} is a conjunction of clauses. Clauses and CNF formulas can
be seen as constraints as defined in the previous section.

A \emph{partial assignment} $A$ is a set of literals such that $\{y,
\neg y \} \not\subseteq A$ for any $y \in {\cal Y}$, i.e., no
contradictory literals appear.  A literal $l$ is \emph{true} in $A$ if
$l \in A$, is \emph{false} in $A$ if $\neg{l} \in A$, and is
\emph{undefined} in $A$ otherwise. True, false or undefined is the
\emph{polarity} of the literal $l$. A non-empty domain $D$ on ${\cal
  Y}$ defines a partial assignment $A$ in the obvious way: if $D(y) =
\{0\}$, then $\neg y \in A$; if $D(y) = \{1\}$, then $y \in A$; and if
$D(y) = \{0, 1\}$, $y$ is undefined in $A$.

Given a CNF formula $F$, \emph{unit propagation} is the propagator
defined as following: given an assignment $A$, it finds a clause in
$F$ such that all its literals are false in $A$ except one, say $l$,
which is undefined, add $l$ to $A$ and repeat the process until
reaching a fix-point.

We assume a basic model of a propagation based SAT solver which captures the
majority of SAT solvers: A system that decides whether a formula has a model by
extending partial assignments through deciding on unassigned variables and
reasoning via unit propagation. See~e.g. the work by
\shortciteA{Nieuwenhuisetal2006JACM} for more details. More advance concepts
such as heuristics or conflict clause learning are not explicitly needed for
our investigation. In our analysis of encodings
we do not consider SAT solvers that follow other paradigms, for instance local
search.

\subsection{Encoding of Integer Variables}
SAT solvers cannot directly deal with non-propositional variables. Therefore,
to tackle a general problem with SAT solvers, the non-propositional variables
must be transformed into propositional ones. This process is called
\emph{encoding the integer variables into SAT}.

Given a set of integer variables $\cal X$ with initial domain $D$, an encoding
of ${\cal X}$ into SAT is a set of propositional variables ${\cal Y}$,
a CNF formula $F$ and a monotonically decreasing function $e$ between
domains of $\cal X$ and partial assignments on ${\cal Y}$ such that:
\begin{itemize}
  \item If $D'$ is empty for some $x \in {\cal X}$, then $e(D')$
    cannot satisfy $F$.
  \item If $D'$ is a complete assignment of $\cal X$, $e(D')$ is a
    complete assignment of ${\cal Y}$ and it satisfies $F$.
  \item The restriction of $e$ to complete assignments is injective.
\end{itemize}

Given $({\cal Y}, F, e)$ an encoding of a set of integer variables
$\cal X$, we can define a monotonically decreasing function $e^{-1}$
between partial assignments of $\cal Y$ to domains on ${\cal X}$
as 
$$e^{-1} (A) = \bigsqcap {\{D' \st e(D') \subseteq A\}}.$$
Notice
that $e^{-1}(e(D')) = D'$. Also notice that $e^{-1}$ is the inverse of
the restriction of $e$ to complete assignments of ${\cal X}$.

Here we consider encodings of a single integer variable: these
encodings can be extended to sets of integer variables in the obvious
way.

There are different methods to encode finite domain variables to SAT that
maintain different levels of consistency. A methodical introduction to this
topic is given by \shortciteA{Walsh00} and \shortciteA{Gent02arcconsistency}.
In case of integer variables for our investigation we focus on the \emph{order}
and the \emph{logarithmic} encoding that we will properly define in this
section. We will not consider the \emph{direct} encoding since it performs
badly with LI constraints \shortcite<e.g.>{Sugar}: it requires a huge number of
clauses even in the simplest LI constraints. Logarithmic encoding produces the
most compact encodings of LI constraints at the expense of propagation
strength. Order encoding produces the smallest encodings among those with good
propagation properties.

Let $x$ be an integer variable with initial domain $[a,b]$. The \emph{order
  encoding}~\shortcite{gent2004new,Ansotegui2004} - sometimes called the
\emph{ladder} or \emph{regular} encoding -  introduces Boolean variables
$y^i$ for $a+1 \leqslant i \leqslant b$. A variable $y^i$ is true iff
$x \geqslant i$. The encoding also introduces the clauses $y^{i+1}
\rightarrow y^{i}$ for $a+1 \leqslant i < b$. In the following, we
denote $\ordEnc(x):=[y^{a+1}, y^{a+2}, \ldots, y^{b}]$. Given a domain $D'
\subseteq [a,b]$ of $x$, let us define $l = \min D'$ and $u = \max
D'$. Then, $e(D') = \{y^i \st i \geq l \} \cup \{\neg y^i \st i \geq
u+1 \}$.

Given $x$ be an integer variable with initial domain $[0, 2^n-1]$.
The \emph{logarithmic encoding}
introduces only $n$ variables $y_b^i$ which codify the binary
representation of the value of $x$, as $x = \sum_{i=0}^{n} 2^i y_b^i$.
In the following, we denote $\logEnc(x):=[y_b^0, y_b^1, \ldots,
  y_b^{n-1}$. %%\pjs{uses to say $y_b^{\lfloor log(d) \rfloor}$!}
It is a more compact encoding, but
it usually gives poor propagation performance. Given a domain 
$D'$ over ${\cal X} = \{x\}$ where $D'(x) \subseteq [0, 2^n-1]$, 
$e(D') = \{ y_b^i \st \forall v \in D'(x)\, v/2^i
\equiv 1 \ (\textrm{mod}~ 2) \} \cup \{ \neg y_b^i \st \forall v \in D'(x), v/2^i
\equiv 0 \ (\textrm{mod}~ 2) \}$.

We will only be interested in encoding linear constraints with variables
with initial domain $[0,d]$ (see Section~\ref{section-li}).
To generate the logarithmic encoding
for such a variable with $d \neq 2^n-1$.
We generate the encoding for an integer with initial domain $[0, 2^m-1]$
where $m = \lceil \log (d+1) \rceil$. 
We then add constraints encoding $x \leq d$ as
the lexicographic ordering constraint
$[y^{m-1}_b,y^{m-2}_b,\ldots,y^0_b] \leq
[bit(m-1,d), bit(m-2, d), \ldots, bit(0,d)]$
where $bit(i,d)$ returns the $i^{th}$ bit in the unsigned encoding of
positive integer $d$.
The clauses are
$$
     \bigwedge_{i=0..m-1, bit(i,d) = 0} ((\bigvee_{j=i+1..m-1, bit(i,d) = 0}
     y_b^j) \vee (\bigvee_{j=i+1..m-1, bit(i,d) = 1} \neg y_b^j) \vee \neg y_b^i)
$$
which encode that if the first $m-i-2$ elements in the list are equal,
and the $i^{th}$ bit of $d$ is a 0, then $\neg y_i$ must hold.

\begin{exmp}
  Consider encoding a variable $x$ taking values from $[0,9]$. Then $m = 4$
  and the bits of $d=9$ are $bit(3,d) = 1$, $bit(2,d) = 0$, $bit(1,d) = 0$,
  $bit(0,d) = 1$.  
  We generate Boolean encoding variables
  $[y_b^0, y_b^1, y^2_b, y^3_b]$.
  We encode $x \leq 9$ using the clauses
  $\neg y_b^3 \vee \neg y_b^2$, $\neg y_b^3 \vee y_b^2 \vee \neg y_b^1$.
  Notice these clauses can be simplified (by Krom subsumption) to
  $\neg y_b^3 \vee \neg y_b^2$, $\neg y_b^3 \vee \neg y_b^1$.
\end{exmp}

\subsection{Encoding constraints into SAT}
In the same way, SAT solvers cannot directly deal with general
constraints, so they must be encoded as well. In this section we
explain what is an encoding of a general constraint into SAT.

Let $D$ be a domain on the variables $\cal X$, and let $c$ be a
constraint on ${\cal X}$. Let $({\cal Y}, F, e)$ be an encoding of
$\cal X$. An \emph{encoding of $c$ into SAT} is a set of propositional
variables ${\cal Y}_c \supseteq {\cal Y}$ and a formula $F_c$ such
that given $D'$ a complete assignment on $\cal X$, $D'$ satisfies $c$
if and only if $F_c$ is satisfiable on $e(D')$.

Given an encoding of a constraint $c$ into SAT, unit propagation
defines a propagator of $c$:
\begin{proposition} \label{prop-up-propagator}
  Let $D$ be a domain on the variables $\cal X$, and let $c$ be a
  constraint on $\cal X$. Let $({\cal Y}, F, e)$ be an encoding of
  $\cal X$ and $({\cal Y}_c, F_c)$ an encoding of $c$. Then $D_1
  \mapsto D_1 \sqcap (e^{-1} \circ \pi_{\vert {\cal Y}} \circ
  \textrm{up}_{F_c} \circ e)(D_1)$ is a correct propagator of $c$,
  where $\pi_{\vert {\cal Y}}$ is the projection from ${\cal Y}_c$ to
  $\cal Y$ and $\textrm{up}_{F_c}$ is the unit propagation on $F_c$.
\end{proposition}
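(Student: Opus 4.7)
The plan is to check the two propagator properties (monotonicity and $f(D_1) \sqsubseteq D_1$) and then verify correctness.

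The inequality $f(D_1) \sqsubseteq D_1$ is immediate from the outer $\sqcap$. Monotonicity of $f$ follows by observing that it is the $\sqcap$ of the identity with a composition of four monotone maps: $e$ is monotone by the definition of an encoding of $\cal X$; unit propagation $\textrm{up}_{F_c}$ is monotone in the $\sqsubseteq$ order because any clause becoming unit under a weaker partial assignment remains unit (or already satisfied) under any strengthening, so the set of derived literals only grows; the projection $\pi_{\vert {\cal Y}}$ is trivially monotone; and $e^{-1}$ is monotone because a stronger partial assignment $A$ enlarges the set $\{D' \st e(D') \subseteq A\}$ whose infimum defines $e^{-1}(A)$. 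Since $\sqcap$ is monotone in both arguments, $f$ is monotone.

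For correctness we must show $\{s \in \solns(c) \st s \sqsubseteq D_1\} = \{s \in \solns(c) \st s \sqsubseteq f(D_1)\}$. The $\supseteq$ direction is immediate from $f(D_1) \sqsubseteq D_1$. For the $\subseteq$ direction, pick a solution $s$ with $s \sqsubseteq D_1$. By the definition of the encoding of $c$, there exists an extension $\tilde A$ of $e(s)$ to a complete assignment on ${\cal Y}_c$ satisfying $F_c$. Since $s \sqsubseteq D_1$, monotonicity of $e$ gives $e(s) \sqsubseteq e(D_1)$, and hence $\tilde A \sqsubseteq e(D_1)$. By soundness of unit propagation, every literal it derives from $e(D_1)$ and $F_c$ must hold in any satisfying extension, so $\tilde A \sqsubseteq \textrm{up}_{F_c}(e(D_1))$. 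Projecting onto ${\cal Y}$ and using that $\tilde A$ restricted to ${\cal Y}$ is exactly $e(s)$, we obtain $e(s) \sqsubseteq \pi_{\vert {\cal Y}}(\textrm{up}_{F_c}(e(D_1)))$. Applying the monotone map $e^{-1}$ together with the identity $e^{-1}(e(s)) = s$ valid for complete assignments, this yields $s \sqsubseteq e^{-1}(\pi_{\vert {\cal Y}}(\textrm{up}_{F_c}(e(D_1))))$, which combined with $s \sqsubseteq D_1$ gives $s \sqsubseteq f(D_1)$.

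The main delicate point I anticipate is the last step, translating $e(s) \sqsubseteq A$ into $s \sqsubseteq e^{-1}(A)$; it rests on $e^{-1}$ being monotone and satisfying $e^{-1}(e(D'))=D'$, and is really the statement that $(e, e^{-1})$ forms a Galois-style connection between domains on $\cal X$ and partial assignments on $\cal Y$. A secondary case to handle is when $\textrm{up}_{F_c}(e(D_1))$ derives a contradictory assignment; then no solution of $c$ can lie in $D_1$, since otherwise the extension $\tilde A$ constructed above would witness a satisfying extension of $e(D_1)$ under $F_c$, so the collapse $f(D_1) = \oslash$ in this case still preserves the solution set.
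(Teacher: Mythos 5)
Your proof is correct and follows the same overall decomposition as the paper's: check that the map is contracting ($f(D_1)\sqsubseteq D_1$) and monotone, then show it preserves the set of solutions. The only place you genuinely diverge is the inclusion $\{ \solns(c) \st \solns(c) \sqsubseteq D_1 \} \subseteq \{ \solns(c) \st \solns(c) \sqsubseteq f(D_1) \}$. The paper applies the already-established monotonicity of the composite $e^{-1} \circ \pi_{\vert {\cal Y}} \circ \textrm{up}_{F_c} \circ e$ to $D' \sqsubseteq D_1$ and observes that a solution $D'$ is a fixed point of that composite, because $e(D')$ is a complete assignment of ${\cal Y}$ on which $F_c$ is satisfiable, so unit propagation cannot alter its projection onto ${\cal Y}$. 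You instead track a satisfying complete extension $\tilde A$ of $e(s)$ through the pipeline and invoke soundness of unit propagation to get $\tilde A \sqsubseteq \textrm{up}_{F_c}(e(D_1))$ before projecting and applying $e^{-1}$. Both arguments are valid; the paper's is more economical since it reuses monotonicity and needs no explicit appeal to soundness of unit propagation, while yours makes that reliance explicit and cleanly disposes of the case where propagation from $e(D_1)$ reaches a contradiction, which the paper leaves implicit. Your justification that $e^{-1}$ is monotonically decreasing (via the Galois-connection reading of $e^{-1}(A) = \bigsqcap \{D' \st e(D') \subseteq A\}$) is also spelled out where the paper simply asserts it from the definition.
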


Since an encoding of a constraint into SAT defines a propagator of
that constraint, the notions of consistency, bound consistency and
domain consistency can be extended to encodings: an encoding is
consistent/bound consistent/domain consistent if the propagator it
defines is.

\begin{exmp}
  Let us consider again the constraint $c: 3 x_1 + 2 x_2 + 5 x_3
  \leqslant 15$, where $x_1 \in [0,4]$, $x_2 \in [0,2]$ and $x_3 \in
  [0,3]$. Let $({\cal Y}, F, e)$ be the order encoding of $\cal X$. We
  define $$F_c = \bigwedge \left\{ \neg y_1^i \vee \neg y_2^j \vee \neg
  y_3^k \st i \in [0,4], j \in [0,2], k \in [0,5], 3i+2j+5k >15 \right\}
  \wedge y_1^0 \wedge y_2^0 \wedge y_3^0$$ and ${\cal Y}_c = {\cal Y}
  \cup \{y_1^0, y_2^0, y_3^0\}$. Then, $({\cal Y}_c, F_c)$ is an
  encoding of $c$.

  Let $f= e^{-1} \circ \pi_{\vert {\cal Y}} \circ \textrm{up}_{F_c}
  \circ e$ be the propagator defined by unit propagation and
  $F_c$. Given $D_1 = \{ x_1 \mapsto \{1,2,3\}, x_2 \mapsto \{0, 2\}, x_3 \mapsto
  \{2,3\}\}$, $$e(D_1) = \{y_1^1, \neg y_1^4, y_3^1, y_3^2\}.$$

  Unit propagation propagates $y_1^0, y_2^0, y_3^0, \neg y_1^2$ (due
  to clause $\neg y_1^2 \vee \neg y_2^0 \vee \neg y_3^2$), $\neg y_1^3$
  (due to clause $\neg y_1^3 \vee \neg y_2^0 \vee \neg y_3^2$)and $\neg
  y_2^2$ (due to clause $\neg y_1^1 \vee \neg y_2^2 \vee
  \neg y_3^2$). So $$\textrm{up}_{F_c}(e(D_1)) = \{y_1^0, y_1^1, \neg
  y_1^2, \neg y_1^3, \neg y_1^4, y_2^0, \neg y_2^2, y_3^0, y_3^1,
  y_3^2\}.$$

  Therefore, $$\pi_{\vert {\cal Y}} (\textrm{up}_{F_c}(e(D_1))) = \{y_1^1, \neg
  y_1^2, \neg y_1^3, \neg y_1^4, \neg y_2^2, y_3^1, y_3^2\},$$

  so $$e^{-1}(\pi_{\vert {\cal Y}} (\textrm{up}_{F_c}(e(D_1)))) = \{x_1
  \mapsto \{1\}, x_2 \mapsto \{0,1\}, x_3 \mapsto \{2,3\}\}.$$

  Finally, $$f(D_1) = \{x_1 \mapsto \{1\}, x_2 \mapsto \{0\}, x_3 \mapsto
  \{2,3\}\}.$$ 
\end{exmp}

\subsection{Linear Integer Constraints}
\label{section-li}
In this paper we consider linear integer constraints of the form $a_1
x_1 + \cdots + a_n x_n \; \leqslant \; a_0$, where the $a_i$ are
positive integer coefficients and the $x_i$ are integer variables with
domains $[0, d_i]$. Other LI constraints can be easily reduced to this
one:

\vspace{-0.6cm}
$$\renewcommand{\arraystretch}{1.5}
\begin{array}{l @{\hskip 0.5cm} c @{\hskip 0.5cm} l}
  a_1 x_1 + \cdots + a_n x_n \; = \; a_0 & \Longrightarrow &
  \left\{ {\renewcommand{\arraystretch}{1.25}\begin{array}{l}
      a_1 x_1 + \cdots + a_n x_n \; \leqslant \; a_0 \ \wedge \\
      a_1 x_1 + \cdots + a_n x_n \; \geqslant \; a_0 \\
  \end{array}} \right. \\

  \rule{0pt}{3.5ex}
  a_1 x_1 + \cdots + a_n x_n \; < \; a_0 & \Longrightarrow &
  a_1 x_1 + \cdots + a_n x_n \; \leqslant \; a_0 - 1 \\

  a_1 x_1 + \cdots + a_n x_n \; \geqslant \; a_0 & \Longrightarrow  &
  -a_1 x_1 + \cdots + -a_n x_n \; \leqslant \; -a_0 \\

  a_1 x_1 + \cdots + a_n x_n \; > \; a_0 & \Longrightarrow  &
  -a_1 x_1 + \cdots + -a_n x_n \; \leqslant \; -a_0 - 1 \\

  \rule{0pt}{5.5ex}
  \left. {\renewcommand{\arraystretch}{1.25}\begin{array}{l}
      a_1 x_1 + \cdots + a_i x_i + \cdots \\
      + a_n x_n \; \leqslant \; a_0 \\
      \text{ when } x_i \in [l,u], l \neq 0, a_i > 0 \\
  \end{array}} \right\} & \Longrightarrow &
 
  \left\{ {\renewcommand{\arraystretch}{1.25}\begin{array}{l}
      a_1 x_1 + \cdots + a_i x'_i + \cdots  \\
      + a_n x_n \; \leqslant \; a_0 + a_i \times l \ \wedge \\
      x'_i \in [0,u-l] \ \wedge \ x_i' = x_i - l\\
  \end{array}} \right. \\

  \rule{0pt}{6ex}
  \left. {\renewcommand{\arraystretch}{1.25}\begin{array}{l}
      a_1 x_1 + \cdots + a_i x_i + \cdots \\
      + a_n x_n \; \leqslant \; a_0 \\
      \text{when } a_i < 0  \text{ and } x_i \in [l,u] \\
  \end{array}} \right\} & \Longrightarrow &
  \left\{ {\renewcommand{\arraystretch}{1.25}\begin{array}{l}
      a_1 x_1 + \cdots + -a_i x'_i + \cdots \\
      + a_n x_n \; \leqslant \; a_0 - a_i \times u \ \wedge\\
      x'_i \in [0,u-l] \ \wedge \ x'_i = u-x_i \\
  \end{array}} \right. \\

  \rule{0pt}{5ex}
  \left. {\renewcommand{\arraystretch}{1.25}\begin{array}{l}
        x_i \in [l, u], \ l \neq 0 \ \wedge \ x_i' = x_i - l \\
        \wedge \ y_i^j \equiv x_i \geqslant j \ \text{ for } l < j \leqslant u
  \end{array}} \right\} & \Longrightarrow &
  y_i^{j+l} \equiv x_i' \geqslant j \ \text{ for } 1 \leqslant j \leqslant u-l \\

  \rule{0pt}{5ex}
  \left. {\renewcommand{\arraystretch}{1.25}\begin{array}{l}
        x_i \in [l, u] \ \wedge \ x_i' = u - x_i \\
        \wedge \ y_i^j \equiv x_i \geqslant j \ \text{ for } l < j \leqslant u
  \end{array}} \right\} & \Longrightarrow &
  \neg y_i^{u-j+1} \equiv x_i' \geqslant j \ \text{ for } 1 \leqslant j \leqslant u-l \\
\end{array}
$$

All transformations but the first one maintain domain consistency. In
the first transformation, domain consistency is lost. Notice, however,
that a consistent propagator for linear equality would solve the
NP-complete problem subset sum (see
\shortcite{Abio12}), so, in principle, there is no
consistent propagator for equality constraints that runs in polynomial
time (unless P = NP).
%% \pjs{CHECK} \vale{removed because its unambiguous}
%% We are not aware of any work on encodings of linear equations that preserves
%% domain consistency.

%% Given a linear integer constraint $C: a_1 x_1 + a_2 x_2 + \cdots + a_n
%% x_n \leqslant a_0$, where $x_i \in [0, d_i]$, a partial assignment on
%% the integer variable $x_i$ is a subset of its domain, i.e., $A_i
%% \subset [0, d_i]$. However, regarding the satisfiablity of constraint
%% $C$, this partial assignment is equivalent to $A_i' = [ \min \{ x \in
%%   A_i \}, d_i]$. Therefore, in the following we consider a partial
%% assignment on the variables of the constraint $C$ as $A = \{ x_i
%% \geqslant v_i \}_{i=1}^n$ for some $v_i \in [0, d_i]$. Any other
%% partial assignment can be transformed into such an assignment without
%% changing the satisfiability or propagation of $C$.

%% As a corollary, bound consistency and domain consistency are
%% equivalent in these constraints.

\ignore{
\subsection{Lazy Clause Generation and Lazy Decomposition Solvers}

Many modern CP solvers, so called \emph{Lazy Clause Generation (LCG)
solvers}, include the ability to explain their propagation and generate
nogoods just as in SAT solvers. Usually, these nogoods are generated
when a conflict is found, and are added as a new constraint: in this
way, the solver enhances its propagation strength and avoid similar
conflicts in the future.

Let us briefly explain the case of LCG solvers for LI constraints. An
extended explanation is given by Feydy and Stuckey~\shortcite{Feydy09}. Let $c:
a_1 x_1 + a_2 x_2 + \cdots + a_n x_n \leqslant a_0$ be a LI
constraint, and let $v_i \in D'(x_i)$ be a value on the domain of
$x_i$. $v_i$ belongs to some solution of $c$ if and only if $$a_i v_i
+ \sum\limits_{\substack{j=1\\j \neq i}}^n a_j l_j \leqslant a_0,$$
where $l_i = \min D'(x_i)$. That is, we can remove any value $v_i$ of
$D'(x_i)$ such that $$v_i > \left\lfloor \frac{a_0 -
  \sum\limits_{\substack{j=1\\j \neq i}}^n a_j
  l_j}{a_i}\right\rfloor.$$

In other words, $c$ implies
that $$\left(\bigwedge_{\substack{1\leqslant j \leqslant n\\j \neq
    i}} x_j \geqslant l_j \right) \longrightarrow x_i \leqslant
\left\lfloor \frac{a_0 - \sum\limits_{\substack{j=1\\j \neq i}}^n a_j
  l_j}{a_i}\right\rfloor.$$ Therefore, a nogood for these propagations
can be $$y_i^{v_i} \vee \bigvee_{\substack{1\leqslant j \leqslant n\\j
    \neq i}} \neg y_j^{l_j-1}, \quad \text{where } v_i =\left\lfloor
\frac{a_0 - \sum\limits_{\substack{j=1\\j \neq i}}^n a_j
  l_j}{a_i}\right\rfloor.$$

Notice, that this nogood is not the shortest explanation and explanations are
not unique in general. 

\begin{remark}\label{remark-lb}
  The propagator explained above is a domain consistent
  propagator for LI constraints. That is, bound consistency and domain
  consistency are equivalent for LI constraints 
  $a_1 x_1 + a_2 x_2 + \cdots + a_n x_n \leqslant a_0$. 
  Hence, to check that
  a propagator or an encoding is domain consistent, we do not have to
  consider arbitrary domains but only the lower bounds of the
  variables.
\end{remark}

More recently, \emph{Lazy Decomposition (LD) solvers} have been proposed. An
LD solver is a LCG solver that, when one complex constraint propagator
is very active (that is, is frequently asked to generate
explanations), then the solver replaces the propagator by either
partially or totally decomposing the constraint into SAT
(see~\shortcite{lazyDecomposition,encodeOrPropagate} for more details).
The advantage of LD solvers is that the exposure of intermediate
variables in the SAT encodings can substantially benefit search, but
it avoids the up front cost of encoding all complex constraints, only
those that are important in the solving process.
}

\section{Encoding Cardinality Constraints into SAT}
\label{section-card}
In this section we consider the simplest LI constraints: cardinality
constraints. A cardinality constraint is a LI constraint where all
coefficients are 1 and variables are Boolean (i.e., their domains are
$\{0,1\}$). Notice that in this case we do not need to encode the
variables since they are already propositional. 
In this paper we do
not introduce any new encoding for these constraints, nor do we
improve them. However, for completeness, in this section we review the
most usual encodings of cardinality constraints into SAT. 
Practical
comparison of the different methods have been made by Asin~\emph{et al}~\shortcite{AsinNOR11}
and Abio~\emph{et al}~\shortcite{parametricCardinalityConstraint}.
Note that for the special case where $a_0 = 1$ a cardinality constraint 
is an \emph{at most one} constraint for which more efficient encodings exist
(see e.g.~\shortcite{Nguyen:2015:NME:2833258.2833293,modref13}).

\subsection{Encoding Cardinality Constraints with Adders}\label{sec:card-adders}

The first encoding of these constraints is due to Warners~\shortcite{Warners98}\label{warners}. 
Let us consider the cardinality constraint $C:
y_1 + y_2 + \cdots + y_n \leqslant a_0$ where $y_i \in \{0,1\}$. The
idea of the encoding is to compute the binary representation of the
left-side part of the constraint, i.e., the encoding introduces $\log_2
n$ Boolean variables $z_0, z_1, \ldots, z_k$ such that $D: y_1 + y_2 +
\cdots + y_n = z_0 + 2 z_1 + \ldots + 2^k z_k$. With these variables
the enforcement of the original constraint $C$ is trivial.

To enforce the constraint $D$, the encoding creates a circuit
composed of \emph{full-adders} and \emph{half-adders}, defined as
follows:

\begin{itemize}
\item A \emph{full-adder}, denoted by
     $\operatorname{FA}(y_1,y_2,y_3) = (z_0,z_1)$, is a
    circuit with three inputs $y_1, y_2, y_3$ and two outputs $z_0,
    z_1$ such that $y_1 + y_2 + y_3 = z_0 + 2 z_1$.
  \item A \emph{half-adder}, denoted by
    $\operatorname{HA}(y_1,y_2) = (z_0,z_1)$, is a
    circuit with two inputs $y_1, y_2$ and two outputs $z_0, z_1$ such
    that $y_1 + y_2 = z_0 + 2 z_1$.
\end{itemize}

Full and half adders can be naively encoded. This is, the
encoding half-adder consists of the following 7 clauses:
$$\begin{array}{lllllll}
    y_1 \wedge y_2 \rightarrow \neg z_0, & \hspace{0.4em}&
    y_1 \wedge \neg y_2 \rightarrow z_0, & \hspace{0.4em}&
    \neg y_1 \wedge y_2 \rightarrow z_0, &\hspace{0.4em}&
    \neg y_1 \wedge \neg y_2 \rightarrow \neg z_0, \\
    y_1 \wedge y_2 \rightarrow z_1, & \hspace{0.4em}&
    \neg y_1 \rightarrow\neg z_1, &\hspace{0.4em}&
    \neg y_2 \rightarrow  \neg z_1 \\
  \end{array}$$
and the encoding of a full-adder consist of the following 14 clauses:
$$\begin{array}{lllll}
    y_1 \wedge y_2 \wedge y_3 \rightarrow z_0, &\hspace{0.2em}&
    y_1 \wedge y_2 \wedge \neg y_3 \rightarrow \neg z_0, &\hspace{0.2em}&
    y_1 \wedge \neg y_2 \wedge y_3 \rightarrow  \neg z_0, \\
    y_1 \wedge \neg y_2 \wedge \neg y_3 \rightarrow  z_0, &\hspace{0.2em}&
    \neg y_1 \wedge y_2 \wedge y_3 \rightarrow  \neg z_0, &\hspace{0.2em}&
    \neg y_1 \wedge y_2 \wedge \neg y_3 \rightarrow  z_0, \\
    \neg y_1 \wedge \neg y_2 \wedge y_3 \rightarrow  z_0, &\hspace{0.2em}&
    \neg y_1 \wedge \neg y_2 \wedge \neg y_3 \rightarrow \neg z_0, \\
    y_1 \wedge y_2 \rightarrow z_1, &\hspace{0.2em}&
    y_1 \wedge y_3 \rightarrow z_1, &\hspace{0.2em}&
    y_2 \wedge y_3 \rightarrow z_1, \\
    \neg y_1 \wedge \neg y_2 \rightarrow \neg z_1, &\hspace{0.2em}&
    \neg y_1 \wedge \neg y_3 \rightarrow \neg z_1, &\hspace{0.2em}&
    \neg y_2 \wedge \neg y_3 \rightarrow \neg z_1 \\
  \end{array}$$

The circuit to enforce $D$ can be created in several ways. One of the
simplest ways is the recursive one. For $n > 1$, we want to define
$(z_0, \ldots, z_k) = f(y_1, \ldots, y_n)$, where $k = \log_2 n$, such
that constraint $D$ holds.

\begin{description}
\item[If $n=2$:] Then $f$ is a half-adder
   $\operatorname{HA}(y_1,y_2) = (z_0,z_1)$.
\item[If $n=3$:] Then $f$ is a full-adder
  $\operatorname{FA}(y_1,y_2,y_3) = (z_0,z_1)$.
  \item[If $n>3$:] Then
    $$\begin{array}{lll}
      (w_1, w_2, \ldots, w_k) &=& f(y_1, y_2, \ldots, y_{n/2})\\
      (w_{k+1}, w_{k+2}, \ldots, w_{2k}) &=& f(y_{n/2+1}, y_{n/2+2}, \ldots, y_{n})\\
      (z_0, c_2) &=& \operatorname{HA}(w_1,w_{k+1})\\
      (z_1, c_3) &=& \operatorname{FA}(w_2,w_{k+2},c_2)\\
      (z_2, c_4) &=& \operatorname{FA}(w_3,w_{k+3},c_3)\\
      &\cdots\\
      (z_{k-2}, c_{k}) &=& \operatorname{FA}(w_{k-1},w_{2k-1},c_{k-1})\\
      (z_{k-1}, z_{k}) &=& \operatorname{FA}(w_{k},w_{2k},c_{k})\\
  \end{array}$$
\end{description}

Figure \ref{figure-adders-CC} shows the recursive construction explained here.
\begin{figure}[t]
  \begin{center}
    \includegraphics[scale=0.7]{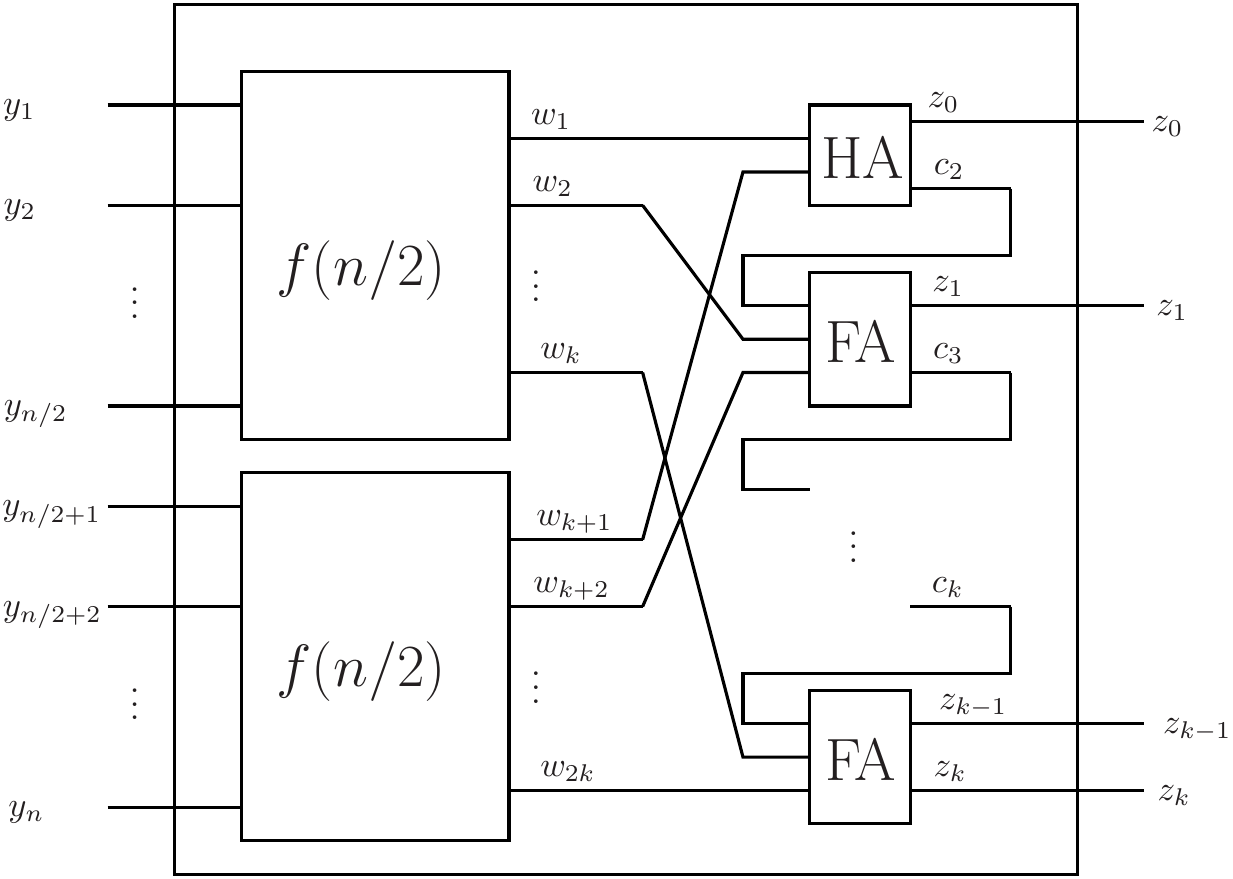}
    \caption{\label{figure-adders-CC}Recursive definition of an adder network.}
  \end{center}
\end{figure}

\begin{exmp}
  Consider the constraint $y_1 + y_2 + y_3 + y_4 + y_5 \leqslant
  2$. The adder encoding introduces variables $w_1, w_2, w_3, w_4,
  c_2, z_0, z_1, z_2$ defined as:
  $$\begin{array}{lll}
    (w_1,w_2)=\operatorname{FA}(y_1, y_2, y_3) &\quad&
    (z_0,c_2)=\operatorname{HA}(w_1, w_3) \\
    (w_3,w_4)=\operatorname{HA}(y_4, y_5) &\quad&
    (z_1,z_2)=\operatorname{FA}(c_2,w_2, w_4) \\
  \end{array}$$
  In addition, it enforces that $z_0 + 2 z_1 + 4 z_2 \leqslant 2$, so it
  produces clauses $\{\neg z_2, \neg z_0 \vee \neg z_1 \}$.

  All in all, the encoding consists of the following clauses:
  $$\begin{array}{lllll}
    \neg y_1 \vee \neg y_2 \vee \neg y_3 \vee w_1, &\hspace{0.2em}&
    \neg y_1 \vee \neg y_2 \vee  y_3 \vee \neg w_1, &\hspace{0.2em}&
    \neg y_1 \vee  y_2 \vee \neg y_3 \vee  \neg w_1, \\
    \neg y_1 \vee  y_2 \vee  y_3 \vee  w_1, &\hspace{0.2em}&
    y_1 \vee \neg y_2 \vee \neg y_3 \vee  \neg w_1, &\hspace{0.2em}&
    y_1 \vee \neg y_2 \vee  y_3 \vee  w_1, \\
    y_1 \vee  y_2 \vee \neg y_3 \vee  w_1, &\hspace{0.2em}&
    y_1 \vee  y_2 \vee  y_3 \vee \neg w_1 &\hspace{0.2em}&
    \neg y_1 \vee \neg y_2 \vee w_2, \\
    \neg y_1 \vee \neg y_3 \vee w_2, &\hspace{0.2em}&
    \neg y_2 \vee \neg y_3 \vee w_2, &\hspace{0.2em}&
    y_1 \vee  y_2 \vee \neg w_2, \\
    y_1 \vee  y_3 \vee \neg w_2, &\hspace{0.2em}&
    y_2 \vee  y_3 \vee \neg w_2, &\hspace{0.2em}&
    \neg y_4 \vee \neg y_5 \vee \neg w_3, \\
    \neg y_4 \vee  y_5 \vee w_3, & \hspace{0.2em}&
    y_4 \vee \neg y_5 \vee w_3, &\hspace{0.2em}&
    y_4 \vee  y_5 \vee \neg w_3, \\
    \neg y_4 \vee \neg y_5 \vee w_4, & \hspace{0.2em}&
    y_4 \vee\neg w_4, &\hspace{0.2em}&
    y_5 \vee  \neg w_4, \\
    \neg w_1 \vee \neg w_3 \vee \neg z_0, & \hspace{0.2em}&
    \neg w_1 \vee  w_3 \vee z_0, & \hspace{0.2em}&
    w_1 \vee \neg w_3 \vee z_0,  \\
    w_1 \vee  w_3 \vee \neg z_0, & \hspace{0.2em}&
    \neg w_1 \vee \neg w_3 \vee c_2, & \hspace{0.2em}&
    w_1 \vee\neg c_2, \\
    w_3 \vee  \neg c_2, & \hspace{0.2em}&
    \neg c_2 \vee \neg w_2 \vee \neg w_4 \vee z_1, &\hspace{0.2em}&
    \neg c_2 \vee \neg w_2 \vee  w_4 \vee \neg z_1, \\
    \neg c_2 \vee  w_2 \vee \neg w_4 \vee  \neg z_1, &\hspace{0.2em}&
    \neg c_2 \vee  w_2 \vee  w_4 \vee  z_1, &\hspace{0.2em}&
    c_2 \vee \neg w_2 \vee \neg w_4 \vee  \neg z_1, \\
    c_2 \vee \neg w_2 \vee  w_4 \vee  z_1, &\hspace{0.2em}&
    c_2 \vee  w_2 \vee \neg w_4 \vee  z_1, &\hspace{0.2em}&
    c_2 \vee  w_2 \vee  w_4 \vee \neg z_1 \\
    \neg c_2 \vee \neg w_2 \vee z_2, &\hspace{0.2em}&
    \neg c_2 \vee \neg w_4 \vee z_2, &\hspace{0.2em}&
    \neg c_2 \vee \neg w_4 \vee z_2, \\
    c_2 \vee  w_2 \vee \neg z_2, &\hspace{0.2em}&
    c_2 \vee  w_4 \vee \neg z_2, &\hspace{0.2em}&
    w_2 \vee  w_4 \vee \neg z_2, \\
    \neg z_2, &\hspace{0.2em}& \neg z_0 \vee \neg z_1\\
  \end{array}$$

  Consider now the partial assignment $\{y_1, y_3, y_4\}$. Unit
  propagation just enforces $w_2$, but it does not detect any
  conflict. Therefore, the encoding is not consistent.
\end{exmp}

%\pjs{Can we refer to a proof of this elsewhere!}
\begin{theorem}
  The adder encoding defined in this section encodes cardinality
  constraints with $O(n)$ variables and clauses. The encoding does not
  maintain consistency.
\end{theorem}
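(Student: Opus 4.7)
The plan is to split the proof into two independent parts: the size bound, and the failure of consistency.

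For the size bound, I would argue by induction on the recursive construction of $f$. Let $T(n)$ denote the total number of auxiliary variables introduced by $f(y_1,\ldots,y_n)$ and $C(n)$ the total number of clauses. The base cases $n=2$ and $n=3$ contribute constantly many variables and clauses (a single half- or full-adder, which we have already bounded by $7$ and $14$ clauses respectively). In the inductive step we recurse on two halves of size $n/2$, producing two output sequences of length $k+1$ with $k = \lfloor \log_2(n/2) \rfloor$, and then combine them with one half-adder and $k-1$ full-adders. Since each adder contributes a constant number of new variables and clauses, the combining stage costs $O(\log n)$. Hence $T(n) = 2 T(n/2) + O(\log n)$ and similarly for $C(n)$, so by the Master Theorem $T(n), C(n) \in O(n)$. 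Adding the final clauses that encode $z_0 + 2 z_1 + \cdots + 2^k z_k \leqslant a_0$ (clausifying the binary comparison, which is linear in $k$) contributes only $O(\log n)$ further clauses, leaving the overall bound at $O(n)$.

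For the non-consistency claim, it suffices to exhibit a single partial assignment $A$ and a cardinality constraint $C$ such that $C$ is unsatisfiable under $A$, yet unit propagation on the encoding of $C$ fails to derive a conflict from $A$. The worked example already displayed above provides exactly such a witness: take $C : y_1 + y_2 + y_3 + y_4 + y_5 \leqslant 2$ and $A = \{y_1, y_3, y_4\}$. Clearly $C$ is unsatisfiable under $A$ since any extension has sum at least $3$. However, as shown in the example, unit propagation on the adder encoding of $C$ under $A$ only derives $w_2$ and then reaches a fixed point with no empty clause. Since a consistent propagator would, by definition, have to detect the infeasibility of $C$ under $A$ by producing the false domain, the propagator induced by the encoding (via Proposition~\ref{prop-up-propagator}) is not consistent, and therefore the encoding itself is not consistent.

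The main obstacle, such as it is, lies in the size accounting: one has to be careful that the width of the output bus carried along the recursion is genuinely $O(\log n)$ at each merging level (so that the $O(\log n)$ term in the recurrence is justified), and that the final enforcement of $D$ does not secretly contribute more than linearly. Both facts follow from the observation that the output width at a node of input size $m$ is $\lceil \log_2 m \rceil + O(1)$, which can be checked by a straightforward induction alongside the main recurrence.
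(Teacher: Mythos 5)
Your proposal is correct. The non-consistency half is exactly the paper's argument: the paper, too, simply points to the worked example with $y_1+y_2+y_3+y_4+y_5\leqslant 2$ and the partial assignment $\{y_1,y_3,y_4\}$, on which unit propagation only derives $w_2$ and stalls. Where you diverge is the size bound: the paper discharges it by citing Lemma~2 of Warners (1998), whereas you give a self-contained derivation via the recurrence $T(n)=2T(n/2)+O(\log n)$ and the Master Theorem. Your accounting is sound --- the output bus at a subproblem of size $m$ has width $\lfloor\log_2 m\rfloor+1$, so the merging stage at each internal node costs $O(\log n)$ adders of constant size, and the final comparison against $a_0$ adds only $O(\log n)$ clauses (each of length $O(\log n)$), which does not disturb the $O(n)$ total. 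The only thing your version buys is independence from the external reference; the only thing it costs is the page of recurrence bookkeeping that the citation avoids.
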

\begin{proof}
  The encoding does not maintain consistency due to the previous example. That
    the encoding needs $O(n)$ clauses and variables is shown by
    \shortciteA{Warners98} in Lemma 2. 
\end{proof}

\subsection{Encoding Cardinality Constraints by Sorting the Input Variables}\label{sect-sorting-CC}
As before, let us consider the cardinality constraint $C: y_1 + y_2 +
\cdots + y_n \leqslant a_0$. Let $s$ be the integer variable $s = y_1
+ y_2 + \ldots + y_n$. The adder encoding introduced the Boolean
variables $\logEnc(s)$, and then easily encoded $s \leqslant
a_0$. Here, the idea is introduce the Boolean variables $\ordEnc(s)$
and then encode $s \leqslant a_0$.

The way to introduce $\ordEnc(s)$ is by sorting the input variables:
this is, given $y_1, y_2, \ldots, y_n$, we want to generate $z_1, z_2,
\ldots, z_n$ such that $z_1 \leqslant z_2 \leqslant \cdots \leqslant
z_n$ and $\{y_1, y_2, \ldots, y_n \} = \{z_1, z_2, \ldots, z_n\}$ as a multiset
(i.e., there are the same number of true and false variables in both sides).
There are different ways to construct encodings that perform such sorting. We
present two ways, first \emph{totalizers} and secondly \emph{odd-even sorters}
as an example for a comparator based sorting networks.

\subsubsection{Sorting Variables with Totalizers} \label{sec:totalizers}
An encoding for cardinality constraints through totalizers was given
by \shortciteA{BailleuxBoufkhad2003CP}\label{boufkhad}. The idea of the encoding is simple:
given the input variables $y_1, y_2, \ldots, y_n$, the method splits
the variables in two halves and recursively sorts both halves. Then,
with a quadratic number of clauses, the method produces the sorted
output. More specifically $(z_1, z_2, \ldots, z_n) =
\operatorname{Sort_{TOT}} (y_1, y_2, \ldots, y_n)$ is defined by:
\begin{description}
  \item[If $n=1$:] $z_1 = y_1$.
  \item[If $n>1$:] Let us define $$(w_1, w_2, \ldots, w_{n/2}) =
      \operatorname{Sort_{TOT}}(y_1, y_2, \ldots, y_{n/2})$$ and $$(w'_1, w'_2,
      \ldots, w'_{n/2}) = \operatorname{Sort_{TOT}}(y_{n/2+1}, y_{n/2+2},
    \ldots, y_{n}).$$ Then: $$z_i = \bigvee \{ w_j \wedge w'_k \st j + k
    = i\},$$ where $w_0$ and $w'_0$ are dummy true variables.
\end{description}
\begin{exmp}
  Let us consider again the constraint $y_1 + y_2 + y_3 + y_4 + y_5
  \leqslant 2$. The method of \shortciteA{BailleuxBoufkhad2003CP}
  introduces the following variables:
  $$\begin{array}{lll}
    w_1 = y_1 \vee y_2 &\hspace{0.2em}&
    w_2 = y_1 \wedge y_2 \\
    w_3 = w_1 \vee y_3 &\hspace{0.2em}&
    w_4 = y_2 \vee (w_1 \wedge y_3) \\
    w_5 = w_2 \wedge y_3 &\hspace{0.2em}&
    w_6 = y_4 \vee y_5 \\
    w_7 = y_4 \wedge y_5 &\hspace{0.2em}&
    z_1 = w_3 \vee w_6 \\
    z_2 = w_4 \vee (w_3 \wedge w_6) \vee w_7 &\hspace{0.2em}&
    z_3 = w_5 \vee (w_4 \wedge w_6) \vee (w_3 \wedge w_7) \\
    z_4 = (w_5 \wedge w_6) \vee (w_4 \wedge w_7) &\hspace{0.2em}&
    z_5 = w_5 \wedge w_7
  \end{array}$$
  Finally, the method adds the clause $\neg z_3$. All in all, the
  method produces the following set of clauses:
  $$\begin{array}{lllll}
    \neg y_1 \vee w_1, &\hspace{0.2em}&
    \neg y_2 \vee w_1, &\hspace{0.2em}&
    y_1 \vee y_2 \vee \neg w_1, \\
    \neg y_1 \vee \neg y_2 \vee w_2, &\hspace{0.2em}&
    y_1 \vee \neg w_2, &\hspace{0.2em}&
    y_2 \vee \neg w_2, \\
    \neg w_1 \vee w_3, &\hspace{0.2em}&
    \neg y_3 \vee w_3, &\hspace{0.2em}&
    w_1 \vee y_3 \vee \neg w_3, \\
    \neg y_2 \vee w_4, &\hspace{0.2em}&
    \neg w_1 \vee \neg y_3 \vee w_4, &\hspace{0.2em}&
    y_2 \vee w_1 \vee \neg w_4, \\
    y_2 \vee y_3 \vee \neg w_4, &\hspace{0.2em}&
    \neg w_2 \vee \neg y_3 \vee w_5, &\hspace{0.2em}&
    w_2 \vee \neg w_5, \\
    y_3 \vee \neg w_5, &\hspace{0.2em}&
    \neg y_4 \vee w_6, &\hspace{0.2em}&
    \neg y_5 \vee w_6, \\
    y_4 \vee y_5 \vee \neg w_6, &\hspace{0.2em}&
    \neg y_4 \vee \neg y_5 \vee w_7, &\hspace{0.2em}&
    y_4 \vee \neg w_7, \\
    y_5 \vee \neg w_7, &\hspace{0.2em}&
    \neg w_3 \vee z_1, &\hspace{0.2em}&
    \neg w_6 \vee z_1, \\
    w_3 \vee w_6 \vee \neg z_1, &\hspace{0.2em}&
    \neg w_4 \vee z_2, &\hspace{0.2em}&
    \neg w_3 \vee \neg w_6 \vee z_2, \\
    \neg w_7 \vee z_2, &\hspace{0.2em}&
    w_4 \vee w_3 \vee w_7 \vee \neg z_2, &\hspace{0.2em}&
    w_4 \vee w_6 \vee w_7 \vee \neg z_2, \\
    \neg w_5 \vee z_3, &\hspace{0.2em}&
    \neg w_4 \vee \neg w_6 \vee z_3, &\hspace{0.2em}&
    \neg w_3 \vee \neg w_7 \vee z_3, \\
    w_5 \vee w_4 \vee w_3 \vee \neg z_3, &\hspace{0.2em}&
    w_5 \vee w_4 \vee w_7 \vee \neg z_3, &\hspace{0.2em}&
    w_5 \vee w_6 \vee w_3 \vee \neg z_3, \\
    w_5 \vee w_6 \vee w_7 \vee \neg z_3, &\hspace{0.2em}&
    \neg w_5 \vee \neg w_6 \vee z_4, &\hspace{0.2em}&
    \neg w_4 \vee \neg w_7 \vee z_4, \\
    w_5 \vee w_4 \vee \neg z_4, &\hspace{0.2em}&
    w_5 \vee w_7 \vee \neg z_4, &\hspace{0.2em}&
    w_6 \vee w_4 \vee \neg z_4, \\
    w_6 \vee w_7 \vee \neg z_4, &\hspace{0.2em}&
    \neg w_5 \vee \neg w_7 \vee z_5, &\hspace{0.2em}&
    w_5 \vee \neg z_5, \\
    w_5 \vee \neg w_7, &\hspace{0.2em}&
    \neg z_3
  \end{array}$$
  and maintains domain consistency. For instance, given the partial
  assignment $\{y_1, y_3\}$ unit propagation enforces that $\neg w_5,
  \neg w_2, \neg y_2, w_1, w_3, w_4, \neg w_6, \neg w_7, \neg y_4,
  \neg y_5$. 
\end{exmp}

\begin{theorem}[\citeR{BailleuxBoufkhad2003CP}]
  The totalizer encoding defined in this section encode cardinality
  constraints with $O(n \log n)$ variables and $O(n^2)$ clauses. The
  encoding maintains domain consistency.

\end{theorem}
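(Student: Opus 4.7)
My plan is to prove the theorem in two parts: size bounds via standard recurrence analysis, and domain consistency via structural induction on the totalizer tree.

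For the size bounds, the recursive construction yields a balanced tree of depth $\lceil \log n \rceil$. At each internal node with $m$ inputs (split into halves of size $m_L, m_R \approx m/2$), we introduce $m$ new output variables $z_i$, giving the variable recurrence $V(m) = 2V(m/2) + m$, which resolves to $V(n) = O(n \log n)$. For clauses, each output $z_i$ is linked to its child outputs via $O(m_L \cdot m_R)$ clauses (forward clauses $\neg w_j \vee \neg w'_k \vee z_{j+k}$ together with the analogous backward clauses), so $C(m) = 2C(m/2) + O(m^2)$, which resolves to $C(n) = O(n^2)$ by the Master theorem.

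For domain consistency, I would prove the following bottom-up invariant by structural induction on the tree: for any partial assignment, at every node $n$ with $\alpha_n$ fixed-true inputs in its subtree, unit propagation derives $z^n_i$ for all $i \leq \alpha_n$. The leaf case is immediate. The inductive step applies the forward clause $\neg w_j \vee \neg w'_k \vee z^n_{j+k}$ with $j = \alpha_L, k = \alpha_R$: by the IH, both $w_{\alpha_L}$ on the left and $w'_{\alpha_R}$ on the right are derived, so $z^n_{\alpha_L+\alpha_R} = z^n_{\alpha_n}$ is derived, and similarly for smaller $i$. When $\alpha_{\text{root}} \geq a_0 + 1$, this yields $z^{\text{root}}_{a_0+1}$, which contradicts the unit clause $\neg z^{\text{root}}_{a_0+1}$, giving the conflict detection required for consistency.

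The complementary top-down phase handles the case $\alpha_{\text{root}} = a_0$, where each unassigned leaf input must be forced to false. I would argue that the same forward clauses, run ``backwards'', suffice: a clause $\neg w_j \vee \neg w'_k \vee z_{j+k}$ combined with $\neg z_{j+k}$ and $w'_k$ (derived bottom-up at the sibling) yields $\neg w_j$. Recursing down the tree from the root, this propagates $\neg w$ literals along the path to each unassigned leaf $y_i$, with the sibling's bottom-up value filling in the ``other slot'' of the clause at every step. The main obstacle is verifying that the bottom-up and top-down phases interleave correctly, so that the required sibling information is available at the moment each downward propagation is applied. A clean way to handle this is to show, by simultaneous induction on the tree, that UP saturates to a state in which both the bottom-up invariant and a matching top-down invariant (``at every node $n$, $\neg z^n_i$ is derived for $i$ exceeding the residual budget inherited from the root'') hold. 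At the leaves, the budget is $0$ precisely when $\alpha_{\text{root}} = a_0$, forcing every unassigned input false, and otherwise the invariants leave exactly the freedom consistent with the constraint, establishing domain consistency.
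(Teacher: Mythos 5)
Your proof is correct. The paper does not prove this theorem itself --- it is stated as a cited result of Bailleux and Boufkhad --- but your reconstruction is exactly the standard argument: the $V(m)=2V(m/2)+m$ and $C(m)=2C(m/2)+O(m^2)$ recurrences for size, the bottom-up counting invariant for conflict detection, and the top-down use of the forward clauses (negated parent output plus the sibling's bottom-up literal forcing the child output false) for the $\alpha_{\mathrm{root}}=a_0$ case. Your worry about interleaving the two phases is not a real obstacle: unit propagation runs to a unique fixpoint, so one can first establish by leaf-to-root induction that all bottom-up literals are in the fixpoint, and then by root-to-leaf induction that the top-down literals are too; the simultaneous-induction machinery is unnecessary.
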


\subsubsection{Sorting Variables with Sorting Networks}\label{section-CC-SN} 

An improved version of the previous encoding was given by
~\shortciteA{Een06}\label{eensort} by using a \emph{odd-even} sorting network.
To sort $n$ Boolean variables Odd-Even Sorting networks split the variables in
two halves and recursively sort them. The \emph{merge} of these two already
sorted sets of variables is also done
recursively. This is, $(z_1, z_2, \ldots, z_n) = \operatorname{Sort_{OE}} (y_1,
y_2, \ldots, y_n)$ is defined by\footnote{In \shortcite{Een06} it is assumed that
    $n$ is a power-of-two: dummy false input variables can be added if needed.
A definition that works for arbitrary $n$ is presented by ~\shortciteA{parametricCardinalityConstraint}. Here, however, we consider the
power-of-two case for simplicity.}
\begin{description}
  \item[If $n=1$:] $z_1 = y_1$.
  \item[If $n>1$:] Let us define $$(w_1, w_2, \ldots, w_{n/2}) =
    \operatorname{Sort_{OE}}(y_1, y_2, \ldots, y_{n/2})$$ and $$(w'_1,
    w'_2, \ldots, w'_{n/2}) = \operatorname{Sort_{OE}}(y_{n/2+1},
    y_{n/2+2}, \ldots, y_{n}).$$ Then: $$(z_1,z_2, \ldots, z_n) =
    \operatorname{Merge_{OE}}(w_1, w_2, \ldots, w_{n/2}; w'_1, w'_2,
    \ldots, w'_{n/2}),$$
\end{description}
where $$(z_1, z_2, \dots, z_{2n}) = \operatorname{Merge_{OE}}(y_1, y_2,
\ldots, y_n; y'_1, y'_2, \ldots, y'_n)$$ is recursively defined as
\begin{description}
  \item[If $n=1$:] $(z_1,z_2) = (y_1 \vee y'_1, y_1 \wedge y'_1)$.
  \item[If $n>1$:] Let us define $$(w_1, w_3, \ldots, w_{2n-1}) =
    \operatorname{Merge_{OE}}(y_1, y_3, \ldots, y_{n-1}; y'_1, y'_3,
    \ldots, y'_{n-1})$$ and $$(w_2, w_4, \ldots, w_{2n}) =
    \operatorname{Merge_{OE}}(y_{2}, y_{4}, \ldots, y_{2n}; y'_{2}, y'_{4},
    \ldots, y'_{2n}).$$ Then:
    $$\begin{array}{rcl}
      z_1 &=& w_1,\\
      (z_2, z_3) &=& \operatorname{Merge_{OE}}(w_2, w_3),\\
      &\ldots\\
      (z_{2n-2}, z_{2n-1}) &=& \operatorname{Merge_{OE}}(w_{2n-2}, w_{2n-1}),\\
      z_{2n} &=& w_{2n}.
      \end{array}$$
\end{description}
Figure \ref{figure-SN} shows the recursive construction of sorting networks.
\begin{figure}[t]
  \begin{center}
      \includegraphics[scale=0.7]{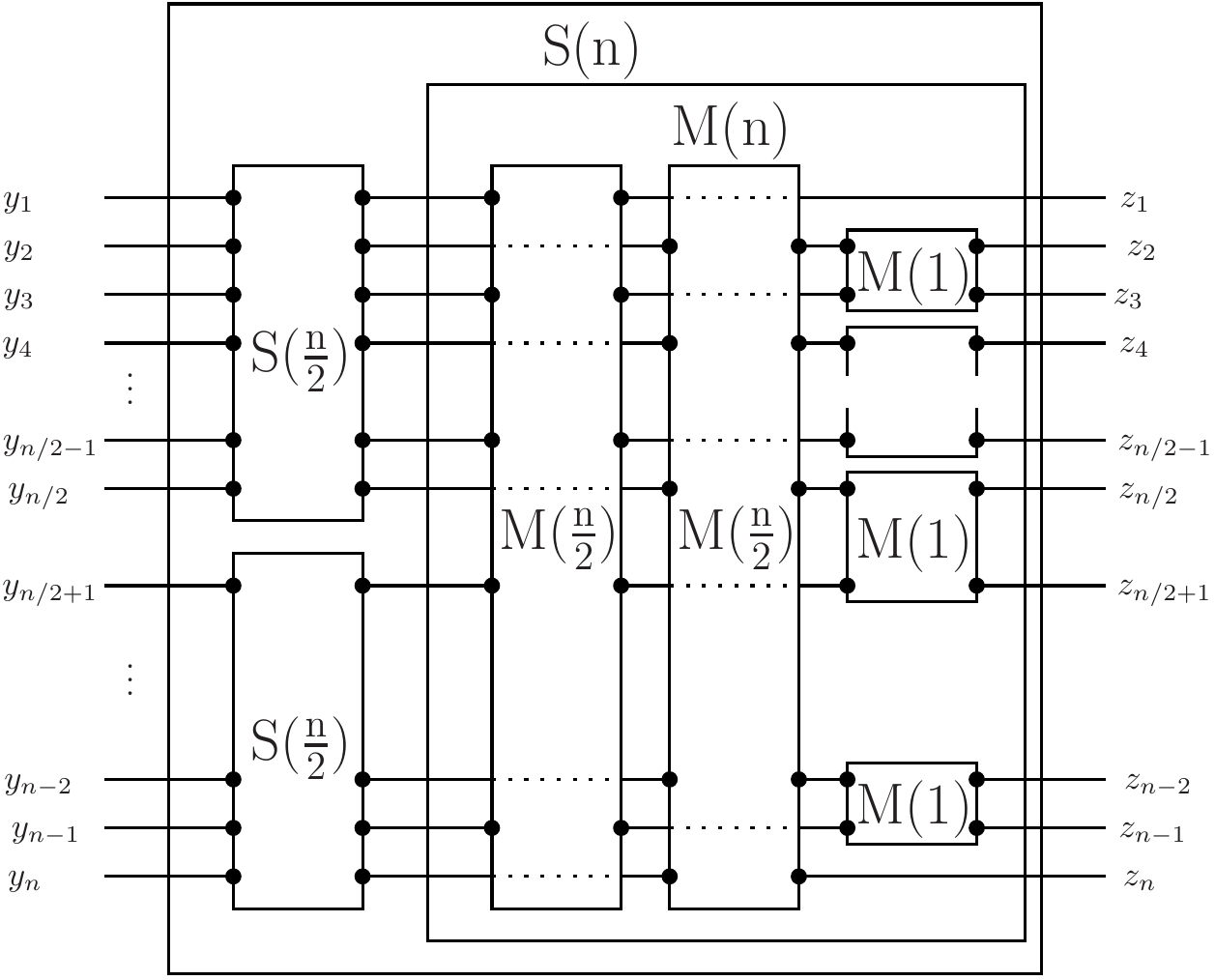}
    \caption{\label{figure-SN}A sorting network and a merge of size $n$.}
  \end{center}
\end{figure}

%\pjs{Reference for proof of theorem} 
%As proved at \cite{Een06}:
\begin{theorem}[\citeR{Een06}]
  The sorting network encoding defined in this section encodes
  cardinality constraints with $O(n \log^2 n)$ variables and
  clauses. The encoding maintains domain consistency.
  
\end{theorem}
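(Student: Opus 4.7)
The theorem has two parts: a size bound and domain consistency. I would prove the size bound by analyzing the recurrences for $\operatorname{Merge_{OE}}$ and $\operatorname{Sort_{OE}}$. Writing $M(N)$ for the number of clauses and variables introduced by $\operatorname{Merge_{OE}}$ applied to two sorted lists of total length $N$, the recursive step uses two recursive merges on lists of total length $N/2$ followed by $O(N)$ comparators, giving $M(N) = 2M(N/2) + O(N) = O(N \log N)$. Plugging this into the top-level recurrence $S(n) = 2 S(n/2) + M(n)$ yields $S(n) = O(n \log^2 n)$ for the full sorter.

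For domain consistency I would argue compositionally, guided by the $0/1$ principle: since the underlying network sorts every $\{0,1\}$ sequence, the $i$-th output $z_i$ equals $1$ exactly when at least $i$ of the inputs equal $1$. I would first verify that the base-case merger $(z_1, z_2) = (y_1 \vee y'_1,\ y_1 \wedge y'_1)$, encoded by the six standard OR/AND clauses, is arc-consistent in both directions (inputs-to-outputs and outputs-to-inputs). I would then prove by induction on the recursive structure of $\operatorname{Sort_{OE}}$ and $\operatorname{Merge_{OE}}$ the following lifting lemma: given any partial assignment to the primary inputs with $k$ forced $1$s and $m$ forced $0$s, unit propagation on the full encoding forces exactly the outputs $z_1, \ldots, z_k$ to $1$ and exactly the outputs $z_{n-m+1}, \ldots, z_n$ to $0$ (using the ``at least $i$ inputs are true'' convention for the $z_i$), and propagates no other $z_i$. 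Combined with the single extra clause $\neg z_{a_0+1}$ that enforces $C$, this immediately gives domain consistency: once $a_0$ inputs are true, back-propagation through the comparator clauses forces all remaining inputs to $0$, and any attempt to set an $(a_0{+}1)$-th input to $1$ produces a conflict.

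The main obstacle will be the lifting lemma itself, because domain consistency is not in general preserved under composition of arc-consistent encodings: the intermediate variables $w_i, w'_i$ carry only partial information, and a priori unit propagation could stop short. What saves us is the monotonicity of each comparator together with the $0/1$ principle, which let me reason about propagation not clause-by-clause but in terms of how many inputs in each sub-block have been forced to each value. The induction step then says: if in each half of a merger the first $k_L$ and $k_R$ outputs are propagated to $1$, then $z_1, \ldots, z_{k_L+k_R}$ of the merged output are propagated to $1$ (and symmetrically for $0$s), which follows from the explicit shape of $\operatorname{Merge_{OE}}$ and the arc-consistency of the two-input comparators it is built from. The clause count is then a routine calculation from the recurrence.
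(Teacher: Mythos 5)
The paper does not actually prove this theorem: it is stated as a cited result of \shortciteA{Een06} (with the detailed consistency argument living in the literature, e.g.\ in the cardinality-network papers of As\'{\i}n et al.), so there is no in-paper proof to compare against. Your sketch follows the standard route from that literature, and the size part is correct as stated: $M(N)=2M(N/2)+O(N)$ gives $O(N\log N)$ per merge, and $S(n)=2S(n/2)+M(n)$ gives $O(n\log^2 n)$ comparators, each contributing $O(1)$ variables and clauses.

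The gap is in the domain-consistency half. Your lifting lemma is purely forward (inputs to outputs): it tells you that $a_0$ forced ones at the inputs make unit propagation set $z_1,\ldots,z_{a_0}$ true, and that forced zeros push zeros to the tail of the output. Combining this with the clause $\neg z_{a_0+1}$ does \emph{not} ``immediately'' force the remaining inputs to $0$: that step requires propagation to travel from an output literal back down through $\log^2 n$ layers of comparators to every undefined input, and nothing in the forward lemma guarantees this. The standard fix is a separate \emph{backward} lemma, proved by a second induction on the recursive structure: if the two input lists of a merge carry $p$ and $q$ forced ones and the $(p+q+1)$-th output is forced to $0$, then unit propagation forces the $(p+1)$-th input of the first list and the $(q+1)$-th input of the second list to $0$; composing this down the sorter reaches the primary inputs. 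Arc-consistency of the base comparator in both directions is necessary but not sufficient here, precisely because (as you yourself note) domain consistency does not compose; the backward lemma is the piece that makes the composition go through, and it must be stated and carried through the induction alongside the forward one. A further minor point: the ``propagates no other $z_i$'' exactness claim in your lemma is both unnecessary for the theorem and harder to establish than what you need; drop it.
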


This method was improved first by \shortciteA{AsinNOR11}\label{asinnor} and then
at by \shortciteA{parametricCardinalityConstraint}\label{abiosn}. 
In the first paper the
authors reduce the size of the network by computing only the $a_0+1$
most significant bits. In this case, the encoding needs $O(n
\log^2a_0)$ variables and clauses without losing any propagation
strength. Sorting Networks with fewer outputs than inputs are called
\emph{cardinality Networks} (which we denote by $\cn$),
and merge networks with fewer outputs than
inputs are called \emph{simplified Merges} (which we denote by $\smerge$).

The authors also change the base case of the definition of Merge from
$z_1 = y_1 \vee y_1'$ and $z_2 = y_1 \wedge y_1'$ to $y_1 \vee y_1'
\rightarrow z_1 $ and $y_1 \wedge y_1' \rightarrow z_2$, halving the
number of clauses needed. Note that this means that the outputs $z_1, z_2$
are no longer functionally defined by the inputs $y_1, y_1'$.  

In the second paper, the authors combine the recursive definitions of
Sorting Network and Merge with naive ones (this is, without auxiliary
variables). The naive encoding needs an exponential number of clauses,
but it is more compact than the recursive one for small input sizes:
therefore, the naive encoding can be used in some recursive
calls. Using a dynamic programming approach, the authors achieve a
much more compact encoding. The asymptotic size or propagation
strength is the same that of \shortciteA{AsinNOR11}, 
but in practice the encoding has better size and performance.

\begin{exmp}\label{example-cardinalitynetworks}
  Let us consider again the constraint $y_1 + y_2 + y_3 + y_4 + y_5
  \leqslant 2$. The encoding of this section defines 10 variables
  $$\begin{array}{rcl}
    (w_1,w_2,w_3) & = & \ordEnc(y_1 + y_2 + y_3)\\
    (w_4,w_5) & = & \ordEnc (y_4 + y_5)\\
    (z_1, z_2, z_3, z_4, z_5) &=& \operatorname{Merge}(w_1, w_2, w_3; w_4, w_5)\\
  \end{array}$$

  The encoding of \shortciteA{parametricCardinalityConstraint} defines just 8
  variables
  $$\begin{array}{rcl}
    (w_1,w_2,w_3) & = & \cn(y_1,y_2,y_3)\\
    (w_4,w_5) &=& \cn(y_4,y_5)\\
    (z_1, z_2, z_3)&=& \smerge(w_1, w_2, w_3; w_4, w_5)
  \end{array}$$
  and clauses
  $$\begin{array}{lllll}
    \neg y_1 \vee w_1, &\hspace{0.2em}&
    \neg y_2 \vee w_1, &\hspace{0.2em}&
    \neg y_3 \vee w_1, \\
    \neg y_1 \vee \neg y_2 \vee w_2, &\hspace{0.2em}&
    \neg y_1 \vee \neg y_3 \vee w_2, &\hspace{0.2em}&
    \neg y_2 \vee \neg y_3 \vee w_2, \\
    \neg y_1 \vee \neg y_2 \vee \neg y_3 \vee w_3, &\hspace{0.2em}&
    \neg y_4 \vee w_4, &\hspace{0.2em}&
    \neg y_5 \vee w_4, \\
    \neg y_4 \vee \neg y_5 \vee w_5, &\hspace{0.2em}&
    \neg w_1 \vee z_1, &\hspace{0.2em}&
    \neg w_4 \vee z_1, \\
    \neg w_2 \vee z_2, &\hspace{0.2em}&
    \neg w_1 \vee \neg w_4 \vee z_2, &\hspace{0.2em}&
    \neg w_5 \vee z_2, \\
    \neg w_3 \vee z_3, &\hspace{0.2em}&
    \neg w_2 \vee \neg w_4 \vee z_3, &\hspace{0.2em}&
    \neg w_1 \vee \neg w_5 \vee z_3, \\
    \neg z_3 \\
  \end{array}$$

  The method is domain consistent: for instance, given the partial
  assignment $\{y_1, y_3 \}$ unit propagation enforces that $\neg z_3,
  \neg w_3, \neg y_2, w_1, w_2, \neg w_4, \neg w_5, \neg y_4, \neg
  y_5$. 
\end{exmp}

A very similar encoding were given by 
\shortciteA{DBLP:conf/lpar/CodishZ10}
and \shortciteA{DBLP:journals/jair/AMCS13}. In this case the authors used a
different definition of Sorting Networks, called Pairwise Sorting
Networks by \shortciteA{DBLP:journals/ppl/Parberry92}. By means of partial
evaluation, this method also achieves $O(n \log^2a_0)$ variables and
clauses, and produce a similar encoding to those of \shortciteA{AsinNOR11} in
terms of size and propagation strength.

\section{Encoding Pseudo-Boolean Constraints into SAT}
\label{section-PB}
Pseudo-Boolean constraints are a well-studied topic in the SAT
community. In this section we review the literature and
provide a survey of the existing translations. PB constraints are a
natural extension to cardinality constraints. As in the previous
section, notice that the variables of these constraints are already
propositional, so we do not have to encode them.

Let us fix the pseudo-Boolean constraint $a_1 y_1 + a_2 y_2 + \ldots + a_n y_n
\leqslant a_0$. The direct translation without auxiliary variables generates
one clause for each minimal subset of $\{y_1 \ldots y_n\}$ such that the sum of
the respective $a_i$ exceeds $a_0$. This translation is unique, but not
practical since even simple PBs require $O(2^n)$ clauses, as shown by
~\shortciteA{Warners98}. 

All the remaining methods in the literature introduce
the integer variable $s = a_1 y_1 + a_2 y_2 + \ldots + a_n y_n$ and
then simply encode $s \leqslant a_0$. The main differences between the
encodings are the way to represent $s$ (either the order or the
logarithmic encoding) and the way to enforce the definition of $s$.

We classify the encodings of PB constraints into three groups: the
ones using binary adders to obtain the logarithmic encoding of $s$;
the ones that use some sorting method to obtain the logarithmic encoding
of $s$; and the ones that, incrementally, define the order encoding of
the partial sums, obtaining finally the order encoding of $s$.

\subsection{Encoding Pseudo-Boolean Constraints with Adders}
\label{section-PB-adders}
Given positive integer numbers $a_1, a_2, \ldots, a_n$, an adder
encoding for PB constraints defines propositional variables $z_0, z_1,
\ldots, z_m$ such that 
$$a_1 y_1 + a_2 y_2 + \ldots + a_n y_n = z_0 +
2z_1 + 2^2 z_2 + \ldots + 2^m z_m,$$ 
where $m= \left \lfloor \log (a_1 + a_2 + \ldots + a_n) \right \rfloor$. 
In Section \ref{sec:card-adders} we showed a method to
accomplish this when $a_1 = a_2 = \cdots = a_n = 1$. Here we extend
this method for arbitrary positive coefficients.

First of all, for all $i$ in $\{1, 2, \ldots, n\}$ let $(A_i^0, A_i^1,
\ldots, A_i^{m})$ be the binary representation of $a_i$; this is,
$A_i^j \in \{0,1\}$ and $a_i = A_i^0 + 2 A_i^1 + \cdots + 2^m A_i^m$
%% (notice that we can choose $m = \lfloor log(\sum_{i=1}^n a_i) \rfloor$
%% that does not depend on $i$).
%% \pjs{Check redefinition of $m$}
%% \ignasi{Where did we define it before?}  \pjs{No I changed the
%% definition from the previous one! Check my definition!}  \vale{Isn't
%% $l=m$ here. So I propose above just use $m$ from the beginning, and
%% shouldn't we round up?}\ignasi{I rewrote this section using only $m$}
%% \pjs{Rounding down is correct, if the sum can be at most 255, we only need
%% $z_0 + 2z_1 + ... + 128 z_7$ (8 bits)! I have fixed it!}
%% \vale{Agreed. It looks good now. }

Therefore, we obtain that
\begin{multline*}
a_1 y_1 + a_2 y_2 + \ldots + a_n y_n =\\ = (A_1^0 y_1 + A_2^0 y_2 +
\cdots + A_n^0 y_n) + \cdots + 2^m (A_1^m y_1 + \cdots + A_n^m y_n).
\end{multline*}
The goal of the encoding is to obtain the logarithmic encoding of that
sum. This can be easily done by repeatedly using the Adder encoding
for cardinality constraints explained in Section
\ref{sec:card-adders}.

More specifically, let $z_0^0, z_0^1, \ldots, z_0^{m_0}$ be the
logarithmic encoding of $A_1^0 y_1 + A_2^0 y_2 + \cdots + A_n^0 y_n$,
obtained as shown in Section \ref{sec:card-adders}. This means
that 
$$z_0^0 + 2 z_0^1 + \ldots + 2^{m_0} z_0^{m_0} = A_1^0 y_1 +
A_2^0 y_2 +\cdots + A_n^0 y_n.$$ 
Let us define $z_0 = z_0^0$. Therefore,
\begin{multline*}
  (A_1^0 y_1 + \cdots + A_n^0 y_n) + 2 (A_1^1 y_1 + \cdots + A_n^1
  y_n) +2^2 (A_1^2 y_1 + \cdots + A_n^2 y_n) + \cdots = \\ = z_0^0 + 2
  z_0^1 + 2^2 z_0^2 + \cdots + 2 (A_1^1 y_1 + \cdots + A_n^1 y_n ) +
  2^2 (A_1^2 y_1 + \cdots + A_n^2 y_n) + \cdots = \\ =z_0 + 2 (A_1^1
  y_1 + \cdots + A_n^1 y_n + z_0^1) + 2^2 (A_1^2 y_1 + \cdots + A_n^2
  y_n + z_0^2) + \cdots
\end{multline*}
Again, let $z_1^0, z_1^1, \ldots, z_1^{m_1}$ be the logarithmic
encoding of $A_1^1 y_1 + \cdots + A_n^1 y_n + z_0^1$. Then, if we
define $z_1 = z_1^0$,
\begin{multline*}
  (A_1^0 y_1 + \cdots + A_n^0 y_n) + 2 (A_1^1 y_1 + \cdots + A_n^1
  y_n) +2^2 (A_1^2 y_1 + \cdots + A_n^2 y_n) + \cdots =\\= z_0 + 2
  z_1 + 2^2 (A_1^2 y_1 + \cdots +
  A_n^2 y_n + z_0^2 + z_1^1) + \cdots
\end{multline*}
And so on and so forth until we obtain $$(A_1^0 y_1 + A_2^0 y_2 +
\cdots + A_n^0 y_n) + \cdots + 2^m (A_1^m y_1 + \cdots + A_n^m y_n) =
z_0 + 2 z_1 + \cdots + 2^m z_m.$$

%% \vale{For clarity $z_i$ from the beginning of the section is
%%   essentially $z^0_i$} \ignasi{Yes, that's correct. Now I defined
%%   $z_i$ on the fly to make it clearer. Is it better now?} 
%%   \vale{yes. Maybe update Figure 3 as well?}\ignasi{Done!}

\begin{exmp}
  Let us consider the constraint $2y_1 + 3y_2 + 5 y_3 + 6 y_4
  \leqslant 9$. The Adder encoding first transforms the constraint
  into $y_2 +y_3 + 2y_1 + 2 y_2 + 2 y_4 + 4 y_3 +4 y_4 \leqslant
  9$. Then, using the circuit from Figure
  \ref{figure-PBadders-example}, defines the propositional variables
  $z_0^0, z_1^0, \ldots, z_4^0$. Finally, it enforces $z_0^0 +2 z_1^0
  + \cdots + 16 z_4^0 \leqslant 9$ by adding clauses $\neg z_4^0, \neg
  z_3^0 \vee \neg z_2^0, \neg z_3^0 \vee \neg z_1^0$.
  \begin{figure}[t]
    \begin{center}
      \includegraphics[scale=0.7]{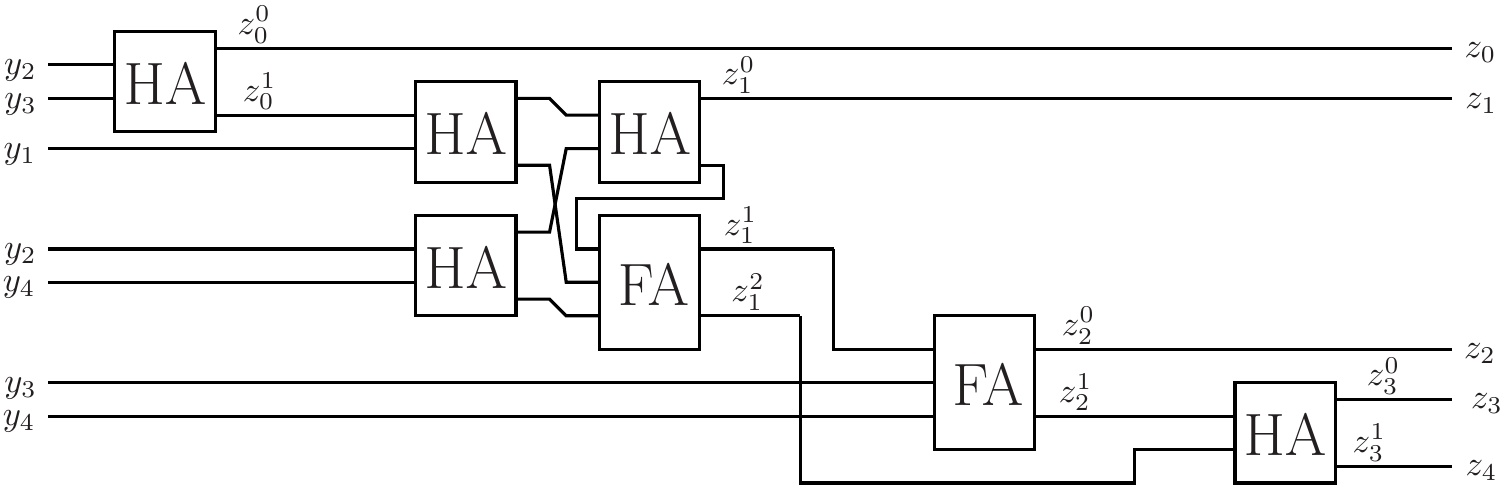}
      \caption{\label{figure-PBadders-example}Adder circuit to encode
        $2y_1 + 3y_2 + 5 y_3 + 6 y_4 \leqslant 9$.}
    \end{center}
  \end{figure}
\end{exmp}

The first encoding of PB constraints using some form of carry-adders was given
by \shortciteA{Warners98}\label{warnerspb}. \shortciteA{Een06} implemented a
similar encoding in
\emph{Minisat+}\label{eenadder},\footnote{\url{github.com/niklasso/minisatp}} a
PB extension to the award winning SAT solver \emph{Minisat}. 

%As shown at Lemma 2 of \cite{Warners98}:
\begin{theorem}[Lemma 2 of~\citeR{Warners98}]
  The Adder encoding for PB presented in this section needs
  $O(\log(a_0) n)$ variables and clauses, but does not maintain
  consistency.
  
\end{theorem}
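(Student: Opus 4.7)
The plan is to reduce both claims to the cardinality-adder analysis already established in Section~\ref{sec:card-adders}. For the size bound, I would count the total number of cardinality adders invoked and the total number of input bits fed to them. Since $m = \lfloor \log(\sum_i a_i) \rfloor$ and we may assume without loss of generality that $a_i \leq a_0$ for each $i$ (otherwise the variable $y_i$ can be fixed to $0$ by preprocessing), we have $m+1 = O(\log a_0)$ bit positions, and each coefficient contributes at most $m+1$ nonzero bits $A_i^j$.

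At level $j$, the encoding calls the cardinality-adder construction on the $y_i$'s with $A_i^j = 1$ together with the carry bits propagated from lower levels; the cardinality-adder theorem from Section~\ref{sec:card-adders} gives $O(k)$ variables and clauses for $k$ inputs. Summing, the total number of input bits contributed directly by the $y_i$'s across all levels is at most $\sum_{i=1}^n (m+1) = O(n \log a_0)$. The carries introduce only a lower-order term: each of the $O(\log a_0)$ levels produces $O(\log n)$ carry bits that reach higher levels, contributing $O(\log n \cdot \log a_0)$ additional inputs overall, which is dominated by $O(n \log a_0)$. Hence the overall size is $O(n \log a_0)$, matching the claim.

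For the consistency claim I would transfer the counterexample already constructed for the cardinality case in Section~\ref{sec:card-adders}. Setting all $a_i = 1$ reduces the PB adder encoding to the CC adder encoding, so the partial assignment exhibited there (where unit propagation fails to derive the contradiction forced by the constraint) serves directly as a witness. Alternatively, one can exhibit a genuine PB instance, e.g.\ $2y_1 + 2y_2 + 3y_3 \leqslant 4$ with partial assignment $\{y_1, y_2\}$: the left-hand side is already $4$ and forcing $y_3$ violates the constraint, but the adder network only constrains the binary digits of the sum and will not propagate $\neg y_3$ since the carry structure hides the excess until a complete assignment is made.

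The main obstacle is the size bookkeeping: it is essential to argue that carry propagation across the $O(\log a_0)$ levels does not blow up the count beyond $O(n \log a_0)$. The cleanest way to handle this is to charge each cardinality-adder clause either to one of the $O(n \log a_0)$ input bits $A_i^j y_i$ or to one of the $O(\log n \log a_0)$ carry bits, and observe that the second term is absorbed into the first whenever $n \geqslant \log n$, which is always.
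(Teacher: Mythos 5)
The paper offers no internal proof of this statement---it is imported wholesale as Lemma~2 of Warners (1998)---so your argument is necessarily an independent reconstruction. The size half is essentially the right accounting: charge each clause of the level-$j$ cardinality adder either to an input bit $A_i^j y_i$ or to a carry, bound the former by $\sum_{i=1}^n (\lfloor \log a_i \rfloor + 1) = O(n \log a_0)$ under the harmless normalization $a_i \leqslant a_0$, and observe that carries are lower order. One imprecision: $m+1 = O(\log \sum_i a_i) = O(\log n + \log a_0)$, not $O(\log a_0)$; what actually saves the bound is that each \emph{individual} coefficient has at most $\lfloor \log a_0 \rfloor + 1$ nonzero bits, and the carry contribution, $O((\log n + \log a_0)\log n)$, is still absorbed into $O(n \log a_0)$. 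With that correction the size argument is sound.

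The consistency half contains one sound witness and one broken one. The reduction to the cardinality case works: with all $a_i = 1$ the PB construction degenerates to exactly the CC adder circuit, and the paper's partial assignment $\{y_1, y_3, y_4\}$ for $y_1 + \cdots + y_5 \leqslant 2$ is genuinely unsatisfiable while unit propagation derives only $w_2$ and no conflict. Your ``genuine PB instance,'' however, does not witness a failure of consistency: under $\{y_1, y_2\}$ the constraint $2y_1 + 2y_2 + 3y_3 \leqslant 4$ is still satisfiable (take $y_3 = 0$), so the definition of consistency imposes no obligation on the propagator at that domain; failing to propagate $\neg y_3$ is a failure of \emph{domain} consistency, which is a strictly stronger property and not what the theorem claims. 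A self-contained PB witness must exhibit a partial assignment under which \emph{no} completion satisfies the constraint and unit propagation is nonetheless silent. Since the first witness already suffices, the theorem is proved, but the second example should be deleted or replaced by an assignment that actually makes the constraint unsatisfiable.
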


\subsection{Encoding Pseudo-Boolean Constraints with Sorting
Methods}\label{section-PB-sorting}

In this section we give the basic idea of encodings of Pseudo-Boolean
constraints based on sorting methods~\shortcite{Een06,Bailleux09,Manthey14}.
These encodings are extended to LI constraints in Section
\ref{section-sortingEncoding}, so a more detailed explanation can be found
there.

As in the previous section, given a PB constraint $a_1 y_1 + \ldots, +
a_n y_n \leqslant a_0$, let $(A_i^0, A_i^1, \ldots, A_i^{m})$ be the
binary\footnote{We could use the representation of $a_i$ in any other
  base or even with mixed radix as in the work of \shortciteA{Zha17}. For simplicity, however, in this
  section we consider only binary basis. Section
  \ref{section-sortingEncoding} contains the method and proofs for
  arbitrary bases.} representation of $a_i$. 
Then, the PB constraint
can be written as $$(A_1^0 y_1 + A_2^0 y_2 + \cdots + A_n^0 y_n) +
\cdots + 2^m (A_1^m y_1 + \cdots + A_n^m y_n) \leqslant a_0.$$ We can
define Boolean variables $$(z_0^1, z_0^2, \ldots, z_0^{k_0}) =
\operatorname{Sort}(A_1^0 y_1, A_2^0 y_2, \ldots, A_n^0 y_n),$$ in a
similar way as we did in Section \ref{sect-sorting-CC}. In this case,
$z_0^1 \leqslant z_0^1 \leqslant \cdots \leqslant z_0^{k_0}$
and $$A_1^0 y_1 + A_2^0 y_2 + \cdots + A_n^0 y_n = z_0^1 + z_0^2 +
\cdots + z_0^{k_0}.$$ 
%% \vale{I find the introduction of $k_0$ confusing here. Isn't $k_0=n$ and $k_i =
%% n + k_{i-1}/2$. Which also  means that $k_m \leq 2n$. That could be used for
%% proofs later on as well. }
%% \ignasi{No, in principle $k_0 = A_1^0 + \cdots + A_n^0 \leqslant
%%   n$. Besides, if $\operatorname{Sort}=\cn$, then $k_0$ may be
%%   smaller. It is true, however, that $k_m \leqslant 2n$.}
%% \vale{Agreed}. 

Notice that given any $j > 1$,
$$z_0^{j-1} \leqslant z_0^{j} \quad \Rightarrow \quad z_0^{j-1} +
z_0^{j} = (\neg z_0^{j-1} \wedge z_0^{j}) + 2 z_0^{j-1}.$$ So, if
we define $w_0^{j} = \neg z_0^{j-1} \wedge z_0^{j}$,
\begin{multline*}
  A_1^0 y_1 + A_2^0 y_2 + \cdots + A_n^0 y_n = z_0^1 + z_0^2 +
  \cdots + z_0^{k_0} = \\ = (w_0^{k_0} + w_0^{k_0-2} + \cdots )
  + 2 (z_0^{k_0-1} + z_0^{k_0-3} + \cdots)
\end{multline*}
%% \vale{I don't find any explanation of what $w^j_i$ is. It is first
%%   properly defined later on as $w^j_i = \neg z^{j-1}_i \wedge z^j_i$.}
%% \ignasi{A couple of lines before we have defined $w^j_0$. We don't use
%%   any $w^j_i$ for $i>0$ until $w^j_i$ is properly defined. I'm not
%%   sure why this would be a problem.}
%% \vale{ Sorry! I understand now.}

Also notice that, by construction, $w_0^{k_0} + w_0^{k_0-2} + \cdots
 \leqslant 1$, so $w_0^{k_0} + w_0^{k_0-2} + \cdots = w_0^{k_0} \vee
 w_0^{k_0-2} \vee \cdots$ Let us define $z_0 = w_0^{k_0} \vee
 w_0^{k_0-2} \vee \cdots$.

Then, we have that
\begin{multline*}
  a_1 y_1 + \ldots, + a_n y_n = (A_1^0 y_1 + \cdots + A_n^0 y_n) + 2
  (A_1^1 y_1 + \cdots + A_n^1 y_n) + \cdots = \\ = z_0 + 2 (A_1^1 y_1
  + \cdots + A_n^1 y_n + z_0^{k_0-1} + z_0^{k_0-3} + \cdots) + 2^2 (A_1^2 y_1 + \cdots
\end{multline*}
Repeating this process, we can define $z_1, z_2, \ldots, z_l$ so $a_1
y_1 + \cdots + a_n y_n = z_0 + 2 z_1 + \cdots + 2^l z_l$. With these
variables it is easy to enforce $z_0 + 2 z_1 + \cdots + 2^l z_l
\leqslant a_0$.

In summary, the method defines the variables:
$$\begin{array}{rcl}
  (z_0^1, z_0^2, \ldots, z_0^{k_0}) &=& \operatorname{Sort}(A_1^0 y_1, A_2^0 y_2, \ldots, A_{n+1}^0 y_{n+1}) \\
  (z_1^1, z_1^2, \ldots, z_1^{k_1}) &=& \operatorname{Sort}(A_1^1 y_1, A_2^1 y_2, \ldots, A_{n+1}^1 y_{n+1}, z_0^{k_0-1}, z_0^{k_0-3}, \ldots) \\
  &\cdots \\
  (z_l^1, z_l^2, \ldots, z_l^{k_l}) &=& \operatorname{Sort}(A_1^l y_1, A_2^l y_2, \ldots, A_{n+1}^l y_{n+1}, z_{l-1}^{k_{l-1}-1}, z_{l-1}^{k_{l-1}-3}, \ldots) \\
\end{array}$$
$$w_i^{j} = \neg z_i^{j-1} \wedge z_i^{j}$$
$$z_i = w_i^{k_i} \vee w_i^{k_i-2} \vee \cdots$$
and then encodes $z_0 + 2 z_1 + \cdots + 2^l z_l \leqslant a_0$.

\begin{exmp}
  Let us consider again the constraint $2y_1 + 3y_2 + 5 y_3 + 6 y_4
  \leqslant 9$. As before, the constraint is rewritten as $y_2 +y_3 +
  2y_1 + 2 y_2 + 2 y_4 + 4 y_3 +4 y_4 \leqslant 9$. Then, using the
  circuit from Figure \ref{figure-SN-PB-example}, the encoding defines
  the propositional variables $z_0, z_1, \ldots, z_4$. Finally,
  it enforces $z_0 +2 z_1 + \cdots + 16 z_4 \leqslant 9$ by
  adding clauses $\neg z_4, \neg z_3 \vee \neg z_2, \neg z_3
  \vee \neg z_1$.
  \begin{figure}[t]
    \begin{center}
      \includegraphics[scale=0.7]{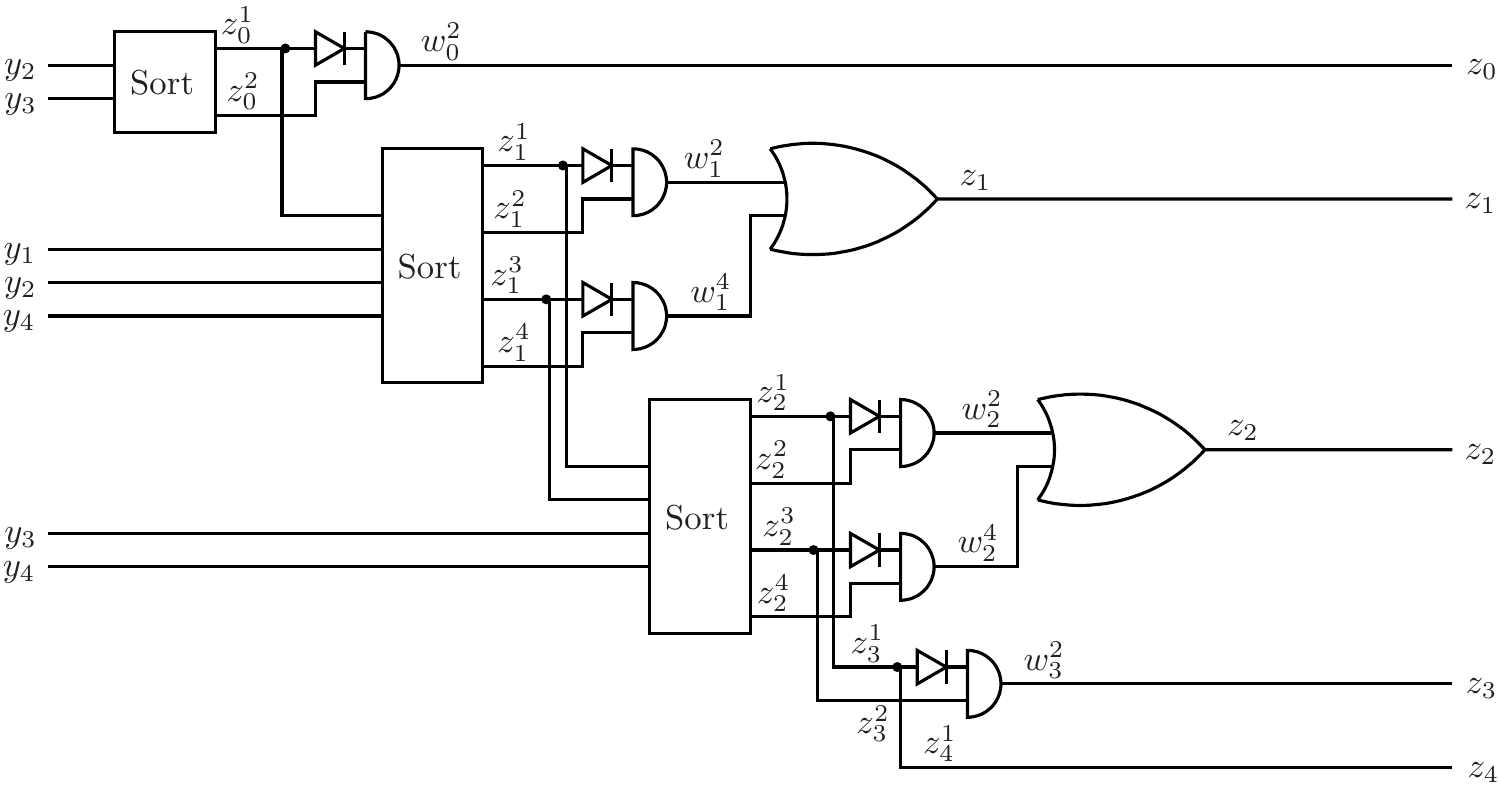}
      \caption{\label{figure-SN-PB-example}Circuit to encode $2y_1 +
        3y_2 + 5 y_3 + 6 y_4 \leqslant 9$ with sorting methods without
        tare.}
    \end{center}
  \end{figure}
\end{exmp}

An alternative method starts by introducing the so-called
\emph{tare}. This is, let $k$ be the minimal positive integer such
that $2^k > a_0$. We can introduce a dummy true variable $y_{n+1}$ and
encode the constraint $a_1 y_1 + \cdots + a_n y_n + (2^k-a_0-1)
y_{n+1} \leqslant 2^k-1$. As before, we introduce the variables
$$\begin{array}{rcl}
  (z_0^1, z_0^2, \ldots, z_0^{l_0}) &=& \operatorname{Sort}(A_1^0 y_1, A_2^0 y_2, \ldots, A_{n+1}^0 y_{n+1}) \\
  (z_1^1, z_1^2, \ldots, z_1^{l_1}) &=& \operatorname{Sort}(A_1^1 y_1, A_2^1 y_2, \ldots, A_{n+1}^1 y_{n+1}, z_0^1, z_0^3, \ldots) \\
  \cdots \\
  (z_k^1, z_k^2, \ldots, z_k^{l_k}) &=& \operatorname{Sort}(A_1^k y_1, A_2^k y_2, \ldots, A_{n+1}^k y_{n+1}, z_{k-1}^1, z_{k-1}^3, \ldots) \\
\end{array}$$
Now the original constraint can be enforced by simply adding the
clause $\neg z_k$. Notice that in this case we do not have to
introduce the variables $w_i^j$ and $z_i$, so the encoding is more
compact. However, the encoding is not incremental: this is, if we
encoded a PB constraint with bound $a_0$ and now we want to encode the
same constraint with a new bound $a_0' < a_0$, we have to re-encode
the constraint from scratch. In the non-tare case, we would only
need to enforce $z_0 + 2 z_1 + \cdots + 2^l z_l \leqslant a_0'$ for
the new value $a_0'$.

\begin{exmp}
  Again, let us consider the constraint $2y_1 + 3y_2 + 5 y_3 + 6 y_4
  \leqslant 9$. The tare is $15-9 = 6$, so we rewrite the constraint
  as $y_2 +y_3 + 2y_1 + 2 y_2 + 2 y_4 + 2 + 4 y_3 +4 y_4 + 4 \leqslant
  15$. Then, using the circuit from Figure
  \ref{figure-SN-tare-example}, the encoding defines the propositional
  variable $z_4$. Now the constraint can be enforced by simply adding
  the clause $\neg z_4$.
  \begin{figure}[t]
    \begin{center}
      \includegraphics[scale=0.7]{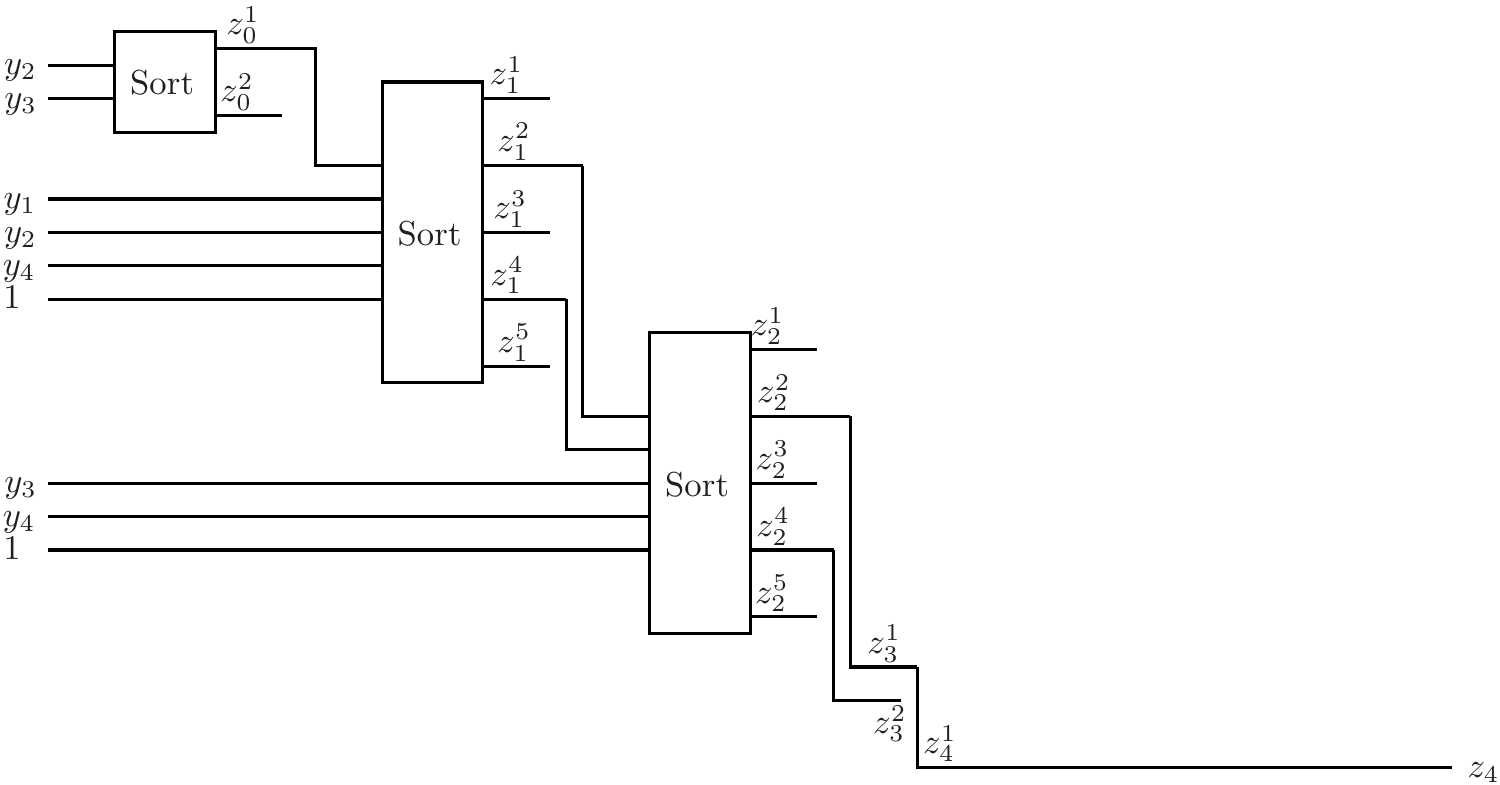}
      \caption{\label{figure-SN-tare-example}Circuit to encode $2y_1 +
        3y_2 + 5 y_3 + 6 y_4 \leqslant 9$ with sorting methods with
        tare.}
    \end{center}
  \end{figure}
\end{exmp}

\citeA{Een06} designed and implemented this method in
MiniSAT+. The coefficients are decomposed in a mixed
radix form and Odd-Even sorting networks are used as the sorting method. They
prove that their method maintains domain consistency on cardinality
constraints, but it is not known what consistency is
maintained in the pseudo-Boolean case. To our
knowledge, this is still an open question. In Section
\ref{section-sortingEncoding} we answer this question: the encoding is
consistent but not domain consistent. The method generates $O(N \log^2
(N))$ variables and clauses, where $N = n \log (\max \{a_i\})$.

Another version of this encoding is presented by
~\shortciteA{Bailleux09}. Their method, called \emph{global polynomial
  watchdog (GPW)}\label{gpw}, adds the tare, decomposes the coefficients in binary and
uses totalizers (see Section \ref{sec:totalizers}) as sorting
method. The number of clauses of this encoding is
$O(n^2\log(n)\log(a_0))$. It is proven that GPW detects inconsistencies
and that through increasing the encoding by a factor of $n$, it also
maintains domain consistency. The extended form is 
referred to as \emph{local polynomial watchdog (LPW)}. 

\shortciteA{Manthey14}\label{manthey} improve the encoding of
~\shortciteA{Bailleux09} by using sorting networks instead of totalizers. Both
constructions have the same consistency but Manthey's save a factor of $n$ in
size. 

\subsection{Encoding Pseudo-Boolean Constraints by Incremental Partial Sums}

In this section we describe the encodings of PB constraints that
introduce the order encoding of the partial sums.

The underlying idea of these encodings is simple. Given a PB constraint
and $i \in \{0, 1, \ldots, n\}$, let us define the integer variable
$s_i = a_1 y_1 + \cdots + a_i y_i$. Since $s_i = s_{i-1} + a_i y_i$,
where $s_0=0$, we can enforce the order encoding of these variables
with the following clauses:
\begin{equation}
\begin{array}{ll}
  (s_{i-1} \geqslant j) \wedge y_i \rightarrow (s_i \geqslant j + a_i), &
  (s_{i-1} \geqslant j)  \rightarrow (s_i \geqslant j,) \\
  (s_{i-1} < j) \wedge \neg y_i \rightarrow (s_i < j), &
  (s_{i-1} < j)  \rightarrow (s_i < j + a_i), \\
\end{array}\label{eq:ps}
\end{equation}
and then simply enforce $s_n \leqslant a_0$.

Notice, however, that the encoding contains some redundant
variables. For instance, the variables $s_1 \geqslant 1$, $s_1
\geqslant 2$, $\ldots$ and $s_1 \geqslant a_1$ are all equivalent. In
fact, $s_i \geqslant j-1$ and $s_i \geqslant j$ are equivalent if
there is no value of $y_1, y_2, \ldots, y_{i-1}$ with $a_1 y_1 + a_2
y_2 +\cdots + a_{i-1} y_{i-1} = j$. Therefore, we can just consider
variables $$\{ s_i \geqslant j \st \exists \ y_1, y_2, \ldots, y_{i-1}
\in \{0,1\}, \text{ with } a_1 y_1 + a_2 y_2 +\cdots + a_{i-1} y_{i-1}
= j \}$$ and obtain an equivalent and more compact encoding.

The encoding of \shortciteA{Bailleux06}\label{b06} computes all the possible values
of the partial sums with diagrams like that shown in
Figure~\ref{fig:bailleux}.

\begin{figure}
$$\xymatrix@R=5mm{
  &&s_2=0\\
  &s_1=0\ar[ru]^{\neg y_2}\ar[r]^{y_2}& s_2=a_2\\
  s_0=0\ar[ru]^{\neg y_1}\ar[dr]^{y_1}&&&\cdots \\
  &s_1=a_1\ar[r]^{\neg y_2}\ar[dr]^{y_2}& s_2=a_1\\
  &&s_2=a_1+a_2\\
  }
$$
\caption{A representation of partial sums calculated by the method of
\shortciteA{Bailleux06}.\label{fig:bailleux}}
\end{figure}
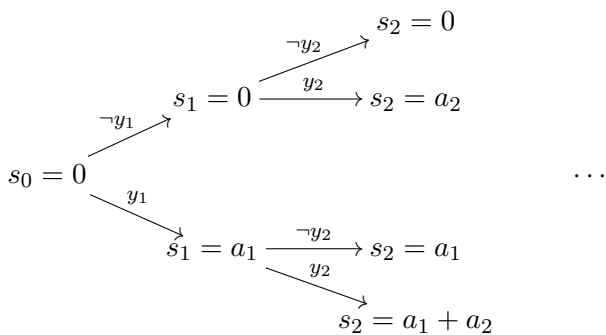
 
The method then adds a Boolean variable and the clauses of Equation~(\ref{eq:ps})
for each of these values. The encoding maintains domain consistency,
but it is exponential in the worst case.

Notice that this diagram is a non-reduced BDD (see Section
\ref{MDD-def}), and, therefore, the encoding of ~\shortciteA{Bailleux06}
still contains redundant variables. The encoding is improved 
by \shortciteA{Een06}\label{eenbdd}, 
where the BDD is reduced before encoding it. The authors
then encode the BDD with the Tseytin transformation by
~\citeA{Tseytin1968},
which requires
one variable and six clauses per node. The method is domain consistent
if the PB was sorted by the coefficients size.

Several improvements were made by \shortciteA{Abio12}\label{abiobdd}. First, a
faster algorithm for constructing the reduced BDD is presented. The authors can
directly generate the reduced BDD, rather than constructing first the
non-reduced one. The authors also introduce a CNF encoding for monotone BDDs
that uses two clauses per BDD node and still maintains
domain consistency via unit propagation. This encoding is more compact
than the encodings used for general BDDs by previous work~\shortcite{Een06,Bailleux06}
and shows improvement for practical SAT solving. The paper has also a
deep analysis on the conditions under which the size of the BDD is
polynomial in the number of literals.

Finally, the work by \shortciteA{BDDsOptimization} generalizes the method of
\shortciteA{Abio12} for optimization functions: the authors study how to
efficiently encode one PB constraint if the constraint with a different bound
was already encoded.

Section \ref{section-MDD} contains a detailed explanation of the generalization
of PB encodings to LI constraints.

%%%%%%%%%%%%%%%%%%%%%%%%%%%%%%%%%%%%%%%%%%%%%%%%%%%%%%%%%%%%%%%%%%%%%%%%%%%%%%%%%%%%%
\section{Encoding Linear Integer Constraints into SAT}\label{section-LI}
%%%%%%%%%%%%%%%%%%%%%%%%%%%%%%%%%%%%%%%%%%%%%%%%%%%%%%%%%%%%%%%%%%%%%%%%%%%%%%%%%%%%%

In this section we study several translations of general LI constraints into
CNF. One way to translate LI constraints is to encode the integer variables and
uses PB translations from the previous Section. A different way extends the
encodings from the previous section to integer variables. Following this path
we introduce two new encodings, {\MDD} in Section \ref{section-MDD} and the
family {\SN} of encodings in Section \ref{section-sortingEncoding}. 

Surprisingly, there are few publications that study the direct translations of
non-pseudo-Boolean Linear Constraints to CNF. 
%%\pjs{Added stuff and reordered}

In context of bounded model checking, \shortciteA{Bartzis03} study
the translation of LI constraints to BDDs via reformulating as PBs. We
generalize and discuss this method in Section \ref{section-logarithmic}.

A different method by \shortciteA{Sugar} using an order encoding of integers, and
referred to as the {\Sup} encoding, is explained in detail in Section
\ref{section-sugar}.

Many approaches to encoding linear constraints break the constraint into
component, additions and multiplication by a constant.  Hence
$a_1 x_1 + \cdots + a_n x_n \leq a_0$ is encoded as
$y_i = a_i x_i, 1 \leq i \leq n$ and $y_1 + \cdots + y_n \leq a_0$.
Hence they effectively encode a summation constraint and multiplication by
a constant. The systems below all follow this approach.

FznTini by ~\shortciteA{DBLP:conf/cp/Huang08} uses a  logarithmic encoding of
signed integers. It uses ripple carry adders to encode summation, and shift and
add to encode multiplication by a constant.  FznTini uses an extra bit to check
for overflow of arithmetic operations 

BEE by \shortciteA{DBLP:journals/jair/AMCS13} use an order encoding of integers. 
It encodes summation using odd-even sorting networks, and multiplication by
a constant using repeated addition.   BEE uses a equi-propagation to reason
about relations amongst the Boolean variables created during encoding, and
hence improve the encoding. 

Picat SAT by \shortciteA{Zhou2017} use a sign plus logarithmic encoding of
magnitude to encode integers. Like Fzntini it uses ripple carry adders, and
shift plus add to encode multiplication by a constant. Picat SAT applies
equivalence optimization to remove duplicate Booleans that occur during the
encoding procedure. 

\ignore{
\subsection{Encoding Linear Integer Constraints using Adders}

%%\pjs{Added this stuff??}
\label{huang}
Perhaps the most obvious way to encode linear constraints is using a binary
encoding of integers and using ripple carry adders to encode both addition
and multiplication by a constant.  This is the method used by
both FznTini~\cite{DBLP:conf/cp/Huang08} and Picat SAT~\cite{Zhou2017}.
Interestingly more complex adder circuits like carry look ahead adders,
or parallel prefix adders, introduced by circuit designers to make addition
circuits faster, appear to be worse for encoding arithmetic in
SAT.~\cite{ppdp09}. \vale{FIX REFERENCE}

A ripple carry adder encoding the addition of two non-negative $n$-bit logarithmic
integers $x$ and $y$, $z = x + y$, is simply
$$
\begin{array}{rcl}
  (z_b^0, c_1) & = & HA(x_b^0, y_b^0) \\
  (z_b^1, c_2) & = & FA(x_b^1, y_b^1, c_1) \\
  & \vdots  \\
  (z_b^{n-1}, c_n) & = & FA(x_b^{n-1}, y_b^{n-1}, c_{n-1}) \\
\end{array}
$$
where $c_n$ represents the overflow bit. It can be ignored (to implement
fixed width arithmetic), or set to 0 (to force no overflow to occur).

Multiplication by a constant is implemented by binary addition of shifted
inputs.  For example to compute $11 x$ we add $x + 2x + 8x$ where
we compute $2x$ by right-shifting the logarithmic encoding of
$x$ once, and $8x$ by right-shifting it 3 times.
}

\subsection{Encoding Linear Integer Constraints through the Order Encoding}
\label{section-order}
In this section, we describe the encodings of LI constraints that use
the order encoding.  

Let $x$ be an integer variable with domain $[0, d]$, and let
$\ordEnc(x)=[y^1, y^2, \ldots, y^{d}]$ be its order encoding. Notice
that $x = y^1 + y^2 + \cdots + y^d$, since if $x=v$, then $y^1 = y^2 =
\cdots = y^v = 1$ and $y^{v+1} = y^{v+1} = \cdots = y^d =
0$. Therefore, given a LI constraint $C: a_1 x_1 + \cdots + a_n x_n
\leqslant a_0$, we can replace $C$ for the PB constraint $$C': a_1
(y_1^1 + y_1^2 + \cdots + y_1^{d_1}) + \cdots + a_n (y_n^1 + y_n^2 +
\cdots + y_n^{d_n}) \leqslant a_0$$ If we now encode $C'$ with a
standard domain consistent PB encoding, the resulting encoding of $C$
will not be domain consistent. The reason for the loss of domain
consistency can be seen in the next example:

\begin{exmp}
  Let us consider the constraint $C: x_1 + x_2 \leqslant 2$, where
  $x_i \in [0,2]$. Let $y_i^1, y_i^2$ be the order encoding of $x_i$,
  and let us rewrite the constraint as $C': y_1^1 + y_1^2 + y_2^1 +
  y_2^2 \leqslant 2$.

  Let us consider the naive encoding of $C'$ with no auxiliary variables:
  $$\begin{array}{l}
    \neg y_1^1 \vee \neg y_1^2 \vee \neg y_2^1 \\
    \neg y_1^1 \vee \neg y_1^2 \vee \neg y_2^2 \\
    \neg y_1^1 \vee \neg y_2^1 \vee \neg y_2^2 \\
    \neg y_1^2 \vee \neg y_2^1 \vee \neg y_2^2 \\
  \end{array}$$
  This is obviously a domain consistent encoding of $C'$: when two
  variables are set to true, unit propagation sets the other two to
  false. However, the resulting encoding is not a domain consistent
  encoding of $C$: if $y_1^1$ is set to true, unit propagation does
  not propagate anything; however, $y_2^2$ should be set to false
  (this is, if $x_1 \geqslant 1$, then $x_2 \leqslant 1$).

  The reason is that we generated the encoding of $C'$ without
  considering the relations on the variables of the constraint. Notice
  that, since $y_i^j$ are the order encoding of integer variables,
  they satisfy $y_i^2 \rightarrow y_i^1$. If we use these clauses to
  simplify the previous encoding, we obtain:
  $$\begin{array}{l}
    \neg y_1^2 \vee \neg y_2^1 \\
    \neg y_1^2 \vee \neg y_2^2 \\
    \neg y_1^1 \vee \neg y_2^2 \\
  \end{array}$$
  which is a domain consistent encoding of $C$.
\end{exmp}

This example shows that if we want to obtain an encoding with good
properties, we have to consider the clauses from the order encoding of
the integer variables. As we have seen in Section \ref{section-PB},
there are basically two different approaches to 
domain consistent encodings of PB:
using BDDs or using SNs. In this paper we adapt
both of them to LI constraints: the BDD based approach is defined in Section
\ref{section-MDD}, and the SN based approaches are defined in Section
\ref{section-sortingEncoding}.

\subsection{Preprocessing PB and LI constraints}
\label{section-groupingCoeffs}
Before describing the different approaches, there is an easy
preprocessing step that reduces the encoding size without compromising
the propagation strength:

Let us fix an LI constraint $C: a_1 x_1 + \cdots + a_n x_n \leqslant
a_0$. Assume that some coefficients are equal; for simplicity, let us
assume $a_1 = a_2 = \cdots = a_r$. In this case, we can define the
integer variable $s = x_1 + \cdots + x_r$ and decompose the constraint
$C': a_1 s + a_{r+1} x_{r+1} + \cdots + a_n x_n \leqslant a_0$ instead
of $C$. The domain of $s$ is $[0, d_s]$ with $d_s = \min\{a_0 / a_1,
d_1 + \cdots + d_r \}$. 
%% \vale{Shouldn't it be $\lceil (a_0 / a_1) + 1  \rceil$
%% to avoid missing out on highest value in the domain that makes the constraint
%% infeasable, i.e. $x_1 + x_2 \leq 1 $,  both domains $0,1$. This is mechanically
%% replaced by $s \leq 1$, with domain $0,1$, and would be redundant. For
%% simplicity we just omit this optimization and just say $d_s=d_1  + \ldots
%% d_r$?}

Notice that we do not need to encode the constraint $s = x_1 + \cdots
+ x_r$ defining the integer variables $s$, instead we can encode $D
\equiv s \geqslant x_1 + \cdots + x_r$ since we are only interested in
lower bounds.  The encoding of $D$ can be done with sorting
networks, which usually gives a more compact encoding than encoding
$D$ as a LI constraint.

In our implementation we use the sorting networks defined by 
\shortciteA{parametricCardinalityConstraint}. However, any other methods
based on computing the order encoding of $x_1 + x_2 + \ldots + x_n$,
(e.g \shortcite{BailleuxBoufkhad2003CP,DBLP:conf/lpar/CodishZ10}),
could be used instead.

In industrial problems where constraints are not randomly generated,
the coefficients have some meaning. Hence it is likely that a large LI
constraint has only a few different coefficients. In this case this
technique can be very effective.

Notice that this method can be used when $C$ is a pseudo-Boolean:
however, $C'$ is still a linear integer constraint. This technique can
be used as a preprocessing method for pseudo-Boolean and linear
constraints. After it, any method for encoding linear integer
constraints, like the ones explained in the following sections, can be
used.

\subsection{Encoding Linear Integer Constraints through MDDs}\label{section-MDD}

In this section we will adapt the BDD construction of PB constraints
of \shortciteA{Abio12} to LI constraints, giving the encoding \MDD. 
First, we develop the key notion
for the MDD construction algorithm, the interval of a node. We then
explain how to construct the MDD and present the encoding of MDDs in
this context. Finally, we extend the method to optimization problems.

In the following, let us fix an LI constraint
$$C: a_1 x_1 + a_2 x_2 + \cdots + a_n x_n \leqslant a_0.$$ Let us
define the Boolean variables $y_i^j$ such that
$$x_i = \ordEnc([y_i^1, y_i^2, \ldots, y_i^{d_i}]), \quad 1 \leqslant i
\leqslant n.$$

\subsubsection{Multi-valued Decision Diagrams}\label{MDD-def}
A directed acyclic graph is called an \emph{ordered Multi-valued Decision
  Diagram} if it satisfies the following properties:
\vspace*{-1mm}
\begin{itemize}
  \item It has two terminal nodes, namely \tnode{} (true) and \fnode{} (false).
  \item Each non-terminal node is labeled by an integer variable $\{x_1, x_2,
    \cdots, x_n\}$. This variable is called \emph{selector variable}.
  \item Every node labeled by $x_i$ has the same number of outgoing
    edges, namely $d_i+1$, each labeled by a distinct number in $\{0, 1,
    \ldots, d_i\}$. %%\pjs{Added that the edges are labeled!}
  \item If an edge connects a node with a selector variable $x_i$ and
    a node with a selector variable $x_j$, then $j > i$.
\end{itemize}
\vspace*{-1mm}
The MDD is \emph{quasi-reduced} if no isomorphic subgraphs exist. It
is \emph{reduced} if, moreover, no nodes with only one child exist. A
\emph{long edge} is an edge connecting two nodes with selector
variables $x_i$ and $x_j$ such that $j > i+1$. In the following we
only consider quasi-reduced ordered MDDs without long edges, and we
just refer to them as MDDs for simplicity.

An MDD represents a function
$$f: \{0,1, \ldots, d_1\} \times \{0, 1, \ldots, d_2\} \times \cdots
\times \{0, 1, \ldots, d_n\} \rightarrow \{0, 1\}$$
in the obvious way. Moreover, given the variable ordering, there is
only one MDD representing that function. For further details about MDDs 
see e.g.~\shortciteA{MDDsPaper}.

An MDD where all the non-terminal nodes have exactly two edges is
called \emph{Binary Decision Diagram} or simply \emph{BDD}. This is,
BDDs are MDDs that non-terminal nodes are labeled by Boolean
variables.

\subsubsection{Motivation for using MDDs}

As seen in Section~\ref{section-order}, using the order encoding of
integer variables, an LI constraint can be replaced by a PB
constraint. This PB could be encoded as in ~\shortcite{Abio12}, but this is
not a good idea since the resulting encoding does not consider the
binary clauses of the order encoding. The next example shows that MDDs
are the natural way to generalize BDDs:

\begin{exmp}
Let us consider $n=2$, $a_1=1$, $d_1=3$, $a_2=2$, $d_2=2$ and
$a_0=4$. Notice that $y_1^1+y_1^2+y_1^3 = x_1$ and
$y_2^1+y_2^2=x_2$. Therefore, we can rewrite the LI constraint as the
pseudo-Boolean
$$C': y_1^1+y_1^2+y_1^3 + 2(y_2^1+y_2^2) \leqslant 4.$$

The BDD of $C'$ defined by \shortciteA{Abio12} is shown in the
upper leftmost diagram of Figure \ref{figure-example-bddmdd}. Notice
that some paths are incompatible with the binary clauses of the order
encoding: for instance, $y_1^1=0, y_1^2=1$ is not possible. If we
remove all the incompatible paths, we obtain the diagram shown at the
upper rightmost diagram of Figure \ref{figure-example-bddmdd}. This
is, in fact, equivalent to the MDD of $C: x_1 + 2x_2 \leqslant 4$
shown in the lowest diagram of Figure \ref{figure-example-bddmdd}.
\begin{figure}[t]
  \begin{center}
    \includegraphics[scale=0.75]{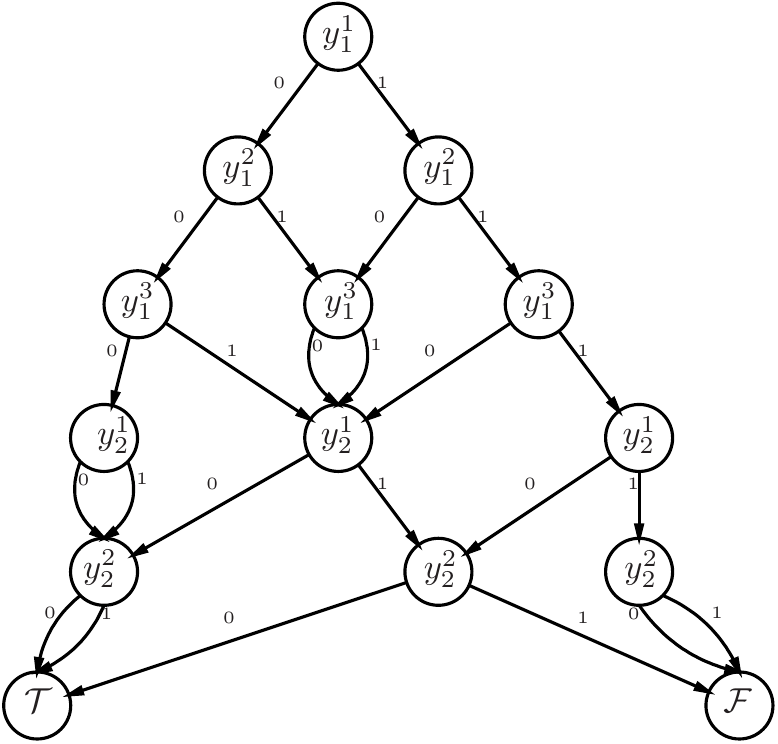}
    \includegraphics[scale=0.75]{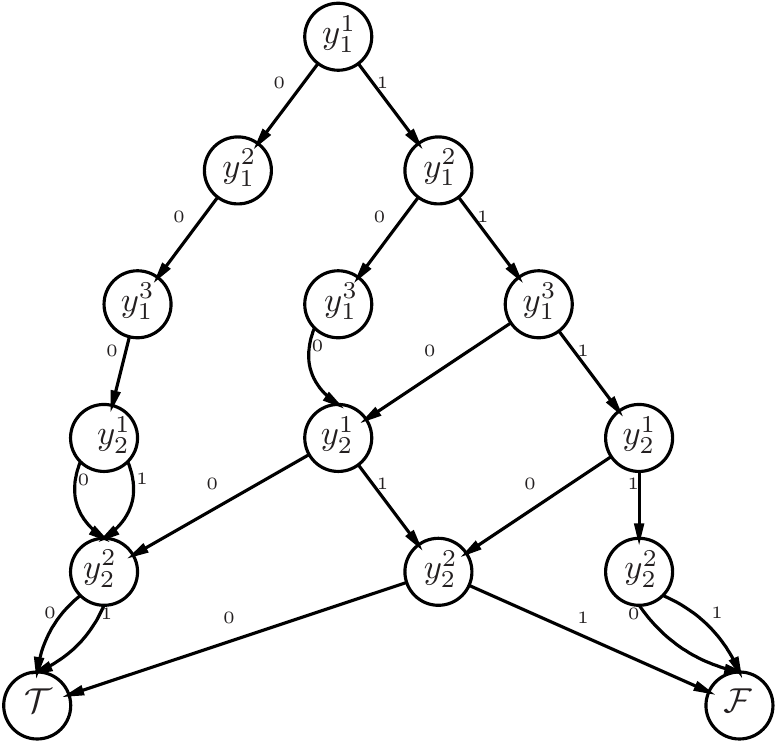}

    \includegraphics[scale=0.75]{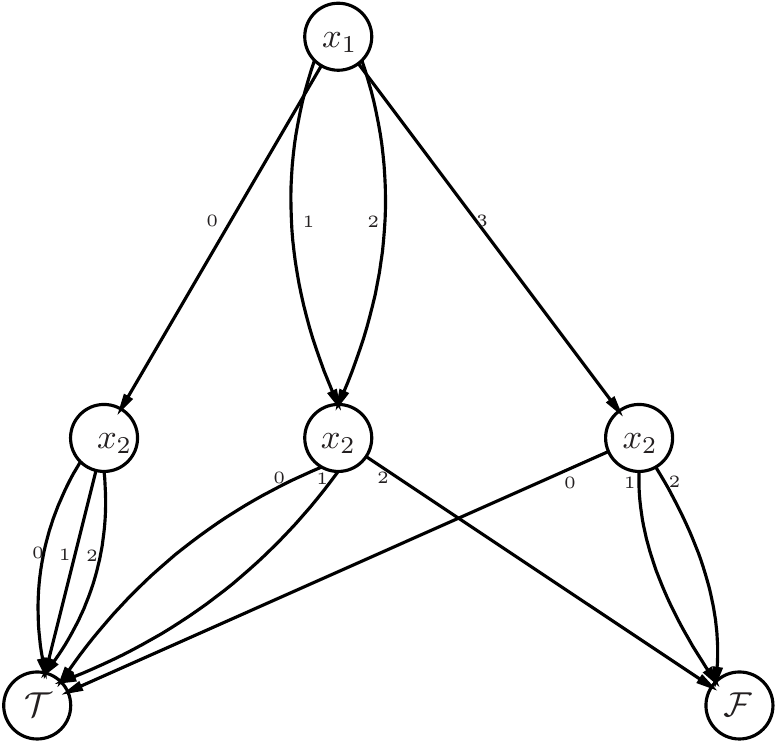}
    \caption{\label{figure-example-bddmdd}Decision Diagrams of $C':
      x_1 + 2x_2 \leqslant 4$.}
  \end{center}
\end{figure}
\end{exmp}

The previous example motivates the use of the MDD for LI
constraints.

\subsubsection{Interval of an MDD Node}
\label{subsection-intervals}
In this section we define the interval of an MDD node. The definition
is very similar to the BDD interval defined by \shortciteA{Abio12}; in
fact, they coincide if the MDD is a BDD (i.e., $d_i=1$ for all $i$).

Let $\mathcal M$ be the MDD of $C$ and let $\nu$ be a node of
$\mathcal M$ with selector variable $x_i$. We define the
\emph{interval} of $\nu$ as the set of values $\alpha$ such that the
MDD rooted at $\nu$ represents the LI constraint $a_i x_i + \cdots +
a_n x_n \leqslant \alpha$. It is easy to see that this definition
corresponds in fact to an interval.

\begin{exmp}\label{ex:1}
  Figure \ref{figure-mdd} contains the MDD of $3x_1 + 2 x_2 + 5x_3
  \leqslant 15$, where $x_1 \in [0,4]$, $x_2 \in [0,2]$ and $x_3 \in
  [0,3]$.
  The root interval is $[15,15]$: this means that the root does not
  correspond to any constraint $3x_1 + 2 x_2 + 5x_3 \leqslant \alpha$,
  apart from $\alpha = 15$. In particular, it means that this
  constraint is not equivalent to $3x_1 + 2 x_2 + 5x_3 \leqslant 14$
  or $3x_1 + 2 x_2 + 5x_3 \leqslant 16$.
  However, the left node with selector variable $x_2$ has interval
  $[15,16]$. 
  This means that $2x_2 + 5x_3 \leqslant 15$ and $2x_2 +
  5x_3 \leqslant 16$ are both represented by the MDD rooted at that
  node. In particular, that means that $2x_2 + 5x_3 \leqslant 15$ and
  $2x_2 + 5x_3 \leqslant 16$ are two equivalent constraints.

\end{exmp}

\begin{figure}[t]
  \begin{center}
    \includegraphics[scale=0.8]{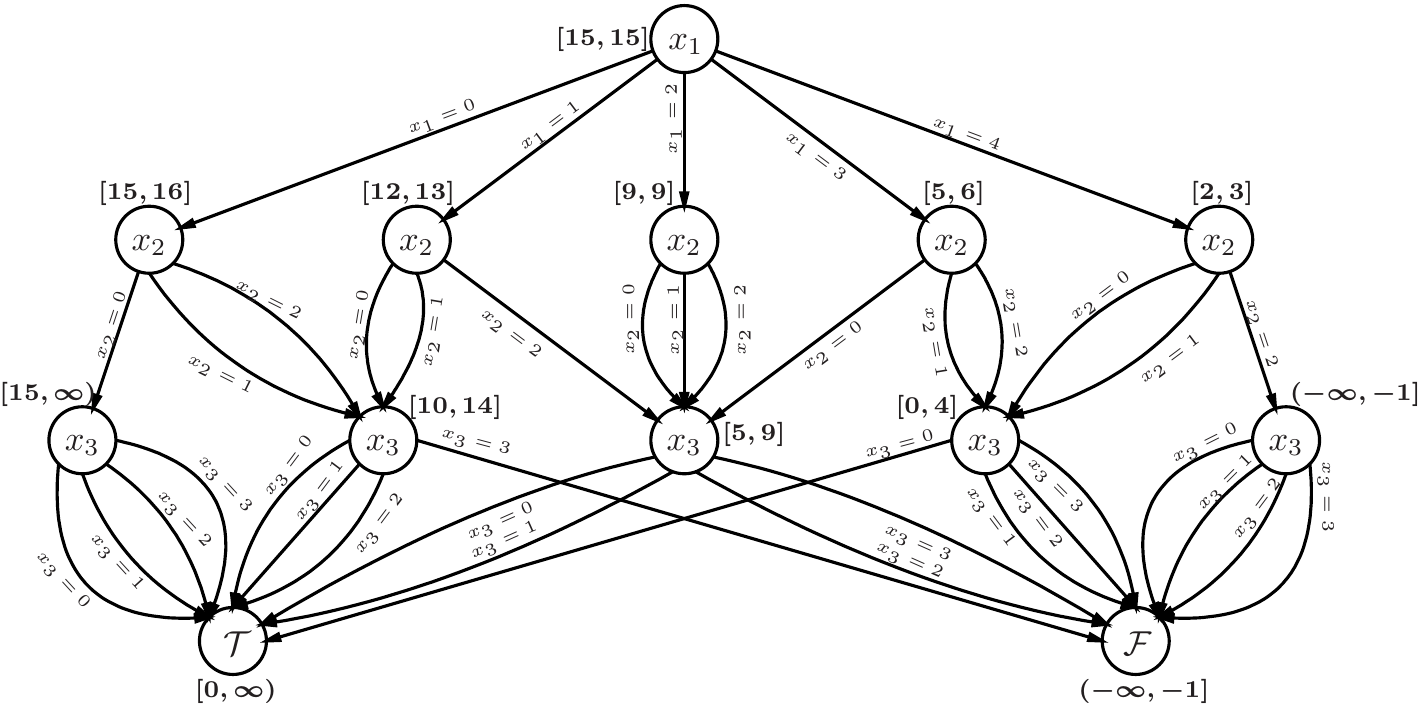}
    \caption{\label{figure-mdd}MDD of $3x_1 + 2 x_2 + 5x_3 \leqslant 15$.}
  \end{center}
  \vspace{-0.5cm}
\end{figure}

The next proposition shows how to compute the intervals of every node:
\begin{proposition}
  \label{prop-intervals}
  Let $\mathcal M$ be the MDD of a LI constraint $a_1 x_1 + \cdots + a_n x_n
  \leqslant a_0$. Then, the following holds:
  %\vspace*{-1mm}
  \begin{enumerate}
  \item The interval of the true node $\tnode$ is $[0, \infty)$.
  \item The interval of the false node $\fnode$ is $(-\infty, -1]$.
  \item Let $\nu$ be a node with selector variable $x_i$ and
    children $\{\nu_0, \nu_1, \ldots, \nu_{d_i} \}$. Let $[\beta_j,
      \gamma_j]$ be the interval of $\nu_j$. Then, the interval of
    $\nu$ is $[\beta, \gamma]$, with
    $$\beta = \max \{\beta_r + r a_i \st 0 \leqslant r \leqslant d_i \}, \qquad
    \gamma = \min \{\gamma_r + r a_i \st 0 \leqslant r \leqslant d_i \}.$$
  \end{enumerate}
  %\vspace*{-1mm}
\end{proposition}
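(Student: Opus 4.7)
The plan is to prove all three claims simultaneously by structural induction on the height of the node in the MDD (measured from the terminals upward). Before the induction, I would briefly verify that the ``interval'' of a node $\nu$ is actually an interval of integers, i.e., if $\alpha_1 < \alpha < \alpha_2$ and both $\alpha_1, \alpha_2$ lie in the interval of $\nu$, then so does $\alpha$. This follows because the solution set of $a_i x_i + \cdots + a_n x_n \leqslant \alpha$ lies between the solution sets of the two $\leqslant \alpha_j$ constraints, so if these two agree then all three agree. Nonemptiness of the interval follows because the constraint from which the MDD was built supplies at least one such $\alpha$.

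For the base cases, $\tnode$ represents the function that accepts every assignment; this coincides with $0 \leqslant \alpha$ iff $\alpha \geqslant 0$, giving the interval $[0, \infty)$. The node $\fnode$ represents the function that rejects every assignment; since all $x_j \geqslant 0$ and $a_j \geqslant 0$, the constraint $0 \leqslant \alpha$ fails on every input iff $\alpha < 0$, i.e., $\alpha \leqslant -1$, yielding $(-\infty, -1]$.

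For the inductive step, fix $\nu$ with selector $x_i$ and children $\nu_0, \ldots, \nu_{d_i}$ whose intervals $[\beta_r, \gamma_r]$ are known by induction. By the semantics of MDD evaluation, the sub-MDD rooted at $\nu$ represents $a_i x_i + \cdots + a_n x_n \leqslant \alpha$ if and only if, for every $r \in \{0, 1, \ldots, d_i\}$, the sub-MDD rooted at $\nu_r$ represents the residual constraint $a_{i+1} x_{i+1} + \cdots + a_n x_n \leqslant \alpha - r a_i$ obtained by substituting $x_i = r$. Applying the inductive hypothesis to each child, this becomes the system $\beta_r \leqslant \alpha - r a_i \leqslant \gamma_r$ for every $r$, equivalently $\beta_r + r a_i \leqslant \alpha \leqslant \gamma_r + r a_i$ for every $r$. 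Intersecting these bounds over $r$ gives exactly $\alpha \in [\beta, \gamma]$ with $\beta$ and $\gamma$ as in the statement.

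I do not anticipate a real obstacle: the argument is routine structural induction, and the key algebraic content is just the translation of ``substituting $x_i = r$ shifts the right-hand side by $-r a_i$.'' The one mildly delicate point is handling the degenerate endpoints for terminal nodes (so that the recursion formulas still make sense when some $\gamma_r = \infty$ or some $\beta_r = -\infty$, in which case those constraints simply drop out of the min/max); this is a bookkeeping matter rather than a genuine difficulty.
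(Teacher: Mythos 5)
Your proof is correct and takes essentially the same approach as the paper: a bottom-up induction over the MDD with the same base cases and the same key observation that fixing $x_i = r$ shifts the right-hand side by $-ra_i$, so that the node's interval is the intersection $\bigcap_r [\beta_r + ra_i,\, \gamma_r + ra_i]$. The only difference is organizational --- you argue via a single chain of equivalences using the restriction semantics of the children, where the paper proves the two inclusions $\subseteq$ and $\supseteq$ separately by reasoning about paths to $\tnode$ and $\fnode$.
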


\begin{exmp}
  Again, let us consider the constraint $3x_1 + 2 x_2 + 5x_3 \leqslant
  15$, whose MDD is represented at Figure \ref{figure-mdd}. By the
  previous Proposition, $\tnode$ and $\fnode$ have, respectively,
  intervals $[0, \infty)$ and $(-\infty, -1]$. Applying again the same
  proposition, we can compute the intervals of the nodes having $x_3$
  as selector variable. For instance, the interval from the left node is
  $$[0, \infty) \cap [5, \infty) \cap [10, \infty) \cap [15, \infty) =
  [15, \infty),$$
  and the interval from the node having selector variable
  $x_3$ in the middle is
  $$[0, \infty) \cap [5, \infty) \cap (-\infty, 9] \cap (-\infty,
    14] = [5, 9].$$
  After computing all the intervals from the nodes with selector
  variable $x_3$, we can compute the intervals of the nodes with
  selector variables $x_2$ in the same way, and, after that, we can
  compute the interval of the root.
  
\end{exmp}

\subsubsection{Construction of the MDD}
\label{subsection-algorithm-MDD}
In this section we describe an efficient algorithm for building MDDs
given an LI constraint $C$.

The key point of the MDDCreate algorithm, detailed in
Algorithm~\ref{alg:main} and Algorithm~\ref{alg:proc}, is to label
each node of the MDD with its interval $[\beta, \gamma]$.

In the following, for every $i \in \{1, 2, \ldots, n+1\}$, we use
a set $L_i$ consisting of pairs $([\beta, \gamma], \mathcal M)$, where
$\mathcal M$ is the MDD of the constraint $a_i x_i + \cdots + a_n x_n
\leqslant a_0'$ for every $a_0' \in [\beta, \gamma ]$ (i.e., $[\beta,
  \gamma ]$ is the interval of $\mathcal M$). All these sets are
kept in a tuple $\mathcal L = (L_1, L_2, \ldots, L_{n+1})$.

Note that by definition of the MDD's intervals, if both $( [\beta_1,
  \gamma_1], \mathcal{M}_1 )$ and $( [\beta_2, \gamma_2],
\mathcal{M}_2 )$ belong to $L_i$ then either $[\beta_1, \gamma_1] =
        [\beta_2, \gamma_2]$ or $[\beta_1, \gamma_1] \cap [\beta_2,
          \gamma_2] = \emptyset$. Moreover, the first case holds if
        and only if $\mathcal{M}_1 = \mathcal{M}_2$.  Therefore, $L_i$
        can be represented with a \emph{binary search tree-like} data
        structure, where insertions and searches can be done in
        logarithmic time. The function $\search(K, L_i)$ searches
        whether there exists a pair $( [\beta, \gamma], \mathcal M )
        \in L_i$ with $K \in [\beta, \gamma]$. Such a tuple is
        returned if it exists, otherwise an empty interval is returned
        in the first component of the pair. Similarly, we also use
        function $\ins(([\beta,\gamma],\mathcal M),L_i)$ for
        insertions.  The size of the MDD in the worst case is $O(n
        a_0)$ (exponential in the size of the rhs coefficient) and
        algorithm complexity is $O(n w \log w)$ where $w$ is the
        maximum width of the MDD ($w \leq a_0$).

\begin{algorithm}[t]
\caption{Procedure MDDCreate}\label{alg:main}
\begin{algorithmic}[1]
\REQUIRE Constraint $C: a_1 x_1 + \cdots + a_n x_n \leqslant a_0$
\ENSURE returns $\mathcal M$ the MDD of $C$.
\FORALL{$i$ such that $1 \leq i \leq n$}
\STATE $L_i \gets \emptyset$.
\ENDFOR
\STATE $L_{n+1} \gets \Big\{\; \big( (-\infty, -1], \fnode \big),\;\; \big(
[0, \infty), \tnode \big)\; 
    \Big\}\Big\}$.
\STATE $\mathcal L \gets (L_1, \ldots, L_{n+1})$.
\STATE $( [\beta, \gamma], \mathcal M ) \gets \const( 1, a_1 x_1 + \cdots + a_n x_n \leqslant a_0, \mathcal L)$.
\RETURN $\mathcal M$.
\end{algorithmic}
\end{algorithm}

\begin{algorithm}[t]
\caption{Procedure MDDConstruction}\label{alg:proc}
\begin{algorithmic}[1]
\REQUIRE $i \in \{1, 2, \ldots, n+1 \}$, constraint $\tilde C: a_i x_i + \cdots + a_n x_n \leqslant a_0'$ and tuple $\mathcal L$
\ENSURE returns $[\beta,\gamma]$ interval of $\tilde C$ and $\mathcal M$ its MDD
\STATE $( [\beta, \gamma], \mathcal M ) \gets \search(a_0', L_i)$.
\IF{$[\beta,\gamma]\neq \emptyset$}
\RETURN $( [\beta,\gamma], \mathcal M)$.
\ELSE 
\FORALL{$j$ such that $0 \leq j \leq d_i$}
\STATE $( [\beta_j, \gamma_j], \mathcal{M}_j ) \gets \const ( i+1, a_{i+1} x_{i+1} + \cdots + a_n x_n \leqslant a_0' - j a_i, \mathcal L)$.
\ENDFOR
\STATE $\mathcal M \gets$ {\bf mdd}$(x_i,[{\mathcal M}_0,\ldots,{\mathcal M}_{d_i}])$.
\STATE $[\beta,\gamma] \gets [\beta_0, \gamma_0] \cap [\beta_1 + a_1, \gamma_1 + a_1] \cap \cdots \cap [\beta_{d_i} + d_i a_i, \gamma_{d_i} + d_i a_1]$.
\STATE $\ins ( ( [\beta, \gamma], \mathcal M ), L_i)$. 
\RETURN $( [\beta,\gamma], \mathcal M )$.
\ENDIF
\end{algorithmic}
\end{algorithm}

%%\pjs{ADDED explaination of algorithm}
The MDD creation works by initializing the $\mathcal L$ data structure for
the terminal nodes $\tnode$ and $\fnode$, and the calling the MDD
construction function.  This checks if the MDD requires is already in the
$\mathcal L$ structure in which case it is returned, otherwise if
recursively builds the child MDDs for this node (adjusting the rhs of the
constraint appropriately), and then constructs the MDD for this call using
the function \textbf{mdd} to construct and nodes labelled by $x_i$ and the
child MDDs. It calculates the interval for this MDD and returns the interval
and MDD.

\subsubsection{Encoding MDDs into CNF}
\label{subsection-encoding-MDDs-into-CNF}
In this section we generalize the encoding for monotonic BDDs
described by \shortciteA{Abio12} to monotonic MDDs. The encoding assumes
that the selector variables are encoded with the order encoding.

Let $\mathcal M$ be an MDD with the variable ordering $[x_1, \ldots,
  x_n]$. Let $[0, d_i]$ be the domain of the $i$-th variable, and let
$\{y_i^1, \ldots, y_i^{d_i} \}$ be the variables of the order encoding
of $x_i$ (i.e., $y_i^j$ is true iff $x_i \geqslant j$). Let $\mu$ be
the root of $\mathcal M$, and let $\tnode$ and $\fnode$ be
respectively the true and false terminal nodes. In the following,
given a non-terminal node $\nu$ of $\mathcal M$, we define $\sv(\nu)$
as the selector variable of $\nu$, and $\child(\nu, j)$ as the $j$-th
child of $\nu$, i.e., the child of $\nu$ defined by $\{ \sv(\nu) = j
\}$.

The encoding introduces the variables $\{ z_\nu \st \nu \in
\mathcal M \}$; and the clauses
\begin{multline*}
\{ z_\tau, \ \neg z_{\fnode} \} \cup \Big\{ \ \neg z_\nu \lor \neg
y_i^{j} \lor z_{\nu'} \st \nu \in {\mathcal M} \setminus \{\tnode,
\fnode\}, \\ \sv(\nu)=x_i, \ 0 \leqslant j \leqslant d_i, \ \nu' =
\child(\nu,j) \Big\}, \end{multline*} where $y_i^0$ is a dummy true
variable ($x_i \geqslant 0$ since $x_i \in [0, d_i]$). This encoding
will be denoted by $\mddenc(\mu)$.

Notice that this encoding coincides with the BDD encoding
of \shortciteA{Abio12} if the MDD is a BDD.

\begin{lemma} \label{lemma-mdd-consistency}
  Let $A = \{ x_j \geqslant v_j \}_{j=i}^n$ be a partial assignment on
  the last variables. Let $\nu$ be a node of $\mathcal M$ with
  selector variable $x_i$.

  Then, $\mddenc(\mu)$ and $A$ propagates (by unit propagation) $\neg
  z_\nu$ if and only if $A$ is incompatible with $\nu$ (this is, the
  constraint defined by an MDD rooted at $\nu$ does not have any
  solution satisfying $A$).
\end{lemma}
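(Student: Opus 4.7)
The plan is to prove both directions simultaneously by induction on $n - i$, the number of variable layers from $\nu$ down to the terminals. The key structural fact I would exploit throughout is the \emph{monotonicity} of the LI constraint: since all coefficients $a_k > 0$ and the inequality is $\leqslant$, for any non-terminal $\nu'$ with selector variable $x_{i'}$ the subconstraint represented by $\child(\nu',k)$ becomes tighter as $k$ grows, so for any partial assignment $A'$ the set $\{k : A' \text{ is incompatible with } \child(\nu',k)\}$ is an up-set in $\{0,\ldots,d_{i'}\}$, of the form $[j^*,d_{i'}]$ (possibly empty). Before starting the induction I would record two auxiliary facts about unit propagation on $\mddenc(\mu)$ under any partial assignment consisting solely of lower bounds: (i) no $z$-variable other than $z_\tnode$ is ever propagated to true, because propagating some $z_\rho$ via a clause $\neg z_{\nu''} \vee \neg y_{i''}^k \vee z_\rho$ would require a non-terminal $z_{\nu''}$ to already be true; and (ii) no $y$-variable is ever propagated, because propagating $\neg y_{i''}^k$ likewise requires a true non-terminal $z_{\nu''}$, while the order-encoding clauses $y_{i}^{k+1} \to y_i^k$ cannot fire from lower bounds alone. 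Together, (i)--(ii) imply that propagation of $\neg z_{\nu'}$ for a non-terminal $\nu'$ at level $j$ depends only on $y$-literals at levels $\geqslant j$, so the IH for a child of $\nu$ may legitimately be invoked with the restricted assignment $A' = \{x_k \geqslant v_k\}_{k=i+1}^n$.

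For the base case $i=n$ I would note that the children of $\nu$ are terminals, and that the clause $\neg z_\nu \vee \neg y_n^j \vee z_{\child(\nu,j)}$ fires to propagate $\neg z_\nu$ exactly when some $j \leqslant v_n$ has $\child(\nu,j)=\fnode$; by monotonicity this is equivalent to $\child(\nu,v_n)=\fnode$ and hence to $A$ being incompatible with $\nu$. For the inductive step I would apply the IH (trivially extended to terminal children via the initial units $z_\tnode$ and $\neg z_\fnode$) to obtain that $\neg z_{\child(\nu,j)}$ is propagated iff $A'$ is incompatible with $\child(\nu,j)$, deduce via monotonicity that the set of such $j$ is an up-set $[j^*, d_i]$, and then observe that the clauses $\neg z_\nu \vee \neg y_i^j \vee z_{\child(\nu,j)}$ can propagate $\neg z_\nu$ iff some $j \leqslant v_i$ lies in this up-set, iff $v_i \geqslant j^*$, iff $\child(\nu,k)$ is incompatible with $A'$ for every $k \geqslant v_i$, which is exactly the statement that $A$ is incompatible with $\nu$.

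The hard part will be the careful bookkeeping needed to legitimize the inductive hypothesis: one must argue that unit propagation on the full encoding $\mddenc(\mu)$ under the enlarged assignment $A$ produces the same set of $\neg z_{\nu'}$-literals at levels below $\nu$ as propagation under the smaller assignment $A'$ would. This is precisely what the preliminary observations (i)--(ii) are designed to establish; without them one might fear that the additional literals $y_i^k$ in $A$ enable fresh propagations which re-enter the subtree rooted at $\nu$ through some roundabout clause, but (i)--(ii) ensure that propagation strictly flows upward through $\neg z$-literals and thus cannot contaminate the induction.
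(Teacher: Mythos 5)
Your overall strategy --- induction on the level of $\nu$, monotonicity of the constraint to show that the incompatible children form an up-set, and an analysis of which clauses can produce the literal $\neg z_\nu$ --- is exactly the paper's, and your base case and inductive step are sound. The genuine problem is that the two ``auxiliary facts'' (i) and (ii) you record in order to legitimize invoking the inductive hypothesis with the restricted assignment are both false. The encoding contains the unit clause asserting the root variable $z_\mu$ (see the definition of $\mddenc(\mu)$ and the worked example, where $z_1$ is a unit clause), so the clause $\neg z_\mu \vee \neg y_1^{0} \vee z_{\child(\mu,0)}$ with the dummy true literal $y_1^0$ immediately forces $z_{\child(\mu,0)}$ to true, and positive $z$-literals then cascade down every edge consistent with $A$; this contradicts (i). In turn, a true $z_{\nu'}$ together with a false $z_{\child(\nu',k)}$ propagates $\neg y_{i'}^{k}$, contradicting (ii). These two propagations are precisely the mechanism by which Theorem~\ref{th-consistency} obtains domain consistency, so if (i) and (ii) held the encoding could never prune any value of any $x_{i}$.

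Fortunately, the conclusion you wanted from (i)--(ii) survives for a different reason, and the repair is local. A clause $\neg z_\nu \vee \neg y_i^{k} \vee z_{\nu_k}$ can yield the literal $\neg z_\nu$ only when $y_i^{k}$ is \emph{true} and $z_{\nu_k}$ is \emph{false}. Under a lower-bounds-only assignment no clause of the encoding ever sets a $y$-variable to true, so the true $y$-literals are exactly those entailed by $A$; and false $z$-literals originate only at $\fnode$ and flow strictly upward through these same clauses. The positive $z$-propagations and negative $y$-propagations that genuinely do occur are therefore inert for the purpose of deriving $\neg z$-literals, which is what you need to apply the inductive hypothesis at the children with the restricted assignment $A'$. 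With (i)--(ii) replaced by this observation, your induction, base case, and up-set argument go through and coincide with the proof in the appendix (which, for what it is worth, silently assumes exactly this non-contamination property).
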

Notice that the Lemma implies that $\mddenc(\mu)$ is consistent.
%% Moved to Appendix:
%% \begin{proof}
%%   Let us prove the result by induction on $n+1-i$. If $i = n+1$, $\nu$
%%   can only be $\tnode$ and $\fnode$, and the result is trivial.
%% 
%%   Let us prove the inductive case. Let us denote $\nu_k = \child(\nu,
%%   k)$, and let $r$ be $v_i$.
%% 
%%   \begin{description}
%%     \item[$\mathbf{\Rightarrow:}$] Let us assume that $A$ and $\nu$
%%       are compatible, and let us prove that $\neg z_\nu$ is not
%%       propagated.
%% 
%%       $z_\nu$ only appears with negative polarity in the clauses
%%       $$\neg z_\nu \lor \neg y_i^k \lor z_{\nu_k}, \ 0 \leqslant k
%%       \leqslant d_i.$$ For $k > r$, $y_i^k$ is undefined so these
%%       clauses cannot propagate $\neg z_\nu$. For $j \leqslant r$, for
%%       monotonicity, $A \setminus \{x_i \geqslant r\}$ is compatible
%%       with $\nu_j$, so, by induction hypothesis, $z_{\nu_j}$ is not
%%       propagated to false. Therefore, these clauses cannot propagate
%%       $\neg z_\nu$.
%% 
%%     \item[$\mathbf{\Leftarrow:}$] $A \setminus \{x_i \geqslant r \}$
%%       is incompatible with $\nu_r$ so, by induction hypothesis,
%%       $z_{\nu_r}$ is propagated to false. Then, the clause $$\neg
%%       z_\nu \lor \neg y_i^r \lor z_{\nu_r}$$ propagates $\neg z_\nu$.
%%   \end{description}
%% \end{proof}

\begin{theorem}\label{th-consistency}
  Unit propagation on $\mddenc(\mu)$ is domain consistent.
\end{theorem}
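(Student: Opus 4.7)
My plan is to prove domain consistency by combining a mild generalization of Lemma~\ref{lemma-mdd-consistency} with the monotone (downward-closed) structure of the solution set of an LI constraint, and then exhibiting explicit witness solutions.

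First, I would extend Lemma~\ref{lemma-mdd-consistency} to arbitrary partial assignments: after unit propagation on $\mddenc(\mu) \cup A$, for every node $\nu$ with $\sv(\nu) = x_i$, the literal $z_\nu$ is forced false iff the sub-MDD rooted at $\nu$ admits no path to $\tnode$ compatible with the current lower bounds on $x_i, \ldots, x_n$. The extension is nearly immediate because every clause of $\mddenc(\mu)$ has the form $\neg z_\nu \vee \neg y_i^j \vee z_{\child(\nu, j)}$: any inference of a negative $z$-literal is driven solely by positive $y$-literals in the assignment, i.e.\ by the lower bounds of the current domain, which is precisely a suffix-type assignment as treated in the original lemma.

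Second, I would prove the key identity $u_i = \lfloor (a_0 - \sum_{k \neq i} a_k l_k)/a_i \rfloor$ for every $i$, where $[l_k, u_k]$ is the current domain of $x_k$ after propagation (assuming $A'$ is not the false domain). The inequality $u_i \geq \lfloor (a_0 - \sum_{k \neq i} a_k l_k)/a_i \rfloor$ is soundness of propagation. For the converse, fix any $v > \lfloor (a_0 - \sum_{k \neq i} a_k l_k)/a_i \rfloor$ and let $\nu$ be the level-$i$ node reached from $\mu$ by the path of edge labels $l_1, l_2, \ldots, l_{i-1}$. Since the literals $y_k^{l_k}$ are true in $A'$, the clause chain $z_\mu \to z_{\child(\mu, l_1)} \to \cdots \to z_\nu$ propagates $z_\nu$ to true. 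Moreover, $z_{\child(\nu, v)}$ is forced false by the extended lemma, because the sub-MDD at $\child(\nu, v)$ encodes $\sum_{k > i} a_k x_k \leq a_0 - \sum_{k < i} a_k l_k - a_i v$, whose minimum feasible lhs under $x_k \geq l_k$ is $\sum_{k > i} a_k l_k$, strictly exceeding the rhs by the choice of $v$. The clause $\neg z_\nu \vee \neg y_i^v \vee z_{\child(\nu, v)}$ then propagates $\neg y_i^v$, forcing $u_i < v$.

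Finally, domain consistency follows by explicit witnesses. For any literal $y_i^v$ unassigned in $A'$, i.e.\ $v \in (l_i, u_i]$, the all-lower-bounds assignment $(l_1, \ldots, l_n)$ is a solution of $C$ with $x_i < v$; feasibility holds because $\sum_k a_k l_k \leq a_0$, which is guaranteed since otherwise the extended lemma applied at $\mu$ would force $\neg z_\mu$, contradicting that $A'$ is non-false. Conversely, the assignment $x_i = u_i$ together with $x_k = l_k$ for $k \neq i$ is a solution of $C$ with $x_i \geq v$, feasible by the identity for $u_i$. The main obstacle is step two, which requires carefully tracking the unit-propagation chain from $z_\mu$ down through the MDD along the minimum-lower-bound path and then invoking the extended lemma at the critical child $\child(\nu, v)$.
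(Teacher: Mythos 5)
Your proposal is correct and follows essentially the same route as the paper's own proof: both propagate $z_\nu$ to true along the path of current lower bounds, invoke Lemma~\ref{lemma-mdd-consistency} to force $z_{\child(\nu,v)}$ to false at the critical child, and let the connecting edge clause fire $\neg y_i^v$. The paper packages this as an induction on $n$ whereas you unroll the path explicitly and spell out the domain-consistent bound $\lfloor (a_0-\sum_{k\neq i}a_k l_k)/a_i\rfloor$ and the supporting solutions, but the underlying argument is the same.
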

Proofs are in the Appendix. 
%% Moved to Appendix:
%% \begin{proof} We prove the result by induction on $n$. The case $n=0$ is
%%     trivial, so let us prove the inductive case. Let $A = \{x_j \geqslant
%%     v_j\}_{j=1}^n$ be a partial assignment which is compatible with $\mathcal
%%     M$ if and only if $x_i < r$. We have to prove that unit propagation on $A$
%%     and $\mddenc(\mu)$ propagates $\neg y_i^r$.
%% 
%%   Let us denote $\nu_k =\child(\mu,k)$.
%%   \begin{description}
%%     \item[$\mathbf{i>1:}$] For monotonicity of the MDD, $A' = A
%%       \setminus \{ x_1 \geqslant v_1 \}$ is compatible with
%%       $\nu_{v_1}$ if and only if $x_i < r$.
%% 
%%       Notice that $z_{\nu_{v_1}}$ is propagated by the clauses $z_\mu$
%%       and $$\neg z_\mu \lor \neg y_1^{v_1} \lor z_{\nu_{v_1}}.$$ By
%%       induction hypothesis, $\neg y_i^r$ is propagated on
%%       $\mddenc(\nu_{v_1})$.
%% 
%%     \item[$\mathbf{i=1:}$] The MDD rooted at $\nu_r$ is incompatible
%%       with $A \setminus \{x_1 \geqslant v_1 \}$. By the previous
%%       Lemma, $\neg z_{\nu_r}$ is propagated. Therefore, clauses
%%       $z_\mu$ and $\neg z_\mu \lor \neg y_1^r \lor z_{\nu_r}$
%%       propagate $\neg y_i^r$.
%%   \end{description}
%% \end{proof}

\begin{exmp}
  Let us consider the MDD represented in Figure \ref{figure-mdd}. The
  encoding introduces the variables $z_1, z_2, \ldots, z_{11},
  z_{\tnode}, z_{\fnode}$, one for each node of the MDD; and the following clauses:
\vspace*{-2mm}
  $${\fontsize{8pt}{1em}\selectfont
    \begin{array}{l @{\hskip 15pt} l @{\hskip 15pt} l @{\hskip 15pt} l}
    z_1,&
    z_{\tnode}, &
    \neg z_{\fnode}, &
    \neg z_1 \lor z_2, \\

    \neg z_1 \lor x_1 \leqslant 0 \lor z_3, &
    \neg z_1 \lor x_1 \leqslant 1 \lor z_4, &
    \neg z_1 \lor x_1 \leqslant 2 \lor z_5, &
    \neg z_1 \lor x_1 \leqslant 3 \lor z_6, \\

    \neg z_2 \lor z_7, &
    \neg z_2 \lor x_2 \leqslant 0 \lor z_8, &    
    \neg z_2 \lor x_2 \leqslant 1 \lor z_8, &
    \neg z_3 \lor z_8, \\

    \neg z_3 \lor x_2 \leqslant 0 \lor z_8, &
    \neg z_3 \lor x_2 \leqslant 1 \lor z_9, &
    \neg z_4 \lor z_9, &
    \neg z_4 \lor x_2 \leqslant 0 \lor z_9, \\

    \neg z_4 \lor x_2 \leqslant 1 \lor z_9, &
    \neg z_5 \lor z_9, &
    \neg z_5 \lor x_2 \leqslant 0 \lor z_{10}, &
    \neg z_5 \lor x_2 \leqslant 1 \lor z_{10}, \\

    \neg z_6 \lor z_{10}, &
    \neg z_6 \lor x_2 \leqslant 0 \lor z_{10}, &
    \neg z_6 \lor x_2 \leqslant 1 \lor z_{11}, &
    \neg z_7 \lor z_{\tnode}, \\

    \neg z_7 \lor x_3 \leqslant 0 \lor z_{\tnode}, &
    \neg z_7 \lor x_3 \leqslant 1 \lor z_{\tnode}, &
    \neg z_7 \lor x_3 \leqslant 2 \lor z_{\tnode}, &
    \neg z_8 \lor z_{\tnode}, \\

    \neg z_8 \lor x_3 \leqslant 0 \lor z_{\tnode}, &
    \neg z_8 \lor x_3 \leqslant 1 \lor z_{\tnode}, &
    \neg z_8 \lor x_3 \leqslant 2 \lor z_{\fnode}, &
    \neg z_9 \lor z_{\tnode}, \\

    \neg z_9 \lor x_3 \leqslant 0 \lor z_{\tnode}, &
    \neg z_9 \lor x_3 \leqslant 1 \lor z_{\fnode}, &
    \neg z_9 \lor x_3 \leqslant 2 \lor z_{\fnode}, &
    \neg z_{10} \lor z_{\tnode}, \\

    \neg z_{10} \lor x_3 \leqslant 0 \lor z_{\fnode}, &
    \neg z_{10} \lor x_3 \leqslant 1 \lor z_{\fnode}, &
    \neg z_{10} \lor x_3 \leqslant 2 \lor z_{\fnode}, &
    \neg z_{11} \lor z_{\fnode}, \\

    \neg z_{11} \lor x_3 \leqslant 0 \lor z_{\fnode}, &
    \neg z_{11} \lor x_3 \leqslant 1 \lor z_{\fnode}, &
    \neg z_{11} \lor x_3 \leqslant 2 \lor z_{\fnode}. \\
  \end{array}}$$

In essence, for every node an auxiliary variable is introduced and for
each edge a clause. Notice that some clauses are redundant. This issue
is handled in the following Section.
\end{exmp}

\paragraph{Removing Subsumed Clauses}%\label{section-subsumedclauses}
The MDD encoding explained here can easily be improved by removing
some unnecessary clauses. We apply the following rule when producing
the encoding:

Given a non-terminal node $\nu$ with $\sv(\nu) = x_i$, if
$\child(\nu,j) = \child(\nu, j-1)$, then the clause $\neg z_z\nu \lor
y_i^{j-1} \lor z_{\nu'}$ is subsumed by the clause $\neg z_\nu \lor
y_i^{j-2} \lor z_{\nu'}$; therefore, we can remove it.

Additionally, we also improve the encoding by reinstating long edges (since the
dummy nodes used to eliminate long edges do not provide any information); that
is, we encode the reduced MDD instead of the quasi-reduced MDD.

\begin{exmp}
  Let us consider again the MDD represented in Figure
  \ref{figure-mdd}. The encoding introduces the variables $z_1, z_2,
  z_3, z_5, z_6, z_8, z_9, z_{10}, z_{\tnode}, z_{\fnode}$, one for
  each non-dummy node of the MDD; and the following clauses:
\vspace*{-2mm}
  $${\fontsize{8pt}{1em}\selectfont
    \begin{array}{l @{\hskip 15pt} l @{\hskip 15pt} l @{\hskip 15pt} l}
    z_1,&
    z_{\tnode}, &
    \neg z_{\fnode}, &
    \neg z_1 \lor z_2, \\

    \neg z_1 \lor x_1 \leqslant 0 \lor z_3, &
    \neg z_1 \lor x_1 \leqslant 1 \lor z_9, &
    \neg z_1 \lor x_1 \leqslant 2 \lor z_5, &
    \neg z_1 \lor x_1 \leqslant 3 \lor z_6, \\

    \neg z_2 \lor z_{\tnode}, &
    \neg z_2 \lor x_2 \leqslant 0 \lor z_8, &    
    \neg z_3 \lor z_8, &
    \neg z_3 \lor x_2 \leqslant 1 \lor z_9,\\

    \neg z_5 \lor z_9, &
    \neg z_5 \lor x_2 \leqslant 0 \lor z_{10}, &
    \neg z_6 \lor z_{10}, &
    \neg z_6 \lor x_2 \leqslant 1 \lor z_{\fnode}, \\

    \neg z_8 \lor z_{\tnode}, &
    \neg z_8 \lor x_3 \leqslant 2 \lor z_{\fnode}, &
    \neg z_9 \lor z_{\tnode}, &
    \neg z_9 \lor x_3 \leqslant 1 \lor z_{\fnode}, \\

    \neg z_{10} \lor z_{\tnode}, &
    \neg z_{10} \lor x_3 \leqslant 0 \lor z_{\fnode}, \\
  \end{array}}$$

\end{exmp}

\subsubsection{Encoding Objective Functions with MDDs}
\label{section-optimization-mdds}
In this section we describe how to deal with combinatorial problems
where we minimize a linear integer objective function.  A similar
idea is used by \shortciteA{BDDsOptimization}, 
where the authors use BDDs
for encoding problems with pseudo-Boolean objectives.
Combinatorial optimization problems can be efficiently solved with a
branch-and-bound strategy. In this way, all the lemmas learned in the
previous steps are reused for finding the next solutions or proving
the optimality. For implementing branch-and-bound, we need to be
able to create a decomposition of 
the constraint $a_1 x_1 + \cdots + a_n x_n \leqslant a_0'$ from 
the decomposition of $a_1 x_1 + \cdots + a_n x_n \leqslant
a_0$ where $a'_0 < a_0$.

This is easy for cardinality constraints, since, when we have encoded
a constraint $x_1 + \cdots + x_n \leqslant a_0$ with a sorting
network, we can encode $x_1 + \cdots + x_n \leqslant a_0'$ by
adding a single clause see ~\shortcite<see>{AsinNOR11}.
 
\begin{exmp}\label{example-CC}
Let us consider the easier case of a cardinality constraint
objective. Assume we want to find a solution of a formula $F$ that
minimizes the function $x_1 + \cdots + x_n$, where $x_i$ are Boolean
variables.

First, we launch a SAT solver with the input formula $F$. After
finding a solution of cost $K$, the constraint $x_1 + \cdots + x_n
\leqslant K-1$ must be added. The encodings of cardinality constraints
based on sorting networks introduce some variables $y_1, \ldots,
y_{K}$, where $y_i \equiv x_1 + \cdots + x_n \geqslant i$. Let $S$ be
such an encoding. We then launch the SAT solver with $F \cup S \cup \{
\neg y_K \}$.

If now the SAT solver finds a solution of cost $K'$, we just have to
add the clause $\neg y_{K'}$. 
\end{exmp}

Example~\ref{example-CC} shows that in optimization problems we do not
have to re-encode the new constraints from scratch: we should reuse
as much as the previous encodings as possible. In this way, not only
do we generate fewer clauses and variables, but, more importantly, all the
learned clauses about the previous encodings can be reused.

In order to reuse the previous encodings for the MDD encoding of an LI
constraint, we have to save the tuple $\mathcal L$ used in Algorithm
\ref{alg:main}. When a new solution of cost $a'_0+1$ is found,
Algorithm \ref{alg:opt} is called.

\begin{algorithm}[t]
\caption{MDD Construction: Optimization version}\label{alg:opt}
\begin{algorithmic}[1]
\REQUIRE Constraint $C: a_1 x_1 + \cdots + a_n x_n \leqslant a'_0$ and tuple $\mathcal L$.
\ENSURE returns $\mathcal M$ the MDD of $C$.
\STATE $( [\beta, \gamma], \mathcal M ) \gets \const( 1, a_1 x_1 + \cdots + a_n x_n \leqslant a'_0, \mathcal L)$.
\RETURN $\mathcal M$.
\end{algorithmic}
\end{algorithm}

\begin{theorem}
  Algorithm \ref{alg:opt} provides a domain consistent encoding of the
  LI constraint $C$. The sum of all variables created by any call of
  Algorithm \ref{alg:opt} is bound by $n a_0$, and the number of
  clauses is bound by $n a_0 d$, where $d = \max \{d_i\}$ and $a_0$ is
  the cost of the first solution found.
\end{theorem}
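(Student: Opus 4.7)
The proof breaks into three claims: domain consistency, the variable bound, and the clause bound. The plan is to reduce each to objects already analyzed earlier in the paper.

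For domain consistency, my plan is to observe that Algorithm~\ref{alg:opt} is just a wrapper around \const{}, which, as explained in Section~\ref{subsection-algorithm-MDD}, always returns the (unique, quasi-reduced) MDD for the LI constraint passed to it, together with its correct interval $[\beta,\gamma]$. Correctness of the interval follows inductively from Proposition~\ref{prop-intervals}: the leaves $\tnode,\fnode$ are initialized with the right intervals, and the recursive case in Algorithm~\ref{alg:proc} intersects the shifted child intervals exactly as the proposition prescribes. Because the returned object is a genuine MDD of the constraint~$C$, Theorem~\ref{th-consistency} applies directly to the encoding $\mddenc(\mu)$ emitted for it. So domain consistency is inherited for free.

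For the bound on the number of Boolean variables created over \emph{all} successive calls, the plan is to exploit the monotonicity built into the data structure $\mathcal{L}$. Every node created at layer~$i$ is inserted into $L_i$ together with its interval $[\beta,\gamma]$, and the intervals stored in $L_i$ are pairwise disjoint (by the remark preceding Algorithm~\ref{alg:main}). Each such interval must moreover contain at least one right-hand-side value~$a'_0$ ever queried at layer~$i$; since the successive optimization bounds are strictly decreasing from the first-found cost~$a_0$ down through non-negative integers, the queried values at layer~$i$ lie in $\{0,1,\ldots,a_0\}$. Hence $|L_i|\leq a_0+1$, and summing over $i\in\{1,\ldots,n\}$ yields at most $n(a_0+1)$ auxiliary $z_\nu$ variables. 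Note that no node is ever recreated: the \search{} test at the top of Algorithm~\ref{alg:proc} reuses the existing node (and its $z_\nu$ variable) whenever the interval already covers the query.

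The clause count then follows by charging clauses to nodes. A non-terminal node $\nu$ with selector variable $x_i$ contributes at most $d_i+1\leq d+1$ implication clauses $\neg z_\nu\lor \neg y_i^j\lor z_{\child(\nu,j)}$ (plus the two unit clauses $z_\tau$ and $\neg z_\fnode$, which are absorbed into the $O(\cdot)$). Combined with the variable bound, the total is at most $n(a_0+1)(d+1)=O(n\,a_0\,d)$.

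The only non-trivial step is the monotonicity argument for the variable bound: one must make explicit that across invocations the $\mathcal{L}$ structure is preserved, that intervals are only ever inserted (never modified), and that a newly queried $a'_0$ either hits a stored interval (reusing the node) or produces a fresh interval disjoint from all previous ones at that layer. Once this invariant is stated cleanly, the count $|L_i|\leq a_0+1$ is immediate and the rest of the theorem drops out.
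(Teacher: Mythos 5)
Your proof takes the same route as the paper's own (much terser) argument: domain consistency is inherited from Theorem~\ref{th-consistency} because \const{} returns a genuine MDD of $C$, and the size bounds come from counting at most one auxiliary variable per element of $L_i$ with $|L_i|$ bounded by roughly $a_0$. The extra detail you supply on why $|L_i|\leq a_0+1$ is welcome, since the paper simply asserts the count.

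One step is stated imprecisely, though it does not sink the argument. The values queried at layer $i$ are not the optimization bounds themselves but the adjusted right-hand sides $a_0'-\sum_{k<i}j_k a_k$, and these can be negative (e.g.\ for $3x_1+2x_2\leqslant 4$ the layer-2 queries include $4-3\cdot 2=-2$), so it is not true that ``the queried values at layer $i$ lie in $\{0,1,\ldots,a_0\}$.'' The injection from nodes to creating query values therefore does not immediately land in a set of size $a_0+1$. The repair is short: by induction from Proposition~\ref{prop-intervals}, any node at layer $i$ having some child other than the all-\fnode{} node satisfies $\beta\geqslant\beta_0\geqslant 0$, so its creating query value lies in $[0,a_0]$; all negative queries at a given layer are absorbed by the single all-false chain node with interval $(-\infty,-1]$. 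This gives $|L_i|\leqslant a_0+2$, which is what the theorem's $na_0$ bound (read asymptotically, as the paper clearly intends) requires. With that invariant made explicit, your counting and the clause bound of $d+1$ clauses per node go through exactly as you describe.
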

\begin{proof}
  The encoding is domain consistent due to Theorem
  \ref{th-consistency}. Notice that the encoding creates at most one
  variable for every element of $L_i \in \mathcal L$, $1 \leqslant i
  \leqslant n$. Therefore, after finding optimality, the encoding has
  generated at most $n a_0$ variables in total. In the same way, the
  number of clauses generated can be bounded by $n a_0 d$.
\end{proof}

In practice the optimization version is very useful. The new MDD
construction typically only adds a few nodes near the top of the MDD, and
then reuses nodes below. 
%% In our experiments an incremental construction
%% creates on average ?????\% of the nodes in the original MDD. \pjs{FIX!}

\subsection{Encoding Linear Integer Constraints through Sorting 
Networks}\label{section-sortingEncoding}

In this section we introduce the methods \SNTARE{} and \SNOPT{} to encode LI
constraints using Sorting Networks. We prove that they maintain consistency and
discuss their size. 

\subsubsection{Background}

The encodings in this section are a generalization of previous work encoding
threshold functions to monotone circuits as in \shortciteA{Beimel06} and PB
constraints to SAT \shortcite{Een06,Manthey14}, that use different types of
sorting networks to encode pseudo-Boolean constraints, explained in Section
\ref{section-PB-sorting}. 

All these encodings work more or less in the same way: given a
pseudo-Boolean constraint
$$a_1 x_1 + a_2 x_2 + \cdots + a_n x_n \leqslant a_0$$ and an integer
number $b>1$, let $y_0, y_1, \ldots, y_m$ be the digits of $s=a_1 x_1
+ a_2 x_2 + \cdots + a_n x_n$ in base $b$ (or, in fact, with a fixed
mixed radix). The methods introduce the Boolean variables $y_j^i$
corresponding to the order encoding of $y_j$, and then encode $y_0 + b
y_1 + \cdots + b^m y_m \leqslant a_0$.

As we have seen in Section \ref{section-PB-sorting}, there are two ways to
encode PB constraints with sorting networks: either adding the tare or not.

When using a tare the methods add a dummy true variable $x_{n+1}$ with
coefficient $a_{n+1}=b^{m+1}-a_0-1$ such that the bound in the constraint is
$b^{m+1}-1$. In this case, the encoding is more compact, but it is not
incremental. We generalize the tare case to LI constraints in Sections
\ref{subsection-powerOfB} and \ref{subsection-general} and introduce encoding
\SNTARE. The non-tare case, needed to encode objective functions, is studied in
Section \ref{section-optimization-networks} and referred to as \SNOPT. 

These methods are consistent but not domain consistent. Our
implementation uses merge and simplified-merge networks
\shortcite{parametricCardinalityConstraint}, but any domain-consistent encoding of
sorting networks can be used instead.

\subsubsection{Encoding LI Constraints with Logarithmic Coefficients.}
\label{subsection-powerOfB}
%%  \ignasi{I changed the title, since we don't consider power-of-$b$
%%  coefficients but coefficients that can be expressed as a single
%%  digit in base $b$. I don't know if there is a word that defines
%%  such coefficients; I used the word \emph{reduced}, but I'm not
%%  sure if that's the best option.}  \pjs{Why not say Logarithmic
%%  Coefficient Encoding?}

First, let us consider the simpler case where all the coefficients
have a single digit in a fixed base $b>1$, and the bound is
$a_0=b^{m+1}-1$ for some integer value $m$. In the next section we
show that a general LI constraint can be reduced to this case.

%%   \vale{We should motivate why here we do the general case of base $b$, and
%%   also why we won't use mixed radix bases. I also find it confusing wrt.
%%   section \ref{section-PB-adders} that we use here $\delta_{i,j}$ and there
%%   $A^i_j$?}
  
%%   \ignasi{-Added a motivation paragraph.

%%     - We don't use mixed radix because it really makes the notation
%%     more complicated (imagine if we replace $b^j$ by $b_1\cdot b_2
%%     \cdots b_j$ in every place!) but it doesn't add anything else. In
%%     the next section we say that every result can be easily adapted by
%%     mixed radix; I think that's enough: with mixed radix, the proofs
%%     would be uglier but the content wouldn't be better.  Besides,
%%     notice that in our experimental section, we have just tried pure
%%     basis; it'd be strange to talk about mixed radix and then not use
%%     them in the implementation.

%%     - Agreed with the $\delta$ and $A$; already changed. }

%%     \vale{Great. We could also change Section
%%     \ref{section-optimization-networks}, that  still uses $\delta$. is that
%% intentionally?}
%% \ignasi{Sorry, replaced as well.}

Let us consider the constraint
$$\begin{array}{lccccccccc}
C:       & \aline{0}                  & +         &   \\
         & \aline{1}                  & +         &   \\
         & \multicolumn{7}{c}{\ldots} & +         &   \\
         & \aline{m}                  & \leqslant & b^{m+1}-1,
\end{array}$$
where the variables $x_{i,j}$ are integer with domain $[0, d_{i,j}]$ and $0 \leqslant A_{i,j} <
b.$

Given such a constraint, let us define
$$y_j = \begin{cases}
\alineP{0} &\text{if } j = 0 \\ 
\left \lfloor{\frac{y_{j-1}}{b}}\right \rfloor +  \alineP{j} &\text{if } j > 0 \\ 
\end{cases}$$

\begin{proposition}\label{prop-encoding-networks}
  \begin{enumerate}
    \item Given an integer $z$ with $\ordEnc(z)=[z^1, z^2, \ldots, z^d]$, then
      $$\ordEnc(\left \lfloor{\frac{z}{b}}\right \rfloor) = [z^b,
      z^{2b}, \ldots, z^{b \left \lfloor{\frac{d}{b}}\right
        \rfloor}]$$ is a domain consistent encoding of $\left
      \lfloor{\frac{z}{b}}\right \rfloor$.
      \item Given integers $z$ and $a$ with $\ordEnc(z) = [z^1, z^2,
        \ldots, z^d]$ and $a>0$,
        $$\ordEnc(az) = [\overbrace{z^1, z^1, \ldots, z^1}^{(a)},
        \overbrace{z^2, z^2, \ldots, z^2}^{(a)}, \ldots,
        \overbrace{z^d, z^d, \ldots, z^d}^{(a)}]$$
        is a domain consistent encoding of $az$.
      \item Given $n$ integer variables $z_1, z_2, \ldots, z_n$ with $\ordEnc(z_i)
        = [z_i^1, z_i^2, \ldots, z_i^{d_i}]$,
        \begin{multline*}
          \ordEnc(z_1+z_2+\cdots+z_n)=\\=\sn(z_1^1, z_1^2, \ldots, z_1^{d_1},
        z_2^1, z_2^2, \ldots, z_2^{d_2}, \ldots, z_n^{d_n})
        \end{multline*}
        is a domain consistent encoding of $z_1+z_2+\cdots+z_n$. 
  \end{enumerate}
\end{proposition}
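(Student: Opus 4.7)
The plan is to prove each part by exhibiting a direct semantic argument, then checking that unit propagation recovers the domain-consistent information. Parts (1) and (2) are essentially bookkeeping: they reuse existing Boolean variables and introduce no new clauses, so domain consistency reduces to verifying that the chosen mapping is semantically correct. Part (3) requires invoking the domain consistency of the underlying sorting network encoding.

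For part (1), the key identity is $\lfloor z/b \rfloor \geq k \iff z \geq kb$. Hence, if we define the $k$-th Boolean in the order encoding of $\lfloor z/b \rfloor$ to be the variable $z^{kb}$, then this literal is true exactly when $\lfloor z/b\rfloor \geq k$, as required. The monotonicity clauses of the order encoding of $\lfloor z/b\rfloor$, namely $z^{(k+1)b}\to z^{kb}$, are already instances of the chain $z^{j+1}\to z^j$ in $\operatorname{Order-Encoding}(z)$. Since the encoding adds no new Boolean variables or clauses, every domain reduction of $z$ translates verbatim into the derived encoding and vice versa, so the encoding is (trivially) domain consistent.

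For part (2), use the identity $az \geq k \iff z \geq \lceil k/a \rceil$. So the $k$-th Boolean in $\operatorname{Order-Encoding}(az)$ is $z^{\lceil k/a \rceil}$; consecutive indices $k$ with the same ceiling give the $a$-fold repetition in the statement. The monotonicity clauses of $\operatorname{Order-Encoding}(az)$ either collapse to tautologies (same literal repeated) or to instances of the monotonicity clauses of $\operatorname{Order-Encoding}(z)$. As in part (1), no new variables or clauses are added, so domain consistency transfers.

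For part (3), observe that because each $z_i$ uses the order encoding, the multiset of input literals $\{z_i^j \mid 1\le i \le n,\, 1\le j\le d_i\}$ contains exactly $\sum_{i=1}^n z_i$ true literals in any complete assignment. A sorting network outputs a permutation of its inputs in (say) decreasing order, so its output has a true prefix of length $\sum_i z_i$ followed by a false suffix — which is precisely $\operatorname{Order-Encoding}(z_1+\cdots+z_n)$. For domain consistency, invoke the domain-consistency result for sorting network encodings of cardinality constraints (Section~\ref{section-CC-SN}): given any partial information on the input bits, unit propagation on $\operatorname{SortingNetwork}$ derives every output literal that is entailed by the counts of fixed-true and fixed-false inputs. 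The main subtlety, and where I expect the argument to need the most care, is that the input bits are not independent Booleans but are constrained by the order-encoding chain clauses $z_i^{j+1}\to z_i^j$; these clauses, however, only tighten the partial assignment on the inputs without affecting the sorting network's behaviour, so unit propagation on the composite encoding still yields the tightest possible bounds on the sum, giving domain consistency.
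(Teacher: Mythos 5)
Your proof is correct, and it takes the same route the paper intends: the paper dismisses this proposition with ``trivial using the definitions of order encoding and sorting networks,'' and your argument is exactly the spelling-out of those definitions (the identities $\lfloor z/b\rfloor\geq k\iff z\geq kb$ and $az\geq k\iff z\geq\lceil k/a\rceil$ for parts (1)--(2), and the count-of-true-inputs observation plus the known domain consistency of sorting-network encodings for part (3)). Your added remark that the order-encoding chain clauses on the inputs only strengthen, and cannot interfere with, propagation through the network is a worthwhile detail the paper leaves implicit.
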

The proof of this proposition is trivial using the definitions of
order encoding and sorting networks.

Now, given $(i,j)\in[1,n]\times[0,m]$, let us define the tuple $X_{i,j}$ as
$$X_{i,j} = \left (\overbrace{x_{i,j}^1, x_{i,j}^1, \ldots,
  x_{i,j}^1}^{(A_{i,j})}, \overbrace{x_{i,j}^2, x_{i,j}^2, \ldots,
  x_{i,j}^2}^{(A_{i,j})}, \ldots, \overbrace{x_{i,j}^{d_{i,j}},
  x_{i,j}^{d_{i,j}}, \ldots, x_{i,j}^{d_{i,j}}}^{(A_{i,j})} \right ),$$ where
$(x_{i,j}^1, \ldots, x_{i,j}^{d_{i,j}}) = \ordEnc(x_{i,j})$.

The encoding of this section introduces Boolean variables $y_i^j$ defined as
$$\begin{array}{l}
  (y_{0}^1, y_{0}^2, \ldots) = \sn( X_{1,0}, X_{2,0}, \ldots, X_{n,0}) \\
  (y_{1}^1, y_{1}^2, \ldots) = \sn( y_0^b, y_0^{2b}, \ldots, X_{1,1}, X_{2,1}, \ldots, X_{n,1}) \\
  \cdots \\
  (y_{m}^1, y_{m}^2, \ldots) = \sn( y_{m-1}^b, y_{m-1}^{2b}, \ldots, X_{1,m}, X_{2,m}, \ldots, X_{n,m}) \\
\end{array}$$

\begin{lemma}\label{lemma-yiff}
  Let $A = \{ x_{i,j} \geqslant v_{i,j} \}_{1 \leqslant i \leqslant n, 0
    \leqslant j \leqslant m}$ be an assignment. Then,
  $$\sum\limits_{j=0}^m b^j \Big( A_{1,j} v_{1,j} + A_{2,j} v_{2,j} +
  \cdots + A_{n,j} v_{n,j} \Big) > b^{m+1}-1$$ if and only if
  $y_m^{b}$ is propagated to true.
\end{lemma}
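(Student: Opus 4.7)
The plan is to reduce the lemma to an arithmetic identity and a clean inductive bookkeeping of what unit propagation does on each sorting network in the cascade. Write $T_j := A_{1,j} v_{1,j} + A_{2,j} v_{2,j} + \cdots + A_{n,j} v_{n,j}$ and define the sequence $s_0 := T_0$ and $s_j := \lfloor s_{j-1}/b \rfloor + T_j$ for $1 \leqslant j \leqslant m$. The goal is to prove (i) after unit propagation with input $A$, the true literals among the order-encoding of $y_j$ are exactly $y_j^1, y_j^2, \ldots, y_j^{s_j}$, and (ii) $s_m \geqslant b$ iff $\sum_{j=0}^m b^j T_j > b^{m+1}-1$. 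Combining (i) and (ii) immediately gives the ``if and only if'' since $y_m^b$ becomes true under UP iff $s_m \geqslant b$.

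For (i) I would argue by induction on $j$. By Proposition \ref{prop-encoding-networks}(3), the sorting network computing $y_j$ is a domain-consistent encoding of the sum of its inputs, so unit propagation sets exactly the first $k$ outputs true where $k$ is the number of input literals forced true by $A$ together with the previously propagated carry variables. For $X_{i,j}$, the assignment $x_{i,j} \geqslant v_{i,j}$ sets precisely the first $v_{i,j}$ literals of $\ordEnc(x_{i,j})$ true, which via Proposition \ref{prop-encoding-networks}(2) contributes exactly $A_{i,j} v_{i,j}$ true inputs. For $j \geqslant 1$, by the induction hypothesis applied to $y_{j-1}$, the carry inputs $y_{j-1}^b, y_{j-1}^{2b}, \ldots$ that are forced true are precisely the ones with index $\leqslant s_{j-1}$, i.e.\ exactly $\lfloor s_{j-1}/b\rfloor$ of them (Proposition \ref{prop-encoding-networks}(1)). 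Summing gives a total of $s_j$ true inputs, hence $y_j^1,\ldots,y_j^{s_j}$ are propagated and no others.

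For (ii) I would establish the identity
$$\sum_{j=0}^m b^j T_j \;=\; b^m s_m \;+\; \sum_{j=0}^{m-1} b^j r_j, \qquad r_j := s_j \bmod b \in [0,b-1],$$
by telescoping: from $s_{j-1} = b(s_j - T_j) + r_{j-1}$ one gets $b^{j-1} s_{j-1} + b^j T_j = b^j s_j + b^{j-1} r_{j-1}$; summing $j=1,\ldots,m$ and using $s_0=T_0$ collapses to the displayed equation. Since $\sum_{j=0}^{m-1} b^j r_j \leqslant (b-1)\cdot \frac{b^m-1}{b-1} = b^m-1$, the inequality $\sum_{j=0}^m b^j T_j \geqslant b^{m+1}$ holds iff $s_m \geqslant b$, and $\sum_{j=0}^m b^j T_j > b^{m+1}-1$ is the same statement over the integers.

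The main obstacle is the ``only if'' half of step (i): one must be sure that unit propagation does \emph{not} spuriously propagate $y_j^k$ for $k$ larger than the true-input count. This is exactly what domain consistency of the sorting-network encoding buys us, and the fact that carries from level $j-1$ contribute only $\lfloor s_{j-1}/b \rfloor$ true inputs (not more) relies on the sharp upper-bound part of the inductive hypothesis applied to $y_{j-1}$. Everything else is arithmetic.
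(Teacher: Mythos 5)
Your proof is correct, and it follows the same overall strategy as the paper's: track the integer value $s_j$ of each $y_j$ through the cascade (justifying via Proposition~\ref{prop-encoding-networks} that unit propagation makes exactly $y_j^1,\dots,y_j^{s_j}$ true and nothing more), and then reduce the lemma to arithmetic about the recursion $s_j = \lfloor s_{j-1}/b\rfloor + T_j$. The one place you genuinely diverge is in how that arithmetic is carried out. The paper proves the two directions separately, each by a telescoping chain of inequalities built from $z-(b-1)\leqslant b\lfloor z/b\rfloor\leqslant z$; you instead establish the exact positional identity
$$\sum_{j=0}^m b^j T_j = b^m s_m + \sum_{j=0}^{m-1} b^j r_j,\qquad r_j = s_j \bmod b,$$
and get both directions at once from the single bound $\sum_{j} b^j r_j \leqslant b^m-1$. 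This is a tidier packaging: it makes visible that $s_m$ and the $r_j$ are just the leading digit and the discarded low-order digits of the total sum, and it avoids writing two near-duplicate inequality chains. Your explicit induction for part (i), including the observation that the sharp upper bound (no spurious true outputs) is what the ``only if'' direction rests on, is also more careful than the paper, which invokes domain consistency of the networks at the top level without spelling out the per-layer bookkeeping; note only that the sharpness ultimately comes from the existence of a completing model with all undefined inputs set to false (the monotone-circuit argument the paper uses elsewhere), which is worth stating since domain consistency as defined concerns the projection onto the original variables rather than the auxiliary outputs.
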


Finally, the encoding introduces the clause $\neg y_m^b$. By the
previous lemma, the encoding is consistent.

\begin{exmp}\label{example-withSN-powerOfB}
  Let us fix $b=3$, and consider the constraint
  $$C: 2 x_{1,0} + 2 x_{2,0} + 2x_{3,0} + 3 x_{1,1} + 3 x_{2,1} + 9
  x_{1,2} \leqslant 26,$$ where $x_{1,0} \in [0,2]$, $x_{2,0} \in
  [0,3]$, $x_{3,0} \in [0,1]$, $x_{1,1} \in [0,4]$, $x_{2,1} \in
  [0,3]$ and $x_{1,2} \in [0,1]$. Notice that $A_{3,1} = A_{2,2} =
  A_{3,2} = 0$. Let us denote $(x_{i,j}^1,x_{i,j}^2, \ldots,
  x_{i,j}^{d_{i,j}}) = \ordEnc(x_{i,j})$ as usual.

  The encoding of this section defines Boolean variables
  $$\{y_0^j \st 1 \leqslant j \leqslant 12 \} \cup \{y_1^j \st 1
  \leqslant j \leqslant 11 \} \cup \{y_2^j \st 1 \leqslant j \leqslant
  4 \}$$
  as
  \begin{multline*}
    (y_0^1, y_0^2, \ldots, y_0^{12}) =\\ = \sn(x_{1,0}^1, x_{1,0}^1,
    x_{1,0}^2, x_{1,0}^2, x_{2,0}^1, x_{2,0}^1, x_{2,0}^2, x_{2,0}^2,
    x_{2,0}^3, x_{2,0}^3,x_{3,0}^1, x_{3,0}^1),
  \end{multline*}
  $$(y_1^1, y_1^2, \ldots, y_1^{11}) = \sn(y_0^3,y_0^6,y_0^9,
  y_0^{12},x_{1,1}^1, x_{1,1}^2, x_{1,1}^3, x_{1,1}^4, x_{2,1}^1,
  x_{2,1}^2,x_{2,1}^3),$$
  
  $$(y_2^1, y_2^2, y_2^3, y_2^{4}) = \sn(y_1^3,y_1^6,y_1^9, x_{1,2}^1),$$

  and the clause $\neg y_2^3$.

  Figure \ref{figure-example-sn-powerOfB} shows the definition of
  these variables. Notice that, by Proposition
  \ref{prop-encoding-networks}, $(y_i^1, y_i^2, \ldots, ) = \ordEnc(y_i)$.

  \begin{figure}[t]
    \begin{center}
      \includegraphics[scale=0.42]{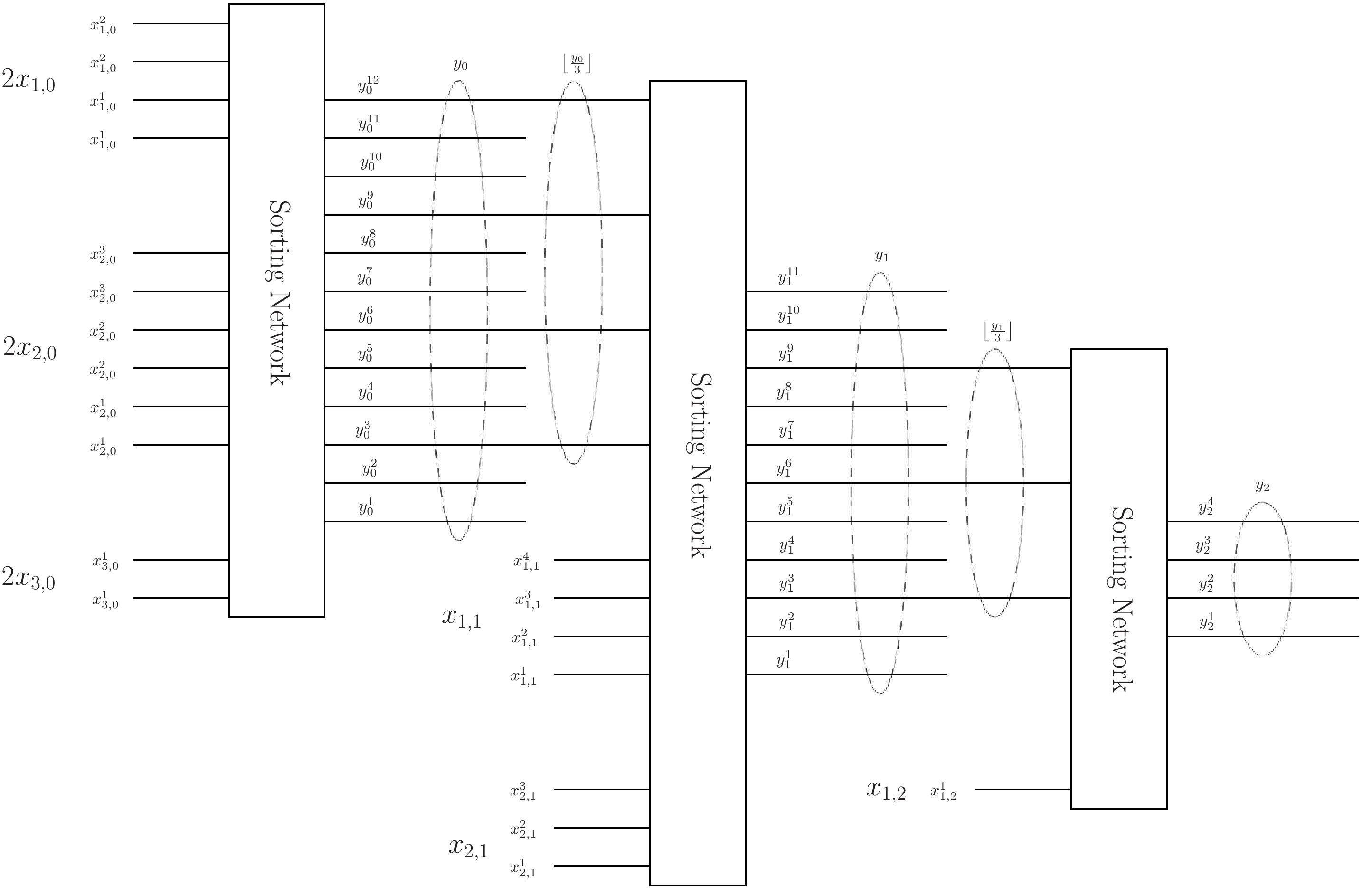}
      \caption{\label{figure-example-sn-powerOfB} Network of the constraint $2
        x_{1,0} + 2 x_{2,0} + 2x_{3,0} + 3 x_{1,1} + 3 x_{2,1} + 9
        x_{1,2} \leqslant 26$.}
    \end{center}
  \end{figure}

  The encoding maintains consistency: for instance, given the partial assignment
  $$\{x_{1,0}=1, x_{2,0}\geqslant 1, x_{1,1} \geqslant 3, x_{2,1} = 3,
  x_{3,1} = 1\}$$ the first sorting network has 4 true inputs (two
  copies of $x_{1,0}^1$ and two copies of $x_{2,0}^1$): therefore,
  $y_0^1, y_0^2, y_0^3$ and $y_0^4$ are propagated to true (indeed,
  $y_0 \geqslant 4$).

  Now, the second sorting network has 7 true inputs: $y_0^3,
  x_{1,1}^1, x_{1,1}^2, x_{1,1}^3, x_{2,1}^1, x_{2,1}^2$ and
  $x_{2,1}^3$. Therefore, $y_1^j$ is propagated to true for $1
  \leqslant j \leqslant 7$.

  Finally, the last sorting network has 3 true inputs: $y_1^3, y_1^6$
  and $x_{1,2}^1$. Therefore, $y_2^1$, $y_2^2$ and $y_2^3$ are
  propagated. That conflicts with clause $\neg y_2^3$.
\end{exmp}

\subsubsection{The \SNTARE{} Encoding for Linear Integer Constraints}\label{subsection-general}
%% \vale{In this section the tare variable is $x_{n}$ where in previous
%%   sections it was $x_{n+1}$. Should we unify? Also there are some
%%   repeated explanations, that we have done before.}
%% 
%% \ignasi{I agree with the tare variable. I have replaced by $x_{n+1}$
%%   as in the PB sections. About the repeated explanations, I think it's
%%   fine: defining the tare again just takes 2 lines, and somebody that
%%   just wants to see how this encoding works will look at this
%%   section. We can't ommit the explanation because it was done 20 pages
%%   before.}
%% \vale{Agreed.}

In this section we transform a general LI constraint into a constraint
where all the coefficients have a single digit in base $b$ and the
bound is $b^{m+1}-1$. Then, the new constraint is encoded as in the
previous section. We finally show that consistency is not lost.
%% \pjs{Fixed it so that the original constraint is $a_1 x_1 + \cdots + a_{n-1}
%%   x_{n-1} \leqslant a_0$,! CHECK}

Given a constraint $C: a_1 x_1 + \cdots + a_{n-1} x_{n-1} \leqslant a_0$,
let $b$ be a fixed integer larger than 1.\footnote{All the results of
  this section are done with a fixed base $b>1$, where the digits
  $(d_0, d_1, \ldots, d_m)$ represent the number $d_0 + b (d_1 + b
  (d_2 + \cdots + b (d_{m-1} + b d_m)))$; the results can be trivially
  adapted, however, for mixed radix $(b_0, b_1, b_2, \ldots)$ with
  $b_i >1$, where the digits $(d_0, d_1, \ldots, d_m)$ represents the
  number $d_0 + b_0 (d_1 + b_1 (d_2 + \cdots + b_{m-2} (d_{m-1} +
  b_{m-1} d_m)))$.}  We define $m$ as the integer such that $b^m < a_0
\leqslant b^{m+1}$, and $a_{n} = b^{m+1} - 1 - a_0$. Let $x_{n}$
be a dummy variable which is fixed to 1, i.e. $x_{n} = 1$, called
the \emph{tare}.  Then, the constraint is equivalent to

$$\begin{array}{lccccccccc}
C':       & \alineX{0}                  & +         &   \\
         & \alineX{1}                  & +         &   \\
         & \multicolumn{7}{c}{\ldots} & +         &   \\
         & \alineX{m}                  & \leqslant & b^{m+1}-1,
\end{array}$$

where $(A_{i,0},A_{i,1},\ldots, A_{i,m})$ is the
representation of $a_i$ in base $b$; this is,
$$a_i = \sum_{j=0}^m b^j A_{i,j}, \text{ with } 0 \leqslant
A_{i,j} < b.$$

Constraint $C'$ can be encoded as in the previous section.
We refer to this encoding as \SNTARE. 

\begin{exmp}\label{example-withSN}
  Consider the constraint $C: 3x_1 + 2 x_2 + 5x_3 \leqslant 15$, where
  $x_1 \in [0,4]$, $x_2 \in [0,2]$ and $x_3 \in [0,3]$.  For base $b =
  3$, $m = 2$ and the tare is $a_4 = 27 - 1 - 15 = 11$.

  $C$ is therefore rewritten as $C': 2 x_2 + 2 x_3 + 2 x_4 + 3 x_1 +
  3x_3 + 9 x_4 \leqslant 26$.

  \SNTARE{} introduces Boolean variables $y_i^j$ as shown in Figure
  \ref{figure-example-sn-tare}. Then, it adds the clauses $\neg y_2^3
  \wedge x_4^1$.

  \begin{figure}[t]
    \begin{center}
      \includegraphics[scale=0.42]{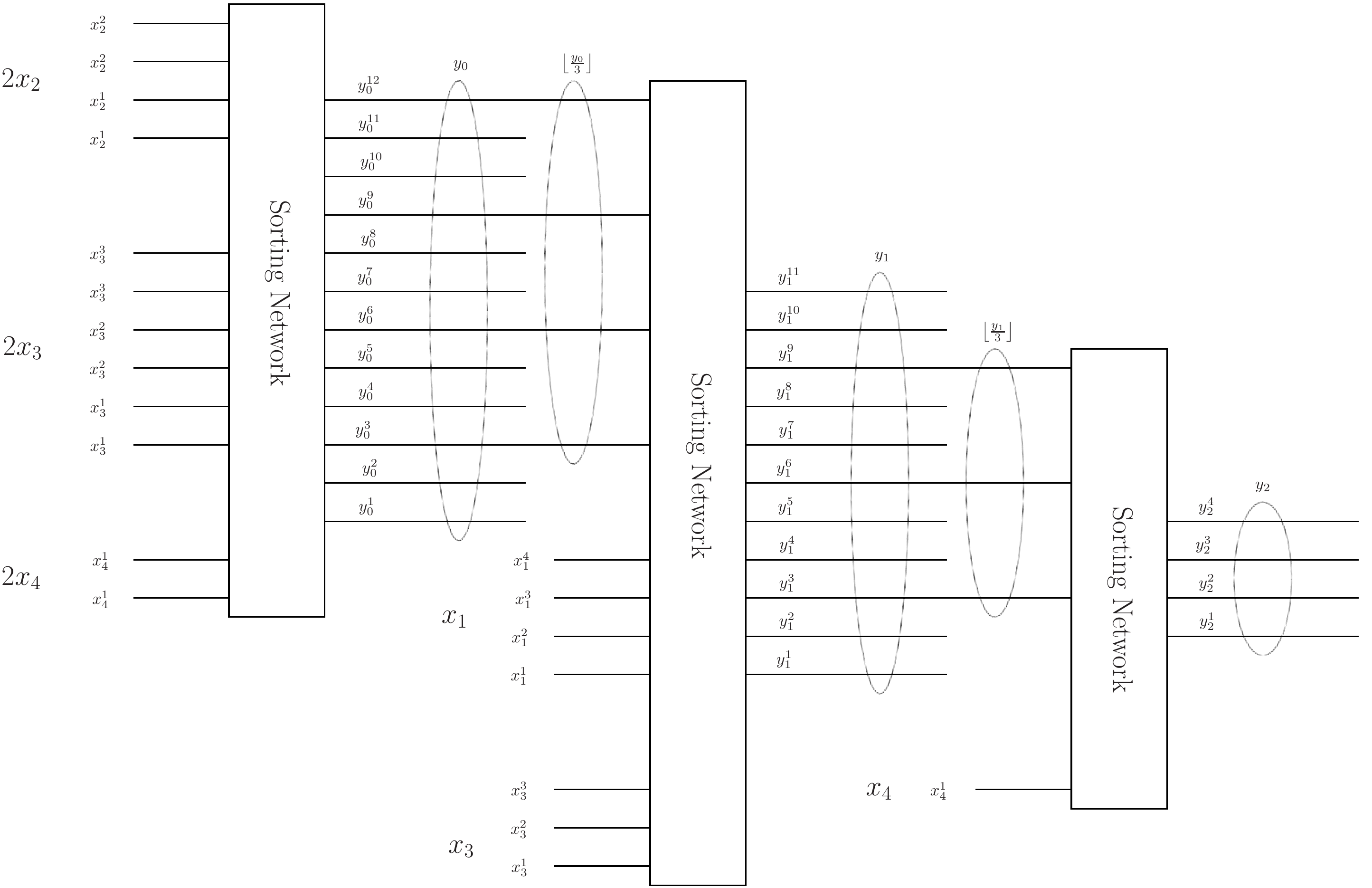}
      \caption{\label{figure-example-sn-tare}
        Network of the constraint
        $C: 3x_1 + 2 x_2 + 5x_3 \leqslant 15$ with base $b=3$ using a tare as $x_4$.}
    \end{center}
  \end{figure}

  The encoding is consistent. For example, if we take the assignment
  $\{x_1 \geqslant 3, x_3 \geqslant 2\}$, the first sorting network
  has 6 true input variables (two copies of $x_3^1$, two copies of
  $x_3^2$ and two copies of $x_4^1$). Therefore, $y_0^j$ will be
  propagated for $j \leqslant 6$.

  Now, the second sorting network has 7 true input variables: $y_0^3,
  y_0^6, x_1^1, x_1^2, x_1^3, x_3^1$ and $x_3^2$. Therefore, $y_1^j$
  is propagated for $j \leqslant 7$.

  Finally, the third network has 3 true inputs: $y_1^3, y_1^6$ and
  $x_4^1$. This causes a conflict with clause $\neg y_2^3$.

  However, the encoding is not domain consistent. If we take the
  assignment $\{x_2 \geqslant 1\}$, the encoding propagates $x_4^1,
  y_0^1, y_0^2, y_0^3, y_0^4, y_1^1, y_2^1, \neg y_2^3$. However,
  $\neg x_3^3$ is not propagated.
\end{exmp}

As shown in the previous example, domain consistency is lost due to
the duplication of variables. The encoding, however, is consistent:

\begin{theorem}\label{theorem-cons-SN}
Let $$C = C(x_{1,1}, x_{1,2}, \ldots, x_{1,m_1}, x_{2,1}, \ldots,
x_{2,m_2}, \ldots, x_{n,m_n})$$ be a monotonic constraint. Let
$$P = P(x_{1,1}, x_{1,2}, \ldots, x_{1,m_1}, x_{2,1}, \ldots, x_{2,m_2},
\ldots, x_{n,m_n})$$ be a consistent propagator of $C$; i.e., given a
partial assignment $A$ on the variables $\{x_{i,j}\}$, the propagator
$P$ finds an inconsistency iff $A$ is inconsistent with
$C$. Then, $$P' = P(x_{1}, x_{1}, \ldots, x_{1},
x_{2}, \ldots, x_{2}, \ldots, x_{n})$$ is a consistent propagator of
$$C' = C'(x_1, x_2, \ldots, x_n) = C(x_1, x_1, \ldots, x_1, x_2,
\ldots, x_2, \ldots, x_n).$$
\end{theorem}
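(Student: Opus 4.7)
The plan is to reduce consistency of $P'$ for $C'$ to consistency of $P$ for $C$ via a lifting argument. Given a partial assignment $A'$ on $\{x_1,\ldots,x_n\}$, I define its \emph{lift} $A$ on the variables $\{x_{i,j}\}$ by $A(x_{i,j}) := A'(x_i)$ whenever $A'(x_i)$ is defined, and leaving $x_{i,j}$ unassigned otherwise. Because $P'$ is literally $P$ with every occurrence of $x_{i,j}$ renamed to $x_i$, the propagator $P'$ run on $A'$ performs exactly the same inferences as $P$ run on $A$; in particular $P'$ reports inconsistency on $A'$ if and only if $P$ reports inconsistency on $A$. Consequently it suffices to prove the following implication: \emph{if $A'$ is inconsistent with $C'$ then $A$ is inconsistent with $C$}. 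Consistency of $P$ then finishes the argument.

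I would establish this implication by contrapositive. Suppose $A$ is consistent with $C$, so there is a complete extension $\hat A \sqsupseteq A$ satisfying $C$. From $\hat A$ I build a complete extension $\hat A' \sqsupseteq A'$ by setting $\hat A'(x_i) := A'(x_i)$ when $A'(x_i)$ is defined, and otherwise choosing $\hat A'(x_i)$ to be the extremum of $\{\hat A(x_{i,j})\}_j$ in the direction dictated by the monotonicity of $C$ -- the minimum if $C$ is anti-monotone in its arguments (the case relevant for LI constraints $\sum a_{i,j} x_{i,j} \leqslant a_0$ with nonnegative coefficients), and the maximum in the opposite case. For indices $i$ with $A'(x_i)$ defined, all $\hat A(x_{i,j})$ are already equal to $\hat A'(x_i)$, so the corresponding substitution is vacuous. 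For the remaining indices, $\hat A'(x_i)$ is bounded by every $\hat A(x_{i,j})$ in the direction that, by monotonicity, preserves the truth of $C$ when each argument is replaced. Hence $C'(\hat A') = C(\hat A'(x_1),\hat A'(x_1),\ldots,\hat A'(x_n),\ldots,\hat A'(x_n))$ is true, contradicting the inconsistency of $A'$ with $C'$.

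The main obstacle is precisely the treatment of variables left undefined by $A'$: for each such $i$, the possibly distinct values $\hat A(x_{i,1}), \hat A(x_{i,2}), \ldots$ must be collapsed to a single value $\hat A'(x_i)$ that both lies in the domain of $x_i$ and preserves satisfaction of $C$ under the substitution. This is exactly where the monotonicity hypothesis is indispensable; without it, such a collapse could destroy satisfaction of $C$. In the concrete application to the \SNTARE{} encoding of Section~\ref{subsection-general}, the $x_{i,j}$ are the Boolean order-encoding literals and $C$ is anti-monotone, so the prescribed choice simply amounts to $\hat A'(x_i) = \bigwedge_j \hat A(x_{i,j})$ and the argument goes through with no further subtlety.
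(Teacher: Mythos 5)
Your proof is correct and follows essentially the same route as the paper's: both identify $P'$ acting on a partial assignment with $P$ acting on its duplicated lift, and both use monotonicity to transfer (in)consistency between $C'$ and $C$. The only difference is presentational --- you argue the key implication by contrapositive, collapsing an arbitrary satisfying complete extension to a single value per original variable via the monotone extremum, whereas the paper invokes the equivalent observation that for a monotone constraint a lower-bound partial assignment is inconsistent iff the constraint already fails at those lower bounds.
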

\begin{proof}
The key point in the proof is that, given the monotonicity of $C$, a
partial assignment $\{ x_{i,j} \geqslant v_{i,j} \}$ is inconsistent
with $C$ if and only if $C(v_{1,1}, \ldots, v_{n,m_n})$ is false.

Let $A = \{ x_i \geqslant v_i \}_{i=1}^n$ be a partial assignment on
the variables $x_1, x_2, \ldots, x_n$ inconsistent with $C'$. That means that
$$C'(v_1, \ldots, v_n) = C(v_1, v_1, \ldots, v_1, v_2,\ldots, v_2,
\ldots, v_n)$$ is false. Therefore, $P(v_1, v_1, \ldots, v_n) = P'$ will
find a conflict.
\end{proof}

Notice that the result does not extend to non-monotonic constraints:
\begin{exmp}
Let us consider a constraint $C: x_1 + x_2 = 1$, where $x_i$ are
Boolean variables. Let $P$ be a propagator that, given a complete
assignment of the variables, return a conflict if $C$ does not
hold. Notice that $P$ is consistent. However, constraint $C': x + x =
1$ is unsatisfiable. $P'$ cannot find a conflict until $x$ is given a
value and, therefore, is not a consistent propagator for $C'$.  
\end{exmp}

Also, notice that the result cannot be extended to domain consistency:
\begin{exmp}
Consider the constraint $C: x_1 + x_2 \leqslant 1$, where $x_i$ are
Boolean variables. Let $P$ be domain consistent encoding
of $C$ created by the method of \shortciteA{Abio12}: it includes the 
auxiliary variable $y$ and the clauses $$ \neg x_1 \vee y, \qquad \neg y
\vee \neg x_2$$

Notice that $P$ is domain-consistent: if $x_1$ is assigned to true,
the first clause propagates $y$ and the second one propagates $\neg
x_2$. If $x_2$ is propagated to true, the second clause propagates
$\neg y$ and the first one propagates $\neg x_1$.

If the constraint is replaced by $C': x + x \leqslant 1$, a domain
consistent propagator would propagate $\neg x$. However, $P'$ does not:
clauses are $\{ \neg x \vee y, \neg y \vee \neg x \}$, so unit
propagation cannot propagate $\neg x$.
\end{exmp}

\begin{theorem}\label{theo-sn-tare-consistency}
    The encoding \SNTARE{} is consistent. 
\end{theorem}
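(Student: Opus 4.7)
The plan is to chain together Lemma~\ref{lemma-yiff} (which handles the previous section's restricted case where every coefficient has a single base-$b$ digit and the bound is $b^{m+1}-1$) with Theorem~\ref{theorem-cons-SN} (which says that collapsing several distinct variables of a monotonic constraint into one preserves propagator consistency). The tare transformation is engineered precisely so that $C$ becomes equivalent to a constraint $C'$ of exactly the form handled in Section~\ref{subsection-powerOfB}, and the base-$b$ digit decomposition naturally splits each original $x_i$ into the $m+1$ ``copies'' $x_{i,0}, x_{i,1}, \ldots, x_{i,m}$ that Theorem~\ref{theorem-cons-SN} needs as a starting point.

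Concretely, first I would introduce the auxiliary constraint
\[
\tilde C \;:\; \sum_{j=0}^m b^j\!\Big(A_{1,j} x_{1,j} + \cdots + A_{n,j} x_{n,j}\Big) \;\leqslant\; b^{m+1}-1,
\]
where the $x_{i,j}$ are treated as formally distinct variables, each with the same domain as $x_i$. By Lemma~\ref{lemma-yiff}, the encoding of Section~\ref{subsection-powerOfB} applied to $\tilde C$ propagates $y_m^b$ to true precisely on those assignments $\{x_{i,j} \geqslant v_{i,j}\}$ whose left-hand side exceeds $b^{m+1}-1$; combined with the unit clause $\neg y_m^b$, this yields a consistent propagator $\tilde P$ for $\tilde C$. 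Since $\tilde C$ is monotonic (all $A_{i,j}\geqslant 0$), Theorem~\ref{theorem-cons-SN} applies: identifying $x_{i,0}=x_{i,1}=\cdots=x_{i,m}=x_i$ for each $i$ produces a consistent propagator $P'$ for the resulting collapsed constraint, which is exactly $C'$.

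Next I would handle the tare. The encoding permanently asserts $x_n=1$ via a unit clause (the tare variable). So given any partial assignment $A$ on $x_1,\ldots,x_{n-1}$ that is inconsistent with $C$, unit propagation extends it with $x_n=1$, giving a partial assignment inconsistent with $C'$. By the consistency of $P'$ established above, unit propagation then derives the empty clause via $\neg y_m^b$. This shows that the \SNTARE{} encoding is consistent.

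The main obstacle I expect is a bookkeeping one rather than a conceptual one: making precise how the assignment on the original variables $\{x_i\}$ lifts to an assignment on the duplicated variables $\{x_{i,j}\}$ assumed by Theorem~\ref{theorem-cons-SN}, and verifying that the propagation trace of the sorting-network encoding (with the inputs literally wired to the same literals multiple times, as in Example~\ref{example-withSN}) matches the propagation trace one would obtain on the disjoint-variable version. Monotonicity of the LI constraint is the essential ingredient that makes this identification safe, as illustrated by the two counterexamples following Theorem~\ref{theorem-cons-SN}, so care is needed only to verify that $C'$ really is monotonic (which follows from $A_{i,j}\geqslant 0$).
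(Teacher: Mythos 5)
Your proposal is correct and follows essentially the same route as the paper's own argument, which likewise chains Lemma~\ref{lemma-yiff} (consistency of the single-digit, bound-$b^{m+1}-1$ construction on formally distinct variables) with Theorem~\ref{theorem-cons-SN} (duplication of variables in a monotonic constraint preserves propagator consistency), with the tare handled by the asserted unit $x_n = 1$. The only difference is that you spell out the bookkeeping the paper leaves implicit; the paper additionally offers a second, independent proof via monotone circuits, but that is not the one you were reconstructing.
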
 
In \SNTARE{} variables are duplicated in the construction that consistency is
shown for by Lemma \ref{lemma-yiff}. Thus by Theorem \ref{theorem-cons-SN}
consistency is maintained. 

In the following Section we show an alternative
proof that uses the fact that the underlying circuit only consist of AND and OR
gates. 

\subsubsection{Monotone Circuits and Sorting Networks}

The encoding \SNTARE{} is the CNF translation of a network of sorting networks.
A sorting network is a network of comparators and a comparator computes the
AND and OR of its inputs. 

A circuit of AND and OR gates is called a \emph{monotone circuit}. By
introducing the tare in \SNTARE{} the underlying structure becomes a monotone
circuit. The output variable of this circuit $y^b_m$ ( meaning $y_m \geq b$) is
true if the partial assignment to the linear is greater than $b^{m+1}-1$, i.e.
$\sum a_i x_i > b^{m+1}-1 $, and otherwise undefined (see previous section). 

We can take advantage of the fact that the circuit is monotone to show
consistency of the translation to CNF. The key insight comes from the
connection between CNF encodings of constraint propagators and monotone
circuits as established by Bessiere \emph{et al} in ~\shortcite{Bessiere09}. A
partial assignment to the encoding of a monotone circuit can be interpreted as
an assignment to the input of the circuit. Input variables to the circuit are
set to true if they are true in the partial assignment, and false otherwise. 

Using this connection, we show an alternative proof to Theorem
\ref{theo-sn-tare-consistency}, that is more compact than the proof in the
previous Section or the similar result in context of PBs \shortcite{Bailleux09,Manthey14} :

\addtocounter{theorem}{-1}
\begin{theorem}
    The encoding \SNTARE{} is consistent. 
\end{theorem}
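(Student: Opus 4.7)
The plan is to exploit the monotonicity of the underlying circuit rather than re-doing the case analysis used in the first proof. First I would observe that \SNTARE{} is the CNF translation of a composition of sorting networks, and every comparator in a sorting network computes only the AND and OR of its two inputs; hence the whole network that defines the output literals $y_j^i$ from the order-encoding literals of $x_1,\ldots,x_{n-1}$, together with the tare input (fixed to $1$), is a monotone Boolean circuit whose designated output wire is $y_m^b$. The unit clause $\neg y_m^b$ is the only piece of the encoding that talks about the linear bound.

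Next I would invoke the connection established by \shortciteA{Bessiere09}: for the standard Tseitin-style CNF encoding of a monotone circuit, unit propagation on a partial assignment $A$ to the input literals derives the output iff the circuit evaluates to true after extending $A$ by setting every still-undefined input literal to false. Because the inputs of our circuit are exactly the order-encoding literals $x_i^k$ (plus the fixed tare), this ``false extension'' of a partial assignment corresponds precisely to taking each integer variable $x_i$ to be its current lower bound under $A$. So the remaining task is: if $A$ is inconsistent with $C$, then under the lower-bound extension the circuit outputs $y_m^b=1$.

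I would close with a direct appeal to Lemma \ref{lemma-yiff}. That lemma states that $y_m^b$ is set to true exactly when the corresponding weighted sum of input values exceeds $b^{m+1}-1$. Since $A$ inconsistent with $C$ means that even the pointwise smallest completion of $A$ violates the bound of the rewritten constraint $C'$, the hypothesis of Lemma \ref{lemma-yiff} is satisfied by that completion. Thus the monotone evaluation gives $y_m^b=1$, so unit propagation on the encoding derives $y_m^b$, which together with the unit clause $\neg y_m^b$ yields a conflict; consistency follows.

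The main obstacle is getting the semantic translation between the two frameworks right. The Bessiere \emph{et al.} result is phrased for an input-set circuit with literal inputs, whereas Lemma \ref{lemma-yiff} is phrased in terms of integer lower bounds $x_{i,j}\geqslant v_{i,j}$; I would need to check carefully that ``undefined order-encoding literal set to false'' coincides with ``integer variable pinned to its least value consistent with $A$,'' and that the tare input is handled as a fixed-true input of the monotone circuit rather than as a variable. Once this identification is pinned down, the proof collapses to the three-step chain monotone circuit $\Rightarrow$ false-extension evaluation $\Rightarrow$ Lemma \ref{lemma-yiff}, which is noticeably shorter and more transparent than the duplication-based argument via Theorem \ref{theorem-cons-SN}.
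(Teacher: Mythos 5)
Your argument is correct and coincides with the paper's own alternative proof of this theorem: both rest on the observation that \SNTARE{} is the Tseytin encoding of a monotone circuit, so extending a partial assignment by setting all undefined order-encoding inputs to false (i.e.\ pinning each $x_i$ to its lower bound) leaves the relevant sum unchanged, and Lemma \ref{lemma-yiff} then forces $y_m^b$ to be derived, clashing with the unit clause $\neg y_m^b$. The only cosmetic difference is that you argue forward via the \shortciteA{Bessiere09} equivalence while the paper phrases the same false-extension step as a proof by contradiction.
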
 
\begin{proof} 
By contradiction: Assume UP does not detect an inconsistency, i.e. assume
a partial assignment $A$ such that the sum of the linear constraint exceeds
$b^{m+1}-1$ and there is no conflict. The conflict can only occur between the
unit clause $\neg y^b_m$ introduced by the encoding and a clause containing the
literal $y^b_m$ that is propagated to true by UP under $A$ (see Lemma
\ref{lemma-yiff}). Since there is no conflict, $y^b_m$ is unassigned and not
forced to true by unit propagation . Consider now the total assignment $A'$
which extends $A$ in a way that all unassigned variables  are set to false,
i.e. the remaining inputs of the circuit are set to false. The sum of the
linear expression under $A'$ and $A$ is the same. 

It follows that all auxiliary variables introduced by the Tseytin encoding
corresponding to output of gates that were unassigned, will also be forced
to false by UP. There can be no inversion from false to true since the
circuit does not contain negation.  It follows that also the output gate of
the circuit will be false. However, since all complete extensions of the
partial assignment of $A$ must set the output gate of the circuit to true,
there is a contradiction to the assumption. 
\end{proof}

%% An equivalent construction for monotone circuits of threshold functions was
%% shown by Beimel and Weinreb~\cite{Beimel06} with a different application in
%% mind. Their construction could be decomposed to CNF by Tseytin and this would
%% be a encoding of pseudo-Boolean constraints maintaining consistency. 

%% Notice, that their
%% work was published before the first polynomial-sized consistent CNF encoding
%% for PBs~\cite{Bailleux09}. 

%% The encoding is still consistent if only half of the clauses of the Tseytin
%% encoding per comparator are used (those that propagate true through the network). 

%% \paragraph{Summary of Encodings through Monotone Circuits: } 
%% \pjs{Whole paragraph is repetitive and should be omitted!}
%% Beimel and Weinreb~\cite{Beimel06} study monotone circuits for weighted
%% threshold functions which turns out to be related to translating
%% pseudo-Booleans into SAT with domain consistency.
%% Domain consistent translations of constraints
%% and the representation of their propagators as monotone circuits was shown to
%% be strongly related in \cite{Bessiere09}. The construction in \cite{Beimel06}
%% is close to the translations in \cite{Bailleux09,Manthey14} and uses
%% sorting networks for unary adders. Notice, that Beimel and Weinreb's work was
%% published before the first polynomial-sized consistent CNF encoding for PBs
%% (\cite{Bailleux09}). 

\subsubsection{Encoding Objective Functions with Sorting Networks}
\label{section-optimization-networks}
The encoding of the previous section works for any LI constraint, but
it is not incremental: this is, we cannot use the encoding of an LI
constraint $C: \sum a_i x_i \leqslant a_0$ to construct the encoding
of $C': \sum a_i x_i \leqslant a_0'$. This is an issue in optimization
problems, where a single constraint with different bounds is encoded.

In this section, we adapt our method to deal with optimization
problems. As explained in Section \ref{section-optimization-mdds},
once we find a solution we do not want to encode the new constraint
from scratch: we want to reuse the encoding of the previous
constraint. As far as we know, this result is novel even for PB: there
is no incremental encoding for pseudo-Booleans (or LI) through sorting
networks.

The main difference between the encoding proposed here and the one for
LI constraints described in the previous section is that here the tare
cannot be used: the right hand side bound on the constraint is not a
fixed value.  Instead, we compute the value of the sum in the left
side and compare it with the right side bound.

As in the previous sections, given a linear integer constraint
$$\sum_{i=0}^n a_i x_i \leqslant a_0,$$ let us rewrite it as
$$\sum_{j=0}^m \sum_{i=1}^n b^j A_{ij} x_i \leqslant \sum_{j=0}^m
b^j \varepsilon_j,$$ where $b>1$ is the chosen base, $0\leqslant
A_{ij} < b$, $0 \leqslant \varepsilon_j < b$ and $m$ is large enough
such that $\sum a_i d_i < b^{m+1}$ (i.e., the computed value $y_m < b$
for any input value of the variables $x_i$).

As in the previous sections, given $0\leqslant j \leqslant m$, we define
$$y_j = \begin{cases}

\sum\limits_{i=1}^n A_{i,0} x_i &\text{if } j = 0 \\ 
\left \lfloor{\frac{y_{j-1}}{b}}\right \rfloor + \sum\limits_{i=1}^n A_{i,j} x_i &\text{if } j > 0 \\ 
\end{cases}$$
Variables $y_j$ are encoded as before with sorting networks; the
input of these networks represents the order encoding of $y_j$. In the
following, we denote
$$\ordEnc(y_j)=[y_j^1, y_j^2, \ldots, y_j^{e_j}]$$ the output
variables of these networks.

To encode the optimization function, besides these variables $y_j$, we
also encode the following variables:
\begin{equation}\label{def-ojk}
  o_j^k := \bigvee_{\substack{1\leqslant l \leqslant e_j\\l \equiv k \ (\text{mod } b)}} \Big(y_j^l \wedge \neg y_j^{l+b-k}\Big)
  \qquad 0 \leqslant j \leqslant m, \ 1 \leqslant k < b,
\end{equation}
where $y_j^{l+b-k}$ is false if the domain of $y_j < l+b-k$. These
variables $o_j^k$ can be easily defined through Tseytin
transformation~\shortcite{Tseytin1968}.%\pjs{reference}

Finally, when we want to encode the constraint with a new bound
$\sum\limits_{j=0}^m b^j \varepsilon_j$, the method just adds the
following clauses:
\begin{equation}\label{clauses-bound}
\bigwedge_{\substack{0 \leqslant j_1 \leqslant m\\ \varepsilon_{j_1} < b-1}} \Big( \neg o_{j_1}^{\varepsilon_{j_1}+1} \vee \bigvee_{\substack{j_1 < j_2 \leqslant m\\ \varepsilon_{j_2} >0}} \neg o_{j_2}^{\varepsilon_{j_2}} \Big)
\end{equation}

\SNOPT{} consists of the clauses encoding the sorting networks computing $y_j$ for $j=1 \ldots m$,
together with clauses (\ref{def-ojk}) and (\ref{clauses-bound}).

Before proving that this encoding is consistent, we need the
following result:

\begin{lemma}\label{lemma-opt-consistency}
  Given a partial assignment $A = \{ x_i \geqslant v_i \}$ such that
  $$\sum_{j=0}^m \sum_{i=1}^n b^j A_{ij} v_i = \sum_{j=0}^m b^j
  \varepsilon_j,$$ the following variables are assigned due to unit
  propagation:
  \begin{enumerate}
    \item $o_j^{\varepsilon_j}$ for all $0 \leqslant j \leqslant m$ with $\varepsilon_j >0$.\label{aaaa}
    \item $\neg o_j^{\varepsilon_j+1}$ for all $0 \leqslant j \leqslant
      m$ with $\varepsilon_j < b-1$.\label{bbbb}
    \item $\neg x_i^{v_i+1}$ for all $1 \leqslant i \leqslant n$ with
      some $A_{ij} \neq 0$ (i.e., $x_i \leqslant v_i$).\label{cccc}
  \end{enumerate}
\end{lemma}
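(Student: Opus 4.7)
The plan is to trace unit propagation through the cascade of sorting networks computing the $y_j$'s, deducing in order the propagated values of the $y_j^l$'s, then of the $o_j^k$'s (parts 1 and 2), and finally of the inputs $x_i^{v_i+1}$ (part 3).

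First set $b q_0 + \varepsilon_0 := \sum_i A_{i,0} v_i$ and recursively $b q_j + \varepsilon_j := q_{j-1} + \sum_i A_{i,j} v_i$ with $0 \leq \varepsilon_j < b$; the hypothesis of the lemma guarantees that these residues coincide with the $\varepsilon_j$ in its statement, so that under $A$ the intended value of $y_j$ is $b q_j + \varepsilon_j$. By induction on $j$ I then show that, under $A$ and unit propagation, the $j$-th sorting network has exactly $b q_j + \varepsilon_j$ input literals forced true: at the base, these are the $\sum_i A_{i,0} v_i$ duplicated order-encoding bits; in the inductive step, the $q_{j-1}$ feedback bits $y_{j-1}^{b}, y_{j-1}^{2b}, \ldots, y_{j-1}^{q_{j-1} b}$ supplied by the preceding network (by the induction hypothesis together with Proposition~\ref{prop-encoding-networks}(1)) combine with the $\sum_i A_{i,j} v_i$ fresh bits from the $x_i$'s to give the required total. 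Proposition~\ref{prop-encoding-networks}(3) then propagates $y_j^l$ to true for every $l \leq b q_j + \varepsilon_j$, and the reverse direction of the sorting network's Tseytin clauses propagates $\neg y_j^l$ for the larger indices needed below.

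Parts (1) and (2) then drop out of the Tseytin clauses defining $o_j^k$. For (1), pick $l = b q_j + \varepsilon_j$ in the disjunction defining $o_j^{\varepsilon_j}$: both $y_j^l$ and $\neg y_j^{b(q_j+1)}$ are already propagated, so the Tseytin clause $y_j^l \wedge \neg y_j^{b(q_j+1)} \to o_j^{\varepsilon_j}$ fires. For (2), each disjunct in the definition of $o_j^{\varepsilon_j+1}$ is refuted: for $l = p b + \varepsilon_j + 1$ with $p < q_j$ the bit $y_j^{(p+1)b}$ is propagated true (killing $\neg y_j^{l+b-\varepsilon_j-1}$), while for $p \geq q_j$ the bit $y_j^l$ itself is propagated false. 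Hence $\neg o_j^{\varepsilon_j+1}$ becomes unit.

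Part (3) is the main obstacle. The bound clauses~(\ref{clauses-bound}) are already satisfied by the literals propagated in (2) and do not themselves trigger unit propagation directly on the $x_i$'s. The idea is that, once the $o_j^k$'s are fixed as in (1) and (2), the backward Tseytin clauses of the $o_j^k$'s combine with the established bounds on the $y_j^l$'s and cascade back through the sorting networks, forcing every duplicated input $x_i^{v_i+1}$ with $A_{ij} \neq 0$ to false. The technical difficulty is to exhibit this reverse pass as an explicit unit-propagation chain rather than semantic reasoning over models; I would accordingly prove all three parts by a simultaneous induction on levels $j = m, m-1, \ldots, 0$, so that the upper-bound information needed at level $j$ is available from the already-established conclusions at levels $j' > j$.
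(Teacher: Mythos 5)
Your overall picture (a forward pass establishing the true outputs of each network, followed by a top-down cascade) matches the right mechanism, but as written the proposal contains a genuine gap that makes parts (1) and (2) circular. Under $A$ only lower bounds are asserted, so every sorting-network input that is not forced true is merely \emph{undefined}; hence the forward pass can never produce a negative literal $\neg y_j^l$, and your claim that ``the reverse direction of the sorting network's Tseytin clauses propagates $\neg y_j^l$ for the larger indices'' is false at that stage. All negative information originates from the bound clauses (\ref{clauses-bound}), which you explicitly (and incorrectly) dismiss as ``already satisfied'' and propagation-inert: for $j=m$ the relevant clause is the unit $\neg o_m^{\varepsilon_m+1}$ (and $y_m^b$ is a constant false because the domain of $y_m$ is below $b$), and for $j<m$ it becomes unit once the positive literals $o_{j_2}^{\varepsilon_{j_2}}$, $j_2>j$, are established. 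So the causality is inverted in your part (2): $\neg o_j^{\varepsilon_j+1}$ is \emph{derived from} (\ref{clauses-bound}), and the literals $\neg y_j^{pb+\varepsilon_j+1}$, $p\geqslant q_j$, with which you propose to refute the disjuncts of $o_j^{\varepsilon_j+1}$ are consequences of $\neg o_j^{\varepsilon_j+1}$, not premises for it. Similarly, part (1) at level $j<m$ needs $\neg y_j^{b(q_j+1)}$, which only becomes available after the backward pass has traversed level $j+1$; it does not ``drop out'' of the forward pass.

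The step you flag as ``the main obstacle''---part (3)---is precisely the one you leave unproven, and it is where the work lies. The missing chain is: from $\neg o_j^{\varepsilon_j+1}$ and $\neg y_j^{b(q_j+1)}$, the Tseytin clauses of the disjunct $l=bq_j+\varepsilon_j+1$ in (\ref{def-ojk}) force $\neg y_j^{bq_j+\varepsilon_j+1}$; a domain-consistent sorting network with $bq_j+\varepsilon_j$ true inputs and a false $(bq_j+\varepsilon_j+1)$-th output then forces every remaining input to false, which yields both $\neg x_i^{v_i+1}$ for all $i$ with $A_{ij}\neq 0$ \emph{and} $\neg y_{j-1}^{bl}$ for $l>q_{j-1}$, i.e.\ exactly the negative literal needed to restart the argument one level down. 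The paper packages this cascade as an induction on $m$ that peels off the lowest digit and treats the carry bits $y_0^{bl}$ as fresh input variables of the residual constraint, so that item (3) of the induction hypothesis automatically supplies $\neg y_0^{b(\lambda+1)}$. If you keep your simultaneous downward induction you must add the corresponding statement about the carry bits to the inductive invariant and route part (2) through the bound clauses rather than through the Tseytin definitions.
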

\begin{theorem}\label{theo-network-optimization}
The encoding \SNOPT{} is consistent.
\end{theorem}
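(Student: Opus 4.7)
The plan is to reduce Theorem \ref{theo-network-optimization} to Lemma \ref{lemma-opt-consistency} plus a digit-comparison argument on base-$b$ representations. Suppose, for contradiction, that unit propagation on \SNOPT{} together with a partial assignment $A = \{x_i \geqslant v_i\}$ does not derive a conflict, yet $A$ is inconsistent with $\sum a_i x_i \leqslant \sum b^j \varepsilon_j$. By monotonicity of the linear constraint this forces $\sum_i a_i v_i > \sum_j b^j \varepsilon_j$.

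First I would introduce the ``actual digits'' of the left hand side under $A$: let $\delta_0, \ldots, \delta_m$ be the base-$b$ digits of $\sum_i a_i v_i$, with $\delta_m < b$ guaranteed by the choice of $m$. Comparing $(\delta_m, \ldots, \delta_0)$ against $(\varepsilon_m, \ldots, \varepsilon_0)$ lexicographically, there is a largest index $j^*$ with $\delta_{j^*} \neq \varepsilon_{j^*}$; necessarily $\delta_{j^*} > \varepsilon_{j^*}$, and $\delta_j = \varepsilon_j$ for every $j > j^*$.

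Next I would apply Lemma \ref{lemma-opt-consistency} taking the role of its ``$\varepsilon_j$'' to be our $\delta_j$. This gives that unit propagation forces $o_j^{\delta_j}$ true whenever $\delta_j > 0$. A small additional observation about the Tseytin encoding of (\ref{def-ojk}) then promotes this to the stronger claim that $o_j^k$ is propagated true for every $1 \leqslant k \leqslant \delta_j$: writing $y_j = qb + \delta_j$ as known to UP, for each such $k$ the Tseytin clause $\neg y_j^{qb+k} \vee y_j^{(q+1)b} \vee o_j^k$ is unit on $o_j^k$ and fires.

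The final step is to exhibit a falsified clause in (\ref{clauses-bound}). Take the clause with $j_1 = j^*$, which exists because $\varepsilon_{j^*} < \delta_{j^*} \leqslant b-1$. The literal $\neg o_{j^*}^{\varepsilon_{j^*}+1}$ is false because $\varepsilon_{j^*}+1 \leqslant \delta_{j^*}$ and $o_{j^*}^{\varepsilon_{j^*}+1}$ is propagated by the observation above. For every $j_2 > j^*$ with $\varepsilon_{j_2} > 0$, the equality $\delta_{j_2} = \varepsilon_{j_2}$ together with Lemma \ref{lemma-opt-consistency} gives $o_{j_2}^{\varepsilon_{j_2}}$ true, falsifying $\neg o_{j_2}^{\varepsilon_{j_2}}$. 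Hence the whole clause is falsified, contradicting the assumption. The main obstacle I anticipate is exactly the gap between what Lemma \ref{lemma-opt-consistency} literally states and what the proof needs: the lemma names only the single threshold $o_j^{\delta_j}$, whereas the argument requires all lower thresholds $o_j^k$ with $k \leqslant \delta_j$; the short Tseytin-clause observation above is what bridges this gap.
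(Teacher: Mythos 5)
Your overall architecture is sound and is in fact the same as the paper's: the paper's proof is an induction on $m$ that repeatedly divides by $b$ while the quotients of the left- and right-hand sides still disagree, which is precisely a search for your index $j^*$; once the quotients agree, the conflict is derived at the current bottom digit exactly as in your final step. The lexicographic setup, the existence of the clause for $j_1=j^*$ (since $\varepsilon_{j^*}<\delta_{j^*}\leqslant b-1$), and the need for the ``all lower thresholds'' strengthening are all correctly identified.

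The one step that does not go through as written is ``apply Lemma \ref{lemma-opt-consistency} taking the role of its $\varepsilon_j$ to be our $\delta_j$.'' That lemma is not a statement about the circuit computing the digits of the left-hand side in isolation: its conclusion, and its proof, depend on the clauses (\ref{clauses-bound}), which are part of the fixed formula and mention the bound's digits $\varepsilon_j$, not whatever value the sum happens to take under $A$. Concretely, to propagate $o_j^k$ to true via (\ref{def-ojk}), unit propagation needs not only $y_j^{b\lambda_j+k}$ true (which the sorting networks give for free) but also $y_j^{b(\lambda_j+1)}$ \emph{false}; the forward direction of the networks never produces negative literals, so this falsity must come from the backward cascade that starts at the unit clause $\neg o_m^{\varepsilon_m+1}$ and descends through (\ref{clauses-bound}) and the networks --- a cascade driven by the $\varepsilon$'s, not the $\delta$'s. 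Your Tseytin clause $\neg y_j^{qb+k}\vee y_j^{(q+1)b}\vee o_j^k$ is only unit once $y_j^{(q+1)b}$ has been set false, so ``writing $y_j=qb+\delta_j$ as known to UP'' silently assumes the very fact that needs proving. The repair is small and brings you back to the paper's argument: apply Lemma \ref{lemma-opt-consistency} not with substituted digits but to the constraint truncated above level $j^*$ (inputs $y_{j^*}^{b}, y_{j^*}^{2b},\ldots$ together with the $X_{i,j}$ for $j>j^*$), whose sum equals its bound precisely because $\delta_j=\varepsilon_j$ for all $j>j^*$, and whose instance of (\ref{clauses-bound}) is the subset of the encoding with $j_1>j^*$. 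Item 3 of the lemma then delivers $\neg y_{j^*}^{b(\lambda_{j^*}+1)}$, item 1 delivers $o_{j_2}^{\varepsilon_{j_2}}$ for $j_2>j^*$, and your Tseytin observation at level $j^*$ plus the falsified clause for $j_1=j^*$ finish the proof.
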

This Theorem answers an open question of \shortciteA{Een06} for
the PB case, see appendix for the full proof of the general case for LI. 

\begin{exmp}
  Consider again the constraint $C: 3x_1 + 2 x_2 + 5x_3 \leqslant 15$,
  where $x_1 \in [0,4]$, $x_2 \in [0,2]$ and $x_3 \in [0,3]$. Let us
  fix $b = 3$. Since $\sum a_i d_i = 31$, we can take $m=3$ ($31 <
  b^{m+1} = 81$).
  
  The encoding introduces Boolean variables $y_i^j$ and $o_i^j$ as
  follows (see Figure \ref{figure-example-sn-notare}):

  $$\begin{array}{rcl}
    (y_0^1, y_0^2, \ldots, y_0^{10}) &=& \sn(x_{2}^1, x_{2}^1,
    x_{2}^2, x_{2}^2, x_{3}^1, x_{3}^1, x_{3}^2, x_{3}^2,
    x_{3}^3, x_{3}^3),\\
  (y_1^1, y_1^2, \ldots, y_1^{10}) &=& \sn(y_0^3,y_0^6,y_0^9,
  x_{1}^1, x_{1}^2, x_{1}^3, x_{1}^4, x_{3}^1,
  x_{3}^2,x_{3}^3),\\
  (y_2^1, y_2^2, y_2^3) &=& \sn(y_1^3,y_1^6,y_1^9), \\
  y_3^1 &=& y_2^3.\\
  \end{array}$$

  $$
  \begin{array}{rcl}
    o_0^1 &=& (y_0^1 \wedge \neg y_0^3) \vee (y_0^4 \wedge \neg y_0^6) \vee (y_0^7 \wedge \neg y_0^9) \vee y_0^{10}, \\
    o_0^2 &=& (y_0^2 \wedge \neg y_0^3) \vee (y_0^5 \wedge \neg y_0^6) \vee (y_0^8 \wedge \neg y_0^9), \\
    o_1^1 &=& (y_1^1 \wedge \neg y_1^3) \vee (y_1^4 \wedge \neg y_1^6) \vee (y_1^7 \wedge \neg y_1^9) \vee y_1^{10}, \\
    o_1^2 &=& (y_1^2 \wedge \neg y_1^3) \vee (y_1^5 \wedge \neg y_1^6) \vee (y_1^8 \wedge \neg y_1^9), \\
    o_2^1 &=& y_2^1 \wedge \neg y_2^3, \\
    o_2^2 &=& y_2^2 \wedge \neg y_2^3), \\
    o_3^1 &=& y_3^1, \\
    o_3^2 &=& 0. \\
    
  \end{array}$$

  \begin{figure}[t]
    \begin{center}
      \includegraphics[scale=0.42]{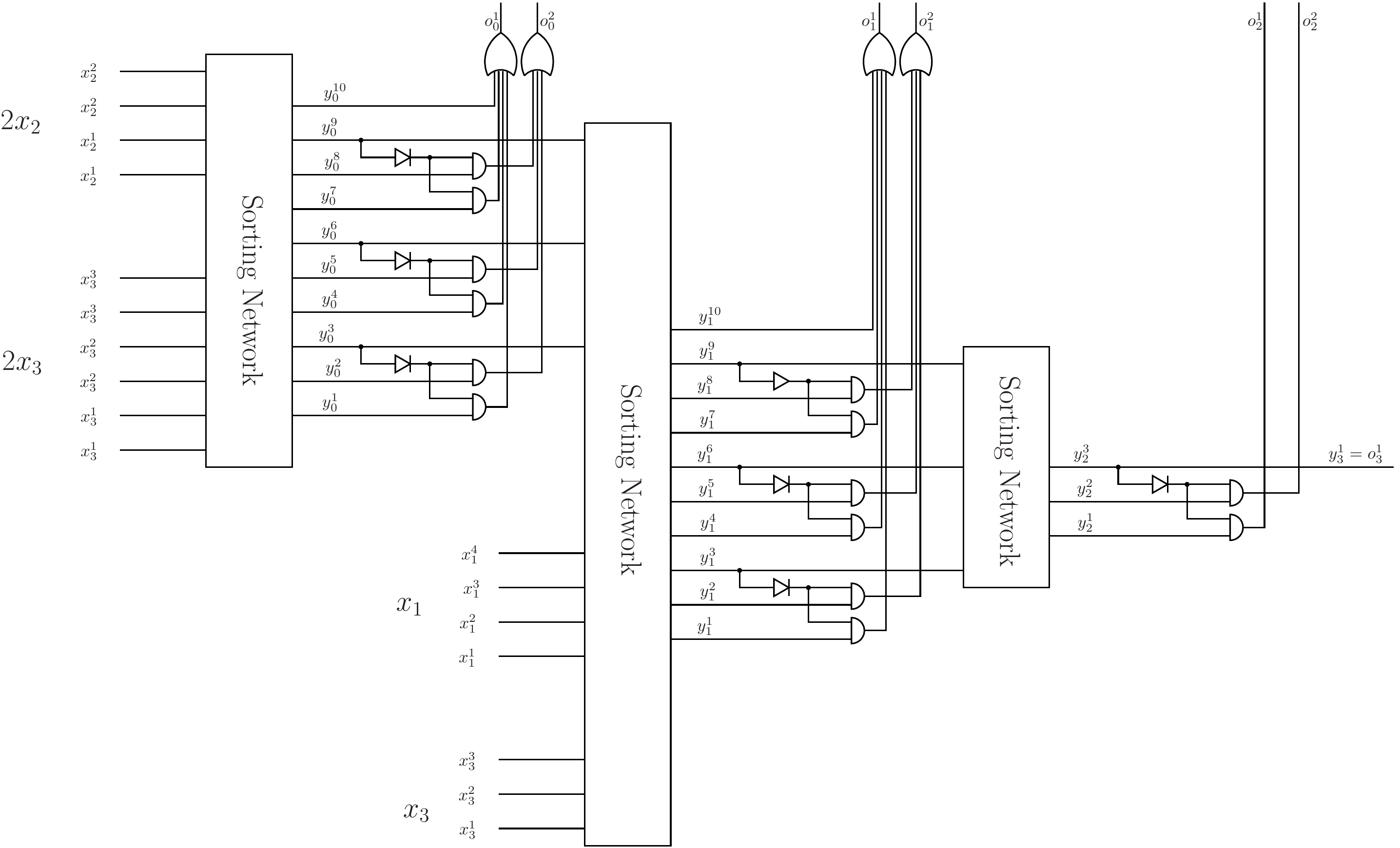}
      \caption{\label{figure-example-sn-notare} Network of the
        constraint $C: 3x_1 + 2 x_2 + 5x_3 \leqslant 15$ with base
        $b=3$ without tare.}
    \end{center}
  \end{figure}

  Since $15 = 0 \cdot 3^0 + 2 \cdot 3^1 + 1 \cdot 3^2 + 0 \cdot 3^3$,
  in this case $\varepsilon_0 = 0, \varepsilon_1 = 2, \varepsilon_2 =
  1, \varepsilon_3 = 0$. Therefore, the method introduces the clauses $$(\neg
  o_0^1 \vee \neg o_1^2 \vee \neg o_2^1) \wedge \neg o_2^2 \wedge \neg
  o_3^1$$ (see Equation (\ref{clauses-bound})).
  
  The encoding is consistent. For example, if we take the assignment
  $\{x_1 \geqslant 3, x_3 \geqslant 2\}$, the first sorting network
  has 4 true input variables (two copies of $x_3^1$ and two copies of
  $x_3^2$). Therefore, $y_0^j$ will be
  propagated for $j \leqslant 4$.

  Now, the second sorting network has 6 true input variables: $y_0^3,
  x_1^1, x_1^2, x_1^3, x_3^1$ and $x_3^2$. Therefore, $y_1^j$
  is propagated for $j \leqslant 6$.

  Finally, the third network has 2 true inputs: $y_1^3$ and $y_1^6$:
  therefore, $y_2^1$ and $y_2^2$ are propagated.

  By Equation (\ref{clauses-bound}), $\neg o_3^1$ is propagated:
  therefore, $y_2^3$ is set to false. Since $y_2^2$ is true, $o_2^2$
  is propagated. That causes a conflict in Equation
  (\ref{clauses-bound}).

  However, the encoding is not domain consistent. If we take the
  assignment $\{x_2 \geqslant 1\}$, the encoding propagates $y_0^1,
  y_0^2, \neg o_3^1, \neg o_3^2, \neg o_2^2, \neg y_2^3$ and $\neg
  y_2^2$. However, the encoding cannot propagate $\neg x_3^3$.

  If now we wish to encode the constraint $C': 3x_1 + 2x_2 + 5x_3 \leqslant 10$ we only
  have to add the clauses $$(\neg o_0^2 \vee \neg o_2^1) \wedge (\neg
  o_1^1 \vee \neg o_2^1) \wedge \neg o_2^2 \wedge \neg o_3^1.$$
\end{exmp}

\subsubsection{Practical Improvements and Size}
\label{section-pract-improvements}

In this section we describe improvements that can be applied to both \SNTARE{}
and \SNOPT. We then prove the asymptotic size for both encodings using these
improvements. 

Notice that these encodings can use any domain consistent implementation of
sorting networks; the concrete implementation or properties have not been used
in any result. Our implementation uses the networks defined by
\shortciteA{parametricCardinalityConstraint}, but this method can be
replicated with any other implementation of sorting networks.

First of all, note that we do not have to encode all the bits of
$y_m$: we only need the $b$ last bits. We can therefore
replace the sorting networks by cardinality networks: when computing
$y_{j}$, we need a $b^{m-j+1}$ cardinality network. For the lowest
values of $j$, this value is larger than the input sizes of the
network: in that case, the cardinality network is a usual sorting
network. However, for the largest values of $j$, cardinality networks
produce a more compact encoding.

Another important improvement is that we do not have to sort all the
variables: some of them are already sorted.  For instance , if we are
computing $z_1+z_2$, then $z_1^1 \leqslant z_1^2 \leqslant \ldots
\leqslant z_1^{d_1}$; and $z_2^1 \leqslant z_2^2 \leqslant \ldots
\leqslant z_2^{d_2}$. Therefore, we can replace $\cn(z_1^1, z_1^2,
\ldots, z_1^{d_1}, z_2^1, z_2^2, \ldots, z_2^{d_2})$ by $\smerge(z_1,
z_2)$.

Also notice that if $x_n$ is the tare variable, $x_n=1$ by
construction. Therefore, $\smerge(X; x_n, \ldots, x_n) = (X,x_n,
\ldots, x_n)$: that is, we can remove the simplified merges involving
the tare variable. %%\pjs{dont we have to put the ones at the front!}

Furthermore, notice that the sum $y_j = \left
\lfloor{\frac{y_{j-1}}{b}}\right \rfloor + \alineP{j}$ can be computed
in several ways (using the associativity and commutativity properties
of the sum). While the result is the same, the encoding size is not,
since each way uses simplified merge networks of different sizes.
(see Example \ref{example-different-networks}). Finding the optimal
order with respect to the size is hard; however, a greedy algorithm,
where in each step we compute the sum of the two smallest terms, 
in practice gives good results.

\begin{exmp}\label{example-different-networks}
  Consider again constraint $C: 3x_1 + 2 x_2 + 5x_3 \leqslant 15$,
  where $x_1 \in [0,4]$, $x_2 \in [0,2]$ and $x_3 \in [0,3]$. Consider
  the tare case. Figures \ref{figure-example-merge} and
  \ref{figure-example-simpMerge} contain the implementations of the
  method with different term orders in the computation of $y_j$. In
  Figure \ref{figure-example-merge}, we compute the values $y_j$
  without reordering the terms, whereas in Figure
  \ref{figure-example-simpMerge} we reorder them to generate smaller
  networks.

  \begin{figure}[t]
    \begin{center}
      \includegraphics[scale=0.4]{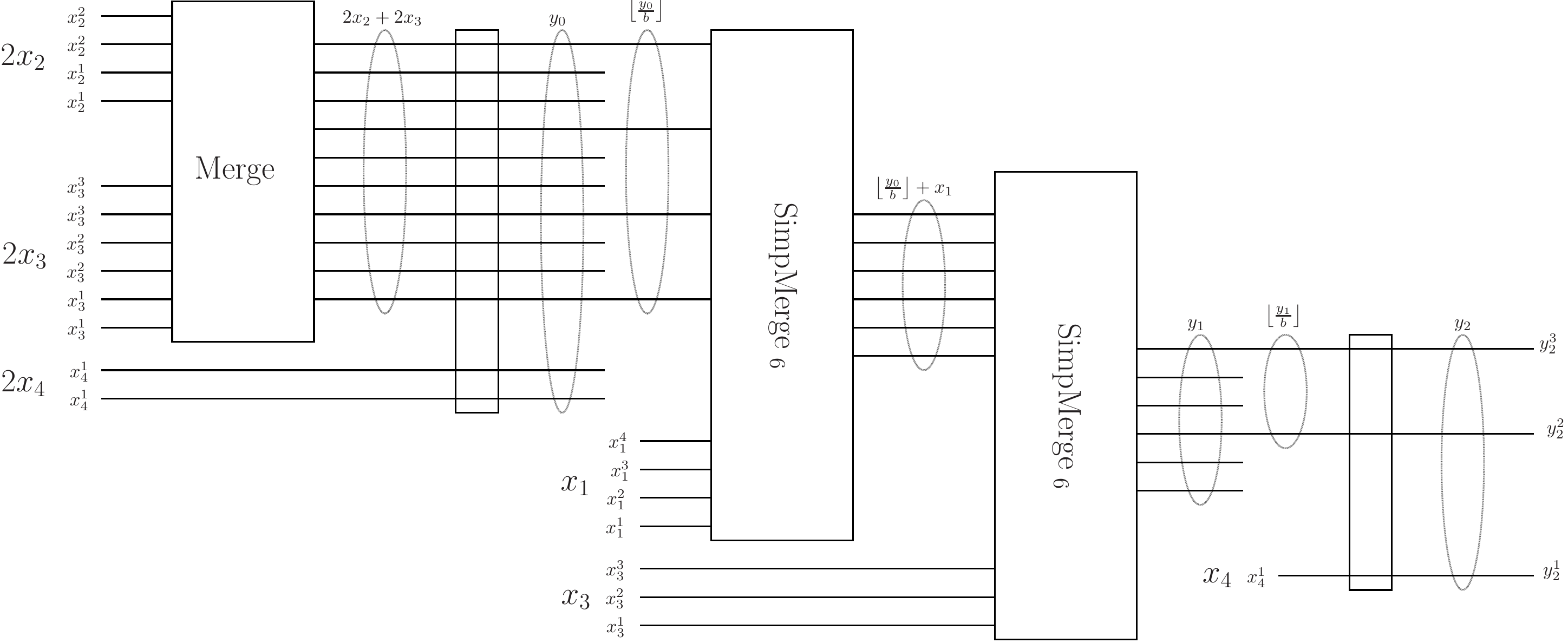}
      \caption{\label{figure-example-merge}Network of the constraint
        $3x_1 + 2 x_2 + 5x_3 + 11 x_4 \leqslant 26$, with $x_4 = 1$
        without reordering the terms.}
    \end{center}
  \end{figure}

  \begin{figure}[t]
    \begin{center}
      \includegraphics[scale=0.4]{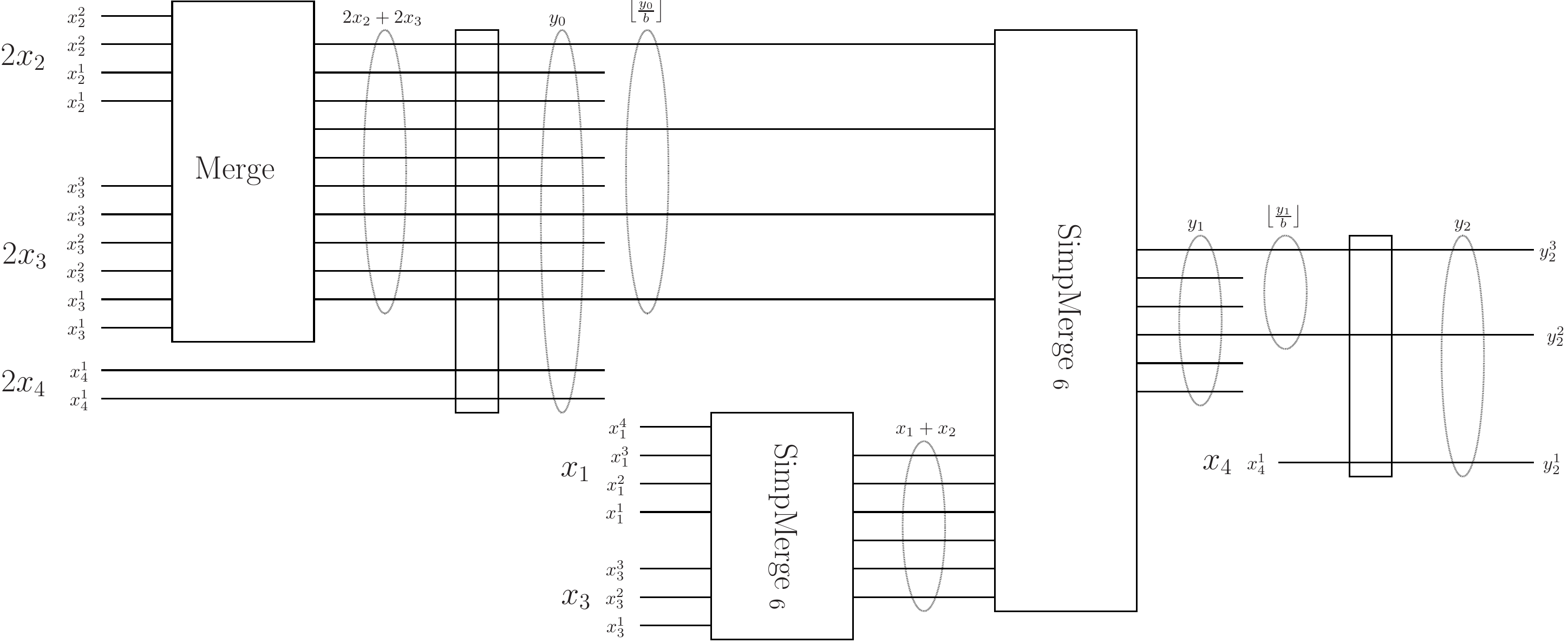}
      \caption{\label{figure-example-simpMerge}Network of the
        constraint $3x_1 + 2 x_2 + 5x_3 + 11 x_4 \leqslant 26$, with
        $x_4 = 1$, after reordering the terms.}
    \end{center}
  \end{figure}

  If we directly apply some method for encoding the sorting networks,
    %%, \shortcite<e.g.>{parametricCardinalityConstraint}, 
  some of the clauses would be subsumed by other ones. This happens because we
    have duplicated inputs in the network. The formula can be easily simplified
    once obtained. Alternatively, the use of $b=2$ produces simplified formulae
    directly.
\end{exmp}

%% \vale{I added this paragraph to explain the improvement to the tare. This way
%% we don't need to adapt the whole construction both in 4.1 and 5.4 from and sell
%% it as an improvement:} 
Finally, we can omit redundant merge networks in the
last layers. Let $\amax=\max \{a_i~|~ i \in 1..n \}$, the largest
coefficient.
%% \pjs{largest on the left hand side, or largest overall, which in
%%   our case is always the right hand side!}
We observe that the sorting networks after the $\log(\amax)$th level only merge
already sorted output with the tare. Thus, they can be omitted. This
effectively gives a better size. For instance, in Example
\ref{example-different-networks} in both networks the last sorter is
unnecessary as it merges the tare $x^1_4$ with an already sorted output. In
fact, the output setting $\neg y^3_2$ is equivalent to $\neg y^6_1$. This
discussion reduces the number of layers of sorters to $\log(\amax)$, which
is fewer
than $\log(a_0)$. 

Our implementation encodes each constraint with all the values of $b$
between 2 and 10, and it selects the most compact encoding. Almost
always it is $b=2$, but the cost of trying other bases is negligible.

%% \vale{ I don't think for the proof of the asymptotic size we need to get into
%%     SimplifiedMerge etc. I think  the size for variables and clauses is in
%%     $O(nd \log^2(nd) \log(a_0))$ ? with following compact proof. The encoding
%%     is essentially a layers of sorting networks. The number of inputs in each
%%     layer is $nd$ (worst case that all variables are inputs in that layer) and
%%     $(nd)/b$ from the previous layer. For any $b$ the number of inputs for each
%%     layer is thus bounded by $2nd$. Since there are $log(a)$ layers and the
%% size of an optimal sorting network per layer is $nd \log^2(nd)$, the total
%% network has number of comparators in $O(nd \log^2(nd) \log(a_0))$, i.e.
%% variables and clauses. }

%% \ignasi{Notice that my bound is better than yours.

%%   In fact, you're idea is not much different than my proof. But you're
%%   using sorting networks, this is, you're re-sorting every input
%%   although some of the inputs are already sorted (the output of a
%%   layer is sorted; the inputs corresponding to the same integer
%%   variable are sorted; etc.). That's why I use merge networks instead
%%   of sorting networks: we avoid a factor of $\log n$ in the encoding
%%   size.

%%   I gave a detailed proof, but we can use a shorter sketch. But since
%%   we have been writting full proofs everwhere else, I think we should
%%   do it here as well.}

\begin{theorem} \label{theo-size-SN}
    The encodings \SNTARE{} and \SNOPT{} using the improvements in this Section
    require $O(nd \log n \log d \log \amax)$ variables and clauses, where $d =
    \max \{d_i\}$.
\end{theorem}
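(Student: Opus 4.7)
The plan is to bound the total encoding size by counting the number of sorting/merging layers and the size of each layer, then multiplying.

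For the number of layers: by the final improvement at the end of Section \ref{section-pract-improvements}, only the first $O(\log \amax)$ layers contribute nontrivial clauses, since beyond layer $\log \amax$ the encoding only merges already-sorted outputs with the tare and can be omitted. So there are $O(\log \amax)$ accumulator layers to analyze, and the same count bounds \SNOPT{} since $m$ is chosen as $\log (\sum a_i d_i) = O(\log n + \log d + \log \amax)$ but the informative layers after the contributions stop are similarly absorbable.

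For the size per layer: at layer $j$, the input consists of the order-encoding bits $x_i^k$ from each variable $x_i$ with $A_{i,j} \neq 0$ (at most $d$ bits per variable, duplicated $A_{i,j} < b$ times, which is a constant factor for $b = 2$), together with the carry from layer $j-1$. Rather than sort these $O(nd)$ bits from scratch, the improvements direct us to use a balanced binary tree of simplified merges on the $n$ presorted per-variable blocks of length at most $d$, via the simplified-merge construction of \shortciteA{parametricCardinalityConstraint}. Combined with the cardinality-network cap that limits the output size at each layer to what is actually needed downstream (only the top $b$ digits of $y_m$ and the carry $\lfloor y_j/b\rfloor$ for the next layer), one bounds the per-layer size by $O(nd \log n \log d)$: each bit participates in $O(\log n)$ merges along the merge tree, and each capped merge costs only $O(\log d)$ per surviving bit. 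The carry from layer $j-1$ is absorbed into the same estimate since its length is strictly smaller than the sum of the new per-variable blocks at layer $j$.

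Multiplying the per-layer bound by the number of layers yields the claimed $O(nd \log n \log d \log \amax)$ variables and clauses. For \SNOPT, the extra $o_j^k$ variables defined by equation (\ref{def-ojk}) and the clauses of equation (\ref{clauses-bound}) contribute only $O(b \log \amax)$ additional variables and $O(b^2 \log^2 \amax)$ clauses, which is dominated by the main term once $b$ is treated as a constant. The main obstacle is isolating the two logarithmic savings: obtaining $\log n$ (rather than $\log(nd)$) from exploiting that each variable's bits are already sorted into a single block, and $\log d$ (rather than $\log(nd)$) from the cardinality cap at each layer. A naive analysis of odd-even sorting on $nd$ unsorted inputs would yield $\log^2(nd)$, so the argument hinges on carefully tracking how the simplified merges of \shortciteA{parametricCardinalityConstraint} propagate the output-width bounds through the merge tree at every layer.
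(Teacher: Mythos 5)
Your proposal follows essentially the same route as the paper's proof: bound the number of contributing layers by $O(\log \amax)$ via the omission of the redundant final merges, bound each layer by a balanced tree of simplified merges over the $n$ presorted order-encoded blocks of width at most $d$ (absorbing the carry, whose width is dominated by that of the new inputs), and observe that the $o_j^k$ machinery is lower order. The only slip is your count for the $o_j^k$ definitions: each one is a Tseytin encoding of a disjunction of up to $e_j/b$ conjunctions with $e_j \leqslant nd$, so together they contribute $O(nd \log \amax)$ variables and clauses rather than $O(b \log \amax)$ --- still dominated by the main term, so the conclusion is unaffected.
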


\subsection{Encoding Linear Integer Constraints through Partial Sums}
\label{section-sugar}
In this section we explain the encoding of LI constraints 
of \shortciteA{Sugar}. 
Basically, the encoding introduces integer variables
representing the partial sums $s_i = a_i x_i + a_2 x_2 + \cdots + a_i
x_i$, which are encoded with the order encoding, and then simply
encodes $s_n \leqslant a_0$. This encoding is called \emph{Support
  Encoding}, since it encodes the support of the partial sums.

These partial sums can be easily encoded in a recursive way, this is,
using that $s_i = s_{i-1} + a_i x_i$. Since, we are only interested in
the positive polarity of $\ordEnc(s_i)$ (only the lower bounds of the
variables $x_i$ propagate), %, see Remark \ref{remark-lb}), 
we can simply
encode these equalities with the clauses $$s_{i-1} \geqslant b \wedge
x_i \geqslant c \rightarrow s_i \geqslant b + a_i c.$$ 
The encoding can be simplified as in ~\shortcite{Sugar} since some of these
clauses can be subsumed. All in all, the encoding needs $O(n a_0)$
variables and $O(n d a_0)$ clauses in the worst case, where $d$ is the
size of the largest domain of the variables $x_i$.

The encoding is very similar to the MDD, \MDD, encoding defined in Section
\ref{section-MDD}. In fact, the clauses it introduces are identical to
the MDD encoding from Section
\ref{subsection-encoding-MDDs-into-CNF}. Since this encoding does not
check if two bounds of a partial sum are equivalent, the encoding is
indeed equivalent to the MDD encoding when a non-reduced MDD is used.

In general, however, {\Sup} creates redundant clauses and variables
which are not created using \MDD. For instance, if $a_{i+1}, a_{i+2},
\ldots, a_n$ and $a_0$ are even, $s_i \geqslant 3$ and $s_i \geqslant
4$ are equivalent in the constraint. A reduced MDD will merge these
two variables into a single node, while {\Sup} creates two different
variables. In that aspect, our encoding \MDD{} from Section
\ref{section-MDD} is an improved version of \Sup.

\subsection{Encoding Linear Integer Constraints through the Logarithmic Encoding}
\label{section-logarithmic}

In the previous sections we have seen some methods to handle LI
constraints when integer variables are encoded with the order
encoding. Here, we explain the different encodings when the
logarithmic encoding is used. First, we explain the different
possibilities described in the literature as well as some
generalizations of PB encodings that work as well with LI
constraints. Finally, we introduce a new method, \BDDDec{}, 
more compact than
most of the state-of-the-art encodings, but with a reasonable
propagation strength.

%\subsection{Transforming a LI constraint into a PB}\label{sectionLIintoPB}
Given a linear constraint
$$C: a_1 x_1 + a_2 x_2 + \cdots + a_n x_n \leqslant a_0,$$ let $[y_{b,i}^0,
  y_{b,i}^1, \ldots, y_{b,i}^{\delta_i}]$ be $\logEnc(x_i)$ for $1
\leqslant i \leqslant n$. In other words,

$$x_i = y_{b,i}^0 + 2 y_{b,i}^1 + \cdots +
2^{\delta_i}y_{b,i}^{\delta_i}, \text{ where } \delta_i = \log d_i
\ \forall i \in \{1, 2, \ldots, n \}.$$

\subsubsection{Linear constraints as multiplication by a constant and summation}

\label{huang}

Perhaps the most obvious way to encode linear constraints is using a binary
encoding of integers and using ripple carry adders to encode both addition
and multiplication by a constant.  This is the method used by
both FznTini~\shortcite{DBLP:conf/cp/Huang08} and Picat SAT by ~\shortciteA{Zhou2017}.
Interestingly more complex adder circuits like carry look ahead adders,
or parallel prefix adders, introduced by circuit designers to make addition
circuits faster, appear to be worse for encoding arithmetic in
SAT%~\cite{ppdp09}. \vale{FIX REFERENCE}

In these methods the linear inequality $C$
is broken into  $z_i = a_i x_i, 1 \leq i \leq n$ and $z_1 + \cdots + z_n \leq a_0$,
and additions and multiplication by a constant are encoded using adder
circuits.

A ripple carry adder encoding the addition of two non-negative $n$-bit logarithmic
integers $u$ and $v$, $w = u + v$, where $\logEnc(u) = [u_b^0, u_b^1,
  \ldots, u_b^{n-1}]$,  $\logEnc(v) = [v_b^0, v_b^1,
  \ldots, v_b^{n-1}]$ and $\logEnc(w) = [w_b^0, w_b^1,
  \ldots, w_b^{n-1}]$  is simply
$$
\begin{array}{rcl}
  (w_b^0, c_1) & = & HA(u_b^0, v_b^0) \\
  (w_b^1, c_2) & = & FA(u_b^1, v_b^1, c_1) \\
  & \vdots  \\
  (w_b^{n-1}, c_n) & = & FA(u_b^{n-1}, v_b^{n-1}, c_{n-1}) \\
\end{array}
$$
where $c_n$ represents the overflow bit. It can be ignored (to implement
fixed width arithmetic), or set to 0 (to force no overflow to occur).
Repeated addition is achieved by recursively breaking the term $z_1 + \cdots
+ z_n$ into two almost equal halves and summing the results of the addition
of the halves.

Multiplication by a constant is implemented by binary addition of shifted
inputs.  Let $w = a u$ where $a$ is a constant. The encoding is
$w = \sum_{i = 0..n-1, bit(i,a) = 1} 2^i u$ where $2^i u$ is calculated by
right shifting the encoding for $u$ $i$ times, and the summation is encoded
as above.

\subsubsection{Linear constraints transformed to psuedo-Boolean constraints}
\label{sectionLIintoPB}

In the remaning methods, the constraint is encoded in two steps: first, it is
transformed into a PB constraint using \emph{logarithmic encodings} of
the integer variables. The PB constraint can subsequently be
translated to CNF yielding a complete method, with one of the methods
explained in Section~\ref{section-PB}.

\ignore{
  Given a linear constraint
$$C: a_1 x_1 + a_2 x_2 + \cdots + a_n x_n \leqslant a_0,$$ let $[y_{b,i}^0,
  y_{b,i}^1, \ldots, y_{b,i}^{\delta_i}]$ be $\logEnc(x_i)$ for $1
\leqslant i \leqslant n$. In other words,

$$x_i = y_{b,i}^0 + 2 y_{b,i}^1 + \cdots +
2^{\delta_i}y_{b,i}^{\delta_i}, \text{ where } \delta_i = \log d_i
\ \forall i \in \{1, 2, \ldots, n \}.$$
}
Give the logarithmic encoding above the linear term is equivalent to
$$a_1 x_1 + a_2 x_2 + \cdots + a_n x_n = a_1 (y_{b,1}^0 + 2y_{b,1}^1 +
\cdots + 2^{\delta_1} y_{b,1}^{\delta_1}) + \cdots + a_n (y_{b,n}^0 +
2y_{b,n}^1 + \cdots + 2^{\delta_n}y_{b,n}^{\delta_n} ),$$ so $C$
is equivalent to
$$C': \sum_{i=0}^n \sum_{j=0}^{\delta_i} 2^j a_i y_{b,i}^j \leqslant a_0.$$

Notice that $C'$ is a pseudo-Boolean constraint, and it can be encoded
with any pseudo-Boolean encoding. The size of $C'$ is $O(n \log d)$, where
$d = \max \{ d_i \}.$ However, the method is not
consistent, as it is shown in the following example:
\begin{exmp}\label{example-logConsistent}
Consider the constraint $$C: 4x_1 + 5 x_2 + 6 x_3\leqslant 14, \text{ where }
x_1, x_2, x_3 \in [0,2].$$ Using the method explained above, the constraint
is transformed into $$C':4 y_{b,1}^0 + 8 y_{b,1}^1 + 5y_{b,2}^0 + 10 y_{b,2}^1 + 6 y_{b,3}^0 + 12 y_{b,3}^1
\leqslant 14.$$

Notice that $C'$ is equivalent to the following set of clauses
$$\begin{array}{lllllll}
  \{ & \neg y_{b,1}^0 \lor \neg y_{b,1}^1 \lor \neg y_{b,2}^0,
  &\quad& \neg y_{b,1}^0 \lor \neg y_{b,1}^1 \lor \neg y_{b,3}^0,
  &\quad&\neg y_{b,1}^0 \lor \neg y_{b,2}^0 \lor \neg y_{b,2}^1,
  \\& \neg y_{b,1}^0 \lor \neg y_{b,2}^0 \lor \neg y_{b,3}^0,
  && \neg y_{b,1}^0 \lor \neg y_{b,2}^1 \lor \neg y_{b,3}^0,
  && \neg y_{b,1}^0 \lor \neg y_{b,3}^1,
  \\& \neg y_{b,1}^1 \lor \neg y_{b,2}^1,
  && \neg y_{b,1}^1 \lor \neg y_{b,2}^0 \lor \neg y_{b,3}^0,
  && \neg y_{b,1}^1 \lor \neg y_{b,3}^1,
  \\& \neg y_{b,2}^0 \lor \neg y_{b,2}^1,
  && \neg y_{b,2}^0 \lor \neg y_{b,3}^1,
  && \neg y_{b,2}^1 \lor \neg y_{b,3}^0,
  \\& \neg y_{b,2}^1 \lor \neg y_{b,3}^1,
  && \neg y_{b,3}^0 \lor \neg y_{b,3}^1 &&&\}
\end{array}$$

The constraint is unsatisfiable if $x_1, x_2, x_3 \geqslant
1$. However, the logarithmic encoding of $x_1, x_2, x_3 \geqslant 1$
is the empty assignment, so unit propagation cannot find any
inconsistency. 
\end{exmp}

The resulting PB constraint can be encoded with any method explained
in Section~\ref{section-PB}. Let us consider the two main approaches
for encoding PB constraints: SNs and BDDs.

Since the resulting method is not consistent, instead of an SN we can use
adders: the propagation strength is similar but the resulting encoding
is much smaller. The resulting encoding, {\Adder}, is the most compact
encoding, since it only needs $O(n \log d \log \amax)$ variables and
clauses where $d = \max \{d_i\}$ and $\amax = \max \{a_i~|~i\geq 1 \}$, but it is
also the worst encoding in terms of propagation strength.

Regarding the BDDs methods, ~\shortciteA{bartzis2006efficient}\label{bartzis}
realized that the BDD size of $C'$ can be reduced by reordering the constraint.
The resulting method, {\BDD}, requires $O(n a_0 \log d)$.

In this paper we improved Bartzis and Bultan's method by also
decomposing the coefficients of $C'$ before reordering.  That generates
a new encoding that we call {\BDDDec}.

\begin{exmp}\label{example-BDDDec}
  Consider the LI constraint $3x_1 + 2x_2 + 5 x_3 \leqslant 15$.
  After encoding the integer variables with the logarithmic encoding,
  the constraint becomes the pseudo-Boolean $3 y_{b,1}^0 + 6 y_{b,1}^1
  + 12 y_{b,1}^2 + 2 y_{b,2}^0 + 4 y_{b,2}^1 + 5 x_{b ,3}^0 + 10
  y_{b,3}^1 \leqslant 15$. 
  \shortciteA{bartzis2006efficient} construct the BDD of the
  pseudo-Boolean $2 y_{b,2}^0 + 3 y_{b,1}^0 + 4 y_{b,2}^1 + 5
  y_{b,3}^0 + 6 y_{b,1}^1 + 10 y_{b,3}^1 + 12 y_{b,1}^2 \leqslant 15.$
  Our method decomposes the coefficients (i.e., considers $y_{b,1}^0 +
  2 y_{b,1}^0$ instead of $3 y_{b,1}^0$) and builds the resulting BDD;
  so we encode the constraint $ y_{b,1}^0 + y_{b,3}^0 + 2 y_{b,2}^0 +
  2 y_{b,1}^0 + 2 y_{b,1}^1 + 2 y_{b,3}^1 + 4 y_{b,2}^1 + 4 y_{b,3}^0
  + 4 y_{b,1}^1 + 4 y_{b,1}^2 + 8 y_{b,3}^1 + 8 y_{b,1}^2 \leqslant
  15.$ 
\end{exmp}

Formally, the {\BDDDec} method encodes LI constraint $a_1 x_1 + \cdots +
a_n x_n \leqslant a_0$ with $x_i \in [0,d_i], 1 \leq i \leq n$ by
first creating the PB constraint
$$\sum_{i = 1}^n \sum_{\substack{0 \leqslant j \leqslant \lfloor
    \log_2 d_i \rfloor\\ (d_i \div 2^j) \!\!\!\!\mod 2 = 1}}
\sum_{\substack{0 \leqslant k \leqslant \lfloor \log_2 a_i
    \rfloor\\ (a_i \div 2^k) \!\!\!\! \mod 2 = 1}} 2^{j+k} \times
y_{b,i}^j \leqslant a_0$$
over the logarithmic encoding variables $y_b$ 
and encoding this using the state-of-the-art encoding for PB 
constraints given by \shortciteA{Abio12}.

\begin{theorem}
  Given a LI constraint $C: a_1 x_1 + a_2 x_2 + \cdots + a_n x_n
  \leqslant a_0,$ {\BDDDec} encodes $C$ with $O(n^2 \log d \log \amax)$,
  where $\amax$ is the largest coefficient and $d$ is the largest domain of
  the integer variables $x_1, x_2, \ldots, x_n$. 
\end{theorem}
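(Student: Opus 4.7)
The plan is to combine two bounds: the size of the pseudo-Boolean constraint that {\BDDDec} produces as an intermediate, and the size of the BDD that \shortciteA{Abio12} builds over it (which contributes two clauses per node).

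First I would count the PB terms. Replacing each $x_i$ by its logarithmic encoding uses at most $1+\lfloor\log d_i\rfloor$ Booleans; expanding each $a_i$ in binary then splits each Boolean variable into at most $1+\lfloor\log a_i\rfloor$ power-of-two terms. So the decomposed PB has at most $\sum_{i=1}^n (1+\lfloor\log d_i\rfloor)(1+\lfloor\log a_i\rfloor) = O(n\log d\log\amax)$ terms, each with a power-of-two coefficient. This bounds the number of BDD levels.

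Second I would bound the BDD width. Sorting the terms by ascending coefficient, suppose that at level $\ell$ the next remaining coefficient is $2^p$ and hence all remaining coefficients are powers of two at least $2^p$. Every completion of the assignment therefore contributes a multiple of $2^p$ to the remaining sum $R$, so the constraint $s_\ell + R \leqslant a_0$ reduces to $R \leqslant 2^p \lfloor (a_0 - s_\ell)/2^p \rfloor$. Two partial sums $s_\ell, s'_\ell$ thus collapse in the reduced BDD whenever $\lfloor (a_0 - s_\ell)/2^p\rfloor = \lfloor (a_0 - s'_\ell)/2^p\rfloor$ within the reachable range of $a_0 - s_\ell$. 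Counting distinct representatives, and exploiting that each of the $n$ original integer variables contributes only $O(1)$ distinct per-variable states to the partial sum once quotiented by $2^p$, gives a width of $O(n)$ per level. Multiplying by $N = O(n \log d \log \amax)$ levels yields $O(n^2 \log d \log \amax)$ BDD nodes, and the two-clauses-per-node encoding of \shortciteA{Abio12} then gives the claimed bound on the total number of clauses.

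The main obstacle is making the $O(n)$ width bound rigorous. After decomposition the same logarithmic bit $y_{b,i}^j$ may appear in many terms with different coefficients, so the naive count of reachable partial sums is much larger than $n$. The proof needs the right equivalence on partial sums: one must show that the significant information at level $\ell$ is essentially the single integer $\lfloor (a_0 - s_\ell)/2^p \rfloor$ and that the set of its reachable values grows only linearly in $n$. I expect this counting step, rather than the surface-level manipulation of the decomposition, to be the technical core of the proof.
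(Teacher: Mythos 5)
You correctly identify the term count --- the decomposed PB has $O(n\log d\log\amax)$ power-of-two terms --- and you correctly identify the per-level width bound as the crux. But be aware that the paper does not actually prove that crux: its entire argument is the observation that the resulting PB has only power-of-two coefficients, followed by an appeal to the BDD-size results of Ab\'io et al.\ (2012) for such constraints under the increasing-coefficient order. So you are attempting to supply a self-contained argument where the paper supplies a citation, and your attempt stalls exactly at the step you flag as ``the technical core.''

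That step is not merely unproven; as formulated it is false. Fix a level whose next coefficient is $2^p$ and consider the decided terms owned by a single original variable $x_i$: these are the terms $2^{j+k}y_{b,i}^j$ with $j+k<p$, and their sum is $\sigma_i=\sum_{k\in\mathrm{bits}(a_i)}2^k\,(x_i\bmod 2^{p-k})$. A single variable contributes up to $\min(1+\log d_i,\ \#\mathrm{bits}(a_i))$ distinct terms of \emph{each} coefficient value $2^v$ (one per pair $(j,k)$ with $j+k=v$), so $\sigma_i$ can reach $\Theta\bigl(\min(\log a_i,\log d_i)\cdot 2^p\bigr)$ and $\lfloor\sigma_i/2^p\rfloor$ ranges over $\Theta(\min(\log\amax,\log d))$ values, not $O(1)$. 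The width delivered by your equivalence is therefore $O(n\min(\log\amax,\log d))$ per level, which yields $O(n^2\log d\log\amax\cdot\min(\log d,\log\amax))$ nodes --- an extra logarithmic factor over the stated bound --- while the generic quadratic bound for power-of-two coefficients gives only $O(N^2)=O(n^2\log^2 d\log^2\amax)$. To close the gap you would need either a strictly finer equivalence on partial sums than quotienting by $2^p$, or to rest the width bound on the specific polynomial-size theorems for power-of-two-coefficient BDDs that the paper cites rather than on a per-variable $O(1)$ claim.
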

Notice that {\BDDDec} encodes a PB with only power-of-two
coefficients. Therefore, the theorem follows immediately from results
of \shortciteA{Abio12}.
%\pjs{Sketch or comment like: Follows immediately from results of  Abio\etal~\cite{Abio12}.}

Notice that, since $n \log \amax \ll a_0$, {\BDDDec} is more compact
than {\BDD}.

\section{Experimental Results}
\label{section-experiments}

In this section we experimentally compare the main encodings of PB and
LI constraints. We also want to check if our improvements work in
practice: that is, if the preprocessing method explained in Section
\ref{section-groupingCoeffs} improves some of the encodings, if {\MDD}
improves {\Sup} and if {\BDDDec} improves {\BDD}.

All experiments were performed in a 2x2GHz Intel Quad Core Xeon E5405,
with 2x6MB of Cache and 16 GB of RAM. The Barcelogic SAT solver of 
\shortciteA{Bofilletal2008CAV} was used for all the SAT-based methods. We also
compare against lazy clause generation (LCG) ~\shortcite{lazyj} approaches
which directly propagate linear constraints, and explain this propagation
implemented in the Barcelogic SAT solver. 
We also compare against lazy decomposition 
(LD)~\shortcite{DBLP:conf/cp/AbioS12,encodeOrPropagate} methods, which use
LCG propagation by default for all linear constraints, but during runtime
decompose the most important linear constraints using some encoding, also
implemented in the Barcelogic SAT solver. 

Before commenting upon the results, let us explain the different families of
benchmarks used.  We use both PB benchmarks and general LI benchmarks.

\subsection{Benchmarks}
\subsubsection{Pseudo-Boolean Benchmarks}
\paragraph{RCPSP}%\label{section-RCPSP}
Resource-constrained project scheduling problem~\shortcite{Blazewicz198311}
(RCPSP) is one of the most studied scheduling problem. It consists
of tasks consuming one or more resources, precedences between some
tasks, and resources. Here we consider the case of non-preemptive
tasks and renewable resources with a constant resource capacity over
the planning horizon. A solution is a schedule of all tasks so that
all precedences and resource constraints are satisfied.

The objective of RCPSP is to find a solution minimizing the
makespan. The problem is encoded the same way as by  
\shortciteA{SchuttEtal2009CP},
resulting in one pseudo-Boolean constraints per resource and time
slot. These PB constraints are then encoded with the different
methods. Here we have considered the 2040 original RCPSP problems from
PSPlib~\shortcite{PSPlib}.

%% Figure \ref{modelRCPSP} contains the MiniZinc model of this problem,
%% as it appeared at the MiniZinc Challenge 2013
%% (\url{http://www.minizinc.org/challenge2013/probs/rcpsp/rcpsp.mzn}).

%% The results of these benchmarks will show the \emph{quality} average
%% after $X$ seconds. Quality is computed by dividing the cost of the
%% best known solution by the cost of the current solution of the method;
%% therefore, quality = 0 if no solution has been found, and quality = 1
%% when the best solution has been found.

%% \begin{lrbox}{\boxRCPSP}%
%%   \begin{lstlisting}[language=minizinc]
%% include "globals.mzn";

%% %-----------------------------------------------------------------------------%
%% % Model parameters.
%% %
%% % Resource parameters
%% %
%% int: n_res;					% The number of resources
%% set of int: Res = 1..n_res;			% Set of resources
%% array [Res] of int: rc;				% The resource capacities
%% % Task parameters
%% %
%% int: n_tasks;					% The number of tasks
%% set of int: Tasks = 1..n_tasks;			% Set of tasks
%% array [Tasks] of int: d;			% The task durations
%% int: sum_d = sum(d);    			% The total duration
%% array [Res, Tasks] of int: rr;     		% The resource requirements
%% array [Tasks] of set of int: suc;		% The task successors

%% %-----------------------------------------------------------------------------%
%% % Model variables.
%% %
%% array [Tasks] of var 0..sum_d: s;		% The start times
%% var 0..sum_d: objective;			% The total end time (makespan)

%% %-----------------------------------------------------------------------------%
%% % Constraints.
%% %
%% % successor constraints
%% constraint
%% forall(i in Tasks, j in suc[i])(
%% s[i] + d[i] <= s[j]
%% );

%% % Redundant non-overlapping constraints
%% constraint
%% forall(i, j in Tasks where i < j)(
%% if exists(r in Res)(rr[r,i] + rr[r,j] > rc[r]) then
%% let {var bool: b} in (
%% (    b  -> s[i] + d[i] <= s[j])
%% /\	(not(b) -> s[j] + d[j] <= s[i])
%% )
%% else
%% true
%% endif
%% );

%% % cumulative resource constraints
%% constraint
%% forall(r in Res)(
%% cumulative(s, d, [rr[r,i] | i in Tasks], rc[r])
%% );

%% % makespan constraints
%% constraint
%% forall(i in Tasks)(
%% s[i] + d[i] <= objective
%% );

%% %-----------------------------------------------------------------------------%
%% % Objective.
%% %
%% solve minimize objective;
%% %-----------------------------------------------------------------------------%
%%   \end{lstlisting}
%% \end{lrbox}%

%% \begin{figure}[t]
%%   \scalebox{0.6}{\usebox{\boxRCPSP}}
%%   \caption{\label{modelRCPSP}RCPSP MiniZinc model}
%% \end{figure}

\paragraph{Pseudo-Boolean Competition 2015}%\label{PBComp}
Another set of problems we have considered is the benchmarks from the
Pseudo-Boolean competition 2015
(\url{http://pbeva.computational-logic.org/}). We have considered the
benchmarks from SMALL-INT optimization which contain Pseudo-Booleans
(this is, we have removed benchmarks with only cardinality constraints
or only clauses).

We have filtered some benchmarks that can be trivially solved by any
method: from the 6266 benchmarks available, we have selected 3993 that
cannot be solved by the PB solver \emph{Clasp} \shortcite{clasp} in 15
seconds. From these benchmarks, we have randomly selected 500.

\paragraph{Sport Leagues Scheduling} %\label{section-sportScheduling}

Another experiment considers scheduling a double round-robin sports
league of $N$ teams. All teams meet each other once in the first $N -
1$ weeks and again in the second $N -1$ weeks, with exactly one match
per team each week. A given pair of teams must play at the home of one
team in one half, and at the home of the other in the other half, and
such matches must be spaced at least a certain minimal number of weeks
apart. Additional constraints include, e.g., that no team ever plays
at home (or away) three times in a row, other (public order, sportive,
TV revenues) constraints, blocking given matches on given days, etc.

Additionally, the different teams can propose a set of constraints
with some importance (low, medium or high). It is desired not only to
maximize the number of these constraints satisfied, but also to assure
that at least some of the constraints of every team are
satisfied. More information can be found in the thesis of \shortciteA{thesisAbio}.

Low-importance constraints are given a weight of 1; medium-importance,
5, and high-importance, 10. For every constraint proposed by a team
$i$, a new Boolean variable $x_{i,j}$ is created. This variable is set
to true if the constraint is violated. For every team, a
pseudo-Boolean constraint $\sum_j w_{i,j} x_{i,j} \leqslant K_i$ is
imposed. The objective function to minimize is $\sum_i \sum_j w_{i,j}
x_{i,j}$. The data is based on real-life instances.
%% Figure \ref{modelSportLeague} contains the MiniZinc model of this
%% problem.

We have considered 10 different problems with 20 random seeds. In all
the problems, the optimal value was found around 30.
%
%% The results of these benchmarks show the number of instances that
%% some method has found some solution after $X$ seconds, or the time
%% that some method has found a solution of a given cost after that
%% time.
%

%% \begin{lrbox}{\boxSportLeagueA}%
%%   \begin{lstlisting}[language=minizinc]
%% include "globals.mzn";
%% %-----------------------------------------------------------------------------%
%% % Definitions
%% %
%% set of int: home_away = 1..2;
%% int: home=1;
%% int: away=2;
%% set of int: type_of_return = 1..3;
%% int: free = 1;
%% int: per_slots = 2;
%% int: mirrored = 3;

%% %-----------------------------------------------------------------------------%
%% % Model parameters.
%% %
%% int: n_teams;                                      % Number of teams
%% set of int: Teams = 1..n_teams;                    % Set of all teams
%% int: n_sl_first = n_teams - 1;                     % Slots in the first half
%% set of int: Sl_First = 1..n_sl_first;              % Set of slots in the 1st
%% int: n_sl_total = 2 * n_sl_first;                  % Slots in the competition
%% set of int: Sl_Total = 1..n_sl_total;              % Set of all slots
%% set of int: Sl_Sec = n_sl_first+1..n_sl_total;     % Set of slots in the 2nd
%% type_of_return: typeOfReturn;                      % Free, by slots or mirrored
%% int: min_dist;                                     % Min. distance of return
%% bool: triples_allowed;                             % Triples allowed?
%% int: max_doubles;                                  % Max number of doubles
%% set of int: slots_no_doubles;                      % Slots in which doubles are
%%                                                    % not allowed.

%% int: n_matches_blocked;                            % Matches not allowed; i.e.,
%% array [1..n_matches_blocked,  1..3] of int:        % if mb[i,1]=a, nb[i,2]=b
%% matches_blocked;                                   % and mb[i,3]=c, a and b
%%                                                    % cannot meet the slot c.

%% int: n_away_matches_blocked;                       % Matches not allowed away;
%% array [1..n_away_matches_blocked,  1..3] of int:   % ie, if amb[i,1]=a,
%% away_matches_blocked;                              % anb[i,2]=b and mb[i,3]=c,
%%                                                    % a cannot play at b's 
%% 						   % venue in the slot c.

%% int: n_at_home;                                    % Teams that must play at
%% array [1..n_at_home, 1..2] of int: at_home;        % home; i.e., if ah[i,1]=a
%%                                                    % and ah[i,2]=b, team a
%%                                                    % must play at home in
%%                                                    % slot b.
%% int: n_not_at_home;                                % Similarly, teams that
%% array [1..n_not_at_home, 1..2] of int: not_at_home;% must play away.

%% array[Teams] of set of int: top_teams;             % for a team x, tt[i]
%%                                                    % contains the hardest
%%                                                    % opponents of i. Team i
%%                                                    % cannot meet two of these
%%                                                    % teams in a row.

%% int: n_compl;                                      % Teams sharing their venue:
%% array [1..n_compl, 1..2] of int: compl;            % They cannot play at home
%%                                                    % in the same slot.

%% int: n_geo_cts;                                    % Teams in a small area.
%% array [1..n_geo_cts] of int: geo_bounds;           % If geo_b[N] = K, then
%% array [1..n_geo_cts] of set of int: geographic;    % every slot at most K teams
%%                                                    % from geo[N] can play at
%%                                                    % their venue.

%% int: n_top_matches;                                % For every i, tm[i] contains
%% array [1..n_top_matches, Sl_First, 1..2]           % N-1 matches which must
%%       of int: top_matches;                         % take place in different
%%                                                    % slots.
%%   \end{lstlisting}
%% \end{lrbox}%
%% \begin{lrbox}{\boxSportLeagueB}%
%%   \begin{lstlisting}[language=minizinc]
%% int: n_soft_cts;                                   % Soft constraints. sc[.,1]
%% array [1..n_soft_cts, 1..5] of int: soft_cts;      % is the team proposing the
%%                                                    % ct. sc[.,2] is the opp team
%%                                                    % sc[.,3] is the slot. sc[.,4]
%%                                                    % is the weight and sc[.,5]
%%                                                    % is the type of ct:
%%                                                    % 1 -> match not allowed
%%                                                    % 2 -> match not allowed away
%%                                                    % 3 -> play at home.
%%                                                    % 4 -> play away

%% array [Teams] of int: max_weight;                  % Bound of the soft cts
%%                                                    % that can be violated
%% %-----------------------------------------------------------------------------%
%% % Model variables.
%% %
%% array [Teams, Sl_Total] of var Teams: opp;          % Opponents
%% array [Teams, Sl_Total] of var home_away: pat;      % Pattern
%% array [Teams, Sl_Total] of var bool: doubles;       % Doubles (breaks)
%% array [Teams, Teams] of var Sl_First: matches1;     % Matches (1st round)
%% array [Teams, Teams] of var Sl_Sec: matches2;       % Matches (2st round)
%% array [Teams] of var int: weight;                   % UNSAT soft clauses
%% var int: total_weight;                              % Total unsat soft cl.

%% %-----------------------------------------------------------------------------%
%% % Predicates.
%% %
%% predicate symAllDiff( array[int] of var int: t) =   % Symmetric All different
%%    let { set of int: X = index_set(t) }
%%    in (   
%%       alldifferent(t) /\
%%       forall ( x in X ) ( t[x] != x ) /\
%%       forall ( x in X, y in X ) ( t[x] == y -> t[y] == x )
%%    ); 

%% %-----------------------------------------------------------------------------%
%% % Constraints.
%% %
%% constraint                                          % Matches definition
%%   forall ( x in Teams, y in Teams, s in Sl_First where x != y )
%%   ( matches1[x,y] == s <-> opp[x,s] == y );
%% constraint
%%   forall ( x in Teams, y in Teams, s in Sl_Sec  where x != y )
%%   ( matches2[x,y] == s <-> opp[x,s] == y );

%% constraint                                          % Double round robin
%%    forall ( i in Teams )
%%    ( alldifferent ( [ opp[i,s] | s in Sl_First ] ) );
%% constraint
%%    forall ( i in Teams )
%%    ( alldifferent ( [ opp[i,s] | s in Sl_Sec ] ) );

%% constraint                                          % Compact schedule
%%    forall ( s in Sl_Total )
%%    ( symAllDiff ( [ opp[i,s] | i in Teams ] ) );

%% constraint typeOfReturn != mirrored \/              % Return matches
%%    forall ( s in Sl_First, x in Teams)              % constraints
%%    (
%%      let { int: s2 = s+n_sl_first } in
%%      ( opp[x,s] == opp[x, s2] )
%%    );
%% constraint typeOfReturn != per_slots \/
%%    forall ( s1 in Sl_First, s2 in Sl_Sec, x in Teams, y in Teams )
%%    ( opp[x,s1] == opp[x,s2] -> opp[y,s1] == opp[y, s2] );

%% constraint
%%    forall  ( s1 in Sl_First, s2 in Sl_Sec, x in Teams )
%% 	   ( opp[x,s1] == opp[x,s2] -> s2 - s1 > min_dist );
%%   \end{lstlisting}
%% \end{lrbox}%
%% \begin{lrbox}{\boxSportLeagueC}
%%   \begin{lstlisting}[language=minizinc]
%% constraint                                         % Home and away definition
%%    forall ( x in Teams, s in Sl_Total )
%%    (
%%      let { var int: y = opp[x,s] } in
%%      ( pat[x,s] != pat[y,s] )
%%    );

%% constraint                                         % Matches once at each
%%    forall ( x in Teams, s1 in Sl_First, s2 in Sl_Sec) % venue
%%    ( opp[x,s1] == opp[x,s2] -> pat[x,s1] != pat[x,s2] );

%% constraint                                         % Doubles definition
%%    forall ( x in Teams ) (doubles[x,1] == false);
%% constraint
%%   forall ( x in Teams, s in Sl_Total where s > 1 )
%%   ( doubles[x,s] <-> pat[x,s] == pat[x,s-1] );

%% constraint triples_allowed \/                      % Triples definition
%%   forall ( x in Teams, s in Sl_Total where s > 2 )
%%   ( not doubles[x,s] \/ not doubles[x,s-1] );

%% constraint                                         % Max Doubles allowed
%%   forall (x in Teams)
%%   ( sum(s in Sl_Total)(bool2int(doubles[x,s])) <= max_doubles );

%% constraint                                         % Doubles not allowed
%%   forall (s in slots_no_doubles, x in Teams)
%%   ( not doubles[x,s] );

%% constraint                                         % Matches not allowed
%%   forall ( i in 1..n_matches_blocked )
%%   (
%%      let { int: x = matches_blocked[i,1], int: y = matches_blocked[i,2],
%% 	 int: s = matches_blocked[i,3] } in
%%      ( opp[x,s] != y )
%%   );

%% constraint                                         % Away matches not allowed
%%   forall ( i in 1..n_away_matches_blocked )
%%   (
%%      let { int: x = away_matches_blocked[i,1],
%%            int: y = away_matches_blocked[i,2],
%% 	   int: s = away_matches_blocked[i,3] } in
%%      ( opp[x,s] != y \/ pat[x,s] == home )
%%   );

%% constraint                                         % At Home/Away constraints
%%   forall ( i in 1..n_at_home )
%%   (
%%     let { int: x = at_home[i,1], int: s = at_home[i,2] } in
%%     ( pat[x,s] == home )
%%   );
%% constraint
%%   forall ( i in 1..n_not_at_home )
%%   (
%%     let { int: x = not_at_home[i,1], int: s = not_at_home[i,2] } in
%%     ( pat[x,s] == away )
%%   );

%% constraint                                         % Top teams
%%   forall ( x in Teams, s in Sl_Total, y in top_teams[x], z in top_teams[x]
%%            where s > 1)
%%   ( opp[x,s-1] != y \/ opp[x,s] != z );

%% constraint                                         % Complementary teams
%%   forall ( i in 1..n_compl, s in Sl_Total )
%%   (
%%     let { int: x = compl[i,1], int: y = compl[i,2] } in
%%     ( pat[x,s] != pat[y,s] )
%%   );
%%   \end{lstlisting}
%% \end{lrbox}
%% \begin{lrbox}{\boxSportLeagueD}
%%   \begin{lstlisting}[language=minizinc]
%% constraint                                         % Geographical constraints
%%   forall ( N in 1.. n_geo_cts, s in Sl_Total )
%%   ( sum(x in geographic[N])( bool2int(pat[x,s] == home) ) <= geo_bounds[N] );

%% constraint                                         % Top matches
%%   forall ( i in 1..n_top_matches )
%%   (
%%     alldifferent ( [ matches1[ top_matches[i,s1,1] , top_matches[i,s1,2]] |
%%                      s1 in Sl_First ] )
%%     /\
%%     alldifferent ( [ matches2[ top_matches[i,s1,1] , top_matches[i,s1,2]] |
%%                      s1 in Sl_First ] )
%%   );

%% constraint                                         % Soft constraints
%%   forall ( i in Teams )
%%   (
%%     sum ( j in 1..n_soft_cts where soft_cts[j,1] = i /\ soft_cts[j,5] = 1)
%%         ( bool2int ( opp[i,soft_cts[j,3]] == soft_cts[j,2] ) * soft_cts[j,4] )
%%     +
%%     sum ( j in 1..n_soft_cts where soft_cts[j,1] = i /\ soft_cts[j,5] = 2)
%%         (  soft_cts[j,4] * bool2int ( opp[i,soft_cts[j,3]] == soft_cts[j,2] /\
%%                                       pat[i,soft_cts[j,3]] == away) )
%%     +
%%     sum ( j in 1..n_soft_cts where soft_cts[j,1] = i /\ soft_cts[j,5] = 3)
%%         ( bool2int ( pat[i,soft_cts[j,3]] == away ) * soft_cts[j,4] )
%%     +
%%     sum ( j in 1..n_soft_cts where soft_cts[j,1] = i /\ soft_cts[j,5] = 4)
%%         ( bool2int ( pat[i,soft_cts[j,3]] == home ) * soft_cts[j,4] )
%%     == weight[i]
%%   );
%% constraint
%%   forall ( i in Teams )
%%   ( weight[i] <= max_weight[i] );
%% constraint total_weight = sum (i in Teams)(weight[i]);

%% %-----------------------------------------------------------------------------%
%% % Objective.
%% %
%% solve minimize total_weight;
%% %-----------------------------------------------------------------------------%
%%   \end{lstlisting}
%% \end{lrbox}

%% \begin{figure}[t]
%%   \scalebox{0.6}{\subfloat{\usebox{\boxSportLeagueA}}}
%%   \caption{\label{modelSportLeague}Sport Leagues Scheduling model in MiniZinc (first part)}
%% \end{figure}
%% \begin{figure}[t]
%%   \ContinuedFloat
%%   \scalebox{0.6}{\subfloat{\usebox{\boxSportLeagueB}}}
%%   \caption{\label{modelSportLeague}Sport Leagues Scheduling model in MiniZinc (second part)}
%% \end{figure}
%% \begin{figure}[t]
%%   \ContinuedFloat
%%   \scalebox{0.6}{\subfloat{\usebox{\boxSportLeagueC}}}
%%   \caption{\label{modelSportLeague}Sport Leagues Scheduling model in MiniZinc (third part)}
%% \end{figure}
%% \begin{figure}[t]
%%   \ContinuedFloat
%%   \scalebox{0.6}{\subfloat{\usebox{\boxSportLeagueD}}}
%%   \caption{\label{modelSportLeague}Sport Leagues Scheduling model in MiniZinc (last part)}
%% \end{figure}

\subsubsection{Linear Integer Benchmarks}

\paragraph{Multiple Knapsack}

Here we consider the classic multiple knapsack problem.
$$
\begin{array}{lllllllllll}
\text{Max } & a_1^0 x_1 &+& a_2^0 x_2 &+& \cdots &+& a_n^0 x_n & & & \text{such that} \\
 &a_1^1 x_1 &+& a_2^1 x_2 &+& \cdots &+& a_n^1 x_n &\leqslant &a_0^1 \\
 & \ldots \\
 & a_1^m x_1 &+& a_2^m x_2 &+& \cdots &+& a_n^m x_n &\leqslant& a_0^m, \\
\end{array}$$
where $x_i$ are integer variables with domain $[0, d]$ and the
coefficients belong to $[0, \amax]$.

Since it only consists of linear integer constraints it is ideal for MIP
solvers. We consider this problem since it is easy to modify the
parameters of the constraints, and, therefore, we can easily compare
the encodings in different situations. More precisely, we have
considered different constraint sizes, coefficient sizes and domain
sizes. In these problems, $n$ is the number of variables, $m = 20$ is the
number of LI constraints, $d+1$ is the domain size of the variables;
and $\amax$ is the bound of the coefficients.

For each parameter configuration, 100 benchmarks are considered. The
experiments use $m=20$, $d = 20$, $\amax = 10$ and $n=15$ unless
stated otherwise.

\paragraph{Graph Coloring}

The classical graph coloring problem consists in, given a graph,
assign to each node a color $\{0, 1, \ldots, c-1\}$ such that two
nodes connected by an edge have different colors. Usually, the problem
consists in finding a solution that minimizes the number of colors
(i.e., $c$).  Here we have considered a variant of this problem. Let
us consider a graph that can be colored with $c$ colors: For each node
$\nu$ of the graph, let us define an integer value $a_\nu$. Now, we
want to color the graph with $c$ colors $\{0, 1, \ldots, c-1\}$
minimizing the function $\sum a_\nu x_\nu$, where $x_\nu$ is the color
of the node $\nu$.

We have considered the 80 graph coloring instances from
\url{http://mat.gsia.cmu.edu/COLOR08/} that have less than 500
nodes. For each graph problem, we have considered 4 different
benchmarks: in the $i$-th one, $1 \leqslant a_\nu \leqslant 3i-2$ for
$i = 1,2,3,4$.

\paragraph{MIPLib Benchmarks}
Finally, we have considered the instances from MIPLib
\shortcite{miplib}. These instances come from academia and industry.

Encoding methods perform well in problems with lots of Boolean
variables and clauses and a small group of LI constraints; these
problems mainly contain LI constraints, so we do not expect that our
methods outperform MIP solvers. However, they are the standard tool to
compare MIP solvers, and it is interesting to check how the different
encodings perform in pure LI problems.

We have considered binary only problems and integer only problems 
with only integer coefficients and bounded domains. From
them, we have selected the problems with domains ($d$) bounded by
100000, %\pjs{100,000 or 100?} 
constraint size ($n$) bounded by 20000, coefficients
($\amax$) bounded by 4000 and with less than 100000 constraints. We
obtained 34 benchmarks, that have been run with 20 different random
seeds.

\subsection{Results Presentation}

Multiple knapsack problems are very easy ({\Gurobi} can solve any of
them in less than one second), so results are reported using the
average time for solving the instances. Timeouts are treated as a 300s
response in computing average times.

In all the other problems we report the so-called
\emph{pseudo-harmonic average distance from the solutions found and
  the best solutions known}. Note that we do not use solving times but
distances of solutions, and we do not compute the (arithmetic) average
but the pseudo-harmonic one.

The usage of distances is motivated by the nature of the studied
problems: they are much harder, most of the time they cannot be solved in
3600 seconds. Therefore, the solving time is not suitable: in most
cases it would be 3600 seconds, and would give no information. For a
method and a benchmark, we compute the distance from the best solution
found by that method and the best solution known for the benchmark. If
the method finds no solution, the distance is considered to be
infinite. Notice that in this case we can compare two methods even if
neither of them has found the optimal value.

Regarding the arithmetic mean, we do not use it for two reasons:
first, since some distances are infinite, the mean will be infinite
no matter the other values, making it impossible to compare 
methods. Second, the mean is highly affected for outliers, 
and we don't wish the results to be so sensitive to them.
For example, if we have 10 benchmarks and method 1 finds the optimal
solution nine times and a solution at distance 1000 in the other one,
and method 2 finds a solution at distance 100 in every case, both
methods will have the same mean distance; but we want to conclude 
that method 1 is better than method 2.

The harmonic average solves these problems: it can be computed when
some numbers are infinite, and it is more stable regarding
outliers. However, the harmonic average cannot be computed if some
value is 0. This is a problem in our case, since when a method finds
the optimal solution, the distance of the solution found and the best
solution is 0. For this reason, we have defined the
\emph{pseudo-harmonic average} as follows: given some non-negative
numbers $x_1, x_2, \ldots, x_n$, let $H$ be the harmonic average of
$x_1 + 1, x_2 + 1, \ldots, x_n+1$. Then, the pseudo-harmonic average
of $x_1, x_2, \ldots, x_n$ is $H-1$.

More formally, the pseudo-harmonic average of $x_1, x_2, \ldots, x_n$
is
$$\frac{n}{\frac{1}{x_1+1} + \frac{1}{x_2+1} + \cdots + \frac{1}{x_n+1}} - 1$$

When presenting some results, a value must be more than a $5\%$ better
than another to be considered significantly better. If the difference
of two values is less than $5\%$, we consider that there is no
significant difference between them.

\subsection{Grouping Variables with the Same Coefficient}
In this section we test the impact of grouping variables with the same
coefficient as a preprocessing technique in both Pseudo-Boolean and LI
constraints, as explained in Section
\ref{section-groupingCoeffs}. This technique can be used with any
consistent encoding or even with an LCG solver. Therefore, this
technique is separately evaluated with the MDD encoding ({\MDD})
explained in Section \ref{section-MDD}, the sorting network encoding
({\SN}) explained in Section \ref{section-sortingEncoding}, with the
support encoding ({\Sup}) from \shortcite{Sugar} explained in Section
\ref{section-sugar} and with an LCG approach ({\LCG}).

Tables \ref{table-originalRcpsp-prep}-\ref{table-miplib-prep} contain
the results of the different approaches. For each method, the best
result is shown in bold (if it improves by more than 5\%).

\begin{table}%[t]
   \def\tabcolsep{5pt}
   \begin{center}{\fontsize{8pt}{1em}\selectfont
      \begin{tabular}{|l|lllll|}
         \hline
         & 15s & 60s & 300s & 900s & 3600s \\
         \hline
         \MDDEncAwC & 0.616 & \textbf{0.462} & \textbf{0.284} & \textbf{0.201} & \textbf{0.16} \\
         \MDDEncAnC & {0.639} & {0.541} & {0.351} & {0.253} & {0.189} \\
         \hline
         \CNwC & \textbf{0.618} & \textbf{0.38} & \textbf{0.244} & \textbf{0.197} & \textbf{0.152} \\
         \CNnC & {0.692} & {0.439} & {0.263} & {0.228} & {0.18} \\
         \hline
         \SupwC & {0.662} & {0.557} & {0.346} & {0.234} & \textbf{0.128} \\
         \SupnC & {0.674} & {0.544} & \textbf{0.326} & {0.244} & {0.173} \\
         \hline \hline
         \LCGwC & {0.655} & {0.535} & {0.325} & {0.21} & \textbf{0.118} \\
         \LCGnC & \textbf{0.335} & \textbf{0.166} & \textbf{0.117} & \textbf{0.124} & {0.13} \\

         %% \MDDEncAwC  & \textbf{0.75}  & \textbf{0.81}  & \textbf{0.918}  & \textbf{0.975}  & 0.982 \\
         %% \MDDEncAnC  & 0.735  & 0.787  & 0.874  & 0.934  & 0.977 \\
         %% \hline
         %% \CNwC  & \textbf{0.77}  & \textbf{0.881}  & \textbf{0.963}  & 0.974  & 0.982 \\
         %% \CNnC  & 0.743  & 0.85  & 0.947  & 0.964  & 0.974 \\
         %% \hline
         %% \SupwC  & 0.737  & 0.78  & 0.856  & 0.923  & 0.988 \\
         %% \SupnC  & 0.736  & \textbf{0.813}  & \textbf{0.899}  & \textbf{0.951}  & 0.983 \\
         %% \hline \hline
         %% \LCGwC & \textbf{0.918}  & \textbf{0.972}  & \textbf{0.992}  & \textbf{0.993}  & 0.994 \\
         %% \LCGnC  & 0.747  & 0.787  & 0.864  & 0.933  & 0.992 \\
         \hline
      \end{tabular}}
      \caption{\label{table-originalRcpsp-prep} Grouping coefficients: pseudo-harmonic average distance from the 2040 original RCPSP benchmarks.}
   \end{center}
%   \vspace{-1.0cm}
\end{table}
\begin{table}%[t]
   \def\tabcolsep{5pt}
   \begin{center}{\fontsize{8pt}{1em}\selectfont
      \begin{tabular}{|l|lllll|}
         \hline
         & 15s & 60s & 300s & 900s & 3600s \\
         \hline
         \MDDEncAwC & {27.2} & {15.8} & {9.12} & {6.86} & {4.67} \\
         \MDDEncAnC & \textbf{17.6} & \textbf{10.1} & \textbf{6.13} & \textbf{4.45} & \textbf{3.08} \\
         \hline
         \CNwC & {27.3} & {14.5} & {8.29} & {6.43} & {4.21} \\
         \CNnC & \textbf{23.7} & \textbf{9.75} & \textbf{6.3} & \textbf{4.48} & \textbf{3.36} \\
         \hline
         \SupwC & {38.6} & \textbf{26.4} & {21.7} & {17.8} & {14.9} \\
         \SupnC & {38.2} & {30.4} & {21.9} & \textbf{15.8} & \textbf{13.9} \\
         \hline \hline
         \LCGwC & {19.4} & {8.37} & {4.99} & {3.88} & {2.91} \\
         \LCGnC & \textbf{15.1} & \textbf{7.24} & {4.96} & {3.90} & {3.00} \\
         \hline
      \end{tabular}}
     \caption{\label{table-pbcomp-prep}Grouping coefficients: pseudo-harmonic average distance from the 500 small-int PB optimization instances of the pseudo-Boolean competition 2015.}
   \end{center}
\end{table}
\begin{table}%[t]
  \def\tabcolsep{5pt}
  \begin{center}{\fontsize{8pt}{1em}\selectfont
      \begin{tabular}{|l|lllll|}
        \hline
        & 15s & 60s & 300s & 900s & 3600s \\ \hline

        \MDDEncAwC & \textbf{917} & \textbf{331} & \textbf{38} & \textbf{7.84} & \textbf{5.56} \\
        \MDDEncAnC & {1044} & {459} & {88.8} & {31.2} & {6.32} \\ \hline
        \CNwC & \textbf{27.9} & \textbf{12.1} & \textbf{2.5} & \textbf{1.28} & \textbf{0.748} \\
        \CNnC & {791} & {188} & {16.2} & {7.88} & {5.51} \\ \hline
        \SupwC & {951} & {364} & \textbf{33.6} & \textbf{8.21} & {5.88} \\
        \SupnC & \textbf{888} & {347} & {70.8} & {15.9} & {5.75} \\ \hline \hline
        \LCGwC & \textbf{960} & \textbf{281} & \textbf{31.0} & \textbf{8.20} & \textbf{5.86} \\
        \LCGnC & {1517} & {342} & {47.4} & {13.0} & {7.51} \\

        %% \MDDEncAwC & \textbf{0.033} & \textbf{0.0696} & \textbf{0.287} & \textbf{0.454} & \textbf{0.576} \\
        %% \MDDEncAnC & {0.0297} & {0.0558} & {0.219} & {0.368} & {0.511} \\ \hline
        %% \CNwC & {0.0346} & {0.0766} & {0.286} & {0.439} & {0.572} \\
        %% \CNnC & \textbf{0.0353} & \textbf{0.0952} & \textbf{0.319} & \textbf{0.453} & {0.572} \\ \hline
        %% \SupwC & {0.0316} & {0.0635} & {0.23} & \textbf{0.435} & \textbf{0.552} \\
        %% \SupnC & \textbf{0.0342} & \textbf{0.0695} & \textbf{0.25} & {0.399} & {0.539} \\ \hline \hline
        %% \LCGwC & \textbf{0.032} & \textbf{0.0696} & \textbf{0.239} & \textbf{0.42} & \textbf{0.556} \\
        %% \LCGnC & {0.0192} & {0.0622} & {0.157} & {0.283} & {0.415} \\
        \hline

     \end{tabular}}
    \caption{\label{table-leagues-prep}Grouping coefficients:
      pseudo-harmonic average distance from 200 sport scheduling
      league benchmarks.}
  \end{center}
  %\vspace{-1.0cm}
\end{table}
\begin{table}%[t]
  \def\tabcolsep{2pt}
  \begin{center}{\fontsize{8pt}{1em}\selectfont
      \begin{tabular}{|l|llllll||llllll||llllll|}
        \hline
        & \multicolumn{6}{|l||}{Different values of $n$} &
        \multicolumn{6}{|l||}{Different values of $\amax$} &
        \multicolumn{6}{|l|}{Different values of $d$} \\
        \hline
        & $5$ & $ 10$ & $20$ & $ 40$ & $ 80$ & $ 160$ 
        & $ 1$ & $ 2$ & $ 4$ & $ 8$ & $16$ & $32$ 
        & $ 1$ & $ 2$ & $ 4$ & $10$ & $ 25$ & $100$ \\ \hline
        \MDDEncAwC & \textbf{0.05} & \textbf{6.45} & \textbf{175} & {268} & {290} & 300
                   & 51.8 & \textbf{57.2} & \textbf{68.9} & \textbf{103} & \textbf{107} & \textbf{119}
                   & \textbf{0.01} & \textbf{0.03} & \textbf{0.17} & \textbf{18.9} & \textbf{80.6} & {258} \\
        \MDDEncAnC & 0.17 & 9.62 & 187 & 276 & 299 & 300
                   & \textbf{44.8} & 66.0 & 79.6 & 124 & 120 & 127
                   & 0.09 & 0.10 & 0.36 & 26.0 & 93.6 & 265 \\ \hline

        \CNwC & 0.24 & \textbf{10.4} & \textbf{181} & {267} & 289 & 300
              & {56.0} & 65.5 & \textbf{67.6} & \textbf{122} & \textbf{118} & \textbf{122}
              & 0.14 & 0.23 & 0.47 & \textbf{20.3} & \textbf{93.8} & 261 \\
        \CNnC & \textbf{0.13} & 19.0 & 196 & 272 & 289 & 300
              & 58.7 & {64.3} & 87.0 & 143 & 130 & 135 
              & \textbf{0.04} & \textbf{0.08} & \textbf{0.39} & 26.7 & 102 & 262 \\ \hline

        \SupwC & 0.24 & 17.2 & 196 & 277 & 300 & 300
               & 108 & 99.4 & 98.3 & 142 & 134 & 147
               & 0.05 & 0.17 & 0.60 & \textbf{30.7} & 110 & 279 \\
        \SupnC & \textbf{0.12} & \textbf{16.0} & 197 & 278 & 300 & 300
               & \textbf{53.9} & \textbf{78.6} & \textbf{90.3} & 142 & 133 & {145}
               & \textbf{0.02} & \textbf{0.07} & {0.57} & 32.8 & {108} & {272} \\ \hline \hline

        \LCGwC & 0.26 & 15.2 & 191 & 274 & 288 & 300
               & 183 & 144 & 118 & 146 & 107 & 79.7
               & 0.05 & 0.12 & 0.51 & 31.9 & 105 & 270 \\
        \LCGnC & \textbf{0.01} & \textbf{4.87} & \textbf{173} & {265} & 288 & 300
               & \textbf{117} & \textbf{97.2} & \textbf{88.0} & \textbf{118} & \textbf{94.0} & \textbf{70.6}
               & \textbf{0.02} & \textbf{0.01} & \textbf{0.13} & \textbf{22.9} & \textbf{75.3} & \textbf{242} \\ \hline
    \end{tabular}}
    \caption{\label{table-mks-prep}Grouping coefficients: multiple
      knapsack average solving time.}
  \end{center}
\end{table}

\begin{table}%[t]
   \def\tabcolsep{5pt}
   \begin{center}{\fontsize{8pt}{1em}\selectfont
      \begin{tabular}{|l|lllll|}
         \hline
         & 15s & 60s & 300s & 900s & 3600s \\
         \hline
         \MDDEncAwC & \textbf{3.70} & \textbf{3.06} & \textbf{2.72} & \textbf{2.42} & \textbf{2.17} \\
         \MDDEncAnC & 4.61 & 3.87 & 3.10 & 2.72 & 2.51 \\ \hline
         \CNwC & {3.50} & 2.99 & 2.64 & 2.45 & 2.45 \\
         \CNnC & 3.57 & 2.99 & 2.67 & {2.41} & \textbf{2.14} \\ \hline
         \SupwC & \textbf{3.24} & \textbf{2.45} & \textbf{2.06} & \textbf{1.82} & \textbf{1.54} \\
         \SupnC & 4.69 & 3.63 & 3.02 & 2.68 & 2.39 \\ \hline \hline
         \LCGwC & \textbf{2.98} & \textbf{2.32} & \textbf{1.99} & \textbf{1.65} & \textbf{1.10} \\
         \LCGnC & {3.18} & {2.94} & {2.53} & {2.24} & {2.02} \\
         %% \MDDEncAwC &\textbf{0.603} & \textbf{0.611} & \textbf{0.631} &
         %% \textbf{0.638} & \textbf{0.644} \\
         %% \MDDEncAnC  & 0.592  & 0.602  & 0.622  & 0.628  & 0.634 \\
         %% \hline
         %% \CNwC  & 0.609  & 0.616  & 0.638  & 0.643  & 0.643 \\
         %% \CNnC  & 0.612  & 0.621 & 0.641
         %% & 0.647  & \textbf{0.652} \\
         %% \hline
         %% \SupwC  & 0.513 & 0.579 & 0.617 & 0.633 & \textbf{0.648} \\
         %% \SupnC  & \textbf{0.596} & \textbf{0.605} & \textbf{0.625}
         %% & 0.631  & 0.637 \\         \hline \hline
         %% \LCGwC  & 0.522  & 0.587  & 0.626  & \textbf{0.644}  & \textbf{0.660} \\
         %% \LCGnC  & \textbf{0.605}  & \textbf{0.610}  & 0.627 & 0.630
         %% & 0.633 \\
         \hline
      \end{tabular}}
      %% \caption{\label{table-graph-prep} Grouping coefficients: average
      %%   quality from 320 graph coloring benchmarks.}
      \caption{\label{table-graph-prep} Grouping coefficients: pseudo-harmonic average
        distance from 320 graph coloring benchmarks.}
   \end{center}
%   \vspace{-1.0cm}
\end{table}
\begin{table}%[t]
   \def\tabcolsep{5pt}
   \begin{center}{\fontsize{8pt}{1em}\selectfont
      \begin{tabular}{|l|lllll|}
         \hline
         & 15s & 60s & 300s & 900s & 3600s \\
         \hline
         \MDDEncAwC & {3.43} & {2.47} & {1.76} & {1.31} & \textbf{0.767} \\
         \MDDEncAnC & \textbf{2.40} & \textbf{2.03} & \textbf{1.42} & \textbf{1.15} & {0.981} \\ \hline
         \CNwC & {6.12} & \textbf{3.92} & {2.75} & \textbf{2.27} & \textbf{1.58} \\
         \CNnC & {6.32} & {4.30} & {2.71} & {2.42} & {1.82} \\ \hline
         \SupwC & {14.1} & {8.48} & {3.91} & {3.3} & {2.63} \\
         \SupnC & \textbf{6.69} & \textbf{3.54} & \textbf{2.42} & \textbf{1.72} & \textbf{1.12} \\ \hline \hline
         \LCGwC & {11.5} & {5.49} & {3.33} & {1.98} & {1.62} \\
         \LCGnC & \textbf{1.85} & \textbf{1.39} & \textbf{1.19} & \textbf{1.01} & \textbf{0.90} \\ \hline
      \end{tabular}}
     \caption{\label{table-miplib-prep} Grouping coefficients:
       pseudo-harmonic average distance from 680 MIPLib instances}
   \end{center}
\end{table}

Preprocessing clearly improves {\MDD} results in Tables
\ref{table-originalRcpsp-prep}, \ref{table-leagues-prep},
\ref{table-mks-prep} and \ref{table-graph-prep}, and the basic method
is better in Tables \ref{table-pbcomp-prep} and
\ref{table-miplib-prep}. It is worth noticing that SAT-based methods
are not competitive method in Tables \ref{table-pbcomp-prep} and
\ref{table-miplib-prep} (see Section \ref{section-exp-other}), so
{\MDD} would not be the right choice anyway. Preprocessing helps
{\MDD} in all the cases where they are competitive, and,
therefore, {\MDD} should always be used with
preprocessing. Accordingly, we will always use preprocessing on {\MDD}
in the following sections.

For {\SN}, the basic method is only better than the method with
preprocessing in Table \ref{table-pbcomp-prep}; in the other cases,
preprocessing is either better or both methods are roughly equal
(Table \ref{table-graph-prep}). Again, it is worth noticing that for
the problems of Table \ref{table-pbcomp-prep} {\SN} is not a
competitive choice. Preprocessing is helping {\SN} in all the cases
where encodings are competitive, and, therefore, {\SN} should always
be used with preprocessing. Accordingly, we will always use
preprocessing on {\SN} in the following sections.

{\SupwC} is clearly better than {\Sup} in Table \ref{table-graph-prep}
and slightly better in Table \ref{table-leagues-prep}, while the basic
method is better in Tables \ref{table-mks-prep} and
\ref{table-miplib-prep}. Again, the basic method is only better at
cases where encodings should not be used: accordingly, we will always
use preprocessing on {\Sup} in the following sections.

Preprocessing clearly improves {\LCG} in Tables
\ref{table-leagues-prep} and \ref{table-graph-prep}, and the basic
method is better in the other cases. In this case, it is not clear if
Preprocessing helps or not: accordingly, we will always use the method
without preprocessing in the following sections.

\subsection{BDDs with Logarithmic Encoding}

As explained in Section \ref{section-logarithmic}, we can use the
logarithmic encoding to transform an LI constraint into a PB. 
\shortciteA{bartzis2006efficient} realized that this PB can be
reordered and then encoded into a BDD. This BDD can finally be encoded
into SAT. We call {\BDD} the resulting encoding. In Section
\ref{section-logarithmic} we propose an improvement of this method,
called {\BDDDec}, where the coefficients of the PB are also
decomposed before producing the BDD: in this way, the BDD is smaller,
so the resulting encoding generates fewer variables and clauses.

In this section we compare these two methods, to check if there is a
significant improvement of our encoding {\BDDDec} over {\BDD} in
practice.

Tables \ref{table-mks-bdds}-\ref{table-miplib-bdds} contains the
results of both methods in the different LI problems. Notice we have
not compared these methods in the PB problems since they use the
logarithmic encoding. In fact, the encoding {\BDD} of a PB constraint
would be equivalent to the encoding {\MDD}.

\begin{table}%[t]
  \def\tabcolsep{2pt}
  \begin{center}{\fontsize{8pt}{1em}\selectfont
      \begin{tabular}{|l|llllll||llllll||llllll|}
        \hline
        & \multicolumn{6}{|l||}{Different values of $n$} &
        \multicolumn{6}{|l||}{Different values of $\amax$} &
        \multicolumn{6}{|l|}{Different values of $d$} \\
        \hline
        & $5$ & $ 10$ & $20$ & $ 40$ & $ 80$ & $ 160$ 
        & $ 1$ & $ 2$ & $ 4$ & $ 8$ & $16$ & $32$ 
        & $ 1$ & $ 2$ & $ 4$ & $10$ & $ 25$ & $100$ \\ \hline
        \BDD & \textbf{0.04} & 7.11 & 185 & 272 & 298 & 300
             & 21.1 & 39.2 & 59.1 & 111 & 110 & 129
             & \textbf{0.01} & \textbf{0.06} & 0.58 & 26.8 & 77.6 & 220 \\
        \BDDDec & 0.12 & \textbf{4.73} & \textbf{163} & 267 & 295 & 300
                & \textbf{10.5} & \textbf{25.6} & \textbf{47.7} & \textbf{90.9} & \textbf{84.8} & \textbf{82.5}
                & 0.07 & 0.13 & \textbf{0.46} & \textbf{18.6} & \textbf{56.3} & \textbf{202} \\
        \hline
    \end{tabular}}
    \caption{\label{table-mks-bdds}BDDs methods: multiple
      knapsack average solving time.}
  \end{center}
\end{table}
\begin{table}%[t]
   \def\tabcolsep{5pt}
   \begin{center}{\fontsize{8pt}{1em}\selectfont
      \begin{tabular}{|l|lllll|}
         \hline
         & 15s & 60s & 300s & 900s & 3600s \\
         \hline
         \BDD & {11.5} & {10.4} & {9.35} & {7.32} & {5.82} \\
         \BDDDec & \textbf{10.2} & \textbf{7.06} & \textbf{6.19} & \textbf{5.22} & \textbf{4.60} \\
         \hline

         %% \BDD  & 0.398  & 0.438  & 0.475  & 0.487  & 0.503 \\
         %% \BDDDec  & \textbf{0.411}  & \textbf{0.455}  & \textbf{0.490}
         %% & \textbf{0.502}  & \textbf{0.519} \\

      \end{tabular}}
      %% \caption{\label{table-graph-bdds} BDDs methods: average
      %%   quality from 320 graph coloring benchmarks.}
      \caption{\label{table-graph-bdds} BDDs methods: pseudo-harmonic average
        distance from 320 graph coloring benchmarks.}
   \end{center}
%   \vspace{-1.0cm}
\end{table}
\begin{table}%[t]
   \def\tabcolsep{5pt}
   \begin{center}{\fontsize{8pt}{1em}\selectfont
      \begin{tabular}{|l|lllll|}
         \hline
         & 15s & 60s & 300s & 900s & 3600s \\
         \hline
         \BDD & \textbf{6.59} & \textbf{4.53} & \textbf{2.89} & \textbf{2.49} & \textbf{2.20} \\
         \BDDDec & {9.88} & {8.86} & {5.77} & {4.78} & {3.56} \\
         \hline
      \end{tabular}}
     \caption{\label{table-miplib-bdds} BDDs methods: pseudo-harmonic
       average distance from 680 MIPLib instances}
   \end{center}
\end{table}

{\BDDDec} is clearly better at multiple knapsack and graph coloring,
while {\BDD} is the best choice for the MIPLib instances.

\subsection{Domain consistent encodings for LI constraints.}
In this section we compare the two domain consistent encodings for LI
(and PB) constraints: {\Sup}, from \shortciteA{Sugar} explained in Section
\ref{section-sugar}, and our method {\MDD}, explained in Section
\ref{section-MDD}. As discussed in Section \ref{section-sugar}, both
methods are basically equivalent except that {\Sup} generates a
non-reduced MDD, and, therefore, produces redundant variables and
clauses. We want to test here whether this redundancy makes {\MDD}
better than {\Sup} in practice.

Tables \ref{table-originalRcpsp-DC}-\ref{table-miplib-DC} contain the
results of both methods in the different benchmarks. According to the
results of the previous section, both methods use the preprocessing method
described in Section \ref{section-groupingCoeffs}.

\begin{table}%[t]
   \def\tabcolsep{5pt}
   \begin{center}{\fontsize{8pt}{1em}\selectfont
      \begin{tabular}{|l|lllll|}
         \hline
         & 15s & 60s & 300s & 900s & 3600s \\
         \hline
         \MDD & \textbf{0.616} & \textbf{0.462} & \textbf{0.284} & \textbf{0.201} & {0.160} \\
         \Sup & {0.662} & {0.557} & {0.346} & {0.234} & \textbf{0.128} \\

         %% \MDD & \textbf{0.616} & \textbf{0.462} & \textbf{0.284} & \textbf{0.201} & \textbf{0.16} \\
         %% \Sup & {0.674} & {0.544} & {0.326} & {0.244} & {0.173} \\

         %% \MDD  & \textbf{0.75}  & 0.81  & \textbf{0.918}  & \textbf{0.975}  & 0.982 \\
         %% \Sup  & 0.736  & 0.813  & 0.899  & 0.951  & 0.983 \\
         \hline
      \end{tabular}}
      \caption{\label{table-originalRcpsp-DC} Domain consistent methods: pseudo-harmonic average distance from the 2040 original RCPSP benchmarks.}
   \end{center}
%   \vspace{-1.0cm}
\end{table}
\begin{table}%[t]
   \def\tabcolsep{5pt}
   \begin{center}{\fontsize{8pt}{1em}\selectfont
      \begin{tabular}{|l|lllll|}
         \hline
         & 15s & 60s & 300s & 900s & 3600s \\
         \hline
         \MDD & \textbf{27.2} & \textbf{15.8} & \textbf{9.12} & \textbf{6.86} & \textbf{4.67} \\
         \Sup & {38.6} & {26.4} & {21.7} & {17.8} & {14.9} \\
         \hline
      \end{tabular}}
  \caption{\label{table-pbcomp-DC}Domain consistent methods:
    pseudo-harmonic average distance from the 500 small-int PB
    optimization instances of the pseudo-Boolean competition 2015.}
   \end{center}
\end{table}

\begin{table}%[t]
  \def\tabcolsep{5pt}
  \begin{center}{\fontsize{8pt}{1em}\selectfont
      \begin{tabular}{|l|lllll|}
        \hline
        & 15s & 60s & 300s & 900s & 3600s \\ \hline
        \MDD & {917} & \textbf{331} & {38} & {7.84} & \textbf{5.56} \\
        \Sup & {951} & {364} & \textbf{33.6} & {8.21} & {5.88} \\
        %% \MDD & {917} & \textbf{331} & \textbf{38} & \textbf{7.84} & \textbf{5.56} \\
        %% \Sup & \textbf{888} & {347} & {70.8} & {15.9} & {5.75} \\

        %% \MDD & 0.033 & 0.0696 & \textbf{0.287} & \textbf{0.454} & \textbf{0.576} \\
        %% \Sup & \textbf{0.0342} & 0.0695 & 0.25 & 0.399 & 0.539 \\
        \hline

  %% \def\tabcolsep{5pt}
  %% \begin{center}{\fontsize{8pt}{1em}\selectfont
  %%     \begin{tabular}{|l|lll||lll||lll||lll|}
  %%       \hline
  %%       & \multicolumn{3}{|l||}{Some solution} &
  %%       \multicolumn{3}{|l||}{cost $\leqslant$ 30 + best} &
  %%       \multicolumn{3}{|l||}{cost $\leqslant$ 20 + best} &
  %%       \multicolumn{3}{|l||}{cost $\leqslant$ 10 + best} \\
  %%       \hline
  %%       & 300 & 900 & 3600 &  300 & 900 & 3600 &
  %%       300 & 900 & 3600 &  300 & 900 & 3600 \\ \hline
  %%       \MDD & \textbf{151} & \textbf{194} & 199 & \textbf{26}
  %%       & 60 & \textbf{124} & 19 & 40 & 86
  %%       & 12 & 27 & 47\\
  %%       \Sup & 112 & 187 & 198 & 23 & 59 & 115 &
  %%       19 & \textbf{41} & 86 & \textbf{14} & \textbf{29} &
  %%       47 \\
  %%       \hline
    \end{tabular}}
    \caption{\label{table-leagues-DC}Domain consistent methods:
      pseudo-harmonic average distance from 200 sport scheduling
      league benchmarks.}
  \end{center}
  %\vspace{-1.0cm}
\end{table}
\begin{table}%[t]
  \def\tabcolsep{2pt}
  \begin{center}{\fontsize{8pt}{1em}\selectfont
      \begin{tabular}{|l|llllll||llllll||llllll|}
        \hline
        & \multicolumn{6}{|l||}{Different values of $n$} &
        \multicolumn{6}{|l||}{Different values of $\amax$} &
        \multicolumn{6}{|l|}{Different values of $d$} \\
        \hline
        & $5$ & $ 10$ & $20$ & $ 40$ & $ 80$ & $ 160$ 
        & $ 1$ & $ 2$ & $ 4$ & $ 8$ & $16$ & $32$ 
        & $ 1$ & $ 2$ & $ 4$ & $10$ & $ 25$ & $100$ \\ \hline
        \MDD & \textbf{0.05} & \textbf{6.45} & \textbf{175} & {268} & {290} & 300
             & {51.8} & \textbf{57.2} & \textbf{68.9} & \textbf{103} & \textbf{107} & \textbf{119}
             & \textbf{0.01} & \textbf{0.03} & \textbf{0.17} & \textbf{18.9} & \textbf{80.6} & \textbf{258} \\
        \Sup & 0.12 & 16.0 & 197 & 278 & 300 & 300
             & 53.9 & 78.6 & 90.3 & 142 & 133 & 145
             & 0.02 & 0.07 & 0.57 & 32.8 & 108 & 272 \\ \hline
    \end{tabular}}
    \caption{\label{table-mks-DC}Domain consistent methods: multiple
      knapsack average solving time.}
  \end{center}
\end{table}
\begin{table}%[t]
   \def\tabcolsep{5pt}
   \begin{center}{\fontsize{8pt}{1em}\selectfont
      \begin{tabular}{|l|lllll|}
         \hline
         & 15s & 60s & 300s & 900s & 3600s \\
         \hline
         \MDD & {3.70} & {3.06} & {2.72} & {2.42} & {2.17} \\
         \Sup & \textbf{3.24} & \textbf{2.45} & \textbf{2.06} & \textbf{1.82} & \textbf{1.54} \\

         %% \MDD & \textbf{3.7} & \textbf{3.06} & \textbf{2.72} & \textbf{2.42} & \textbf{2.17} \\
         %% \Sup & {4.69} & {3.63} & {3.02} & {2.68} & {2.39} \\

         %% \MDD &\textbf{0.603} & 0.611 & 0.631 & \textbf{0.638} & \textbf{0.644} \\
         %% \Sup  & 0.596 & 0.605 & 0.625 & 0.631  & 0.637 \\
         \hline
      \end{tabular}}
      \caption{\label{table-graph-DC} Domain consistent methods:
        pseudo-harmonic average distance from 320 graph coloring
        benchmarks.}
      %% \caption{\label{table-graph-DC} Domain consistent methods: average
      %%   quality from 320 graph coloring benchmarks.}
   \end{center}
%   \vspace{-1.0cm}
\end{table}
\begin{table}%[t]
   \def\tabcolsep{5pt}
   \begin{center}{\fontsize{8pt}{1em}\selectfont
      \begin{tabular}{|l|lllll|}
         \hline
         & 15s & 60s & 300s & 900s & 3600s \\
         \hline
         \MDD & \textbf{3.43} & \textbf{2.47} & \textbf{1.76} & \textbf{1.31} & \textbf{0.767} \\
         \Sup & {14.1} & {8.48} & {3.91} & {3.30} & {2.63} \\
         %% \MDD & \textbf{3.43} & \textbf{2.47} & \textbf{1.76} & \textbf{1.31} & \textbf{0.767} \\
         %% \Sup & {6.69} & {3.54} & {2.42} & {1.72} & {1.12} \\
         \hline
      \end{tabular}}
     \caption{\label{table-miplib-DC} Domain consistent methods:
       pseudo-harmonic average distance from 680 MIPLib instances}
   \end{center}
\end{table}

{\Sup} is better at graph coloring, but {\MDD} is better at the other
instances. The experimental results, then, prove that {\MDD} improves
{\Sup}. Accordingly, we will use {\MDD} as the domain consistent
encoding for LI constraints in the following experiments.

\subsection{Comparison against Other Methods}\label{section-exp-other}
In this section we compare the best encodings with other methods for
solving MIP problems. The encodings selected are {\Adder}, explained
in Sections \ref{section-PB-adders} and \ref{section-logarithmic}; {\BDDDec},
explained in Section \ref{section-logarithmic}; {\SN}, explained in
Section \ref{section-sortingEncoding}; and {\MDD}, explained in
Section \ref{section-MDD}.

Other methods considered here are the LCG solver Barcelogic
\shortcite{Bofilletal2008CAV} for LI constraints (\LCG); a lazy
decomposition solver based on the {\SN} encoding, {\LDSN}; a lazy
decomposition solver based on the {\MDD} encoding, {\LDMDD}; and the
MIP solver {\Gurobi} \shortcite{gurobi}.

Methods {\MDD}, {\SN}, {\LDSN} and {\LDMDD} include the
preprocessing technique described in Section
\ref{section-groupingCoeffs}. Results are given in Tables
\ref{table-originalRcpsp-all}-\ref{table-miplib-all}. For each family of
benchmarks, the best encoding is underlined and the best method is
shown in bold.

\begin{table}%[t]
   \def\tabcolsep{5pt}
   \begin{center}{\fontsize{8pt}{1em}\selectfont
      \begin{tabular}{|l|lllll|}
         \hline
         & 15s & 60s & 300s & 900s & 3600s \\
         \hline
         \Adder & {16.7} & {11.1} & {2.99} & {2.37} & {1.43} \\
         \SN & \underline{0.618} & \underline{0.380} & \underline{0.244} & \underline{0.197} & \underline{0.152} \\
         \MDD & \underline{0.616} & {0.462} & {0.284} & \underline{0.201} & \underline{0.160} \\
         \hline
         \LDSN & \textbf{0.343} & {0.176} & \textbf{0.0655} & \textbf{0.0406} & \textbf{0.000} \\
         \LDMDD & \textbf{0.337} & \textbf{0.169} & {0.0783} & {0.0626} & {0.0516} \\
         \LCG & \textbf{0.335} & \textbf{0.166} & {0.117} & {0.124} & {0.13} \\
%         \LCG & {0.655} & {0.535} & {0.325} & {0.21} & {0.118} \\
         \Gurobi & {1.90} & {1.10} & {0.537} & {0.404} & {0.341} \\

         %% \Adder  & 0.059  & 0.095  & 0.262  & 0.298  & 0.424 \\
         %% %\BDDDec  & 0.704  & 0.812  & 0.912  & 0.951  & 0.968 \\ \hline
         %% \SN  & \underline{0.77}  & \underline{0.881}  & \underline{0.963}
         %% & \underline{0.974}  & \underline{0.982} \\
         %% \MDD  & 0.75  & 0.81  & 0.918  & \underline{0.975}  & \underline{0.982} \\
         %% \hline \hline
         %% % \LDS  & 0.917  & 0.97  & 0.989  & 0.989  & 0.99 \\
         %% \LDSN  & \textbf{0.915}  & \textbf{0.967}  & \textbf{0.989}  & \textbf{0.99}  & \textbf{0.991} \\
         %% \LDMDD  & \textbf{0.918}  & \textbf{0.971}  & \textbf{0.989}  & \textbf{0.99}  & \textbf{0.99} \\
         %% % \hline
         %% \LCG & \textbf{0.918}  & \textbf{0.972}  & \textbf{0.992}  & \textbf{0.993}  & \textbf{0.994} \\
         %% \Gurobi  & 0.761  & 0.854  & 0.934  & 0.946  & 0.956 \\
         \hline
      \end{tabular}}
      \caption{\label{table-originalRcpsp-all}Pseudo-harmonic average distance from the 2040 original RCPSP benchmarks.}
   \end{center}
%   \vspace{-1.0cm}
\end{table}
\begin{table}%[t]
  \def\tabcolsep{5pt}
   \begin{center}{\fontsize{8pt}{1em}\selectfont
      \begin{tabular}{|l|lllll|}
         \hline
         & 15s & 60s & 300s & 900s & 3600s \\
         \hline
         \Adder & {54.7} & {43.0} & {19.2} & {11.0} & {7.9} \\
         \SN & \underline{27.3} & \underline{14.5} & \underline{8.29} & \underline{6.43} & \underline{4.21} \\
         \MDD & \underline{27.2} & {15.8} & {9.12} & {6.86} & {4.67} \\
         \hline
         \LDSN & {14.6} & {7.23} & {3.87} & {3.16} & {2.46} \\
         \LDMDD & {12.4} & {5.52} & {2.88} & {2.21} & {1.69} \\
         \LCG & {15.1} & {7.24} & {4.96} & {3.90} & {3.00} \\
         \Gurobi & \textbf{4.82} & \textbf{2.64} & \textbf{1.83} & \textbf{1.27} & \textbf{0.447} \\
         \hline
      \end{tabular}}
     \caption{\label{table-pbcomp-all}Pseudo-harmonic average distance from the 500 small-int PB optimization instances of the pseudo-Boolean competition 2015.}
   \end{center}
\end{table}

\begin{table}%[t]
  \def\tabcolsep{5pt}
  \begin{center}{\fontsize{8pt}{1em}\selectfont
      \begin{tabular}{|l|lllll|}
        \hline
        & 15s & 60s & 300s & 900s & 3600s \\ \hline
        \Adder & $\infty$ & $\infty$ & {953} & {39.8} & {9.37} \\
        \SN & \textbf{\underline{27.9}} & \textbf{\underline{12.1}} & \textbf{\underline{2.50}} & \textbf{\underline{1.28}} & \textbf{\underline{0.748}} \\
        \MDD & {917} & {331} & {38.0} & {7.84} & {5.56} \\
        \hline
        \LDSN & {753} & {198} & {14.2} & {7.40} & {4.42} \\
        \LDMDD & {846} & {297} & {35.6} & {7.78} & {5.02} \\
        \LCGnC & {1517} & {342} & {47.4} & {13.0} & {7.51} \\
        \Gurobi & {$\infty$} & {$\infty$} & {$\infty$} & {$\infty$} & {$\infty$} \\

        %% \Adder & 0 & 0 & 0.0304 & 0.107 & 0.164 \\
        %% \SN & \underline{0.0346} & \underline{0.0766} & \underline{0.286} & 0.439 & \underline{0.572} \\
        %% \MDD & 0.033 & 0.0696 & \underline{0.287} & \textbf{\underline{0.454}} & \underline{0.576} \\ \hline
        %% \LCG & 0.032 & 0.0696 & 0.239 & 0.42 & 0.556 \\
        %% \LDSN & \textbf{0.0365} & \textbf{0.0859} & \textbf{0.3} & \textbf{0.458} & \textbf{0.595} \\
        %% \LDMDD & 0.0347 & 0.0758 & 0.273 & 0.437 & 0.584 \\
        %% \Gurobi & 0 & 0 & 0 & 0 & 0 \\
        \hline
  %% \def\tabcolsep{5pt}
  %% \begin{center}{\fontsize{8pt}{1em}\selectfont
  %%     \begin{tabular}{|l|lll||lll||lll||lll|}
  %%       \hline
  %%       & \multicolumn{3}{|l||}{Some solution} &
  %%       \multicolumn{3}{|l||}{cost $\leqslant$ 30 + best} &
  %%       \multicolumn{3}{|l||}{cost $\leqslant$ 20 + best} &
  %%       \multicolumn{3}{|l||}{cost $\leqslant$ 10 + best} \\
  %%       \hline
  %%       & 300 & 900 & 3600 &  300 & 900 & 3600 &
  %%       300 & 900 & 3600 &  300 & 900 & 3600 \\
  %%       \hline
  %%       \Adder & 21 & 40 & 51 &0 & 21 & 34 & 0 & 17 & 27 &0 & 13 & 20\\
  %%       %\BDDDec & 148 & 190 & 199 &22 & 62 & 118 & 16 & 38 & 82 &6 & 27 & 58\\
  %%       \SN & \underline{150} & 184 & 195 & \textbf{\underline{27}} &\textbf{\underline{68}}
  %%       & 121 & \underline{20} & 39 & \textbf{\underline{94}} &11 & 25 &
  %%       \textbf{\underline{53}}\\
  %%       \MDD & \textbf{\underline{151}} &  \textbf{\underline{194}} & \textbf{\underline{199}}
  %%       &26 & 60 & \textbf{\underline{124}} & 19 & \textbf{\underline{40}} & 86
  %%       & \underline{12} & \textbf{\underline{27}} & 47\\
  %%       \hline
  %%       \LCG & 115 & 181 & \textbf{200} &26 & 57 & 111 & \textbf{22} & \textbf{40}
  %%       & 74 &\textbf{19} & 26 & 50\\
  %%       \Gurobi & 0 & 0 & 0 &0 & 0 & 0 & 0 & 0 & 0 & 0 & 0 & 0\\
  %%       \hline
    \end{tabular}}
    \caption{\label{table-leagues-all}Pseudo-harmonic average distance from 200 sport
      scheduling league benchmarks.}
  \end{center}
  %\vspace{-1.0cm}
\end{table}
\begin{table}%[t]
  \def\tabcolsep{2pt}
  \begin{center}{\fontsize{8pt}{1em}\selectfont
      \begin{tabular}{|l|llllll||llllll||llllll|}
        \hline
        & \multicolumn{6}{|l||}{Different values of $n$} &
        \multicolumn{6}{|l||}{Different values of $\amax$} &
        \multicolumn{6}{|l|}{Different values of $d$} \\
        \hline
        & $5$ & $ 10$ & $20$ & $ 40$ & $ 80$ & $ 160$ 
        & $ 1$ & $ 2$ & $ 4$ & $ 8$ & $16$ & $32$ 
        & $ 1$ & $ 2$ & $ 4$ & $10$ & $ 25$ & $100$ \\ \hline
        \Adder & \underline{0.05} & 9.56 & 186 & 276 & 296 & 300
               & 57.3 & 60.4 & 74.9 & 115 & {105} & {117}
               & 0.02 & 0.15 & 1.84 & 32.7 & {80.3} & {215} \\
        \BDDDec & 0.12 & \underline{4.73} & \underline{163} & {267} & {295} & 300
                & \underline{10.5} & \underline{25.6} & \underline{47.7} & \underline{90.9} & \underline{84.8} & \underline{82.5}
                & 0.07 & 0.13 & {0.46} & \underline{18.6} & \underline{56.3} & \underline{202} \\
        \SN & 0.24 & {10.4} & {181} & {267} & {289} & 300
              & {56.0} & 65.5 & {67.6} & {122} & {118} & {122}
              & 0.14 & 0.23 & 0.47 & {20.3} & {93.8} & 261 \\
        \MDD & \underline{0.05} & {6.45} & {175} & {268} & {290} & 300
                   & {51.8} & {57.2} & {68.9} & {103} & {107} & {119}
                   & \textbf{\underline{0.01}} & \underline{0.03} & \underline{0.17} & \underline{18.9} & {80.6} & {258} \\ \hline
        \LDSN & 0.28 & 9.77 & 182 & 266 & 288 & 300
              & 43.9 & 56.4 & 70.0 & 111 & 108 & 106
              & 69.0 & 0.11 & 0.40 & 21.7 & 82.0 & 244 \\
        \LDMDD & \textbf{0.01} & 3.23 & 165 & 264 & 287 & 300
               & 44.3 & 48.2 & 59.4 & 90.0 & 91.5 & 91.1
               & 42.0 & \textbf{0.01} & 0.09 & 16.4 & 63.1 & 244 \\
        \LCG & \textbf{0.01} & {4.87} & {173} & {265} & 288 & 300
               & {117} & {97.2} & {88.0} & {118} & {94.0} & {70.6}
               & {0.02} & \textbf{0.01} & {0.13} & {22.9} & {75.3} & {242} \\
        \Gurobi & \textbf{0.01} & \textbf{0.09} & \textbf{0.02} & \textbf{0.03} & \textbf{0.02} & \textbf{0.03}
                & \textbf{0.01} & \textbf{0.01} & \textbf{0.01} & \textbf{0.01} & \textbf{0.02} & \textbf{0.02}
                & \textbf{0.01} & {0.02} & \textbf{0.01} & \textbf{0.01} & \textbf{0.02} & \textbf{0.01} \\ \hline
    \end{tabular}}
    \caption{\label{table-mks-all}Multiple knapsack average solving
      time.}
  \end{center}
\end{table}
\begin{table}%[t]
   \def\tabcolsep{5pt}
   \begin{center}{\fontsize{8pt}{1em}\selectfont
      \begin{tabular}{|l|lllll|}
         \hline
         & 15s & 60s & 300s & 900s & 3600s \\
         \hline
         \Adder & {7.80} & {5.76} & {4.06} & {3.54} & {3.21} \\
         \BDDDec & {10.2} & {7.06} & {6.19} & {5.22} & {4.60} \\
         \SN & \underline{3.50} & \underline{2.99} & \underline{2.64} & {2.45} & {2.45} \\
         \MDD & {3.7} & {3.06} & {2.72} & \underline{2.42} & \underline{2.17} \\
         \hline
         \LDSN & \textbf{2.98} & {2.88} & \textbf{2.51} & \textbf{2.28} & {2.28} \\
         \LDMDD & \textbf{3.06} & {2.90} & \textbf{2.58} & {2.36} & {2.14} \\
         \LCG & {3.18} & {2.94} & \textbf{2.53} & \textbf{2.24} & \textbf{2.02} \\
         \Gurobi & {3.09} & \textbf{2.67} & \textbf{2.47} & {2.38} & {2.25} \\

         %% \Adder  & 0.415  & 0.461  & 0.502  & 0.517  & 0.536 \\
         %% \BDDDec  & 0.411  & 0.455  & 0.490  & 0.502  & 0.519 \\
         %% \SN  & \textbf{\underline{0.609}}  & \textbf{\underline{0.616}}
         %% & \textbf{\underline{0.638}}  & \textbf{\underline{0.643}}  & \underline{0.643} \\
         %% \MDD  & \textbf{\underline{0.603}}  & \textbf{\underline{0.611}}  & 0.631  & \textbf{\underline{0.638}}  & \underline{0.644} \\
         %% \hline
         %% \LDSN & \textbf{0.606} & \textbf{0.612} & 0.63 & 0.634 & 0.634 \\
         %% \LDMDD & \textbf{0.603} & 0.609 & 0.626 & 0.629 & 0.634 \\
         %% \LCG  & 0.522  & 0.587  & 0.626  & \textbf{0.644}  & \textbf{0.660} \\
         %% \Gurobi  & 0.438  & 0.445  & 0.447  & 0.447  & 0.448 \\
         \hline
      \end{tabular}}
      \caption{\label{table-graph-all} Pseudo-harmonic average
        distance from 320 graph coloring benchmarks.}
      %% \caption{\label{table-graph-all} pseudo-harmonic average distance from 320 graph
      %%   coloring benchmarks.}
   \end{center}
%   \vspace{-1.0cm}
\end{table}
\begin{table}%[t]
   \def\tabcolsep{5pt}
   \begin{center}{\fontsize{8pt}{1em}\selectfont
      \begin{tabular}{|l|lllll|}
         \hline
         & 15s & 60s & 300s & 900s & 3600s \\
         \hline
         \Adder & {13.28} & {12.76} & {10.3} & {7.86} & {7.64} \\
         \BDDDec & {9.88} & {8.86} & {5.77} & {4.78} & {3.56} \\
         \SN & {6.12} & {3.92} & {2.75} & {2.27} & {1.58} \\
         \MDD & \underline{3.43} & \underline{2.47} & \underline{1.76} & \underline{1.31} & \underline{0.767} \\ \hline
         \LDMDD & {1.73} & {1.09} & {1.01} & {0.773} & {0.614} \\
         \LDSN & {1.59} & {1.11} & {1.02} & {0.703} & {0.577} \\
         \LCG & {11.5} & {5.49} & {3.33} & {1.98} & {1.62} \\
         \Gurobi & \textbf{1.32} & \textbf{0.77} & \textbf{0.568} & \textbf{0.513} & \textbf{0.365} \\ \hline
      \end{tabular}}
     \caption{\label{table-miplib-all} Pseudo-harmonic average distance from 680 MIPLib
       instances}
   \end{center}
\end{table}

{\Gurobi} is clearly the best method in Tables \ref{table-pbcomp-all},
\ref{table-mks-all} and \ref{table-miplib-all}. In all the other
cases, {\LDSN} improves or is not far from {\LCG}. For Table
\ref{table-leagues-all}, {\SN} is clearly the best method.

\section{Conclusion}
\label{section-conclusion}

In this paper we have investigated how to best encode a linear integer (LI)
constraint into SAT. Since encoding methods for LI constraints are based on
those for cardinality constraints and Pseudo-Boolean (PB) constraints, we
also include a detailed survey of those methods.
We introduce three new approaches for encoding LI constraints:
\begin{itemize}
\item \MDD{} based on mapping an order encoding of the integers to an MDD,
\item \SN{} based on using sorting networks applied to a logarithmic
  encoding of the coefficients and order encoding of the integers, and
\item \BDDDec{} based on encoding both integers and coefficients using
  logarithmic encodings.
\end{itemize}
We have compared these approaches with existing methods, and found that
\MDD{} improves the state of the art for domain consistent encodings of LI
constraints, \SN{} provides a robust consistent encoding method for LI
constraints which provided the best solution for challenging sports
scheduling, 
and \BDDDec{} provides a method for robustly encoding LI constraints with
large coefficients and domains. 
The lazy decomposition versions of \MDD{} and \SN{} are also highly
competitive.

\newpage

\section{Bibliography}

\bibliography{bibNew}
\bibliographystyle{theapa}

\newpage

\appendix

\section{Proofs}

\def\thetheorem{\ref{prop-up-propagator}}
\begin{proposition}
  Let $D$ be a domain on the variables $\cal X$, and let $c$ be a
  constraint on $\cal X$. Let $({\cal Y}, F, e)$ be an encoding of
  $\cal X$ and $({\cal Y}_c, F_c)$ an encoding of $c$. Then $D_1
  \mapsto D_1 \sqcap e^{-1} \circ \pi_{\vert {\cal Y}} \circ
  \textrm{up}_{F_c} \circ e(D_1)$ is a correct propagator of $c$,
  where $\pi_{\vert {\cal Y}}$ is the projection from ${\cal Y}_c$ to
  $\cal Y$ and $\textrm{up}_{F_c}$ is the unit propagation on $F_c$.
\end{proposition}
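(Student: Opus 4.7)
The plan is to verify the two defining properties of a correct propagator in turn: (a) $f(D_1) \sqsubseteq D_1$ for every $D_1$ with $f$ monotone, and (b) $\{\solns(c) \st \solns(c) \sqsubseteq D_1\} = \{\solns(c) \st \solns(c) \sqsubseteq f(D_1)\}$. Property (a) splits into two trivia. The inclusion $f(D_1) \sqsubseteq D_1$ is immediate because $f(D_1)$ is defined as $D_1 \sqcap (\cdots)$. Monotonicity follows because each of the four stages in the composition is monotone with respect to the appropriate order ($\sqsubseteq$ on domains, $\subseteq$ on partial assignments): $e$ is monotone by the definition of an encoding, unit propagation $\textrm{up}_{F_c}$ is monotone (adding more literals to the starting assignment can only add more forced literals), the projection $\pi_{\vert {\cal Y}}$ is monotone, and $e^{-1}$ is monotone because $A \subseteq A'$ enlarges the set $\{D' \st e(D') \subseteq A\}$ of domains we intersect, which only shrinks the result. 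Intersecting with $D_1$ preserves monotonicity.

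For (b) the nontrivial inclusion is $\solns(c) \sqcap D_1 \sqsubseteq \solns(c) \sqcap f(D_1)$, since the other direction follows from $f(D_1) \sqsubseteq D_1$. Fix a complete assignment $D^\star \in \solns(c)$ with $D^\star \sqsubseteq D_1$; the goal is to show $D^\star \sqsubseteq f(D_1)$, and since $D^\star \sqsubseteq D_1$ already it suffices to prove $D^\star \sqsubseteq (e^{-1} \circ \pi_{\vert {\cal Y}} \circ \textrm{up}_{F_c} \circ e)(D_1)$. By the definition of an encoding of $c$, $e(D^\star)$ satisfies $F_c$ (extended with some total assignment on ${\cal Y}_c \setminus {\cal Y}$). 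By monotonicity of $e$ and $D^\star \sqsubseteq D_1$, we have $e(D_1) \subseteq e(D^\star)$, and because $e(D^\star)$ extends to a model of $F_c$, soundness of unit propagation gives $\textrm{up}_{F_c}(e(D_1)) \subseteq e(D^\star) \cup B$ for some assignment $B$ on ${\cal Y}_c \setminus {\cal Y}$.

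Projecting to $\cal Y$ therefore yields $\pi_{\vert {\cal Y}}(\textrm{up}_{F_c}(e(D_1))) \subseteq e(D^\star)$. Applying $e^{-1}$ and using monotonicity plus the identity $e^{-1}(e(D^\star)) = D^\star$ (valid for complete assignments by the encoding axioms) gives $D^\star \sqsubseteq e^{-1}(\pi_{\vert {\cal Y}}(\textrm{up}_{F_c}(e(D_1))))$, which is exactly what we needed.

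The main obstacle I expect is keeping the directions of the orderings straight: the paper calls $e$ ``monotonically decreasing'' but in the sense that stronger domains map to stronger (superset) partial assignments, while $e^{-1}$ reverses this because its definition uses $\{D' \st e(D') \subseteq A\}$. Getting the soundness-of-unit-propagation step phrased carefully, namely that the literals forced on the auxiliary variables of ${\cal Y}_c \setminus {\cal Y}$ do not interfere when we project back to $\cal Y$, is the other subtle point, but follows once we exhibit a model of $F_c$ extending $e(D^\star)$.
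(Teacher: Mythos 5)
Your proof is correct, and it follows the paper's decomposition into the same three properties: $f(D_1)\sqsubseteq D_1$, monotonicity of the composite, and the equality of solution sets, with the first two handled identically. Where you genuinely diverge is in the key inclusion $\{ \solns(c) \st \solns(c) \sqsubseteq D_1 \} \subseteq \{ \solns(c) \st \solns(c) \sqsubseteq f(D_1) \}$. The paper evaluates the composed map at the solution $D'$ itself: since $e(D')$ is a \emph{complete} assignment of $\cal Y$ on which $F_c$ is satisfiable, unit propagation does not fail and the projection of its closure back onto $\cal Y$ is just $e(D')$, so $f(D') = D'$, and the monotonicity already established in the first part yields $D' = f(D') \sqsubseteq f(D_1)$. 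You instead argue semantically: you exhibit a model of $F_c$ extending $e(D^\star) \supseteq e(D_1)$, invoke soundness of unit propagation to bound $\textrm{up}_{F_c}(e(D_1))$ by that model, project away the auxiliary variables, and apply $e^{-1}$ with its order-reversal. Both routes are valid. The paper's is slightly more economical, since it never needs to reason about \emph{which} literals unit propagation derives, only that it does not reach a conflict on a satisfiable complete assignment; yours makes the soundness of unit propagation explicit and, as a small bonus, does not reuse the monotonicity of the composite in the correctness step. Your closing remarks correctly identify the two genuinely delicate points (the direction conventions for the paper's ``monotonically decreasing'' maps, and the non-interference of the auxiliary literals under projection), and you resolve both properly.
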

\begin{proof}
  Let $D_1, D_2$ be domains on $\cal X$. We have to see that
  \begin{enumerate}
    \item If $D_1 \sqsubseteq D_2$, then $D_1 \sqcap e^{-1} \circ
      \pi_{\vert {\cal Y}} \circ \textrm{up}_{F_c} \circ e (D_1)
      \sqsubseteq D_2 \sqcap e^{-1} \circ \pi_{\vert {\cal Y}} \circ
      \textrm{up}_{F_c} \circ e (D_2)$.
    \item $D_1 \sqcap e^{-1} \circ \pi_{\vert {\cal Y}} \circ
      \textrm{up}_{F_c} \circ e (D_1) \sqsubseteq D_1$.
    \item $\{ \solns(c) \st \solns(c) \sqsubseteq D_1 \} = \{
      \solns(c) \st \solns(c) \sqsubseteq D_1 \sqcap e^{-1} \circ
      \pi_{\vert {\cal Y}} \circ \textrm{up}_{F_c} \circ e (D_1) \}$.
  \end{enumerate}

  \begin{enumerate}
    \item $e$ and $e^{-1}$ are monotonically decreasing functions by
      definition. Unit propagation is monotonically decreasing since
      it is a propagator, and $\pi_{\vert {\cal Y}}$ is monotonically
      decreasing since it is a projection. Therefore, $e^{-1} \circ
      \pi_{\vert {\cal Y}} \circ \textrm{up}_{F_c} \circ e$ is
      monotonically decreasing.

      Since $D_1 \sqsubseteq D_2$ and $e^{-1} \circ \pi_{\vert {\cal
          Y}} \circ \textrm{up}_{F_c} \circ e (D_1) \sqsubseteq e^{-1}
      \circ \pi_{\vert {\cal Y}} \circ \textrm{up}_{F_c} \circ e
      (D_2)$, their intersection also satisfies the inequality.
    \item The result is obvious.
    \item Obviously, $$\{ \solns(c) \st \solns(c) \sqsubseteq D_1 \}
      \supset \{ \solns(c) \st \solns(c) \sqsubseteq D_1 \sqcap e^{-1}
      \circ \pi_{\vert {\cal Y}} \circ \textrm{up}_{F_c} \circ e (D_1)
      \},$$ so let us prove the other inequality. Let us take $$D' \in
      \{ \solns(c) \st \solns(c) \sqsubseteq D_1 \}.$$ By definition
      of encoding, $e(D')$ is a complete assignment of $\cal Y$ and
      $c$ is satisfiable on $e(D')$: therefore, $\textrm{up}_{F_c}
      \circ e (D') \neq \emptyset$.

      Since $e(D')$ is a complete assignment of $\cal Y$,
      $$\pi_{\vert {\cal Y}} \circ \textrm{up}_{F_c} \circ e (D') =
      e(D').$$ So $$e^{-1} \circ \pi_{\vert {\cal Y}} \circ
      \textrm{up}_{F_c} \circ e (D') = e^{-1} (e(D')) = D'.$$
      Therefore $$D' = e^{-1} \circ \pi_{\vert {\cal Y}} \circ
      \textrm{up}_{F_c} \circ e (D') \sqsubseteq e^{-1} \circ
      \pi_{\vert {\cal Y}} \circ \textrm{up}_{F_c} \circ e (D_1).$$
  \end{enumerate}
\end{proof}

\def\thetheorem{\ref{prop-intervals}}
\begin{proposition}
  Let $\mathcal M$ be the MDD of a LI constraint $a_1 x_1 + \cdots + a_n x_n
  \leqslant a_0$. Then, the following holds:
  %\vspace*{-1mm}
  \begin{enumerate}
  \item The interval of the true node $\tnode$ is $[0, \infty)$.
  \item The interval of the false node $\fnode$ is $(-\infty, -1]$.
  \item Let $\nu$ be a node with selector variable $x_i$ and
    children $\{\nu_0, \nu_1, \ldots, \nu_{d_i} \}$. Let $[\beta_j,
      \gamma_j]$ be the interval of $\nu_j$. Then, the interval of
    $\nu$ is $[\beta, \gamma]$, with
    $$\beta = \max \{\beta_r + r a_i \st 0 \leqslant r \leqslant d_i \}, \qquad
    \gamma = \min \{\gamma_r + r a_i \st 0 \leqslant r \leqslant d_i \}.$$
  \end{enumerate}
  %\vspace*{-1mm}
\end{proposition}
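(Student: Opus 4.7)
The plan is to prove each of the three claims by direct unfolding of the definition of interval, namely $[\beta,\gamma]=\{\alpha\mid \text{the MDD rooted at } \nu \text{ represents } a_i x_i+\cdots+a_n x_n\leqslant\alpha\}$. The first two parts are essentially bookkeeping: the $\tnode$ subtree has no variables, so the empty sum $0$ satisfies $0\leqslant\alpha$ iff $\alpha\geqslant 0$, giving interval $[0,\infty)$; similarly for $\fnode$, the constraint $0\leqslant\alpha$ must fail on every assignment, which happens iff $\alpha\leqslant -1$, giving $(-\infty,-1]$.

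For part 3, the key observation is the recursive semantics of MDDs: a value $\alpha$ lies in the interval of $\nu$ iff, for every $r\in\{0,1,\ldots,d_i\}$, the sub-MDD reached by the edge $x_i=r$ (which is exactly $\nu_r$) represents the reduced constraint $a_{i+1}x_{i+1}+\cdots+a_nx_n\leqslant \alpha-r a_i$. First I would show the forward direction: if the MDD at $\nu$ represents $\sum_{j\geqslant i}a_j x_j\leqslant\alpha$, then fixing $x_i=r$ shows that $\nu_r$ represents $\sum_{j>i}a_j x_j\leqslant\alpha-r a_i$, so by definition of interval $\alpha-r a_i\in[\beta_r,\gamma_r]$. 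Then for the converse, if $\alpha-r a_i\in[\beta_r,\gamma_r]$ for every $r$, each $\nu_r$ represents the appropriate reduced constraint, and gluing these back through the $x_i$-branching node reconstructs exactly the constraint $\sum_{j\geqslant i}a_j x_j\leqslant\alpha$ at $\nu$.

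Combining the two directions, the interval of $\nu$ is $\bigcap_{r=0}^{d_i}[\beta_r+r a_i,\ \gamma_r+r a_i]$, and since the intersection of a finite family of closed intervals on $\mathbb{Z}$ is $[\max_r(\beta_r+r a_i),\ \min_r(\gamma_r+r a_i)]$ whenever non-empty, the claimed formula for $\beta$ and $\gamma$ follows. (If this intersection is empty, then no such $\alpha$ exists, and by the reduction/uniqueness of MDDs that situation does not arise during the construction, since $\nu$ would then be merged with a sibling; the formula still holds formally with $\beta>\gamma$.)

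The main obstacle will be justifying the biconditional ``the MDD at $\nu$ represents $C$ iff each $\nu_r$ represents the reduced constraint'' cleanly. This relies on the fact that an ordered MDD is the canonical representation of the Boolean function it encodes, so representing ``the same constraint'' is equivalent to computing the same indicator function on all complete assignments, which then factors variable-by-variable through the root. I would state this as a preliminary observation about MDD semantics (referring to the quasi-reduced ordering assumption from Section~\ref{MDD-def}) and then the rest of the argument reduces to an algebraic manipulation on intervals.
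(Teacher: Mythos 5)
Your proposal is correct and follows essentially the same route as the paper: both arguments reduce part~3 to the biconditional that $\alpha$ lies in the interval of $\nu$ iff $\alpha - r a_i$ lies in the interval of $\nu_r$ for every $r \in \{0,\ldots,d_i\}$, and then read off $\beta$ and $\gamma$ as the endpoints of the intersection $\bigcap_r [\beta_r + r a_i,\, \gamma_r + r a_i]$. The paper just phrases the two directions concretely in terms of paths from $\nu$ to $\tnode$ or $\fnode$ for each complete assignment, which is the same ``factoring through the $x_i$-branch'' step you identify as the main obstacle.
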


\begin{proof}
  \begin{enumerate}
    \item $0 \leqslant \alpha$ is true (i.e., represented by the MDD
      $\tnode$) if and only if $\alpha \in [0, \infty)$. Therefore,
      the interval of $\tnode$ is $[0, \infty)$.
    \item Analogously, $0 \leqslant \alpha$ is false if and only if
      $\alpha \in (-\infty, -1]$, so the interval of $\fnode$ is
      $(-\infty, -1]$.
    \item
      \begin{description}
        \item[$\mathbf{\subseteq:}$] Given $h \in [\beta, \gamma]$ we
          have to show that $h$ belongs to the interval of $\nu$. Let
          $\{x_j = v_j\}_{j=i}^n$ be an assignment of the variables
          $x_i, x_{i+1}, \ldots, x_n$. We have to show that the
          assignment satisfies the constraint $$\sum_{j=i}^n a_j x_j
          \leqslant h$$ if and only if the path defined by the
          assignment goes from the node $\nu$ to $\tnode$.

      Since $\beta \leqslant h \leqslant \gamma$, by definition of
      $\beta$ and $\gamma$,
      $$\beta_{v_i} + v_i a_i \leqslant \beta \leqslant h \leqslant
      \gamma \leqslant \gamma_{v_i} + v_i a_i$$ So $h - v_i a_i \in
             [\beta_{v_i}, \gamma_{v_i}]$.

             Since $[\beta_{v_i}, \gamma_{v_i}]$ is the interval of
             $\nu_{v_i}$, the assignment $\{x_j = v_j\}_{j=i+1}^n$
             goes from $\nu_{v_i}$ to $\tnode$ if and only if
             $\sum_{j=i+1}^n a_j v_j \leqslant h - v_i
             a_i$. Therefore, the assignment $\{x_j = v_j\}_{j=i}^n$
             goes from $\nu$ to $\tnode$ if and only if $\sum_{j=i}^n
             a_j v_j \leqslant h$ as we wanted to prove.
           \item[$\mathbf{\supseteq:}$] Let $h$ be in the interval of
             $\nu$. We have to show that $h \in [\beta, \gamma]$, this
             is,
             $$\max \{\beta_r + r a_i \st 0 \leqslant r \leqslant d_i
             \} \leqslant h \leqslant \min \{\gamma_r + r a_i \st 0
             \leqslant r \leqslant d_i \}.$$ Take $r$ in $0 \leqslant
             r \leqslant d_i$, we have to show that $\beta_r + r a_i
             \leqslant h \leqslant \gamma_r + r a_i$.

             Let $\{x_j=v_j\}_{j=i+1}^n$ be an assignment going from
             $\nu_r$ to $\tnode$. Then, $\{x_i = r \} \cup
             \{x_j=v_j\}_{j=i+1}^n$ goes from $\nu$ to $\tnode$. Since
             $h$ belongs to the interval of $\nu$,
             $$a_i r + \sum_{j=i+1}^n a_j v_j \leqslant h.$$

             Let $\{x_j=v_j\}_{j=i+1}^n$ be an assignment going from
             $\nu_r$ to $\fnode$. Then, $\{x_i = r \} \cup
             \{x_j=v_j\}_{j=i+1}^n$ goes from $\nu$ to $\fnode$. Since
             $h$ belongs to the interval of $\nu$,
             $$a_i r + \sum_{j=i+1}^n a_j v_j > h.$$

             Therefore, any assignment goes from $\nu_r$ to $\tnode$
             if and only if $\sum_{j=i+1}^n a_j v_j > h - a_i r$. By
             definition, $h - a_i r$ belongs to the interval of
             $\nu_r$, so $\beta_r \leqslant h - a_i r \leqslant
             \gamma_r$ as we wanted to prove.
    \end{description}
  \end{enumerate}
\end{proof}

\def\thetheorem{\ref{lemma-mdd-consistency}}
\begin{lemma}
  Let $A = \{ x_j \geqslant v_j \}_{j=i}^n$ be a partial assignment on
  the last variables. Let $\nu$ be a node of $\mathcal M$ with
  selector variable $x_i$.

  Then, $\mddenc(\mu)$ and $A$ propagates (by unit propagation) $\neg
  z_\nu$ if and only if $A$ is incompatible with $\nu$ (this is, the
  constraint defined by an MDD rooted at $\nu$ does not have any
  solution satisfying $A$).
\end{lemma}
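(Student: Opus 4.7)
My plan is to prove the lemma by strong induction on the depth of $\nu$ in $\mathcal{M}$, or equivalently on $n-i+1$, the number of variables remaining from $x_i$ down to $x_n$. Both directions of the biconditional come out of the same recursion, with the terminal nodes serving as the base case: for $\tnode$ we have the unit clause $z_\tnode$ (so $\neg z_\tnode$ is never propagated, matching the fact that $\tnode$ is always compatible), and for $\fnode$ the unit clause $\neg z_\fnode$ (always propagated, matching the fact that $\fnode$ is always incompatible).

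For the inductive step in the direction $(\Leftarrow)$, suppose $A$ is incompatible with $\nu$. The constraint $\sum_{j\geqslant i} a_j x_j \leqslant \alpha$, for $\alpha$ in the interval of $\nu$, is monotonically decreasing in each $x_j$, so pinning $x_i=v_i$ and restricting $A$ to $\{x_j \geqslant v_j\}_{j>i}$ must be incompatible with $\child(\nu,v_i)$; this uses Proposition~\ref{prop-intervals} to translate the interval of the child into the adjusted bound $\alpha - a_i v_i$. The inductive hypothesis then gives $\neg z_{\child(\nu,v_i)}$ by unit propagation, and combining it with the literal $y_i^{v_i}$ (true in $e(A)$ since $x_i \geqslant v_i$) in the clause $\neg z_\nu \lor \neg y_i^{v_i} \lor z_{\child(\nu,v_i)}$ forces $\neg z_\nu$.

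For the direction $(\Rightarrow)$, I argue by contrapositive: if $A$ is compatible with $\nu$, I exhibit a total assignment $B$ extending $e(A)$, satisfying every clause of $\mddenc(\mu)$, and making $z_\nu$ true, which precludes unit propagation from deriving $\neg z_\nu$. Take a witnessing extension $\{x_j=w_j\}_{j\geqslant i}$ with $w_j\geqslant v_j$ that satisfies the constraint at $\nu$; pick arbitrary values (say $0$) for $x_1,\ldots,x_{i-1}$, complete the order-encoding literals accordingly, and for every node $\nu'$ set $z_{\nu'}$ to true iff the chosen assignment restricted to $\sv(\nu')$ and later variables satisfies the constraint of $\nu'$. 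The unit clauses $z_\tnode,\neg z_\fnode$ hold immediately. For an edge clause $\neg z_{\nu'}\lor \neg y_j^k \lor z_{\child(\nu',k)}$, the only nontrivial case is $z_{\nu'}$ true and $k\leqslant x_j^\star$ (so $y_j^k$ is true); then by Proposition~\ref{prop-intervals} the interval endpoint of $\child(\nu',k)$ is at least that of $\child(\nu',x_j^\star)$, so the suffix that witnesses $z_{\nu'}$ also witnesses $z_{\child(\nu',k)}$.

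The main obstacle is this forward direction, and specifically keeping all edge clauses satisfied above the level of $\nu$, where $A$ imposes no constraints on the $x_j$ for $j<i$; the proof rests on the fact that the interval monotonicity of Proposition~\ref{prop-intervals} makes my pointwise definition of the $z_{\nu'}$ coherent across every edge of the MDD regardless of how the free upper variables are chosen. Once that is verified, the existence of such a $B$ witnesses that $\neg z_\nu$ is not entailed by $e(A) \cup \mddenc(\mu)$, and \emph{a fortiori} not propagated by unit propagation, completing the induction.
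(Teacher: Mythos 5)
Your $(\Leftarrow)$ direction coincides with the paper's: monotonicity shows $A\setminus\{x_i\geqslant v_i\}$ is incompatible with $\child(\nu,v_i)$, the induction hypothesis falsifies $z_{\child(\nu,v_i)}$, and the edge clause $\neg z_\nu\lor\neg y_i^{v_i}\lor z_{\child(\nu,v_i)}$ with $y_i^{v_i}$ true becomes unit on $\neg z_\nu$. Where you genuinely diverge is $(\Rightarrow)$. The paper stays inside the induction and argues purely syntactically: $\neg z_\nu$ occurs only in the out-edge clauses of $\nu$, and each such clause is blocked, either because $y_i^k$ is unassigned (for $k>v_i$) or because, by monotonicity and the induction hypothesis, $z_{\nu_k}$ is not falsified (for $k\leqslant v_i$). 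You instead build a total assignment extending $e(A)$ that satisfies all of $\mddenc(\mu)$ with $z_\nu$ true, and invoke soundness of unit propagation. Both routes work. The paper's argument is local --- nothing above level $i$ is ever mentioned --- but it is informal about why no clause elsewhere in the formula could contribute to setting $z_{\nu_k}$ false or $y_i^k$ true. Your argument is semantically airtight once the model is checked, but the check is global: besides the edge clauses, which your monotonicity argument handles correctly (for $k\leqslant w_j$ one has $\alpha-k a_j\geqslant\alpha-w_j a_j$, so the witnessing suffix for a node also witnesses every lower-indexed child), you must also verify the unit clause forcing the root, $z_\mu$; that is, that the padding $x_1=\cdots=x_{i-1}=0$ together with the witness $w_i,\ldots,w_n$ satisfies the original constraint. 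This does hold, since $\sum_{l\geqslant i}a_l w_l\leqslant\beta_\nu\leqslant a_0$ (any root-to-$\nu$ path shows the lower endpoint $\beta_\nu$ of $\nu$'s interval is at most $a_0$), but it is the one place where the relation between $\nu$ and the root enters and deserves an explicit line; with that added, your proof is complete and arguably cleaner than the paper's, at the cost of reasoning about the entire MDD rather than only the subgraph below $\nu$.
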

\begin{proof}
  Let us prove the result by induction on $n+1-i$. If $i = n+1$, $\nu$
  can only be $\tnode$ and $\fnode$, and the result is trivial.

  Let us prove the inductive case. Let us denote $\nu_k = \child(\nu,
  k)$, and let $r$ be $v_i$.

  \begin{description}
    \item[$\mathbf{\Rightarrow:}$] Let us assume that $A$ and $\nu$
      are compatible, and let us prove that $\neg z_\nu$ is not
      propagated.

      $z_\nu$ only appears with negative polarity in the clauses
      $$\neg z_\nu \lor \neg y_i^k \lor z_{\nu_k}, \ 0 \leqslant k
      \leqslant d_i.$$ For $k > r$, $y_i^k$ is undefined so these
      clauses cannot propagate $\neg z_\nu$. For $j \leqslant r$, for
      monotonicity, $A \setminus \{x_i \geqslant r\}$ is compatible
      with $\nu_j$, so, by induction hypothesis, $z_{\nu_j}$ is not
      propagated to false. Therefore, these clauses cannot propagate
      $\neg z_\nu$.

    \item[$\mathbf{\Leftarrow:}$] $A \setminus \{x_i \geqslant r \}$
      is incompatible with $\nu_r$ so, by induction hypothesis,
      $z_{\nu_r}$ is propagated to false. Then, the clause $$\neg
      z_\nu \lor \neg y_i^r \lor z_{\nu_r}$$ propagates $\neg z_\nu$.
  \end{description}
\end{proof}

\def\thetheorem{\ref{th-consistency}}
\begin{theorem}
  Unit propagation on $\mddenc(\mu)$ is domain consistent.
\end{theorem}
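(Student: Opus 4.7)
The plan is to reduce domain consistency of $\mddenc(\mu)$ to the statement that unit propagation produces $\neg y_i^v$ whenever the assignment $x_i \geq v$ is unsupported given the current lower bounds on the other variables. Because $\sum_j a_j x_j$ is monotone in each $x_j$ (all coefficients positive) and the $y_i^j$ form an order encoding, the domain of each $x_i$ at any stage of search is an interval $[l_i, u_i]$ with $l_j := \max\{k : y_j^k \text{ is true}\}$; bound consistency and domain consistency coincide in this setting, so the goal reduces to propagating $\neg y_i^v$ for every $i$ and every $l_i < v \leq u_i$ with $a_i v + \sum_{j \neq i} a_j l_j > a_0$.

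First I would trace a ``spine'' of $z$-literals forced to true down to level $i$. Let $\nu_0 = \mu$ and, inductively, $\nu_k = \child(\nu_{k-1}, l_k)$. By induction on $k \in \{0, 1, \ldots, i-1\}$, unit propagation forces $z_{\nu_k}$ to true: the base case uses the unit clause asserting $z_\mu$, and the inductive step combines $z_{\nu_{k-1}}$ (by IH) with the literal $y_k^{l_k}$ --- the dummy true variable when $l_k = 0$, a consequence of the order encoding of $x_k \geq l_k$ otherwise --- in the clause $\neg z_{\nu_{k-1}} \lor \neg y_k^{l_k} \lor z_{\nu_k}$.

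Next I would apply Lemma \ref{lemma-mdd-consistency} to the child $\nu^*_v := \child(\nu_{i-1}, v)$, whose selector variable is $x_{i+1}$, with the partial assignment $A = \{x_j \geq l_j\}_{j=i+1}^n$. By Proposition \ref{prop-intervals}, the MDD rooted at $\nu^*_v$ encodes the residual constraint $a_{i+1} x_{i+1} + \cdots + a_n x_n \leq a_0 - a_i v - \sum_{j < i} a_j l_j$, so $A$ is incompatible with $\nu^*_v$ iff $\sum_{j > i} a_j l_j$ already exceeds this residual right-hand side, which rearranges exactly to $a_i v + \sum_{j \neq i} a_j l_j > a_0$. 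The lemma then yields $\neg z_{\nu^*_v}$ by unit propagation, and combined with $z_{\nu_{i-1}}$ from the spine, the clause $\neg z_{\nu_{i-1}} \lor \neg y_i^v \lor z_{\nu^*_v}$ fires as a unit and produces the desired $\neg y_i^v$.

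The main obstacle is the bookkeeping in the final step: identifying the Lemma's combinatorial notion of incompatibility of $A$ with $\nu^*_v$ with the arithmetic condition $a_i v + \sum_{j \neq i} a_j l_j > a_0$ requires invoking the interval characterization of $\nu^*_v$ and shifting the right-hand side by $a_i v + \sum_{j < i} a_j l_j$. The only other care point is the $l_k = 0$ case in the spine, handled uniformly by the dummy-variable convention $y_k^0 \equiv \top$ built into the encoding; nothing in the argument is specific to a particular $(i, v)$, so the conclusion holds simultaneously for every literal required by domain consistency.
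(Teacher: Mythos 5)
Your proof is correct and follows essentially the same route as the paper's: the paper's induction on $n$ is exactly your ``spine'' of $z$-literals forced true along the children indexed by the current lower bounds, and both arguments finish by applying Lemma~\ref{lemma-mdd-consistency} to the child $\child(\nu_{i-1},v)$ and firing the clause $\neg z_{\nu_{i-1}} \lor \neg y_i^{v} \lor z_{\nu^*_v}$. The only difference is presentational: you make explicit the reduction of domain consistency to the lower-bound condition $a_i v + \sum_{j\neq i} a_j l_j > a_0$ and the identification of incompatibility with that arithmetic condition via the interval/residual-constraint characterization, which the paper leaves implicit in its setup.
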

\begin{proof} We prove the result by induction on $n$. The case $n=0$ is
    trivial, so let us prove the inductive case. Let $A = \{x_j \geqslant
    v_j\}_{j=1}^n$ be a partial assignment which is compatible with $\mathcal
    M$ if and only if $x_i < r$. We have to prove that unit propagation on $A$
    and $\mddenc(\mu)$ propagates $\neg y_i^r$.

  Let us denote $\nu_k =\child(\mu,k)$.
  \begin{description}
    \item[$\mathbf{i>1:}$] For monotonicity of the MDD, $A' = A
      \setminus \{ x_1 \geqslant v_1 \}$ is compatible with
      $\nu_{v_1}$ if and only if $x_i < r$.

      Notice that $z_{\nu_{v_1}}$ is propagated by the clauses $z_\mu$
      and $$\neg z_\mu \lor \neg y_1^{v_1} \lor z_{\nu_{v_1}}.$$ By
      induction hypothesis, $\neg y_i^r$ is propagated on
      $\mddenc(\nu_{v_1})$.

    \item[$\mathbf{i=1:}$] The MDD rooted at $\nu_r$ is incompatible
      with $A \setminus \{x_1 \geqslant v_1 \}$. By the previous
      Lemma, $\neg z_{\nu_r}$ is propagated. Therefore, clauses
      $z_\mu$ and $\neg z_\mu \lor \neg y_1^r \lor z_{\nu_r}$
      propagate $\neg y_i^r$.
  \end{description}
\end{proof}

\def\thetheorem{\ref{lemma-yiff}}
\begin{lemma}
  Let $A = \{ x_{i,j} \geqslant v_{i,j} \}_{1 \leqslant i \leqslant n, 0
    \leqslant j \leqslant m}$ be an assignment. Then,
  $$\sum\limits_{j=0}^m b^j \Big( A_{1,j} v_{1,j} + A_{2,j} v_{2,j} +
  \cdots + A_{n,j} v_{n,j} \Big) > b^{m+1}-1$$ if and only if
  $y_m^{b}$ is propagated to true.
\end{lemma}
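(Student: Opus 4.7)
The plan is to split the proof into an arithmetic identity and a propagation argument. First I would establish, by induction on $m$, the identity
\[
\sum_{j=0}^m b^j \sum_{i=1}^n A_{i,j} x_{i,j} \;=\; b^m\, y_m + \sum_{j=0}^{m-1} b^j (y_j \bmod b),
\]
using the recursion $y_m = \lfloor y_{m-1}/b \rfloor + \sum_i A_{i,m} x_{i,m}$ and $y_{m-1} = b\lfloor y_{m-1}/b\rfloor + (y_{m-1}\bmod b)$. Since each $(y_j\bmod b)\in[0,b-1]$, the remainder term sits in $[0,b^m-1]$, so the LHS $S$ satisfies $b^m y_m \le S \le b^m y_m + b^m - 1$. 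Consequently $S>b^{m+1}-1$ iff $S\ge b^{m+1}$ iff $y_m\ge b$, i.e.\ iff the variable $y_m^b$ holds.

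Next, let $y_j^*$ denote the value that $y_j$ takes when each $x_{i,j}$ is set to its lower bound $v_{i,j}$. By the identity, the lemma reduces to showing that unit propagation on the encoding forces $y_m^b$ to true iff $y_m^*\ge b$. To get this I would prove, by induction on $j$, the stronger claim: under $A$, unit propagation on the sorting-network clauses defining $y_j$ sets $y_j^k$ to true exactly for $k\le y_j^*$. Base $j=0$: the inputs to the first sorting network are the tuples $X_{i,0}$, in which $A$ makes exactly $\sum_i A_{i,0}v_{i,0}=y_0^*$ literals true, and domain consistency of sorting networks (Proposition \ref{prop-encoding-networks}(3)) forces precisely $\{y_0^k : k\le y_0^*\}$. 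Inductive step $j>0$: the inputs to the $j$-th network are the sub-sequence $y_{j-1}^b, y_{j-1}^{2b},\ldots$, which by Proposition \ref{prop-encoding-networks}(1) is the order encoding of $\lfloor y_{j-1}/b\rfloor$, together with $X_{i,j}$; by the induction hypothesis the first group contributes $\lfloor y_{j-1}^*/b\rfloor$ true literals and the second contributes $\sum_i A_{i,j}v_{i,j}$, summing to $y_j^*$, so domain consistency again gives the exact propagated prefix $\{y_j^k : k\le y_j^*\}$.

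Applying the claim at $j=m$ yields that $y_m^b$ is propagated to true iff $y_m^*\ge b$, which by the arithmetic identity is equivalent to $\sum_{j=0}^m b^j \sum_i A_{i,j}v_{i,j} > b^{m+1}-1$, as required. The main obstacle will be the ``only if'' part of the inductive claim: I must argue that no $y_j^k$ for $k>y_j^*$ is accidentally forced. The cleanest way is to appeal to domain consistency of the sorting network in both directions, observing that the networks are local gadgets that share only the variables $y_{j-1}^{\ell b}$ across levels, so the count of true inputs at level $j$ is precisely the number computed by the IH plus the literals that $A$ directly sets, with no back-propagation from higher levels possible before the terminal clause $\neg y_m^b$ has been added.
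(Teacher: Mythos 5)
Your proof is correct and follows essentially the same route as the paper: both reduce the claim to ``$y_m \geqslant b$ iff the sum exceeds $b^{m+1}-1$'' via the recursion $y_j = \lfloor y_{j-1}/b\rfloor + \sum_i A_{i,j}x_{i,j}$, and both invoke Proposition~\ref{prop-encoding-networks} to translate $y_m \geqslant b$ into unit propagation of $y_m^b$. Your exact identity $S = b^m y_m + \sum_{j<m} b^j(y_j \bmod b)$ with remainder in $[0,b^m-1]$ is a tidier packaging of the paper's two chains of inequalities, and your explicit level-by-level induction on the propagated prefix makes precise what the paper leaves to the cited proposition.
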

\begin{proof}
  \begin{description}
    \item[$\Rightarrow$]
      $$\begin{array}{lllll}
        b^{m+1}-1 & < && & \sum\limits_{j=0}^m b^j \Big( A_{1,j} v_{1,j} + A_{2,j} v_{2,j} + \cdots + A_{n,j} v_{n,j} \Big)\\
        & = & & y_0 + & \sum\limits_{j=1}^m b^j \Big( A_{1,j} v_{1,j} + A_{2,j} v_{2,j} + \cdots + A_{n,j} v_{n,j} \Big)\\
        & \leqslant & b^1 - b^0 + & b \left \lfloor{\frac{y_{0}}{b}}\right \rfloor + & \sum\limits_{j=1}^m b^j \Big( A_{1,j} v_{1,j} + A_{2,j} v_{2,j} + \cdots + A_{n,j} v_{n,j} \Big)\\
        & = & b^1 - b^0 + & b y_1 + & \sum\limits_{j=2}^m b^j \Big( A_{1,j} v_{1,j} + A_{2,j} v_{2,j} + \cdots + A_{n,j} v_{n,j} \Big)\\
        & \leqslant & b^2 - b^0 + & b^2 \left \lfloor{\frac{y_{1}}{b}}\right \rfloor + & \sum\limits_{j=2}^m b^j \Big( A_{1,j} v_{1,j} + A_{2,j} v_{2,j} + \cdots + A_{n,j} v_{n,j} \Big)\\
        & = & b^2 - b^0 + & b^2 y_2 + & \sum\limits_{j=3}^m b^j \Big( A_{1,j} v_{1,j} + A_{2,j} v_{2,j} + \cdots + A_{n,j} v_{n,j} \Big)\\
        & \leqslant & & \cdots\\
        & \leqslant & b^m - b^0 + & b^m y_m. 
      \end{array}$$

      Therefore, $b^{m+1} - b^m > b^m y_m$ so $y_m \geqslant b$. By
      the previous proposition, $(y_m^1, \ldots)$ is a domain
      consistent encoding of the order encoding of $y_m$, so $y_m^b$
      will be propagated.

    \item[$\Leftarrow$]

      By the previous proposition, $(y_m^1, \ldots)$ is a domain
      consistent encoding of the order encoding of $y_m$. Therefore,
      if $y_m^b$ is propagated, $y_m \geqslant b$, so $b^{m+1} - 1 <
      b^m y_m$. Therefore:

      $$\begin{array}{llll}
        b^{m+1}-1 & < & b^m y_m\\
        & = & b^m \left \lfloor{\frac{y_{m-1}}{b}}\right \rfloor +& \sum\limits_{j=m}^m b^j \Big( A_{1,j} v_{1,j} + A_{2,j} v_{2,j} + \cdots + A_{n,j} v_{n,j} \Big)\\
        & \leqslant & b^{m-1} y_{m-1} &+ \sum\limits_{j=m}^m b^j \Big( A_{1,j} v_{1,j} + A_{2,j} v_{2,j} + \cdots + A_{n,j} v_{n,j} \Big)\\
        & = & b^{m-1} \left \lfloor{\frac{y_{m-2}}{b}}\right \rfloor +& \sum\limits_{j=m-1}^m b^j \Big( A_{1,j} v_{1,j} + A_{2,j} v_{2,j} + \cdots + A_{n,j} v_{n,j} \Big)\\
        & \leqslant & \cdots\\
        & \leqslant & b \left \lfloor{\frac{y_{0}}{b}}\right \rfloor+& \sum\limits_{j=1}^m b^j \Big( A_{1,j} v_{1,j} + A_{2,j} v_{2,j} + \cdots + A_{n,j} v_{n,j} \Big)\\
        & \leqslant & y_0 +& \sum\limits_{j=1}^m b^j \Big( A_{1,j} v_{1,j} + A_{2,j} v_{2,j} + \cdots + A_{n,j} v_{n,j} \Big)\\
        & = & & \sum\limits_{j=0}^m b^j \Big( A_{1,j} v_{1,j} + A_{2,j} v_{2,j} + \cdots + A_{n,j} v_{n,j} \Big)\\
      \end{array}$$ 
  \end{description}
\end{proof}

\def\thetheorem{\ref{lemma-opt-consistency}}
\begin{lemma}%\label{lemma-opt-consistency}
  Given a partial assignment $A = \{ x_i \geqslant v_i \}$ such that
  $$\sum_{j=0}^m \sum_{i=1}^n b^j A_{ij} v_i = \sum_{j=0}^m b^j
  \varepsilon_j,$$ the following variables are assigned due to unit
  propagation:
  \begin{enumerate}
    \item $o_j^{\varepsilon_j}$ for all $0 \leqslant j \leqslant m$ with $\varepsilon_j >0$.\label{aaaa}
    \item $\neg o_j^{\varepsilon_j+1}$ for all $0 \leqslant j \leqslant
      m$ with $\varepsilon_j < b-1$.\label{bbbb}
    \item $\neg x_i^{v_i+1}$ for all $1 \leqslant i \leqslant n$ with
      some $A_{ij} \neq 0$ (i.e., $x_i \leqslant v_i$).\label{cccc}
  \end{enumerate}
\end{lemma}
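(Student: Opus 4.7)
The plan is to establish the three propagations via a two-pass argument: a bottom-up pass using domain consistency of the sorting networks to derive lower bounds on the $y_j^l$ variables, followed by a top-down pass (from $j=m$ down to $j=0$) that combines the bound clauses of (\ref{clauses-bound}), the Tseytin definition of $o_j^k$, and reverse propagation through the same sorting networks to derive the three claims. Let $V_j$ denote the value of $y_j$ obtained by setting $x_i = v_i$ exactly, i.e., $V_0 = \sum_i A_{i,0} v_i$ and $V_j = \lfloor V_{j-1}/b \rfloor + \sum_i A_{i,j} v_i$ for $j>0$. A direct computation from the hypothesis $\sum_{j} b^j \sum_i A_{ij} v_i = \sum_j b^j \varepsilon_j$ yields $V_j \equiv \varepsilon_j \pmod{b}$, and in particular $V_m = \varepsilon_m$ since $V_m < b$ by the choice of $m$. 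Writing $V_j = q_j b + \varepsilon_j$ with $q_j = \lfloor V_j/b \rfloor$ will be useful throughout.

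First I would show, by induction on $j$, that $y_j^l$ is propagated true for every $l \leq V_j$. At level $j$ the sorting network has exactly $V_j$ forced-true inputs: the duplicated order-encoding literals $x_i^k$ for $k \leq v_i$ contribute $\sum_i A_{i,j} v_i$, and the chain inputs $y_{j-1}^{b}, y_{j-1}^{2b}, \ldots, y_{j-1}^{q_{j-1} b}$ contribute $q_{j-1}$ by the inductive hypothesis, matching the definition of $V_j$. Domain consistency of the sorting network (Proposition \ref{prop-encoding-networks}.3) then propagates $y_j^l$ for all $l \leq V_j$.

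The main step is a reverse induction on $j$ from $m$ down to $0$, jointly establishing claims (1), (2), (3) together with the auxiliary facts $\neg y_j^{V_j+1}$ and $\neg y_j^{(q_j+1)b}$. For $j=m$, the bound clause (\ref{clauses-bound}) collapses (the disjunction over $j_2 > m$ is empty) to the unit clause $\neg o_m^{\varepsilon_m+1}$ when $\varepsilon_m < b-1$; since $y_m < b$ by the choice of $m$, $y_m^b$ is treated as false, so the $l = \varepsilon_m+1$ disjunct of $o_m^{\varepsilon_m+1}$ reduces to $y_m^{\varepsilon_m+1}$, and its Tseytin clauses directly propagate $\neg y_m^{\varepsilon_m+1}$. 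For $j<m$ the inductive hypothesis supplies $o_{j'}^{\varepsilon_{j'}}$ true for every $j' > j$ with $\varepsilon_{j'}>0$, so the bound clause for $j_1 = j$ propagates $\neg o_j^{\varepsilon_j+1}$; combined with $\neg y_j^{(q_j+1)b}$ (furnished by the level-$(j+1)$ reverse propagation, since $y_j^{(q_j+1)b}$ is one of the chain inputs at level $j+1$), the Tseytin clause for the $l = V_j+1$ disjunct of $o_j^{\varepsilon_j+1}$ propagates $\neg y_j^{V_j+1}$. In either case the order-encoding clauses give $\neg y_j^l$ for every $l > V_j$. Running the level-$j$ sorting network in reverse (its domain consistency: with exactly $V_j$ outputs true and the next output false, every unassigned input must be false) propagates $\neg x_i^{v_i+1}$ for each $i$ with $A_{i,j}>0$ and $\neg y_{j-1}^{(q_{j-1}+1)b}$, feeding the next step of the downward induction. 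Finally, claim (1) at level $j$ follows because $y_j^{V_j}$ (from the first pass) together with $\neg y_j^{(q_j+1)b}$ forms the $l=V_j$ disjunct of $o_j^{\varepsilon_j}$, and its Tseytin clause propagates $o_j^{\varepsilon_j}$ true. Claim (3) accumulates over all $j$: since every $x_i$ has at least one digit $A_{i,j}>0$ (as $a_i>0$), $\neg x_i^{v_i+1}$ is eventually propagated.

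The hard part will be handling the edge cases cleanly while keeping the interplay of the two passes transparent. When $\varepsilon_j = b-1$ there is no $o_j^{\varepsilon_j+1}$ variable and instead $V_j+1 = (q_j+1)b$, so $\neg y_j^{V_j+1}$ coincides with the chain fact and no bound clause is needed; when $\varepsilon_j = 0$ claim (1) is vacuous but $\neg y_j^{V_j+1}$ still must propagate downward. Care is also required at the $j=0$ base of the downward pass, where there is no $y_{-1}$ and the reverse sorting-network argument only yields the $\neg x_i^{v_i+1}$ conclusions; and at the $j=m$ apex one must invoke the convention that $y_m^l$ is treated as false for $l \geq b$. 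Once this bookkeeping is in place, claims (1)--(3) fall out as immediate consequences at each level of the top-down pass.
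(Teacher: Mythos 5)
Your proof is correct and follows essentially the same route as the paper's: the paper argues by induction on $m$, peeling off the level-$0$ network (counting its $\varepsilon_0+b\lambda$ true inputs, applying the induction hypothesis to levels $1,\ldots,m$ with the chain outputs $y_0^{bl}$ as fresh inputs, and then closing the level-$0$ claims via the bound clause, the Tseytin definition of $o_0^k$, and reverse propagation through the sorting network), which is exactly your two-pass argument unrolled. Your explicit bookkeeping with $V_j = q_j b + \varepsilon_j$ and the chain fact $\neg y_j^{(q_j+1)b}$ makes the edge cases ($\varepsilon_j = b-1$, $\varepsilon_j = 0$, the apex $j=m$) more transparent than the paper's version, but no genuinely different idea is involved.
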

\begin{proof}
  Let us prove the result by induction on $m$. If $m=0$ the results
  are obvious, so let us prove the general case.

  Since $$\sum_{j=0}^m \sum_{i=1}^n b^j A_{ij} v_i = \sum_{j=0}^m
  b^j \varepsilon_j,$$ $\sum\limits_{i=1}^n A_{i0} v_i = \varepsilon_0 + b
  \lambda$ for some integer $\lambda \geqslant 0$.  Due to the
  properties of sorting networks, $y_0^{l}$ is propagated to true for
  all $1 \leqslant l \leqslant \varepsilon_0 + b\lambda$. Therefore,
  constraint
  $$\sum_{l=0}^{\left\lfloor \frac{e_0}{b} \right\rfloor} y_0^{bl} +
  \sum_{j=0}^{m-1} \sum_{i=1}^n b^j A_{ij+1} x_i \leqslant
  \sum_{j=0}^{m-1} b^j \varepsilon_{j+1}$$ satisfies the hypothesis of
  the lemma. By induction hypothesis, the following literals are
  propagated:
  \begin{enumerate}
    \item $o_j^{\varepsilon_j}$ for all $0 < j \leqslant m$ with $\varepsilon_j>0$.
    \item $\neg o_j^{\varepsilon_j+1}$ for all $0 < j \leqslant
      m$ with $\varepsilon_j < b-1$.
    \item $\neg x_i^{v_i+1}$ for all $1 \leqslant i \leqslant n$ with
      some $A_{ij} \neq 0$ with $j>0$.

      $\neg y_0^{bl}$ with $l > \lambda$.
  \end{enumerate}
  Therefore, since $y_0^{\varepsilon_0 + b\lambda}$ is true and
  $y_0^{b(\lambda+1)}$ is false, equation (\ref{def-ojk}) propagates
  $o_0^{\varepsilon_0}$, and the point \ref{aaaa} is proved.

  If $\varepsilon_0 =b-1$, point \ref{bbbb} of the lemma is already
  proved. Besides, $y_0^{\varepsilon_0 + b\lambda+ 1}$ is false, since
  $\varepsilon_0 + b\lambda+ 1 = b (\lambda+1)$.

  If $\varepsilon_0 < b-1$, equation (\ref{clauses-bound}) contain the
  clause $$\neg o_0^{\varepsilon_0+1} \vee
  \bigvee_{\substack{j>0\\ \varepsilon_j>0}} \neg
  o_j^{\varepsilon_j}.$$ Since $o_j^{\varepsilon_j}$ has been
  propagated to true for all $j>0$, then $o_0^{\varepsilon_0+1}$ is
  propagated to false, and \ref{bbbb} is proved. Besides, since
  $y_0^{b(\lambda+1)}$ is false, equation (\ref{def-ojk}) propagates
  $y_0^{\varepsilon_0 + b\lambda+ 1}$ to false.

  Finally, since the sorting network with output $y_0$ has $\varepsilon_0
  + b\lambda$ true inputs and its $\varepsilon_0 + b\lambda+ 1$-th output
  is false, all the other inputs are propagated to false. That proves
  the last point of the lemma.
\end{proof}

\def\thetheorem{\ref{theo-network-optimization}}
\begin{theorem}%\label{theo-network-optimization}
    The \SNOPT{} encoding presented in Section
    \ref{section-optimization-networks} is consistent.
\end{theorem}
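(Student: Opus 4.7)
The plan is to mimic the inductive proof of Lemma \ref{lemma-opt-consistency}, but with strict inequality replacing equality. Given an inconsistent partial assignment $A = \{x_i \geqslant v_i\}$, we have $T := \sum_i a_i v_i > a_0 = \sum_{j=0}^m b^j \varepsilon_j$. Writing $T$ in base $b$ as $T = \sum_{j=0}^m b^j \tau_j$ (with $\tau_m < b$ by the choice of $m$), let $J$ be the largest index with $\tau_J > \varepsilon_J$, so that $\tau_j = \varepsilon_j$ for $j > J$. Since $\tau_J < b$, we have $\varepsilon_J < b-1$, and Equation~(\ref{clauses-bound}) contains the instance
\begin{equation*}
\neg o_J^{\varepsilon_J + 1} \;\vee\; \bigvee_{\substack{J < j_2 \leqslant m\\ \varepsilon_{j_2} > 0}} \neg o_{j_2}^{\varepsilon_{j_2}},
\end{equation*}
which I would show is falsified by unit propagation.

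The proof proceeds by induction on $m$. For $J = m$ (which also covers the base case $m=0$), the sorting network propagates $y_m \geqslant \varepsilon_m + 1$, and since $y_m < b$ holds structurally by the choice of $m$, the Tseytin clauses of Equation~(\ref{def-ojk}) propagate $o_m^{\varepsilon_m+1}$ to true, conflicting with the unit clause obtained from Equation~(\ref{clauses-bound}) at $j_1 = m$. For $J < m$, the digits from $J+1$ to $m$ match $a_0$ exactly, so the argument of Lemma~\ref{lemma-opt-consistency} applied just to these top levels yields $o_{j_2}^{\varepsilon_{j_2}}$ true for every $j_2 > J$ with $\varepsilon_{j_2} > 0$, together with $\neg o_{j_2}^{\varepsilon_{j_2}+1}$ for the relevant $j_2$. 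The latter propagations cascade through Equation~(\ref{def-ojk}) and the order-encoding clauses to pin upper bounds on $y_{j_2}$ and, crucially, on the carry $\lfloor y_J / b \rfloor$ entering level $J{+}1$. Combined with the strict inequality $\tau_J > \varepsilon_J$ at level $J$, this supplies both the positive literal $y_J^l$ and the negative literal $\neg y_J^{l+b-\varepsilon_J-1}$ needed for the Tseytin encoding of $o_J^{\varepsilon_J+1}$ to propagate it true, thus falsifying the target clause.

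The main obstacle is propagating the upper-bound literal $\neg y_J^{l+b-\varepsilon_J-1}$, since sorting networks are monotone and supply only lower bounds; this upper bound arises only via the backward cascade from the clauses at the higher levels. The technical core is to verify that this cascade correctly pins the carry into level $J{+}1$ to the value predicted by the digits $\varepsilon_{J+1}, \ldots, \varepsilon_m$. I expect the bookkeeping to parallel the inductive proof of Lemma~\ref{lemma-opt-consistency} essentially verbatim for levels $j > J$, with a single additional step at level $J$ that exploits the surplus $\tau_J - \varepsilon_J \geqslant 1$ of true sorting-network inputs to fire the conflict clause.
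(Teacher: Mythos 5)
Your proposal is correct and matches the paper's argument in all essentials: the paper's induction on $m$ (recursing to the upper levels while the carried-forward sum still strictly exceeds the carried-forward bound, then invoking Lemma~\ref{lemma-opt-consistency} on the exactly-matching upper part and firing the conflict between Equations~(\ref{def-ojk}) and~(\ref{clauses-bound}) at the first level with a surplus) unrolls to precisely your direct identification of the critical digit $J$. You have also correctly isolated the one delicate point — that the upper-bound literal $\neg y_J^{b(\lambda+1)}$ on the carry is supplied by conclusion~(3) of the Lemma applied to the carry bits viewed as inputs of the upper constraint — which is exactly how the paper closes the argument.
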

\begin{proof}
  Let $A = \{x_i \geqslant v_i \}$ be a partial assignment which cannot
  be extended to a full assignment satisfying $C$. In that case,
  $$\sum_{j=0}^m \sum_{i=1}^n b^j A_{ij} v_i > \sum_{j=0}^m b^j
  \varepsilon_j.$$ We have to show that unit propagation finds an
  inconsistency. As before, we prove the result by induction on
  $m$. Case $m=0$ is a direct consequence of last Lemma, so let us prove
  the general case
.

  First of all, let us assume that $$\left \lfloor \frac{1}{b}
  \sum_{j=0}^m \sum_{i=1}^n b^j A_{ij} v_i \right \rfloor >
  \left \lfloor \frac{1}{b}\sum_{j=0}^m b^j \varepsilon_j\right \rfloor.$$

  Notice that $$\left \lfloor \frac{1}{b}\sum_{j=0}^m b^j
  \varepsilon_j\right \rfloor = \sum_{j=0}^{m-1} b^j \varepsilon_{j+1},$$ and
  $$\left \lfloor \frac{1}{b} \sum_{j=0}^m \sum_{i=1}^n b^j A_{ij}
  v_i \right \rfloor = \sum_{l=0}^{\left\lfloor \frac{e_0}{b} \right\rfloor} y_0^{bl} +
  \sum_{j=0}^{m-1} \sum_{i=1}^n b^j A_{ij+1} v_i.$$

  Therefore, $$\sum_{l=0}^{\left\lfloor \frac{e_0}{b} \right\rfloor}
  y_0^{bl} + \sum_{j=0}^{m-1} \sum_{i=1}^n b^j A_{ij+1} v_i >
  \sum_{j=0}^{m-1} b^j \varepsilon_{j+1},$$ and, by induction hypothesis,
  unit propagation finds a conflict.

  Assume now that $$\left \lfloor \frac{1}{b}
  \sum_{j=0}^m \sum_{i=1}^n b^j A_{ij} v_i \right \rfloor \leqslant
  \left \lfloor \frac{1}{b}\sum_{j=0}^m b^j \varepsilon_j\right \rfloor.$$

  Let $\mu, \lambda$ be the two integers such that $\sum_{i=1}^n
  A_{i0} v_i = \mu + b \lambda$, with $0 \leqslant \mu < b$.

  We now have that $$\left \lfloor \frac{1}{b} \sum_{j=0}^m \sum_{i=1}^n
  b^j A_{ij} v_i \right \rfloor = \left \lfloor
  \frac{1}{b}\sum_{j=0}^m b^j \varepsilon_j\right \rfloor$$
  and
  $$\sum_{j=0}^m \sum_{i=1}^n b^j A_{ij} v_i - b \left \lfloor
  \frac{1}{b} \sum_{j=0}^m \sum_{i=1}^n b^j A_{ij} v_i \right
  \rfloor > \sum_{j=0}^m b^j - b \left \lfloor \frac{1}{b}\sum_{j=0}^m
  b^j \varepsilon_j\right \rfloor.$$

  As before, the first equality can be transformed into
  $$\sum_{l=0}^{\left\lfloor \frac{e_0}{b} \right\rfloor}
  y_0^{bl} + \sum_{j=0}^{m-1} \sum_{i=1}^n b^j A_{ij+1} v_i =
  \sum_{j=0}^{m-1} b^j \varepsilon_{j+1},$$ so we can apply the previous Lemma:
  \begin{enumerate}
    \item $o_j^{\varepsilon_j}$ is propagated for all $0 < j \leqslant m$
      with $\varepsilon_j>0$.
    \item $\neg o_j^{\varepsilon_j+1}$ is propagated for all $0 < j
      \leqslant m$ with $\varepsilon_j < b-1$.
    \item $\neg x_i^{v_i+1}$ is propagated for all $1 \leqslant i \leqslant n$ with
      some $A_{ij} \neq 0$ with $j>0$.

      $\neg y_0^{bl}$ is propagated for all $l > \lambda$.
  \end{enumerate}

  By hypothesis, $$\mu = \sum_{j=0}^m \sum_{i=1}^n b^j A_{ij} v_i
  - b \left \lfloor \frac{1}{b} \sum_{j=0}^m \sum_{i=1}^n b^j
  A_{ij} v_i \right \rfloor > \sum_{j=0}^m b^j \varepsilon_j - b
  \left \lfloor \frac{1}{b}\sum_{j=0}^m b^j \varepsilon_j\right
  \rfloor = \varepsilon_0.$$ Notice also that $y_0^l$ is true for all
  $1 \leqslant l \leqslant \mu + b \lambda$; therefore,
  $y_0^{\varepsilon_0+1+b\lambda}$ is true.

  Equation (\ref{clauses-bound}) contains the clause
  $$\neg o_0^{\varepsilon_0+1} \vee
  \bigvee_{\substack{j>0\\ \varepsilon_j>0}} \neg o_j^{\varepsilon_j},$$
  that propagates $\neg o_0^{\varepsilon_0+1}$.

  On the other hand, since $y_0^{\varepsilon_0+1+b\lambda}$ is true and
  $y_0^{b(\lambda+1)}$ is false, equation (\ref{def-ojk}) propagates
  $o_0^{\varepsilon_0+1}$, causing a conflict.
\end{proof}

\def\thetheorem{\ref{theo-size-SN}}
\begin{theorem} %\label{theo-size-SN}
    Encoding \SNTARE{} and \SNOPT{} applying the improvements in Section
    \ref{section-pract-improvements} require $O(nd \log n \log d \log \amax)$
    variables and clauses, where $d = \max \{d_i\}$. 
\end{theorem}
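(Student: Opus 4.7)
The plan is to bound the size layer by layer and then multiply by the number of layers, exploiting the two key improvements described in Section~\ref{section-pract-improvements}: (i) only the lowest $O(\log \amax)$ layers are non-trivial, since all sorters above level $\log(\amax)$ just merge already-sorted outputs with the tare and can be dropped; and (ii) at each layer the sorter can be replaced by a cardinality network built out of \emph{simplified} merges, because the $x_i$-slices that feed the layer are already order-encoded, and so is the carry from the previous layer.

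First I would fix the base $b=2$ (general $b$ is absorbed in the constants) and count the layers. By improvement (i) the number of layers is $L = O(\log \amax)$. Next, at layer $j$ the inputs to the cardinality network consist of: for each variable $x_i$ with $A_{i,j}\neq 0$, the order-encoded sequence $(x_i^1,\ldots,x_i^{d_i})$ of length at most $d$ (so at most $n$ such sequences, totalling at most $nd$ literals), plus the shifted, already-sorted carry sequence coming from layer $j-1$. Using the recursive bound for merging sorted sequences from \shortciteA{parametricCardinalityConstraint}, the simplified merge of $n$ sorted sequences of total length $N=O(nd)$ into a cardinality-$k$ output is built as a balanced binary tree of $\log n$ layers of pairwise simplified merges; each level does $O(N\log k)$ work, so the whole cardinality network uses $O(N\log n\log k)$ clauses and variables.

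The second step is to bound $k$, the useful output length of each layer. For \SNTARE{} we only need the output bit $y_m^b$ at the top; propagating this requirement downwards, layer $j$ needs the lowest $k_j \le b^{m-j+1}$ outputs. Combined with the trivial bound $k_j \le |$input$|=O(nd)$, we have $k_j = O(d)$ once $j$ is within the $\log \amax$ relevant layers (since beyond that point the carry that matters has already shrunk to $O(d)$, by the recursion $y_j \le \lfloor y_{j-1}/b\rfloor + nd$ which stabilises on this scale). Plugging $N=O(nd)$ and $k=O(d)$ into the merge bound gives $O(nd\log n\log d)$ clauses per layer, and multiplying by $L=O(\log \amax)$ layers gives the claimed $O(nd\log n\log d\log \amax)$ bound for \SNTARE.

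For \SNOPT{} the extra ingredients are the definitions of the $o_j^k$ from Equation~(\ref{def-ojk}) and the comparison clauses from Equation~(\ref{clauses-bound}). Each $o_j^k$ is a disjunction over at most $O(d/b)$ conjunctions, Tseytin-encoded in $O(d)$ clauses; summing over $0\le j\le m$ and $1\le k<b$ gives $O(md)=O(d\log \amax)$ clauses in total, which is absorbed. The clauses in~(\ref{clauses-bound}) are $O(m^2)=O(\log^2\amax)$ in number, also absorbed. Hence \SNOPT{} satisfies the same asymptotic bound. The main obstacle in writing this up rigorously is step two: carefully showing that the ``useful output length'' at each of the $O(\log \amax)$ active layers is indeed $O(d)$ rather than the naive $O(nd)$; this requires combining the recursive carry bound with the top-down $b^{m-j+1}$ bound and the fact that after $\log \amax$ levels further layers are degenerate.
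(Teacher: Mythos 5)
Your overall architecture---count the $O(\log \amax)$ surviving layers, bound each layer by a balanced binary tree of simplified merges, and absorb the $o_j^k$ and bound clauses for \SNOPT---matches the paper's proof. But step two contains a genuine error, which you yourself flag as the ``main obstacle'': the claim that the useful output length of each active layer is $k=O(d)$. It is not. The carry recursion $y_j \le \lfloor y_{j-1}/b\rfloor + \sum_i A_{i,j} d_i$ stabilises at $y^* = \tfrac{b}{b-1}\cdot O(nd) = O(nd)$, not at $O(d)$, and the top-down truncation $k_j\le b^{m-j+1}$ only bites for the top $\log_b(nd)$ layers; every lower layer must produce its full $O(nd)$ sorted outputs (at least the $O(nd/b)$ of them at positions divisible by $b$, which are fed upward as the carry). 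Plugging the correct $k=O(nd)$ into your own per-layer formula $O(N\log n\log k)$ yields $O(nd\log n(\log n+\log d))$ per layer, which exceeds the claimed bound when $n\gg d$. So the step cannot be completed as stated: the claim it rests on is false.

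The paper obtains the $\log d$ factor from a different place entirely. In the balanced merge tree for $x_1+\cdots+x_n$ the leaves are the order encodings of the $x_i$, of length at most $d$, and a simplified merge of sorted inputs of length $a$ costs $O(a\log a)$; summing $\tfrac{n}{2}\sm(d,d)+\tfrac{n}{4}\sm(2d,2d)+\cdots$ over the $\log n$ tree levels gives $O(nd\log d\log n)$ per layer with no appeal to output truncation at all---the $\log d$ comes from the leaf merge sizes, the $\log n$ from the tree depth, and cardinality-network truncation is only used to justify discarding the layers above $\log_b\amax$. The same dropped factor of $n$ appears in your \SNOPT{} accounting: each $o_j^k$ has $O(e_j/b)=O(nd/b)$ disjuncts, so their total cost is $O(nd\log\amax)$, not $O(d\log\amax)$; this is still absorbed, but it is symptomatic of the same mistake. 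To repair your argument, abandon the $k=O(d)$ claim and bound the per-layer merge tree level by level as above.
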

\begin{proof}
  First, define $\sm(a,b)$ as the number of variables
  needed to encode a simplified merge with inputs of sizes $a$ and
  $b$. Notice that, from \shortciteA{AsinNOR11}, $\sm(a,b) \leqslant
  O(\max\{a,b\} \log (\max\{a,b\}))$.

  Let us prove that the encoding requires $O(nd \log n \log d \log
  \amax$ Boolean variables in the non-tare case. The number of clauses
  can be computed in a similar way. The tare case is very similar.

  First, the encoding defines $$y_0 = A_{1,0} x_1 + A_{2,0} x_2 +
  \cdots + A_{n,0} x_n.$$ Given $1 \leqslant a < b \leqslant n$, let
  us define $$z_0^{a,b} = x_{a} + x_{a+1} + \cdots + x_{b}.$$ Notice
  that the number of Boolean variables needed to define $y_0$ are
  bounded by the number of Boolean variables needed to define
  $z_0^{1,n} = x_1 + x_2 + \cdots + x_n$, which can be
  computed as
  $$
  \begin{array}{lll}
    z_0^{1,2} &=& \smerge(\ordEnc(x_1),\ordEnc(x_2))\\
    z_0^{3,4} &=& \smerge(\ordEnc(x_3),\ordEnc(x_4))\\
    & \cdots& \\
    z_0^{1,4} &=& \smerge(z_0^{1,2},z_0^{3,4})\\
    z_0^{5,8} &=& \smerge(z_0^{5,6},z_0^{7,8})\\
    & \cdots& \\
    z_0^{1,8} &=& \smerge(z_0^{1,4},z_0^{5,8})\\
    & \cdots& \\
  \end{array}$$

So the number of variables to define $y_0$ is bounded by
$$\begin{array}{l}
  \sm(d_1,d_2)+\sm(d_3, d_4) + \cdots + \sm(d_1+d_2,d_3+d_4) + \cdots \leqslant\\
  \frac{n}{2} \sm(d,d) + \frac{n}{4}\sm(2d,2d) + \cdots \leqslant \\
  \frac{n}{2} O(d \log d) +  \frac{n}{4} O(2d \log 2d) +  \cdots =  \\
  \frac{n}{2} O(d \log d) +  \frac{n}{2} O(d \log d) +  \cdots = \\
  O(nd\log d \log n)
\end{array}$$

The encoding then defines $$y_1 = \left \lfloor{\frac{y_{0}}{b}}\right
\rfloor + A_{1,1} x_1 + A_{2,1} x_2 + \cdots + A_{n,1}
x_n$$ Again, we can bound the number of Boolean variables needed to
define $y_1$ with the number of Boolean variables needed to define
$$\left \lfloor{\frac{y_{0}}{b}}\right \rfloor + x_1 + x_2 + \cdots +
x_n,$$ which can be computed as $\smerge(\left
\lfloor{\frac{y_{0}}{b}}\right \rfloor,z_0^{1,n})$. We need
$O(nd\log d \log n) + \sm(nd,nd)= O(nd\log d \log n)$ variables.

Analogously, we need $O(nd\log d \log n)$ variables for the definition of the
other $y_j$ variables. Therefore, to introduce variables $y_0, y_1, \ldots,
y_m$ we need $O(nd\log d \log n \log a_0)$ Boolean variables, since $m = \log_b
a_0$. Using the improvement to the construction, we can remove the merge
networks from layer $(\log_b \amax)+1$ onwards, since they are redundant. 

Finally, we need to introduce variables
  $$o_j^k := \bigvee_{\substack{1\leqslant l \leqslant e_j\\l \equiv k
    \ (\text{mod } b)}} \Big(y_j^l \wedge \neg y_j^{l+b-k}\Big) \qquad
0 \leqslant j \leqslant m, \ 1 \leqslant k < b.$$ We can introduce
$o_j^k$ through Tseytin transformation; it needs
$e_j$ extra variables. Since $e_j \leqslant nd$, and we have
$(m+1)(b-1)$ variables $o_j^k$, we need $O(nd\log \amax)$.

All in all, the encoding needs $$O(nd\log d \log n \log \amax +
O(nd\log \amax) = O(nd\log d \log n \log \amax)$$ Boolean variables.
\end{proof}

\end{document}